\theoremstyle{plain}
\newtheorem{theorem}{Theorem}
\newtheorem{proposition}{Proposition}
\newtheorem{lemma}{Lemma}
\newtheorem{remark}{Remark}
\newcommand{\NN}{\mathbb{N}}
\newcommand{\ZZ}{\mathbb{Z}}
\newcommand{\EE}{\mathbb{E}}
\newcommand{\bP}{\mathbb{P}}
\newcommand{\bR}{\mathbb{R}}
\newcommand{\cP}{\mathcal{P}}
\newcommand{\1}{\mathbb{I}}
\newcommand{\Xtilde}{\widetilde{X}}
\newcommand{\dpp}{\displaystyle}
\theoremstyle{remark}
\newtheorem{assumption}{Assumption}
\newtheorem{definition}{Definition}
\newtheorem{example}{Example}
\def\R{\mathbb{R}} 
\def\Z{\mathbb{Z}} 
\def\N{\mathbb{N}} 
\newcommand{\inr}[1]{\bigl< #1 \bigr>}
\DeclareMathOperator{\Ex}{\mathbb{E}} 
\renewcommand{\Pr}{\mathbb{P}}
\newcommand{\Xia}{\Xi_a}
\newcommand{\yinf}{\underline{y}}
\newcommand{\RR}{\mathbb{R}}
\newcommand{\lse}{\mbox{LSE}}
\newcommand{\ul}{\underline{\lambda}}
\newcommand{\ui}{\underline{i}}
\def\sE{\mathcal{E}}
\def\sG{\mathcal{G}}\def\sI{\mathcal{I}}
\def\sO{\mathcal{O}}
\def\sP{\mathcal{P}}
\newcommand{\GGmax}{\sG^{{\rm max}}}
\def\bi{{\bf i}}
\def\bj{{\bf j}}
\def\eps{\varepsilon}
\newcommand \cI{{\cal I}}
\newcommand \cG{{\cal G}}
\newcommand \cE{{\cal E}}
\newcommand \bE{{\mathbb E}}
\newcommand{\bxi}{{\boldsymbol{\xi}}}
\newcommand{\bbE}{{\pmb{\mathbb{E}}}}
\newcommand{\bbP}{{\pmb{\mathbb{P}}}}
\newcommand{\norm}[1]{\left\|#1\right\|}%
\newcommand{\cO}{\mathcal{O}}
\newcommand \bV{{\mathbb V}}
\newcommand \bN{{\mathbb N}}
\newcommand\ove[2]{\overset{#2}{#1}\;}
\newcommand \cN{{\cal N}}
\newcommand \cC{{\cal C}}
\newcommand \cS{{\cal S}}
\DeclareMathOperator*{\argmin}{arg\,min}
\begin{document}

\title{Statistical inference in large multi-way networks}





\author[$\dagger$]{Lucas Resende}
\author[$\S$]{Guillaume Lecué}
\author[$\dagger$]{Lionel Wilner}
\author[$\dagger$]{Philippe Chon\'e}


\affil[$\dagger$]{\textit{\footnotesize{CREST, ENSAE, Institut Polytechnique de Paris, France}}}
\affil[$\S$]{\textit{\footnotesize{ESSEC Business School,  France}}}


\maketitle

\begin{abstract}
We propose the \emph{Polyads estimator}, a new method to estimate structural parameters in weighted multi-way networks while controlling for rich, arbitrary structures of fixed effects.
The method is 
based on a series of classification tasks and is agnostic to both the number and structure of fixed effects.
Unlike full maximum likelihood, our estimator does not suffer from the incidental parameter problem: it is consistent and satisfies a Central Limit Theorem with no asymptotic bias, even when some dimensions of the network are short. 
For sparsely connected networks, it is also computationally faster than PPML. We provide experimental evidence that our estimator yields more reliable confidence intervals, i.e., better empirical coverage, than PPML and its bias-correction strategies. These improvements hold even under model misspecification and are more pronounced in sparse settings. While PPML remains competitive in dense, low-dimensional data, our approach offers a robust alternative for multi-way models that scales efficiently with sparsity. 
We apply the method to French health insurance claims data to study how a 2017 physician fee reform affected the geography and gender composition of doctor-patient connections.

\end{abstract}


\keywords{ multi-way networks \and  polyadic data \and high-dimensional fixed effects \and incidental parameter problem \and gravity models \and sparse networks \and weighted networks \and conditional likelihood }

{\bf \emph{JEL codes}} C13 C31 C33 C55

\section{Introduction}
\label{sec:introduction}

Network data, which are becoming available at increasingly granular levels, are receiving a great deal of attention in economics, see~\cite{Grah_dePa_Book_2020}. In this paper, we consider ``multi-way'' networks that involve interactions between entities of different nature such as importing and exporting countries, buyers and suppliers, teachers and schools, doctors and patients. The strength of the interactions in such networks is commonly measured at disaggregated levels, e.g., industries or products for trade data, consultations and  medical procedures for health data, patents and citations for innovation data, etc. Multi-way network data, sometimes referred to as ``polyadic'', are indexed by multidimensional indices that represent the relevant dimensions in each case, for instance exporter, importer, product, and time in the trade example. 

To model connections in multi-way networks and control for unobserved heterogeneity along various dimensions, recent applied research has gradually considered models 
with richer structures of fixed effects, involving  higher-dimensional interactions. For instance, three-way gravity models, with 
exporter-year, importer-year, and exporter-importer fixed effects, are common in the modern trade literature.\footnote{Recent studies recognize the economic importance of the sector or product levels, potentially leading to even richer structures of fixed effects, \cite{breinlich2024trade} and \cite{Delb_Dina_WD_2020}.}
Yet as the structure of fixed effects becomes more complex, maximum likelihood estimators may be plagued by incidental parameter problems, see \cite{fernandez2016individual} and \cite{weidner2021bias}.  Specifically, as the sample size grows, so too does the number of nuisance parameters representing the fixed effects, possibly creating bias in standard maximum likelihood estimation of the parameters of interest, as first described by \cite{Neym_Scot_Ecma_1948}.

In this paper, we propose a novel estimator that does not suffer from the incidental parameter problem: our estimator is consistent and satisfies a Central Limit Theorem with no asymptotic bias. We provide experimental evidence that our estimator yields better empirical coverage than the standard PPML estimator and its analytical correction \cite{weidner2021bias}. These improvements hold even under model misspecification and are more pronounced in sparse settings. While PPML remains competitive in dense, low-dimensional data, our approach offers a robust alternative for multi-way models that scales efficiently with sparsity.


In the spirit of \cite{graham2017econometric}, we regard data through the lens of graph theory.
The key insight is that certain configurations of outcomes within subgroups of observations, which we call \emph{polyads}, have identical sufficient statistics for the fixed effects, making their relative likelihood independent of the fixed effects. 
Our framework differs from \cite{graham2017econometric} in two important dimensions. 
First, while Graham models undirected graphs, we use multipartite graphs to model multi-way networks. 
Second, Graham's network formation model considers only the extensive margin, i.e., the probability that potential links are realized. By contrast, we study the strength of connections in weighted networks, thus modeling both the intensive and extensive margins.

This study is connected to the strand of the gravity literature starting with \cite{silva2006log}.\footnote{Their seminal paper shows that traditional log-linear OLS estimation suffers from bias under heteroskedasticity, particularly when many flows are zero, which greatly motivated the adoption of Poisson models for gravity. The properties of Poisson pseudo-maximum likelihood estimators have first been investigated by \cite{gourieroux1984pseudo} in the absence of  fixed effects, with one-way fixed effect panel data applications being pioneered by \cite{HHG}. Recently, \cite{chen2024logs} argue that log-like transformations can also distort the interpretation of coefficients as percentage effects, since they depend on the units of the outcome.} 
Our method compares favorably with recent econometric studies  along several dimensions.
First, it accommodates multi-way models with an arbitrary number of node groups, 
in contrast to debiasing methods such as \cite{fernandez2016individual}, \cite{jochmans2017two}, and 
\cite{weidner2021bias}, which are restricted to two- and three-way structures.
Second, unlike PPML estimation, our estimator has no incidental parameter problem by construction. Even relative to bias-corrected PPML procedures, our approach remains advantageous: the corrections may themselves be biased in finite 
samples, as \cite{weidner2021bias} and \cite{zylkin2024bootstrap} argue. 
Third, compared with approaches such as \cite{charbonneau2012multiple}, our estimator better exploits the available variability while remaining computationally 
feasible, and the convexity of our loss function delivers strong numerical performance relative to more general semiparametric method-of-moments procedures (e.g., \cite{jochmans2017two, yang2023three}).

Our study is also connected to the literature on discrete choice models for panel and network data (\cite{Rasc_1960}, \cite{Ande_1973},\cite{Cham_Hand_1984},
\cite{Hono_Kyri_Ecma_2000},
\cite{Magn_Ecma_2004}). Presenting 
the conditional likelihood methods used in these settings, the recent review of \cite{Dano_Hono_Weid_WP_2025}
highlights how identification strategies relate to difference-in-differences approaches. Specifically, they  provide the differencing vectors that are valid to  identify the parameters of interest.\footnote{\cite{Muris_Pakel_WP_2025} follow this approach to study the formation of triadic networks. They introduce a hexad logit estimator that extends the tetrad logit estimator of \cite{graham2017econometric}.} 
We proceed the same way for count data and multi-way networks. The  Polyads estimator can be thought of as a nonlinear version of a difference-in-differences estimator. 
Contrary to the above cited literature, 
the polyad method handles count data and recovers both the existence and intensity of relationships. 

For researchers working with sparse networks --- i.e., where most potential connections are not realized ---, our approach offers a distinct computational advantage: polyads can be constructed by looping over pairs of edges with strictly positive counts (i.e., realized connections), allowing the estimation procedure to scale with the number of observed relationships rather than the number of potential relationships. This is a significant step forward, since the usage of tetrad-based methods has been limited by its computational cost: the available methods for computing tetrad-based statistics, which work only for two- or three-way models, either (i) require looping over all pairs of edges, including unrealized connections (e.g., \cite{graham2017econometric, Muris_Pakel_WP_2025}); \footnote{A slight modification of our method yields a computationally efficient estimator to handle the extensive-only case; see Appendix \ref{sec:binary} for details.} or (ii) rely on matrix multiplications that may scale worse in sparse networks.\footnote{See Remark \ref{rem:jochmans} on the computational complexity of \cite{jochmans2017two}.} In particular, as exemplified by our experiments, our computational implementation enables the use of the polyads method on large administrative datasets. See Section \ref{subsec:cost_considerations} for a more detailed discussion.

We establish consistency and asymptotic normality under mild assumptions, extending the current theoretical framework. 
We identify sufficient conditions on the geometry
of the graph that guarantee consistency and asymptotic normality.
As \cite{graham2017econometric, jochmans2018semiparametric}, we allow for unbounded fixed effects. 
We are able to obtain consistency and asymptotic normality without compactness assumptions on the structural parameter by modifying classical results from \cite{newey1994large} under the light of convex analysis tools from \cite{rockafellar-1970a} and asymptotic statistics results from \cite{Andersen1982}. Finally, contrary to \cite{graham2017econometric} and \cite{jochmans2018semiparametric}, our proofs do not require the existence of a limiting risk function. 


The practical limitations of our method merit clear statement. First, we do not consider interdependencies between observations beyond those captured by fixed effects and observed covariates. Second, our approach is designed specifically for count data, rather than continuous weights. Third, the method becomes computationally inefficient in dense networks where most potential relationships are realized. Within these constraints, however, our estimator provides a powerful tool for inference in multi-way networks with high-dimensional fixed effects structures of any type.

The remainder of the paper proceeds as follows. Section \ref{sec:model_assumptions} introduces the Poisson model and the structure of fixed effects. Section \ref{sec:beta_estimation} introduces the Polyads estimator. Section \ref{sec:theory} establishes its theoretical properties, demonstrating consistency and asymptotic normality. Section \ref{sec:computational} develops the computational implementation, emphasizing how the algorithm efficiently exploits sparsity. Section \ref{sec:experiments} documents the finite-sample properties of the estimator using artificial data and healthcare claims data. 
\section{Model assumptions}
\label{sec:model_assumptions}

We consider a count variable $Y_{i_1i_2\cdots i_D}\in\NN$ that is indexed  by
$(i_1, \dots, i_D)\in \sI =  [n_1] \times \cdots \times [n_D]$.
Using the $D$-dimensional index 
$\bi = (i_1, \dots, i_D) \in \sI$, we represent the variable as $Y_\bi$. 

Throughout the paper, we
think of $Y=(Y_\bi)_\bi\in \NN^\sI$ as a random $D$-partite graph, with the sets $[n_1],[n_2],\dots[n_D]$ representing the nodes of each category, the multidimensional index $\bi$ representing a potential (hyper-)edge of a the graph, and~$Y_{\bi}$ being the number of connections along edge~$\bi$, see the concrete examples below. 
We denote by~$E$ the set of positive edges, i.e., the set of $D$-dimensional indices~$\bi$ such that $Y_{\bi}>0$. The graph is sparse when the data contains many zeros, a case where our method delivers especially good results. 

\smallskip

For each $\bi$,  $Y_{\bi}$ is assumed to depend on a set of $p$ explanatory variables $X_\bi \in \R^p$ and a set of fixed effects. A level of fixed effect is  represented by a proper subset $g$ of $[D]$. By abuse of notation, we set
$g(\bi) = (i_d : d \in g)$ and the fixed effects for level $g$ are denoted as $\theta_{g(\bi)}^g \in \R$. The structure of the fixed effects in the model is represented by a collection $\sG$ of fixed effects levels. 
Of particular interest to us is the structure $\GGmax$ consisting of the $D$ subsets of $[D]$ of cardinal $D-1$; in this particular case, each level of fixed effect absorbs the variations of $Y_\bi$ in all but one dimension of $\bi$.
The set of all fixed effects is denoted by $\theta^\sG = \{\theta^g_\rho : g\in \sG, \rho\in g(\sI) \}$.

\smallskip


\begin{assumption}\label{assump:model}
    Let $\beta_\star \in \R^p$ be the parameter of interest (ie the structural parameter).
    The distribution of $Y = (Y_\bi \in \N : \bi\in\sI)$ conditionally on $X = (X_\bi \in \N : \bi\in\sI)$ is $\bP_{\beta_\star, \theta^\sG}^{Y|X}=  \bigotimes_{\bi \in \sI} \sP(\lambda_\bi)$ , where $\sP(\lambda_\bi)$ is the Poisson distribution with intensity  $\lambda_\bi>0$ given by
    \begin{equation}\label{eq:Poisson_model}
      \lambda_\bi =  \lambda_\bi(\beta_\star, \theta^\sG)  =   \exp\left(\beta_\star^\top  X_\bi + \sum_{g \in \sG} \theta_{g(\bi)}^g\right).
    \end{equation}
    It follows that the residual $\eps_{\bi}=Y_\bi-\lambda_\bi$ satisfies $\EE (\eps_\bi | X) = \EE (\eps_\bi |X_\bi)=0$. In other words, the explanatory variables $X_\bi$ are assumed to be strongly exogenous.
\end{assumption}

The log-likelihood of $(\beta, \theta^\cG)$ in the model \eqref{eq:Poisson_model} at the observed graph $y$ is
\begin{equation}
    \label{eq:loglikelihood}
    \ln \bP_{\beta,\theta^\sG}\left( Y = y | X \right) = \sum_{\bi \in \sI} - \lambda_\bi - \ln y_\bi! + y_\bi\left\{ \beta^\top  X_\bi + \sum_{g \in \sG} \theta_{g(\bi)}^g \right\}
\end{equation}
involves a potentially high number of fixed effects. The MLE  
estimator of the parameter of interest $\beta_\star$ has been shown to have an IPP for $D\geq 3$, see \cite{weidner2021bias} as well as the experiments presented in Section~\ref{sec:experiments}.



The following examples show how our framework encompasses gravity models and other classical econometric models.

\begin{example}[One-way model in panel data] Taking $D= 2$ and $\sG = \{ \{1\} \}$ yields the structure of 
the classical model studied by \cite{HHG}
\[ 
\ln \lambda_{i_1i_2} =  \beta_\star^\top  X_{i_1i_2} + \theta_{i_1}^1.
\]
\end{example}

\begin{example}[Two-way model]\label{ex:two_way} Taking $D = 2$ and $\sG = \GGmax = \{ \{1\}, \{2\} \}$ yields the structure of the standard gravity model studied by \cite{silva2006log} , i.e., $ 
\ln \lambda_{i_1i_2} = \beta_\star^\top  X_{i_1i_2} + \theta_{i_1}^1 + \theta_{i_2}^2
$. The usual econometric model 
\[
\ln \lambda_{ij} = \beta_\star^\top  X_{ij} + u_i + v_j
\]
obtains when relabeling the two-dimensional indices $(i_1,i_2)$ as $(i,j)$ and the fixed effects $\theta_{i_1}^1$ and $\theta_{i_2}^2$ as $u_i$ and~$v_j$ respectively.
In the trade literature, $i$ is an exporter, $j$ is an importer, $X_{i_1i_2}$ is a feature of the dyad (e.g., sharing borders or same language, having a free trade agreements in force). 
\end{example}

\begin{example}[Three-way model]
\label{ex:3:way}
Taking $D = 3$ and $\sG = \GGmax = \{ \{1,2\}, \{1,3\}, \{2,3\} \}$ yields the structure of the  model studied by \cite{weidner2021bias}, i.e., $ 
\ln \lambda_{i_1i_2i_3} = \beta_\star^\top  X_{i_1i_2 i_3} + \theta_{i_1i_2}^{1,2} + \theta_{i_1i_3}^{1,3} + \theta_{i_2i_3}^{2,3}$.
Relabeling the edges $(i_1,i_2,i_3)$ as $(i,j,t)$ and the fixed effects
$(\theta_{i_1i_2}^{1,2},\theta_{i_1i_3}^{1,3}, \theta_{i_2i_3}^{2,3})$ as $(u_{ij},v_{jt},w_{it})$, we obtain the usual econometric formulation
\[
\ln \lambda_{ijt} = \beta_\star^\top  X_{ijt} + u_{ij} + v_{jt}+w_{it}.
\]
This model is  used in the trade literature in the presence of a time dimension, where $i_1$ is an exporter, $i_2$ is an importer, and $i_3$ is the time. 
\end{example}


\section{Estimation of the homophily parameter $\beta_\star$  via polyads} \label{sec:beta_estimation}

To avoid the incidental parameter problem mentioned above, we first  condition the likelihood on a sufficient statistics for the fixed effects given by the degrees of the nodes, see Subsection~\ref{sec:degree:sufficient:stat}. This approach has been  followed in two-way contexts by \cite{charbonneau2012multiple}, \cite{graham2017econometric} and \cite{jochmans2018semiparametric}. 
We construct a loss function based on a set of `directions' that generate all the variability in the data for given degrees.

\medskip

\subsection{Generalized degrees and polyad transformations}
\label{sec:degree:sufficient:stat}

Consider a level of fixed effect $g\in\sG$. For $\rho\in g(\sI)$, a sufficient statistics  for the fixed effect $\theta_{\rho}^g$ appearing in the log-likelihood~\eqref{eq:loglikelihood} is
\[
\delta^g_\rho(y) = \sum_{\bi \in \sI : g(\bi) = \rho } y_\bi,
\]
which we call the degree of $\rho$ relative to the fixed effect level $g$ in the graph~$y$.
This quantity generalizes the notion of degree used in \cite{graham2017econometric}, capturing the total number of connections between edges $\bi$ for which $g(\bi) = \rho$.\footnote{Consider for instance Example~\ref{ex:3:way}, where the index $i_1,i_2,i_3$ are denoted $i,j,t$ as in many gravity models. The degree of (4,5) relative to the fixed effect level $u_{ij}$ is $\delta^{1,2}_{4,5}(y)=\sum_{t} y_{45t}$.} 
The family of degrees for the fixed effects level $g \in \sG$ is denoted by $\delta^g(y) = (\delta^g_\rho(y):\rho \in g(\sI))$. The family of all degrees is denoted by 
$\delta(y)=\left(\delta^g(y): g\in \sG \right)$.

As announced above, we condition the likelihood on the family of degrees $\delta(Y)$:
\begin{equation}
    \label{eq:cond:loglikelihood}
     \bP_{\beta,\theta^\sG}\left( Y = y \,|\, X,\delta(Y)=\delta(y) \right) = 
     \frac{\dpp \prod_\bi e^{-\lambda_\bi} \lambda_{\bi}^{y_\bi}/y_\bi!}{\dpp \sum_{z|\delta(z)=\delta(y)} \prod_\bi e^{-\lambda_\bi} \lambda_{\bi}^{z_\bi}/z_\bi!}.
 \end{equation}
Rewriting the conditional likelihood as
\begin{eqnarray}\label{eq:conditional_likelihood_degree}
     \bP_{\beta,\theta^\sG}\left( Y = y \,|\, X,\delta(Y)=\delta(y) \right) &=&
     \frac{\exp\left\{ \sum_\bi y_\bi\big(\sum_{g\in\sG} \theta^g_{g(\bi)}\big)+ y_\bi \beta^\top  X_\bi -\ln y_\bi!  \right\} }     {\sum_{z:\delta(z)=\delta(y)} \exp\left\{ \sum_\bi z_\bi\big(\sum_{g\in\sG} \theta^g_{g(\bi)}\big)+ z_\bi \beta^\top  X_\bi-\ln z_\bi!  \right\}}
     \nonumber\\
     &=&
     \frac{\exp\left\{ \sum_{g\in\sG,\rho\in g(\sI)} \delta^g_\rho(y)\theta^g_\rho+ \sum_\bi\big( y_\bi \beta^\top  X_\bi -\ln y_\bi!\big)  \right\} }     {\sum_{z:\delta(z)=\delta(y)} \exp\left\{ \sum_{g\in\sG,\rho\in g(\sI)} \delta^g_\rho(z)\theta^g_\rho+ \sum_\bi\big( z_\bi \beta^\top  X_\bi-\ln z_\bi!\big)  \right\}}
     \nonumber\\
     &=&
     \frac{\exp\left\{ \sum_\bi y_\bi \beta^\top  X_\bi -\ln y_\bi!  \right\} }     {\sum_{z:\delta(z)=\delta(y)} \exp\left\{ \sum_\bi z_\bi \beta^\top  X_\bi-\ln z_\bi!  \right\}}
\end{eqnarray}
shows that it does not depend on the fixed effects $\theta^\sG$ regardless of their structure $\sG$.
In the next subsection, we characterize the support of the distribution of the $Y$ conditional on all degrees $\delta(Y)$.

\medskip

\begin{definition}\label{def:polyads}
    A {\bf polyad} $\xi$ of $\cI=[n_1]\times \cdots\times [n_D]$ is a $2\times D$ matrix 
    \begin{equation*}
        \xi = \begin{pmatrix}
            j_1 & j_2 & \cdots & j_D\\ 
            j_1^\prime & j_2^\prime & \cdots & j_D^\prime\\ 
        \end{pmatrix}
    \end{equation*}where for all $d\in[D]$, $j_d\neq j_d^\prime\in[n_d]$. We denote by $\Xi$ the set of all polyads of $\cI$.  
    \end{definition}

An edge of a polyad $\xi$ is an index $\bi=(i_d)_d\in\cI$ such that $i_d\in\{j_d, j_d^\prime\}$ for all $d\in[D]$.
We denote by $\cE(\xi)$ the set of all edges of $\xi$. 
Any polyad $\xi$ has $|\cE(\xi)|= 2^D$ edges. In other words, a polyad $\xi$ induces a subgraph of $Y$ made of $2^D$ edges with weights $(y_{\bi})_{\bi\in \cE(\xi)}$.

Polyads as introduced in Definition~\ref{def:polyads} are generalizations to the $D$-dimensional framework of tetrads from \cite{charbonneau2012multiple, graham2017econometric, jochmans2018semiparametric} defined for $D=2$. The total number of polyads, i.e. the size of $\Xi$, is $\prod_{d=1}^D n_d(n_d-1)$. In the upper left corner of Figure \ref{fig:def_examples} we exemplify a polyad on $D=2$,

\begin{definition}
      Let $\xi\in\Xi$ be a polyad. Let $\bi\in\cI$. The {\bf sign of $\bi$ relative to $\xi$} is defined as 
    \begin{equation*}
        s_\xi(\bi) = \prod_{d=1}^D \left( \mathbf{1}\{i_d = j_d\} - \mathbf{1}\{i_d = j_d'\} \right).
    \end{equation*}
\end{definition}

In particular, $s_\xi(\bi) =0$ when $\bi$ is not an edge of $\xi$ and $s_\xi(\bi)\in\{\pm1\}$ otherwise. The main purpose of  the sign function $s_\xi$ is that it gives the signs of the diff-in-diff property stated below. The sign function is represented in the upper right corner of Figure \ref{fig:def_examples}, each edge has a sign associated to it, notice that the signs sum zero for all axis and that the sign of $\bj=(j_d)_{d\in [D]}$ - the first row vector of $\xi$ - is always 1.

We now define a class of transformations  indexed by polyads. These transformations act on $D$-partite graphs with integer edge weights, i.e., on the set $\ZZ^\sI$. Recall that we see the dependent variable $Y \in \NN^\sI$
as a graph with \emph{nonnegative} edge weights. This discrepancy plays an important role in the analysis developed below.   

\smallskip

\begin{definition} Let $\xi\in\Xi$ be a polyad. The polyad transformation $T_\xi: \ZZ^\cI\rightarrow \ZZ^\cI$ is defined by $T_\xi(y) = y + s_{\xi}$ where $s_\xi=(s_\xi(\bi):\bi\in\cI)$ i.e. $T_\xi(y)$ is a graph with weights given for all $\bi\in\cI$ by
\begin{equation}
\label{eq:diff:and:diff:operatir}
     T_\xi(y)_\bi = y_\bi +s_\xi(\bi).
\end{equation}For all $r\in\ZZ$, $T_\xi^r(y) = y+rs_\xi$.
\end{definition}

For any polyad $\xi$, the transformation $T_\xi$ alters only the subgraph of $y$ induced by the polyad $\xi$. In other words, $y'_\bi=y_\bi$ for all $\bi\notin \cE(\xi)$. The proposition below states the polyad transformations preserve degrees and, conversely, that they allow to generate all graphs sharing the same degrees as a given graph.\footnote{The converse result is stated in Proposition~\ref{prop:from:degrees:to:polyads} only for $\sG=\GGmax$. In the appendix, we consider any fixed effect structure $\sG$.}

\begin{proposition}[Characterization of degree-preserving transformations]\label{prop:from:degrees:to:polyads}
Consider any graph $y\in \ZZ^\cI$ and any polyad $\xi\in\Xi$. If  $y'=T_\xi(y)$, then we have
\begin{equation}
    \label{eq:degree:preservation}
    \delta(y')=\delta(y).
\end{equation}
Conversely, take two graphs $y$ and $y'$ in $\ZZ^\cI$ having the same degrees, i.e., such that $\delta(y')=\delta(y)$.
Suppose furthermore that the structure of fixed effect is $\GGmax$. Then there exists a finite sequence of integers $r_1,\dots,r_m\in\ZZ$ and finite sequence of polyads such that
\begin{equation}
    \label{eq:from:y:to:yprime}
    y'= T_{\xi_1}^{r_1} \circ \cdots \circ T_{\xi_m}^{r_m} (y).
\end{equation}
\end{proposition}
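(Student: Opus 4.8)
The forward direction I would dispatch immediately, by checking that the sign vector $s_\xi$ has vanishing degrees at every proper level $g\subsetneq[D]$. Writing the components of $\rho\in g(\sI)$ as $\rho_d$ for $d\in g$, the product form of the sign function gives
\[
\sum_{\bi:\,g(\bi)=\rho} s_\xi(\bi)
= \prod_{d\in g}\bigl(\mathbf{1}\{\rho_d=j_d\}-\mathbf{1}\{\rho_d=j_d'\}\bigr)\;\prod_{d\notin g}\;\sum_{i_d\in[n_d]}\bigl(\mathbf{1}\{i_d=j_d\}-\mathbf{1}\{i_d=j_d'\}\bigr),
\]
and each factor of the second product equals $1-1=0$ since $j_d\neq j_d'$; as $g$ is a proper subset, at least one such $d\notin g$ exists. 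Hence $\delta(s_\xi)=0$, and since $T_\xi(y)=y+s_\xi$ and $\delta$ is linear, $\delta(T_\xi(y))=\delta(y)$, which is \eqref{eq:degree:preservation}.

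For the converse I would first recast it as a statement about an integer lattice. Because composing polyad transformations only adds the corresponding integer-scaled sign vectors, $T_{\xi_1}^{r_1}\circ\cdots\circ T_{\xi_m}^{r_m}(y)=y+\sum_{\ell} r_\ell s_{\xi_\ell}$, setting $w:=y'-y$ makes \eqref{eq:from:y:to:yprime} equivalent to: every $w\in\ZZ^\cI$ with $\delta(w)=0$ lies in the $\ZZ$-span of $\{s_\xi:\xi\in\Xi\}$. I would prove this by induction on $D$ (assuming each $n_d\ge 2$, as an axis with $n_d=1$ carries no variation and can be dropped). For $D=1$ the only proper level is $g=\emptyset$, the condition reads $\sum_{i_1}w_{i_1}=0$, and the signs $s_\xi=e_{j_1}-e_{j_1'}$ are the standard generators of the zero-sum sublattice of $\ZZ^{[n_1]}$.

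For the inductive step I would slice $w$ along the last axis, $w^{(k)}_{\bi'}:=w_{(\bi',k)}\in\ZZ^{[n_1]\times\cdots\times[n_{D-1}]}$. For $\GGmax$ the levels $[D]\setminus\{d\}$ with $d<D$ say precisely that each slice $w^{(k)}$ has all its $(D-1)$-dimensional axis marginals zero, while the level $[D]\setminus\{D\}$ gives $\sum_k w^{(k)}=0$. By the induction hypothesis each slice decomposes as $w^{(k)}=\sum_\alpha c_{\alpha,k}\,s_{\xi'_{\alpha,k}}$ into $(D-1)$-dimensional polyad signs. The key point is that such a sign lifts to a full polyad supported on two slices: appending the pair $(1,k)$ on axis $D$ to $\xi'$ yields a polyad $\xi$ with $s_\xi=+s_{\xi'}$ on slice $1$, $s_\xi=-s_{\xi'}$ on slice $k$, and zero elsewhere. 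Using these lifts to push every slice $k\ge 2$ onto slice $1$, the correction $P:=\sum_{k\ge 2}\sum_\alpha c_{\alpha,k}\,s_{\xi_{\alpha,k}}$ cancels $w^{(k)}$ on each slice $k\ge2$ and leaves $w^{(1)}+\sum_{k\ge2}w^{(k)}=\sum_k w^{(k)}=0$ on slice $1$; thus $w+P=0$, so $w=-P$ is an integer combination of $D$-dimensional polyad signs, which finishes the induction and hence \eqref{eq:from:y:to:yprime}.

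The hard part is the converse, and within it the integrality: over $\Q$ the result would follow from a plain rank count, but I need generators over $\ZZ$ with no index or saturation gap between the polyad span and the full zero-degree lattice. The slice-absorption induction is designed exactly to output an explicit integer combination and thereby sidestep any saturation question; finiteness of the polyad sequence is then automatic, since the induction hypothesis supplies finite decompositions and there are finitely many slices.
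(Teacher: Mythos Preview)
Your forward direction is correct and is essentially the paper's argument, only with the product structure of $s_\xi$ made explicit so the vanishing is immediate.

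Your converse is correct as well, but it takes a genuinely different route from the paper. The paper works directly with two graphs $y,y'$ and runs a \emph{double} induction, on $D$ and on the sizes $(n_1,\dots,n_D)$: it first proves a base lemma for the case $n_1=\dots=n_D=2$ (where a single polyad suffices), then inducts on sizes for $D=2$, and finally for general $D$ processes one new slice at a time, using the $(D-1)$-dimensional hypothesis to match that slice and the size-induction hypothesis to finish the remaining block. You instead linearize the problem as $w=y'-y$ lying in the zero-degree sublattice of $\ZZ^\sI$, and induct \emph{only on $D$} from the trivial base $D=1$; your slice-absorption step pushes every slice $k\ge 2$ onto slice $1$ via lifted polyads, and the constraint $\sum_k w^{(k)}=0$ from the level $[D]\setminus\{D\}$ makes slice $1$ vanish automatically. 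The lifting device (appending a column to a $(D-1)$-polyad) is the same in both proofs; what differs is that the paper treats one new slice per inductive step and relies on a separate size induction and base lemma, while your lattice reformulation lets you cancel all slices simultaneously and so dispense with both. Your argument is shorter and delivers integrality for free, since the absorption outputs an explicit $\ZZ$-combination; the paper's argument is closer in spirit to a constructive sequence of graph moves but requires more bookkeeping.
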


\begin{proof}
Pick a polyad $\xi$, a fixed effect level $g\in\sG$, $\rho\in g(\sI)$, and $y'=T_\xi(y)$. We observe  that the equality
\begin{equation}
\sum_{\bi\in\sI:g(\bi)=\rho} s_\xi(\bi) =0
\label{eq:sum:sign}    
\end{equation}
immediately implies

\[
\delta^g_{\rho}(y')= \sum_{\bi\in\sI:g(\bi)=\rho} y'_{\bi}=
\sum_{\bi\in\sI:g(\bi)=\rho} y_{\bi} + s_\xi(\bi) =
\sum_{\bi\in\sI:g(\bi)=\rho} y_{\bi}  = \delta^g_{\rho}(y),
\]
and hence the direct part of the proposition.
To prove~\eqref{eq:sum:sign}, we consider an edge $\bi$ such that
$g(\bi)=\rho$.
If $\bi$ is not an edge of the polyad-induced subgraph, i.e., if $\bi\notin\sE(\xi)$, the sign $s_\xi(\bi)=0$. Consider now the edges that belong to $\sE(\xi)$. There exist $2^{D-|g|}\geq 2$ edges $\bi\in \sE(\xi)$ such that $g(\bi)=\rho$, with half of them having  $s_\xi(\bi)=1$ and the other half having $s_\xi(\bi)=-1$, 
which yields~\eqref{eq:sum:sign}. The converse part is proved in the Appendix \ref{subsec:degree_preserving}. 
\end{proof}

According to Proposition~\ref{prop:from:degrees:to:polyads}, the conditioning set in the likelihood~\eqref{eq:cond:loglikelihood}
can be written as
\begin{equation}\label{eq:total_set_cond_degree}
\left\{\,Y \,|\, \delta(Y)= \delta(y)\,\right\}=
\left\{\, 
Y= T_{\xi_1}^{r_1} \circ \cdots \circ T_{\xi_m}^{r_m} (y)
:
m\geq 0, (\xi_1,\dots,\xi_m)\in\Xi^m, (r_1,\dots,r_m)\in\ZZ^m
\,\right\},
\end{equation}
which can be represented only at a prohibitively high computational cost. To bypass this computational barrier, we do not condition on $Y$ such that $\delta(Y) = \delta(y)$ but only on a subset of \eqref{eq:total_set_cond_degree}, given rise to the concepts of classes and orbits. 

\subsection{A classification problem on polyads and the associated estimator of $\beta_\star$}
\label{sub:a_classification_problem_on_polyad_and_the_associated_estimator_of_}

Rather than attempting to exploit the data variations within the whole set \eqref{eq:total_set_cond_degree}, we propose to exploit variations in the directions induced by each polyad separately, i.e., to restrict attention to sequences of polyads of length $m=1$ in \eqref{eq:total_set_cond_degree}. 
Because the count variable $Y_\bi$ takes nonnegative values, the transformations of the graph $T_\xi^r(Y)$
that do not belong to $\NN^\sI$ are irrelevant because they occur with zero probability. We thus further restrict the conditioning set.
For any polyad $\xi\in\Xi$, we introduce the nonnegative integers $m_\xi(y)$ and $M_\xi(y)$ given by
\begin{equation}
\label{eq:original-Mm}
m_\xi(y) = \bigwedge_{\bi : s_\xi(\bi) = 1} y_\bi  \ \ \ \text{ and }\ \ \   M_\xi(y) = \bigwedge_{\bi : s_\xi(\bi) = -1} y_\bi. \end{equation}
The range of integers $r$ such that $T_\xi^r(y)\in\NN^\sI$ is
$\{-m_\xi(y), \dots, M_\xi(y)\}$.
Accordingly, we define the orbit $\sO_\xi(y)$ of a graph $y\in\NN^\sI$ with respect to~$\xi$ as 
\begin{equation}
    \label{eq:def:orbit}
\sO_\xi(y) := \left\{ \, T_\xi^r(y):   -m_\xi(y) \leq r \leq M_\xi(y) \,\right\}.
\end{equation}
All graphs $y$ in the orbit $\sO_\xi(Y)$ of the observed graph have the same degrees as $Y$, $\delta(y)=\delta(Y)$, and coincide with~$Y$ outside the subgraph induced by $\xi$: $Y_\bi=y_\bi$ for $\bi\notin\sE(\xi)$.

\begin{figure}[ht]
    \centering
    \includegraphics[width=\linewidth]{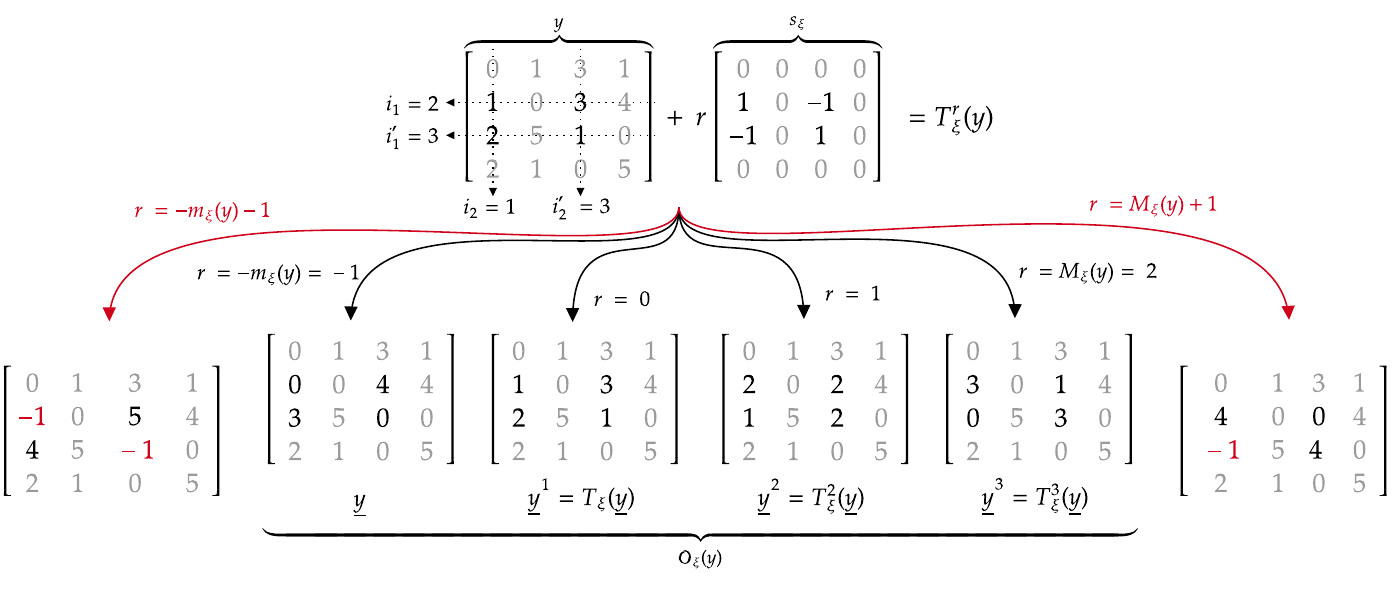}
    \caption{A numerical example illustrating the definitions introduced in this section and their interrelations is provided under the two-way setting (Example \ref{ex:two_way}). The figure depicts a polyad $\xi = \begin{pmatrix}
        2 & 1\\
        3 & 3
    \end{pmatrix}$, the sign $s_\xi(\bi)$ of an edge $\bi$ with respect to the polyad, the quantities $m_\xi(y)$ and $M_\xi(y)$, as well as the orbit $\sO_\xi(y)$. Notice that taking $r$ below $-m_\xi(y)$ or over $M_\xi(y)$ yields negative --- and therefore invalid --- edges.}
    \label{fig:def_examples}
\end{figure}

We introduce the loss function at the level of the polyad~$\xi$ as the opposite of the log-likelihood condition on $Y\in\sO_\xi(y)$:
\begin{equation}
\label{eq:elementary:polyad:loss}   
\beta \to \ell_\xi(y|X,\beta) = -\ln\bP_\beta \left(\,Y=y \,|\, X, Y\in \sO_\xi(y) \,\right).
\end{equation}
A polyad $\xi$ is not informative if the orbit $\sO_\xi(Y)$ is a singleton, i.e., if both $m_\xi(Y)$ and $M_\xi(Y)$ are zero. For non-informative polyads, we have $\ell_\xi(y|X,\beta) =0$ for all $\beta \in \R^p$. We can thus restrict attention to informative (or ``active'') polyads $\xi$ for which the  corresponding orbit $\sO_\xi(Y)$ has room for potential variation in the data, i.e., contains at least two distinct elements. Formally, active polyads satisfy   $|\sO_\xi(Y)| = m_\xi(Y)+M_\xi(Y)+1\geq 2$, such polyads play a key role in both theory and practice.  

To simplify notations, we denote the transformed graph~$T^r_\xi(y)$ by $y^r$, hence $y^r_\bi=y_\bi+r  s_\xi(\bi)$. The same computation as in~\eqref{eq:conditional_likelihood_degree} yields 
\begin{eqnarray*}
    \bP_\beta \left(\,Y=y \,|\, X, Y\in \sO_\xi(y) \,\right) &=& 
\frac{\dpp \exp\left\{ \sum_\bi y_\bi \beta^\top  X_\bi -\ln y_\bi!  \right\} }     {\dpp \sum_{r = -m_\xi(y)}^{M_\xi(y)}\exp\left\{ \sum_\bi y_\bi^r \beta^\top  X_\bi-\ln y_\bi^r!  \right\}} \label{eq:cond:proba:orbit:logit}\\
&=&
\left[\sum_{r = -m_\xi(y)}^{M_\xi(y)}\exp\left\{ \sum_\bi r s_\xi(\bi) \beta^\top  X_\bi + \ln \frac{y_\bi!}{y_\bi^r!}  \right\}
\right]^{-1}. \label{eq:cond:proba:orbit}
\end{eqnarray*}
As already explained, all graphs in the orbit $\sO_\xi(Y)$ coincide with $Y$ outside the subgraph induced by the polyad $\xi$. Formally, for $\bi\notin\sE(\xi)$, we have $s_\xi(\bi)=0$ and hence $y_\bi^r=y_\bi$ for all $r$ between $-m_\xi$ and $M_\xi$.  In the above sums over edges $\bi$, we can thus restrict attention to edges $\bi\in\sE(\xi)$. From~\eqref{eq:elementary:polyad:loss},
the loss function associated with the polyad $\xi$  is
\begin{equation}
    \label{eq:def_ell:diff:diff}
    \beta \to \ell_\xi(Y | X, \beta) = \ln\left( \sum_{r = -m_\xi(Y)}^{M_\xi(Y)} \exp\left\{ r \beta^\top  \Xtilde_\xi +\sum_{\bi \in\sE(\xi)} \ln \frac{Y_\bi!}{Y^r_\bi!}  \right\} \right),
\end{equation}
where the generalized ``difference-in-differences'' (DiD) operator $\Xtilde_\xi$ is defined as 
\begin{equation}
\label{eq:def:diff:and:diff}
    \widetilde{X}_\xi = \sum_{\bi \in \sI} s_\xi(\bi) X_\bi
    = \sum_{\bi \in \sE(\xi)} s_\xi(\bi) X_\bi.
\end{equation}  
In Example~\ref{ex:two_way}, consider the polyad 
\[
\xi=\begin{pmatrix}
            1 & 1 \\ 
            2 & 2 \\ 
        \end{pmatrix}.
\]
In this two-way example, we recover the usual DiD formula:
\[
\Xtilde_\xi= X_{22}- X_{21} -(X_{12} - X_{11})= X_{22}- X_{21} -X_{12} + X_{11}.
\]
In Example~\ref{ex:3:way}, consider the polyad 
\[
\xi=\begin{pmatrix}
            1 & 1 & 1\\ 
            2 & 2 & 2\\ 
        \end{pmatrix}.
\]
In this three-way example, we get (the opposite of) a ``triple difference'' formula:
\begin{eqnarray*}
-\Xtilde_\xi &=& [X_{222}- X_{221} -(X_{212} - X_{211})]-
[X_{122}- X_{121} -(X_{112} - X_{111})] \\
&=&
X_{222}- X_{221} -X_{212} + X_{211}-
X_{122}+ X_{121} +X_{112} - X_{111}.
\end{eqnarray*}

\begin{remark}[Difference-in-differences (DiD)] For a given polyad $\xi$, we may look at the sign  $s_\xi=(s_\xi(\bi))_\bi$ as a DiD operator acting on the observed graph $y$ and the observed feature vector $(X_\bi)_\bi$ either in an additive or a multiplicative way leading to the polyad transformation $T_\xi(y) = y+ s_\xi$ and the polyad feature $\Xtilde_\xi = \inr{s_\xi, (X_\bi)_\bi}$. The tensor $s_\xi=(s_\xi(\bi))_\bi$ provides the sign in a DiD approach.
\end{remark}

Finally, we call~$\Xia$ the set of informative (or active) polyads and form  the loss function taking $y = Y$ (the observed graph with its features $X$):
\begin{equation}\label{eq:def:loss:function}
\beta \to \widehat{L}_\Xi(Y|X,\beta) = \sum_{\xi\in\Xi} \ell_\xi(Y|X,\beta) = \sum_{\xi\in\Xia} \ell_\xi(Y|X,\beta).
\end{equation}
Two immediate observations will play an important role in our analysis. First, because the LogSumExp function is convex,\footnote{Recall that the LogSumExp function $\RR_{+*}^S\rightarrow\RR$\ is given by $\lse(U_0,\cdots,U_{S-1};S) = \ln\left(\sum_{s=0}^{S-1} e^{U_s}\right)$.} the function $\ell_\xi(y | X, \beta)$  is convex in $\beta$ for any polyad~$\xi$ and hence the loss function is convex. Second, given two active polyads $\xi$ and $\xi'$, the terms $\ell_\xi(Y|X,\beta)$ and $\ell_{\xi'}(Y|X,\beta)$ in the above sum are not independent if the two polyads share at least one edge, i.e., if $|\sE(\xi)\cap\sE(\xi')|\geq 1$.

We are now in a position to define our loss function and the associated estimator.

\begin{definition}\label{def:loss_fct_estimator}
Given the observed graph $Y$, our loss function is $\beta \to \widehat{L}_\Xi(Y | X, \beta)$ and the Polyads  estimator of the parameter of interest $\beta_\star$ is given by
\begin{equation}
    \label{eq:def_beta_hat}
    \widehat{\beta}_\Xi = \argmin_{\beta} \widehat{L}_\Xi(Y | X, \beta).
\end{equation}  
\end{definition}

The Polyads estimator relies on an objective function formed by summing over overlapping polyads, which intrinsically introduces statistical dependence. This sharply contrasts with the full Conditional MLE, which operates as a sum over independent edges. Furthermore, because we do not compute the full composition of polyad operators (as discussed in Proposition~\ref{prop:from:degrees:to:polyads}), our approach does not explore the entire combinatorial space of graphs sharing the observed degree sequence. Consequently, we naturally anticipate a loss of statistical efficiency. A fundamental question arises: exactly how much information is sacrificed in this process?\footnote{We wish to thank St{'e}phane Bonhomme for raising this question.} We formally assess this information loss in Appendix~\ref{sec:loss_info}, where we identify two structural parameters that drive the variance inflation. We also provide examples of both sparse and dense networks were this penalty is small.

The above Polyads estimator can be interpreted in relation to a logit classification problem. Specifically, given a polyad~$\xi$ and the observed orbit $\sO_\xi(Y)$, the minimal rank $m_\xi(Y)$ is distributed according to a conditional Logit model, see \cite{McFa_Chap_1974}. To see this, consider the transformed graph 
$\yinf=T^{-m_\xi(Y)}(Y)$, which can be thought of as the ``minimal'' graph in the orbit of the observed graph $y$. In the example of Figure~\ref{fig:def_examples}, the graph $\yinf$ is the second graph among  the six graphs shown on the bottom line (that is for $r=-m_\xi(y)=-1$). The observed orbit  $\sO_\xi(Y)$ can thus be represented as
\[
\sO_\xi(Y) := \left\{ \, T_\xi^m(\yinf):   0\leq m \leq |\sO_\xi(Y)|-1 \,\right\}. 
\]
If the polyad $\xi$ is active, the  size of the orbit $|\sO_\xi(Y)|= m_\xi(Y)+ M_\xi(Y)+1$ is greater than 2.
Changing the indices  $m=r+m_\xi(Y)$ in~\eqref{eq:cond:proba:orbit:logit} yields the conditional logit structure for the distribution of $m_\xi(Y)$
\begin{equation}
    \label{eq:logit}
\bP_\beta(m_\xi(Y)=m_\xi(y) \,|\, X, Y\in \sO_\xi(y))  = \frac{e^{v(m_\xi(y),\xi;\beta)}}{\sum_{m=0}^{|\sO_\xi(Y)|-1} e^{v(m,\xi;\beta)}}
\ \ \mbox{ with }\ \
v(m,\xi;\beta) =  m \beta^\top  \Xtilde_\xi -\sum_{\bi \in\sE(\xi)} \ln \yinf_\bi^m!
\end{equation}

\begin{remark}\label{rem:analogy_logit}(Analogy with conditional likelihood methods used in panel and network data)
For any active polyads $\xi\in\Xia$, the tensor $(s_\xi(\bi))_\bi$, whose entries belong to $\{-1,0,1\}$,
is a ``differencing vector'' in the sense of \cite{Dano_Hono_Weid_WP_2025}. 
It applies linearly to the features / coordinates of $X$ to deliver the generalized difference-in-differences term $\widetilde{X}_\xi$.
The corresponding transformation $T_\xi$ plays the same role for the count data $Y$. Conditionally on the observed graph $y$ belonging to the orbit $\sO_\xi(Y)$, we thus obtain a conditional logit classification problem, with the number of alternatives,
$|\sO_\xi(Y)|-1$, being polyad-specific.


\end{remark}

The conditional logit distribution of $m_\xi(Y)$ yields the following result.



\begin{lemma} \label{lem:derivatives}
For every polyad $\xi$ and $y=(y_\bi)_{\bi}$, the function $\beta\to \ell_\xi(y | X, \beta)$ has gradient 
\begin{equation}
    \label{eq:foc}
\nabla_\beta \ell_\xi(y | X, \beta)   =   \left(  \Ex_\beta\left[ 
m_\xi(Y) |X,  Y \in \sO_\xi(y) \right] - m_\xi(y) \right) \widetilde{X}_\xi  
\end{equation}
and its Hessian is non-negative since it is given by
\begin{equation}
    \label{eq:soc}
     \nabla_\beta^2 \ell_\xi(y| X, \beta)   = \mathbb{V}_\beta\left[ 
m_\xi(Y) |X,  Y \in \sO_\xi(y) \right] \widetilde{X}_\xi\widetilde{X}_\xi^\top  
\end{equation}
\end{lemma}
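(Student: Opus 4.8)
The plan is to recognize $\ell_\xi$ as the log-partition function (a $\lse$) of a one-parameter exponential family, apply the standard differentiation rules for such functions, and then translate the resulting moments of the summation index $r$ into moments of the random variable $m_\xi(Y)$.

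First I would rewrite the loss \eqref{eq:def_ell:diff:diff} as $\ell_\xi(y|X,\beta)=\ln Z(\beta)$ with
\[
Z(\beta)=\sum_{r=-m_\xi(y)}^{M_\xi(y)} \exp\bigl\{ r\,\beta^\top\Xtilde_\xi + c_r \bigr\},\qquad c_r=\sum_{\bi\in\sE(\xi)}\ln\frac{y_\bi!}{y^r_\bi!},
\]
observing that the terms $c_r$ carry no dependence on $\beta$. The softmax weights $p_r(\beta)=\exp\{r\,\beta^\top\Xtilde_\xi+c_r\}/Z(\beta)$ are, by construction, exactly the conditional logit probabilities of \eqref{eq:logit} after the change of index $m=r+m_\xi(y)$.

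Next I would differentiate once. Since the only $\beta$-dependence in the exponent is through $r\,\beta^\top\Xtilde_\xi$, whose $\beta$-gradient is $r\,\Xtilde_\xi$, the chain rule gives $\nabla_\beta\ell_\xi=\sum_r p_r(\beta)\,r\,\Xtilde_\xi=\bigl(\sum_r p_r(\beta)\,r\bigr)\Xtilde_\xi$. The one content-bearing step is the identification of $\sum_r p_r r$: since $T^r_\xi(y)_\bi=y_\bi+r$ for every edge with $s_\xi(\bi)=1$, the minimal rank satisfies $m_\xi\bigl(T^r_\xi(y)\bigr)=m_\xi(y)+r$, so under $p$ the index $r$ is distributed as $m_\xi(Y)-m_\xi(y)$ conditionally on $Y\in\sO_\xi(y)$. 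Hence $\sum_r p_r r=\Ex_\beta[m_\xi(Y)\mid X,Y\in\sO_\xi(y)]-m_\xi(y)$, which yields \eqref{eq:foc}.

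For the Hessian I would differentiate the gradient once more. Because $\Xtilde_\xi$ is constant in $\beta$, it suffices to differentiate the scalar $\sum_r p_r(\beta)\,r$; the standard exponential-family identity gives $\nabla_\beta\bigl(\sum_r p_r r\bigr)=\bigl(\sum_r p_r r^2-(\sum_r p_r r)^2\bigr)\Xtilde_\xi=\mathbb{V}_\beta[r\mid\cdot]\,\Xtilde_\xi$, i.e.\ the variance of the sufficient statistic times $\Xtilde_\xi$. Using shift-invariance of the variance, $\mathbb{V}_\beta[r\mid\cdot]=\mathbb{V}_\beta[m_\xi(Y)\mid X,Y\in\sO_\xi(y)]$, and the outer-product structure then delivers \eqref{eq:soc}. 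Non-negativity of the Hessian is immediate, as the variance is a non-negative scalar and $\Xtilde_\xi\Xtilde_\xi^\top$ is positive semidefinite. I expect no deep obstacle here: this is the classical log-partition computation, and the only care needed is the bookkeeping that maps the summation index $r$ onto the random variable $m_\xi(Y)$ together with the shift-invariance of the variance.
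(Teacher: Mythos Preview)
Your proposal is correct and follows essentially the same approach as the paper: both recognize $\ell_\xi$ as a log-sum-exp (log-partition) function of the conditional logit \eqref{eq:logit} and differentiate to obtain the mean and variance of the sufficient statistic. The only cosmetic difference is that the paper's main-text proof works directly with the probabilities $p_\xi(m;\beta)$ indexed by $m$, while you work with the summation index $r=m-m_\xi(y)$; the appendix version of the proof is virtually identical to yours, including the explicit use of $m_\xi(T^r_\xi(y))=m_\xi(y)+r$ and the shift-invariance of the variance.
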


\begin{proof}
Denote by $p_\xi(m;\beta)$ the conditional probability that $m_\xi(Y)=m$ given by \eqref{eq:logit}. Notice that
\begin{eqnarray}
\nabla_\beta p_\xi(m;\beta) &=& m p_\xi(m;\beta) \Xtilde_\xi- \sum_{m'=0}^{|\sO_\xi(y)|-1} m'  p_\xi(m;\beta) p_\xi(m';\beta) \Xtilde_\xi \nonumber\\
&=&
p_\xi(m;\beta) \left\{ m - \Ex_\beta\left[ 
m_\xi(Y) |X,  Y \in \sO_\xi(y) \right] \right\} \Xtilde_\xi.   \label{eq:diff:logit}
\end{eqnarray}

Using $\ell_\xi(y | X, \beta)=-\ln p_\xi(m_\xi(y);\beta)$, we get
    \[
    \nabla_\beta \ell_\xi(y | X, \beta)= \Xtilde_\xi \sum_{m=0}^{|\sO_\xi(y)|-1} p_\xi(m; \beta) \,(m-m_\xi(y)) ,
    \]
which gives~\eqref{eq:foc}. Differentiating the above equality and using~\eqref{eq:diff:logit} yields
\[
 \nabla_\beta^2 \ell_\xi(y| X, \beta)  = \widetilde{X}_\xi\widetilde{X}_\xi^\top 
 \sum_{m=0}^{|\sO_\xi(y)|-1} p_\xi(m; \beta) \,(m-m_\xi(y)) 
 \left\{ m- \Ex_\beta\left[ 
m_\xi(Y) |X,  Y \in \sO_\xi(y) \right]\right\},
\]
and hence~\eqref{eq:soc}.
\end{proof}

As mentioned above, the loss function $\beta \to \widehat{L}_{\Xi}(Y | X, \beta)$ defined by~\eqref{eq:def:loss:function} is convex (in~$\beta$). 
Thanks to Lemma~\ref{lem:derivatives}, we can now compute its Hessian for $y=Y$ in
\[
\nabla^2_\beta \widehat{L}_\Xi(y | X, \beta) =  \sum_{\xi \in \Xia} \mathbb{V}_\beta\left[ 
    m_\xi(Y) | Y \in \sO_\xi(y) \right] \widetilde{X}_\xi\widetilde{X}_\xi^\top 
    \]
and discuss strict convexity.

\begin{lemma}
\label{lem:strictly_convex}
    The loss function $\beta\to \widehat{L}_\Xi(Y | X, \beta)$ defined by~\eqref{eq:def:loss:function} is strictly convex if and only if
$ \sum_{\xi \in \Xi_a} \widetilde{X}_\xi \widetilde{X}_\xi^\top  \succ 0$.
\end{lemma}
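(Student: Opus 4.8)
The plan is to exploit the explicit Hessian formula displayed just before the statement, together with the observation that each elementary loss $\ell_\xi(y\,|\,X,\beta)$ depends on $\beta$ only through the scalar $\beta^\top\widetilde{X}_\xi$ (see~\eqref{eq:def_ell:diff:diff}). I would first record that for any direction $u\in\R^p$,
\[
u^\top \nabla^2_\beta \widehat{L}_\Xi(y\,|\,X,\beta)\, u = \sum_{\xi\in\Xia} \mathbb{V}_\beta\!\left[m_\xi(Y)\,|\,X, Y\in\sO_\xi(y)\right]\,(u^\top\widetilde{X}_\xi)^2 \ge 0,
\]
so the whole question reduces to characterizing when this quadratic form vanishes. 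The key auxiliary fact I would establish is that for every \emph{active} polyad the conditional variance $\mathbb{V}_\beta[m_\xi(Y)\,|\,\cdots]$ is strictly positive for all $\beta$: by definition an active polyad has $|\sO_\xi(Y)|\ge 2$, so under the conditional logit law~\eqref{eq:logit} the variable $m_\xi(Y)$ puts positive mass on at least two distinct integers and is therefore non-degenerate.

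Given this strict positivity, the displayed quadratic form equals zero precisely when $u^\top \widetilde{X}_\xi=0$ for every active $\xi$, i.e. when $u$ is orthogonal to $V:=\mathrm{span}\{\widetilde{X}_\xi : \xi\in\Xia\}$. I would then invoke the elementary identity $u^\top\big(\sum_{\xi\in\Xia}\widetilde{X}_\xi\widetilde{X}_\xi^\top\big)u=\sum_{\xi\in\Xia}(u^\top\widetilde{X}_\xi)^2$, which shows that $\sum_{\xi\in\Xia}\widetilde{X}_\xi\widetilde{X}_\xi^\top\succ 0$ is equivalent to $V=\R^p$, i.e. to the $\widetilde{X}_\xi$ spanning the whole space.

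The equivalence then splits into two directions. For $(\Leftarrow)$, if $V=\R^p$ the Hessian is positive definite at every $\beta$, and a convex function with everywhere positive definite Hessian is strictly convex. For $(\Rightarrow)$ I would argue contrapositively: if $V\ne\R^p$, pick $u\ne 0$ orthogonal to every $\widetilde{X}_\xi$; since $\ell_\xi(y\,|\,X,\beta)$ depends on $\beta$ only through $\beta^\top\widetilde{X}_\xi$ and the non-active polyads contribute zero, the map $s\mapsto \widehat{L}_\Xi(y\,|\,X,\beta+su)$ is constant, so $\widehat{L}_\Xi$ is affine (in fact constant) along the line $\beta+\R u$ and hence not strictly convex.

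The main obstacle to guard against is the temptation to prove the equivalence by asserting ``strictly convex $\iff$ Hessian positive definite everywhere,'' which is false in general (e.g. $t\mapsto t^4$ is strictly convex yet has vanishing second derivative at the origin). The clean fix, which the special structure here supplies, is to use positive-definiteness of the Hessian only for the easy $(\Leftarrow)$ direction and to handle $(\Rightarrow)$ via the affine-degeneracy argument above; equivalently, one may write $\widehat{L}_\Xi(\beta)=F(\Pi_V\beta)$ for the orthogonal projector $\Pi_V$ onto $V$, verify that $F$ is strictly convex on $V$ (its Hessian is positive definite there), and conclude that $\widehat{L}_\Xi$ is strictly convex iff $\Pi_V$ is injective, i.e. iff $V=\R^p$.
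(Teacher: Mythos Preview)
Your proof is correct and follows essentially the same idea as the paper for the sufficiency direction: both rely on the strict positivity of the conditional variance $\mathbb{V}_\beta[m_\xi(Y)\,|\,\cdots]$ for each active polyad, which the paper phrases compactly by defining $\underline{\sigma}^2(\beta):=\min_{\xi\in\Xia}\mathbb{V}_\beta[\cdots]>0$ and writing $\nabla^2_\beta \widehat{L}_\Xi \succeq \underline{\sigma}^2(\beta)\sum_{\xi\in\Xia}\widetilde{X}_\xi\widetilde{X}_\xi^\top$.

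Where you go beyond the paper is in the treatment of the converse $(\Rightarrow)$. The paper's proof as written only establishes $(\Leftarrow)$ and leaves the ``only if'' direction implicit. You handle it explicitly via the degeneracy argument---observing from~\eqref{eq:def_ell:diff:diff} that $\ell_\xi$ depends on $\beta$ only through $\beta^\top\widetilde{X}_\xi$, so if some $u\ne 0$ annihilates every $\widetilde{X}_\xi$ the loss is constant along $\beta+\R u$---and you correctly flag that one cannot simply infer non-strict-convexity from a degenerate Hessian (your $t^4$ remark). This makes your argument more complete than the paper's; the factorization $\widehat{L}_\Xi(\beta)=F(\Pi_V\beta)$ you sketch at the end is a clean way to package both directions at once.
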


In other words, $\beta\to \widehat{L}_\Xi(Y|X,\beta)$ is strictly convex in $\beta$ if the DiD-features vary in the data.

\begin{proof}
Since $|\Xia| < \infty$ for any finite sample, $|\sO_\xi(y)| > 1$ for all $\xi\in\Xia$ and the shape of the distribution of $m_\xi(Y)$ in \eqref{eq:logit}, the minimum $\underline{\sigma}^2(\beta) := \min_{\xi\in\Xia} \mathbb{V}_\beta\left[ 
    m_\xi(Y) | Y \in \sO_\xi(y) \right]$ is strictly positive for all $\beta\in\R^p$. Therefore, for all $\beta\in\R^p$,
    \[ \nabla^2_\beta L(y | X, \beta) \succeq \underline{\sigma}^2(\beta) \sum_{\xi \in \Xi_a} \widetilde{X}_\xi\widetilde{X}_\xi^\top.\]  
\end{proof}

\section{Large sample properties of the Polyads estimator}
\label{sec:theory}

In this section, we establish consistency and asymptotic normality of the Polyads estimator introduced in Section \ref{sec:beta_estimation}. The concept of an active polyad is central to both our theoretical results and the implementation of the method (see Section \ref{sec:computational}). Recall that a polyad $\xi$ is active (for the observed graph $Y$) when its orbit satisfies $|\sO_\xi(Y)| > 1$, meaning it exhibits variation from which the parameters can be identified. This observation motivates normalizing the loss function by the expected number of active polyads and suggests deriving limiting behavior as the number of active polyads increases.

Let $\widehat{N}_a = |\Xi_a|$ denote the observed number of active polyads and let $N_a$ denote its conditional expectation given $X$ under Assumption \ref{assump:model}. Formally,
\begin{equation}\label{eq:normalization_active_polyads}
    \widehat{N}_a :=\sum_{\xi\in\Xi} \mathbb{I}\{\xi \text{ is active}\} \text{ and } N_a := \bE_{\beta_\star}[\widehat{N}_a|X] =  \sum_{\xi\in\Xi} \bP_{\beta_\star}(\xi \text{ is active}|X). 
\end{equation}As mentioned in Remark~\ref{rem:analogy_logit}, $N_a$ plays the role of the average number of data, we therefore normalize the loss function by $N_a$, defining
\begin{equation}
    \label{eq:def_risk}
    \widehat{Q}_\Xi(\beta) = \frac{1}{N_a}\widehat{L}_\Xi(Y|X, \beta) \text{ and } Q_\Xi(\beta) = \bE_{\beta_\star} \left[ \widehat{Q}_\Xi(\beta) \mid X \right].
\end{equation}In machine learning $\beta\to  Q_\Xi(\beta)$ is refered as the risk function.

Recall that $n = \prod_{d=1}^D n_d$ is the size of the graph indexed by $\sI = [n_1] \times [n_2] \times \dots \times [n_D]$ and that if $\Xi$ is the set of polyads given by the indices $\sI$, then $|\Xi|$ has order $n^2$. Through this section, one is given a sequence $\left(\Xi^{(1)}, \Xi^{(2)}, \dots\right)$ of families of polyads, defined over growing sets of indices. Let $n_d^{(k)}$ be the size of the $d$-th dimension associated with $\Xi^{(k)}$. We impose no restrictions on how each dimension $n_d^{(k)}$ grows, requiring only that $n^{(k)} = \prod_{d=1}^D n_d^{(k)} \to \infty$ --- or, equivalently, $|\Xi^{(k)}|\to\infty$. Notably, our results accommodate short panels where one dimension remains bounded while others diverge. For three-way models, this includes settings where the PPML estimator suffers from the incidental parameter problem \citep{weidner2021bias}. We also do not assume the existence of a limiting risk function $Q_\infty$, thus avoiding restrictive assumptions on the asymptotic behavior of fixed effects and covariates. In particular, we do not impose that fixed effects or covariates are drawn from any probability distribution. To circumvent a convoluted notation we use $(k)$ as index in place of $\Xi^{(k)}$, for instance, $\widehat{\beta}_{\Xi^{(k)}}$ is written as $\widehat{\beta}^{(k)}$. Also, notice that associated with each $(k)$ we also have covariates $X^{(k)}$, fixed effects $\theta^{(k)}$ and random variables $Y^{(k)}$.

Finally, unlike \cite{graham2017econometric,jochmans2018semiparametric}, we do not require the components of $\left[n_d^{(k)}\right]$ to arise from random sampling. This permits a more general interpretation of the data generating process. Consider a two-way model of doctor-patient interactions. Under the framework of \cite{graham2017econometric,jochmans2018semiparametric}, one assumes the existence of a large population graph containing all doctors and patients, from which patients and doctors are randomly sampled, inducing distributions on both fixed effects and covariates. Our approach is more constructive: for each $(k)$, the distribution of fixed effects and covariates may differ entirely. In the doctor-patient example (see Section~\ref{sec:experiments:real}), our framework accommodates data collection that expands geographically. For instance, initially observing patients and doctors in Paris only, then adding doctors from Marseille, then adding also patients from Marseille, and so forth. Imposing a random sampling structure would limit this possibility, as any finite sample could contain doctors from Marseille with positive probability.


\subsection{Consistency}

The Polyads estimator \eqref{eq:def_beta_hat} is defined as the minimizer of a loss function that is itself the sum of possibly non-independent terms. Our consistency result proceeds in two steps. First, we show that the normalized empirical loss $\widehat{Q}^{(k)}$ converges to its expectation $Q^{(k)}$ as $k\to\infty$. Second, we establish regularity of $Q^{(k)}$ around its minimum $\beta_\star$. Note that at no point do we require the existence of a limit risk function $\lim_k Q^{(k)}$.

Assumption \ref{ass:first_order_geometry} controls the approximation between the empirical and expected versions of $\widehat{Q}^{(k)}$. Observe that $\ell_\xi$ and $\ell_{\xi'}$ are independent as long as $\xi$ and $\xi'$ share no edges. Moreover, a polyad $\xi$ contributes to the loss only when it is active. Thus, Assumption \ref{ass:first_order_geometry} controls the amount of dependence across the terms of $\widehat{Q}^{(k)}$, enabling a law of large numbers for the normalized losses.

\begin{assumption}\label{ass:first_order_geometry}
We assume that, as $k \to \infty$, 
\begin{equation*}
    \Ex_{\beta_\star, \theta^{(k)}}\left[ \left. \sum_{\xi,\xi' \in \Xi^{(k)}} \mathbf{1}_{ \xi \text{ and }\xi' \text{ are active}} \right|X^{(k)}\right] \gg \Ex_{\beta_\star, \theta^{(k)}}\left[ \left. \sum_{\xi,\xi' \in \Xi^{(k)}} \mathbf{1}_{ \xi \text{ and }\xi' \text{ are active and }|\sE(\xi) \cap \sE(\xi')| \geq 1 } \right|X^{(k)} \right].
\end{equation*}
\end{assumption}

Assumption \ref{ass:first_order_geometry} is not restrictive: as $n^{(k)}$ grows, the number of positive entries $Y_\bi^{(k)}$ with disjoint indices should also increase, and therefore, the number of pairs of active polyads sharing no edge should outnumber those sharing at least one edge. Notice that if one assumes a distribution on the fixed effects and on the covariates this assumption is automatically true, as in this case the probability that $\xi$ and $\xi'$ are both active is always upper and lower bounded by a constant, thus it is just a matter of comparing the number of pairs of polyads -- which has order $(n^{(k)})^4$ -- and the number of pairs of polyads sharing an edge -- which has order $(n^{(k)})^3$. This last approach is used by \cite{graham2017econometric} and \cite{jochmans2018semiparametric}.

Next, Assumption \ref{ass:consistency} ensures enough curvature of $Q^{(k)}$ around the minimum $\beta_\star$ so that it can be well identified.

\begin{assumption}\label{ass:consistency}
 We assume that as $k \to \infty$, the Polyads estimator $\widehat{\beta}^{(k)}$ defined in \eqref{eq:def_beta_hat} is the unique minimizer of $\widehat{Q}^{(k)}(\beta)$. We assume that there exists $c_0>0$ such that for all $k$ large enough
 \begin{equation*}
     \frac{1}{N_a^{(k)}}\sum_{\xi\in\Xi^{(k)}} \bE_{\beta_\star}\left[\mathbb{V}_{\beta_\star}\left[ 
m_\xi(Y^\prime) |X,  Y^\prime \in \sO_\xi(Y) \right]|X \right] \widetilde{X}_\xi\widetilde{X}_\xi^\top \succeq c_0 I_p
 \end{equation*}where $Y, Y^\prime$ are iid distributed according to Assumption~\ref{assump:model}.
\end{assumption}

We are now in position to state the following consistency result:

\begin{theorem}[Consistency of the Polyads estimator]\label{theo:consistency_tetrads}Grant Assumptions \ref{assump:model}, \ref{ass:first_order_geometry} and \ref{ass:consistency}. We assume that there exists $c_0$ such that for all $k$, $\bi$ and $\xi$, $\norm{\widetilde X_\xi}_2\leq c_0$ and $\lambda_\bi(\beta_\star, \theta^\sG)\leq c_0$.  Then, for almost all $\left(X^{(k)}\right)_k$ and for all $\eps>0$, if $n^{(k)} \to \infty$ as $k\to\infty$, then
\[\bP_{\beta_\star, \theta^{(k)}}^{\left. Y^{(k)}\right|X^{(k)}}\left[ \norm{\widehat{\beta}^{(k)} - \beta_\star}_2\geq \eps\right]\to 0.\]   
\end{theorem}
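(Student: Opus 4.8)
The plan is to establish consistency by a two-step \emph{M}-estimation argument that leverages the convexity of $\widehat{L}_\Xi$ rather than compactness of the parameter space. First I would show that the centred normalized loss $\widehat{Q}^{(k)}(\beta)-Q^{(k)}(\beta)$ vanishes in $\bP^{Y^{(k)}|X^{(k)}}$-probability, uniformly over a fixed ball around $\beta_\star$; then I would combine this uniform convergence with the identification condition in Assumption~\ref{ass:consistency} through the standard convexity argument, which lets a separation of $\beta_\star$ on a single sphere control the global minimizer. Everything is carried out conditionally on the covariate sequence, the ``almost all $\left(X^{(k)}\right)_k$'' qualifier absorbing any randomness in $X$.

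\emph{Steps 1 and 2 (a weak law for the dependent sum, upgraded to uniform convergence).} Fix $\beta$ and write $\widehat{Q}^{(k)}(\beta)-Q^{(k)}(\beta) = N_a^{-1}\sum_{\xi}\bigl(\ell_\xi(Y|X,\beta)-\Ex_{\beta_\star}[\ell_\xi|X]\bigr)$, where only active polyads contribute. Under Assumption~\ref{assump:model} the coordinates $Y_\bi$ are independent, so $\ell_\xi$ and $\ell_{\xi'}$ are independent whenever $\sE(\xi)\cap\sE(\xi')=\emptyset$; hence the conditional variance equals $N_a^{-2}\sum_{\xi,\xi'}\mathrm{Cov}_{\beta_\star}(\ell_\xi,\ell_{\xi'}|X)$ with nonzero summands restricted to pairs of active polyads sharing an edge. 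Bounding each covariance by Cauchy--Schwarz through the (finite, by the Poisson tails) second moments of $\ell_\xi$, this variance is controlled by $N_a^{-2}$ times the expected number of edge-sharing active pairs, which Assumption~\ref{ass:first_order_geometry} forces to $0$ (it also yields $\widehat{N}_a/N_a\to 1$, confirming the deterministic normalization); Chebyshev then gives pointwise convergence. Since both $\widehat{Q}^{(k)}$ and $Q^{(k)}$ are convex (Lemma~\ref{lem:strictly_convex} and its proof), I would invoke the convexity lemma for pointwise-convergent convex functions (\cite{rockafellar-1970a,Andersen1982}), in the spirit of the modification of \cite{newey1994large}: together with the envelopes $\inf_k Q^{(k)}(\beta_\star)\ge L_\star$ and $\sup_k Q^{(k)}\le L$ on $\{\norm{\beta-\beta_\star}_2\le 3\eps_0\}$ from the first bullet of Assumption~\ref{ass:consistency}, this upgrades the convergence to $\sup_{\norm{\beta-\beta_\star}_2\le 2\eps_0}\bigl|\widehat{Q}^{(k)}(\beta)-Q^{(k)}(\beta)\bigr|\to 0$ in probability. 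This is precisely where convexity stands in for boundedness of covariates and of the parameter space.

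\emph{Step 3 (convexity argument for the minimizer).} Fix $0<\eps\le\eps_0$ and take the associated $\eta>0$ and $n_0$ from the second bullet of Assumption~\ref{ass:consistency}. On the event $\norm{\widehat{\beta}^{(k)}-\beta_\star}_2\ge\eps$, convexity of $\widehat{Q}^{(k)}$ forces the segment from $\beta_\star$ to $\widehat{\beta}^{(k)}$ to cross the sphere $\norm{\beta-\beta_\star}_2=\eps$ at some $\tilde\beta$ with $\widehat{Q}^{(k)}(\tilde\beta)\le\max\bigl(\widehat{Q}^{(k)}(\beta_\star),\widehat{Q}^{(k)}(\widehat{\beta}^{(k)})\bigr)=\widehat{Q}^{(k)}(\beta_\star)$, the last equality because $\widehat{\beta}^{(k)}$ is the minimizer. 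Thus $\widehat{Q}^{(k)}(\tilde\beta)-\widehat{Q}^{(k)}(\beta_\star)\le 0$, whereas Assumption~\ref{ass:consistency} gives $Q^{(k)}(\tilde\beta)-Q^{(k)}(\beta_\star)\ge\eta$; subtracting forces the uniform deviation of Step~2 on the compact ball to exceed $\eta/2$. Since that event has vanishing probability, $\bP_{\beta_\star,\theta^{(k)}}^{Y^{(k)}|X^{(k)}}\bigl[\norm{\widehat{\beta}^{(k)}-\beta_\star}_2\ge\eps\bigr]\to 0$.

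The main obstacle is the interface between Assumption~\ref{ass:first_order_geometry}, which counts edge-sharing active pairs at the level of \emph{indicators}, and the genuine \emph{second-moment} bound needed in Step~1: since the paper deliberately avoids bounding the fixed effects and the covariates, the losses $\ell_\xi$ are unbounded, and one must verify that their second moments do not grow fast enough to overturn the negligibility of the edge-sharing count. I expect this to require extracting uniform moment control from the Poisson structure together with the bound on the mean loss $Q^{(k)}$ in Assumption~\ref{ass:consistency}, and then checking that the convexity lemma applies despite $Q^{(k)}$ itself moving with $k$ with no limiting risk $Q_\infty$ — exactly the configuration that Assumptions~\ref{ass:first_order_geometry} and~\ref{ass:consistency} are tailored to handle.
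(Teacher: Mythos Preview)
Your proposal is correct and follows essentially the same route as the paper's proof, which is packaged there as Lemma~\ref{lem:suf_beta} (with the pointwise-to-uniform upgrade on a ball carried out via Lemmas~\ref{lem:unif_cv} and~\ref{lem:N_McF_compact}). The obstacle you flag at the end is resolved not through Cauchy--Schwarz or through the bound on $Q^{(k)}$ in Assumption~\ref{ass:consistency}, but by a dedicated lemma (Lemma~\ref{lem:sampling_device_for_consistency}, resting on Poisson moment bounds for $m_\xi(Y)+M_\xi(Y)$ in Lemma~\ref{lem:moment_min_Poisson}) that shows directly $\bE_{\beta_\star}[\ell_\xi \ell_{\xi'}\mid X] \le C_1\,\bP_{\beta_\star}[\xi\text{ and }\xi'\text{ both active}\mid X]$---precisely the device that converts the indicator-level Assumption~\ref{ass:first_order_geometry} into the second-moment control you need.
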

\begin{proof}
    The proof can be found in Section~\ref{sub:consistency_of_}. The proof uses the convexity of the loss function as a key ingredient. It is based on a slight modification of Theorem 2.7 of \cite{newey1994large} that requires revisiting classical results in convex analysis (such as those from Chapter~10 in \cite{rockafellar-1970a}), as well as some results in asymptotic statistics from \cite{newey1994large} and \cite{Andersen1982}.
\end{proof}


\subsection{Asymptotic normality}

Before introducing the assumptions and main result on the asymptotic normality of the Polyads estimator we first investigate the error $\hat\beta^{(k)} - \beta_\star$. A Taylor approximation yields, for some $\overline{\beta}^{(k)}$ between $\beta_\star$ and $\widehat{\beta}^{(k)}$,
\[ \nabla \widehat{Q}^{(k)}\left(\beta_\star\right) - \nabla \widehat{Q}^{(k)}\left(\widehat{\beta}^{(k)}\right) = \nabla^2 \widehat{Q}^{(k)}\left(\overline{\beta}^{(k)}\right) \left(\beta_\star - \widehat{\beta}^{(k)}\right), \]
which implies
\begin{equation}\label{eq:behavior_err_beta}
    \widehat{\beta}^{(k)} - \beta_\star = - \left( \nabla^2 \widehat{Q}^{(k)}\left(\overline{\beta}^{(k)}\right) \right)^{-1} \nabla \widehat{Q}^{(k)}\left(\beta_\star\right).
\end{equation}

If $\widehat{\beta}^{(k)}$ is close to $\beta_\star$, $\widehat{Q}^{(k)}$ is a good approximation of $Q^{(k)}$ and $\nabla^2 Q^{(k)}(\beta_\star) \to \Gamma$ for some invertible $\Gamma$ then we may approximate
\[ \widehat{\beta}^{(k)} - \beta_\star \approx - \Gamma^{-1} \nabla \widehat{Q}^{(k)}\left(\beta_\star\right).\]
The last approximation shows that to control the fluctuations of $\widehat{\beta}^{(k)} - \beta_\star$ we need to control the fluctuations of $\nabla \widehat{Q}^{(k)}\left(\beta_\star\right)$, which is not a sum of independent random variables. Following \cite{graham2017econometric, jochmans2018semiparametric}, we use Hájek projections to approximate $\nabla \widehat{Q}^{(k)}\left(\beta_\star\right)$ by a sum of independent random variables for which classical central limit theorems hold. The following assumption is useful to control the error of this approximation:

\begin{assumption}\label{ass:second_order_geometry} Given $c \in \R^p$, define $w_\xi = \inr{\nabla \ell_{\xi}(\beta_\star), \Gamma^{-1} c}$.
We assume that, as $k \to \infty$, 
\begin{equation*}
    \Ex_{\beta_\star, \theta^{(k)}}\left[ \left. \sum_{\xi,\xi' \in \Xi^{(k)}} w_\xi w_{\xi'} \mathbf{1}_{|\sE(\xi) \cap \sE(\xi')| = 1 } \right|X^{(k)}\right] \gg \Ex_{\beta_\star, \theta^{(k)}}\left[ \left. \sum_{\xi,\xi' \in \Xi^{(k)}} |w_\xi w_{\xi'}| \mathbf{1}_{|\sE(\xi) \cap \sE(\xi')| \geq 2 } \right|X^{(k)}\right].
\end{equation*}

\end{assumption}


Assumption \ref{ass:second_order_geometry} bears resemblance to Assumption \ref{ass:first_order_geometry}. Assumption \ref{ass:first_order_geometry} asks for the active polyads to be distributed among the edges of $Y_\bi$ in a way that no edge $\bi$ has a substantial amount of active polyads $\xi$ such that $\bi \in \sE(\xi)$. Meanwhile and up to the weights $w_\xi$, which are zero when $\xi$ is not active, Assumption \ref{ass:second_order_geometry} looks at the cases where two polyads share one edge and requires that, up to some weights, in most of these cases only one edge is being shared.

Assumption \ref{ass:technical_CLT} below is a technical assumption to control the convergence of the Hessian:

\begin{assumption}\label{ass:technical_CLT}
    We assume that there exists $\Gamma$ invertible such that as $k\to +\infty$, $\nabla^2 Q^{(k)}(\beta_\star) \to \Gamma$.
\end{assumption}

In order to present the main result we introduce the quantity $\Sigma^{(k)}$, which is the covariance of the Hájek projection of $\nabla \widehat{Q}^{(k)}\left(\beta_\star\right)$ and is given by
\begin{equation}\label{eq:asymptotic_cov}
   \Sigma^{(k)} = \left(N_a^{(k)}\Gamma\right)^{-1} \left( \sum_{\bi\in\cI} \bE_{\beta_\star}\left[ \left.\bar{s}^{(k)}_{\bi} \bar{s}^{(k)\top}_{\bi} \right| X^{(k)}\right] \right) \left(N_a^{(k)}\Gamma\right)^{-1},
\end{equation}
where $N_a^{(k)}$ is the expected number of active polyads and
\[ \bar{s}_{\bi}^{(k)} = \sum_{\xi:\bi\in\cE(\xi)} \bE_{\beta^*}\left[\nabla \ell_\xi\left(Y^{(k)} | X^{(k)}, \beta_\star\right) \left| X^{(k)}, Y^{(k)}_\bi\right.\right]. \]

Under Assumption~\ref{ass:second_order_geometry}, $\Sigma^{(k)}$ will play the role of the asymptotic variance of the Polyads estimator if it does not vanish to zero as $k\to+\infty$ and if a third moment condition holds. We gather these last conditions in Assumption \ref{ass:third_order_and_non_zero_variance} below:

\begin{assumption}\label{ass:third_order_and_non_zero_variance}
Given $c \in \R^p$. We assume that
\begin{equation}\label{eq:asumption_CLT_T1_non_zero}
N_a^{-1}\Ex_{\beta_\star, \theta^{(k)}}\left[ \left. \sum_{\xi,\xi' \in \Xi^{(k)}} \mathbf{1}_{ \xi \text{ and }\xi' \text{ are active}} \right|X^{(k)}\right] \not\gg \Ex_{\beta_\star, \theta^{(k)}}\left[ \left. \sum_{\xi,\xi' \in \Xi^{(k)}} w_\xi w_{\xi'} \mathbf{1}_{|\sE(\xi) \cap \sE(\xi')| = 1 } \right|X^{(k)}\right].
\end{equation}

For $m\in\{2,3\}$, we define $\widehat{S}_m^{(k)} = \sum_{\bi\in\cI} |Z_\bi|^m$ where $Z_\bi = \inr{\bar s_\bi, \Gamma^{-1}c}$, $S_m^{(k)} = \bE_{\beta_\star}\left[\left.\widehat{S}_m^{(k)}\right|X^{(k)}\right]$ and $V_m^{(k)} = \sum_\bi \bV_{\beta_\star}\left[|Z_\bi|^m\left|X^{(k)}\right.\right]$. Assume that exists a sequence $\left(a^{(k)}\right)_k$ such that as $k \to \infty$, $a^{(k)}\to \infty$ and
\begin{equation}
\label{eq:small_ball_condition}
    S_3^{(k)}+ a^{(k)}\sqrt{V_3^{(k)}}  \ll \left( S_2^{(k)} - a^{(k)} \sqrt{V_2^{(k)}}\right)^{3/2}.
\end{equation}  
 \end{assumption} 

 Assumption \eqref{eq:small_ball_condition} is useful to bound the third moment of the Hájek projection of $\nabla \widehat{Q}^{(k)}\left(\beta_\star\right)$ in a central limit theorem and is not restrictive. It is satisfied as long as the quantities $\bar{s}_\bi$ (and their variances) are not concentrated on a limited number of edges $\bi$ as $k\to\infty$.

\begin{theorem}[Asymptotic normality of the Polyads estimator]\label{theo:CLT_polyads_estim_Graham}
Grant Assumptions \ref{assump:model} to \ref{ass:third_order_and_non_zero_variance} for some $c\in\R^p$, $c\neq0$. We assume that there exists $c_0$ such that for all $k$, $\bi$ and $\xi$, $\norm{\widetilde X_\xi}_2\leq c_0$ and $\lambda_\bi(\beta_\star, \theta^\sG)\leq c_0$. It holds that, for almost all $\left(X^{(k)}\right)_k$, conditionally on $\left(X^{(k)}\right)_k$,  if $n^{(k)} \to \infty$ as $k \to \infty$, then
\begin{equation*}
    \frac{\inr{ \widehat{\beta}^{(k)} - \beta_\star,c}}{\sqrt{c^\top \Sigma^{(k)}  c}}  \overset{d}{\to} \cN(0,1).
\end{equation*}
\end{theorem}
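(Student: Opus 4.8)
The plan is to start from the exact linearization \eqref{eq:behavior_err_beta}, reduce the statement to two ingredients --- convergence of the inverse Hessian to $\Gamma^{-1}$ and a central limit theorem for the score $\nabla\widehat{Q}^{(k)}(\beta_\star)$ --- and then recombine them through Slutsky's lemma. Projecting \eqref{eq:behavior_err_beta} onto $c$ and using the symmetry of the Hessian gives
\[
\inr{\widehat{\beta}^{(k)}-\beta_\star,\, c} = -\inr{\nabla\widehat{Q}^{(k)}(\beta_\star),\, \bigl(\nabla^2\widehat{Q}^{(k)}(\overline{\beta}^{(k)})\bigr)^{-1}c}.
\]
The factor $\bigl(\nabla^2\widehat{Q}^{(k)}(\overline{\beta}^{(k)})\bigr)^{-1}c$ will be shown to converge to the deterministic direction $\Gamma^{-1}c$, while the score, contracted against a fixed direction, obeys a CLT with the normalization dictated by $\Sigma^{(k)}$ in \eqref{eq:asymptotic_cov}.

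First I would establish $\bigl(\nabla^2\widehat{Q}^{(k)}(\overline{\beta}^{(k)})\bigr)^{-1}\to\Gamma^{-1}$ in probability. By Theorem~\ref{theo:consistency_tetrads}, $\widehat{\beta}^{(k)}\to\beta_\star$, and since $\overline{\beta}^{(k)}$ lies between $\beta_\star$ and $\widehat{\beta}^{(k)}$ it is consistent as well. Splitting $\nabla^2\widehat{Q}^{(k)}(\overline{\beta}^{(k)})$ into $[\nabla^2\widehat{Q}^{(k)}(\overline{\beta}^{(k)})-\nabla^2 Q^{(k)}(\overline{\beta}^{(k)})]+[\nabla^2 Q^{(k)}(\overline{\beta}^{(k)})-\nabla^2 Q^{(k)}(\beta_\star)]+\nabla^2 Q^{(k)}(\beta_\star)$, I would control the three brackets respectively by: (a) a law of large numbers for the Hessian $\nabla^2\widehat{Q}^{(k)}=N_a^{-1}\sum_\xi\mathbb{V}_\beta[m_\xi(Y)\mid\cdot]\,\Xtilde_\xi\Xtilde_\xi^\top$ of Lemma~\ref{lem:derivatives}, whose fluctuations are governed by the edge-overlap count bounded in Assumption~\ref{ass:first_order_geometry} together with the uniform boundedness of Assumption~\ref{ass:technical_CLT}; (b) equicontinuity of the Hessians on $B_2(\beta_\star,1)$ from Assumption~\ref{ass:technical_CLT} combined with $\overline{\beta}^{(k)}\to\beta_\star$; and (c) the assumption $\nabla^2 Q^{(k)}(\beta_\star)\to\Gamma$. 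Continuity of matrix inversion at the invertible $\Gamma$ then yields the claim.

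The heart of the proof is the CLT for the score. By Lemma~\ref{lem:derivatives} and the conditional-logit structure \eqref{eq:logit}, each term is mean-zero given its orbit, so $\Ex_{\beta_\star}[\nabla\ell_\xi\mid X]=0$ and $T^{(k)}:=\inr{\nabla\widehat{Q}^{(k)}(\beta_\star),\Gamma^{-1}c}=N_a^{-1}\sum_\xi\inr{\nabla\ell_\xi,\Gamma^{-1}c}$ is a centered sum of terms that are independent whenever the polyads share no edge. Following \cite{graham2017econometric,jochmans2018semiparametric}, I would replace $T^{(k)}$ by its H\'ajek projection onto the independent coordinates $(Y_\bi)_\bi$; since $\Ex[\nabla\ell_\xi\mid X,Y_\bi]=0$ whenever $\bi\notin\sE(\xi)$, this projection is exactly $H^{(k)}=N_a^{-1}\sum_\bi Z_\bi$ with $Z_\bi=\inr{\bar{s}_{\bi}^{(k)},\Gamma^{-1}c}$, and its variance equals $c^\top\Sigma^{(k)}c$ from \eqref{eq:asymptotic_cov}. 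The projection error satisfies $\mathbb{V}(T^{(k)}-H^{(k)})=\mathbb{V}(T^{(k)})-\mathbb{V}(H^{(k)})$; decomposing the covariances of $T^{(k)}$ by the size of $|\sE(\xi)\cap\sE(\xi')|$, the diagonal and single-shared-edge terms reconstruct $\mathbb{V}(H^{(k)})$ --- here I would use that conditioning on the unique shared edge decouples the two scores --- while the terms with $|\sE(\xi)\cap\sE(\xi')|\geq 2$ make up the residual. Assumption~\ref{ass:second_order_geometry} is precisely the statement that this residual (in the relevant direction, the weights $w_\xi$ up to the fixed change of basis $\Gamma$) is negligible against the single-shared-edge variance, while \eqref{eq:asumption_CLT_T1_non_zero} guarantees that the latter does not vanish relative to the total count of active-polyad pairs; hence $\mathbb{V}(T^{(k)}-H^{(k)})\ll c^\top\Sigma^{(k)}c$. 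Conditionally on $X$ the $Z_\bi$ are independent and centered, so I would close with a Lyapunov CLT: the small-ball condition \eqref{eq:small_ball_condition} forces the third-moment ratio $S_3^{(k)}/(S_2^{(k)})^{3/2}\to 0$, while its $V_m^{(k)}$ and $a^{(k)}$ terms let one replace the random sums $\widehat{S}_m^{(k)}$ by their means $S_m^{(k)}$ with probability tending to one, as in \cite{victor}. This gives $T^{(k)}/\sqrt{c^\top\Sigma^{(k)}c}\overset{d}{\to}\cN(0,1)$.

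Finally I would assemble the pieces. Writing $\bigl(\nabla^2\widehat{Q}^{(k)}(\overline{\beta}^{(k)})\bigr)^{-1}c=\Gamma^{-1}c+o_{\bP}(1)$, the target equals $-T^{(k)}-\inr{\nabla\widehat{Q}^{(k)}(\beta_\star),\,o_{\bP}(1)}$; since the normalized score vector is stochastically bounded (the H\'ajek/CLT argument applies in every direction, so tightness follows via Cram\'er--Wold), Cauchy--Schwarz makes the second term $o_{\bP}(1)$ after dividing by $\sqrt{c^\top\Sigma^{(k)}c}$, and Slutsky's lemma delivers the stated convergence. I expect the main obstacle to be the variance bookkeeping for the H\'ajek error in the third step: one must show rigorously that, after weighting by the scores, pairs of polyads sharing two or more edges contribute negligibly relative to those sharing exactly one, and that the single-shared-edge covariances reconstruct $\mathbb{V}(H^{(k)})$ to leading order. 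This is exactly what Assumptions~\ref{ass:second_order_geometry} and \ref{ass:third_order_and_non_zero_variance} are engineered to deliver, but verifying the decoupling-by-conditioning identity and tracking the non-identically-distributed, triangular-array nature of the summands --- with no limiting risk $Q_\infty$ available --- is the delicate part.
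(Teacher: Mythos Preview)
Your overall architecture --- H\'ajek projection of the score, asymptotic equivalence via Assumption~\ref{ass:second_order_geometry}, Lyapunov/Lindeberg-type CLT for the projection driven by \eqref{eq:small_ball_condition}, and convergence of the Hessian via Assumption~\ref{ass:first_order_geometry} plus equicontinuity --- is exactly the paper's strategy, and your reading of what each assumption buys is accurate.

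The one genuine gap is in your final assembly. You write $\bigl(\nabla^2\widehat{Q}^{(k)}(\overline{\beta}^{(k)})\bigr)^{-1}c=\Gamma^{-1}c+o_{\bP}(1)$ and then need $\inr{\nabla\widehat{Q}^{(k)}(\beta_\star),\,o_{\bP}(1)}/\sqrt{c^\top\Sigma^{(k)}c}\to 0$, which via Cauchy--Schwarz requires $\|\nabla\widehat{Q}^{(k)}(\beta_\star)\|_2/\sqrt{c^\top\Sigma^{(k)}c}=O_{\bP}(1)$. Your justification --- ``the H\'ajek/CLT argument applies in every direction, so tightness follows via Cram\'er--Wold'' --- does not go through as stated: Assumptions~\ref{ass:second_order_geometry} and~\ref{ass:third_order_and_non_zero_variance} are granted \emph{for a single} $c$, so you only have a CLT in the one direction $\Gamma^{-1}c$, not in all directions, and Cram\'er--Wold runs the opposite way in any case.

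The paper circumvents this by \emph{not} using the Taylor linearization~\eqref{eq:behavior_err_beta} directly. Instead it proves a general ``directional CLT under convexity'' lemma (their Theorem~B.2, following \cite{victor,MR1026303,MR1186263}): working with $\hat F_n(t)=\hat Q_n(\theta_0+t/\sqrt{n})-\hat Q_n(\theta_0)$ and the quadratic $\hat H_n(t)=\inr{\nabla\hat Q_n(\theta_0),t/\sqrt{n}}+\tfrac12(t/\sqrt{n})^\top\Gamma(t/\sqrt{n})$, they show $n\sup_{\|t\|\le 2\eps_0}|\hat F_n(t)-\hat H_n(t)|\to 0$ in probability (this is where uniform Hessian convergence and equicontinuity enter), and then use convexity to conclude that the minimizer $\hat t_n=\sqrt{n}(\hat\theta_n-\theta_0)$ of $\hat F_n$ must lie within $\eps$ of the minimizer $\tilde t_n=-\Gamma^{-1}\sqrt{n}\nabla\hat Q_n(\theta_0)$ of $\hat H_n$. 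This gives $\hat t_n-\tilde t_n\overset{p}{\to}0$ in the full $\ell_2$ norm \emph{without} ever invoking tightness of the gradient vector; the directional CLT for $\inr{\tilde t_n,c}$ then transfers to $\inr{\hat t_n,c}$ using only that the scalar normalization $\sqrt{c^\top\Sigma^{(k)}c}$ is bounded below (their condition $1=\cO_{\bP}(\|\hat V_n^{1/2}\Gamma^{-1}c\|_2)$, verified from \eqref{eq:asumption_CLT_T1_non_zero}). So the convexity route is not cosmetic: it is precisely what lets the theorem be stated and proved for a single direction $c$.
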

\begin{proof}
The proof of Theorem \ref{theo:CLT_polyads_estim_Graham}, given in Section \ref{sec:proof_of_CLT_polyads}, follows from a general lemma presented in Section \ref{sub:main_theorem_for_asymptotic_normality_under_convexity_assumption}, adapted from Brunel’s lecture notes \cite{victor}, which reference \cite{MR1026303} and \cite{MR1186263}. Unlike the classical setup, where the loss is a sum of independent terms, we handle dependent polyads and do not assume the existence of a limiting $Q_\infty$, only a limiting covariance matrix at $\beta_\star$. Hence, we extend the standard asymptotic normality proof under convexity to this dependent setting.
\end{proof}

 In Theorem \ref{theo:CLT_polyads_estim_Graham}, the asymptotic normality is written in terms of $\Sigma^{(k)}$, which cannot be directly evaluated from the sample. In Section \ref{subsec:computational_variance} we discuss two alternatives to approximate $\Sigma^{(k)}$ from the sample; in Section \ref{sec:experiments} we provide empirical validation that the confidence intervals obtained by each approach are accurate.
\section{Computational implementation}
\label{sec:computational}

Lemma \ref{lem:derivatives} provides the tools to solve the optimization problem \eqref{eq:def_beta_hat} defining the Polyads estimator $\widehat{\beta}_\Xi$. Since each term in the loss function is convex and we have expressions for both their gradients and their Hessians, we can efficiently solve the minimization problem using Newton's method. However, computing the gradient and Hessian involves summing over all polyads $\xi \in \Xi$, which naively requires looping over $\prod_{d=1}^D n_d (n_d - 1)$ terms, an operation that quickly becomes computationally expensive. The main goal of this section is to reduce this complexity by avoiding unnecessary iterations, which is done characterizing the set of active polyads $\Xia$. Besides that, we also discuss the implementation of two approximations of the variance. These approximations are essential to construct confidence intervals. The methods presented in this section are of special interest when $Y$ is sparse. In particular, the computational complexity of our methods outperforms the PPML alternative \cite{correia2019ppmlhdfe} when the size of $E = \{ \bi : Y_\bi > 0 \}$ is of order smaller than $\sqrt{n}$.

\begin{remark}
    The computational implementation provided in this section can be easily extended to the binary network formation case, providing a computational gain over the brute force implementations currently used in the literature. This extension is discussed in Appendix \ref{sec:binary} and is closely related to the methods discussed in \cite{graham2017econometric} and \cite{Muris_Pakel_WP_2025}.
\end{remark}

\subsection{Permutating polyads}

A key tool that we explore to obtain efficient computational implementations of the Polyads estimator is their invariance to permutations. A permutation of a polyad defined by $(\bi, \bi')$ is obtained by flipping some indices between $\bi$ and $\bi'$. For instance, 
\[
\xi'=\begin{pmatrix}
    1 & 2 & 2\\ 
    2 & 1 & 1\\ 
\end{pmatrix} \text{ is a permutation of }\xi=\begin{pmatrix}
    1 & 1 & 1\\ 
    2 & 2 & 2\\ 
\end{pmatrix} \text{ on indices }d\in\{2,3\}.
\]
We say that a permutation is odd when an odd number of indices are flipped and even when an even number of indices are flipped.
Notice that each polyad $\xi$ has a total of $2^D$ unique permutations, including itself. Lemma \ref{lem:tools_polyads} below collects the main tools that will be necessary in this section:

\begin{lemma}
    \label{lem:tools_polyads} It holds that:
    \begin{enumerate}[(i)]
        \item If $\bi \in \sE(\xi)$ for some $\xi$, then $\bi$ belongs to all $2^D$ permutations of $\xi$ and exists exactly one permutation that can be written as $(\bi, \bi')$ for some $\bi' \in \sI$.

        \item Let $\xi'$ be any permutation of $\xi$. Then $s_{\xi'}(\bi) = s_{\xi}(\bi), \forall \bi$ iff the permutation is even and $s_{\xi'}(\bi) = -s_{\xi}(\bi), \forall \bi$ iff the permutation is odd. In particular, for odd permutations $m_{\xi'}(y) = M_\xi(y)$ and  $M_{\xi'}(y) = m_\xi(y)$ for all $y \in \Z^\sI$ and for even permutations $m_{\xi'}(y) = m_\xi(y)$ and  $M_{\xi'}(y) = M_\xi(y)$ for all $y \in \Z^\sI$.

        \item If $\xi'$ is a permutation of $\xi$, then $\ell_\xi(y|X,\beta) = \ell_{\xi'}(y|X,\beta)$ for all $\beta \in \R^p$.
    \end{enumerate}
\end{lemma}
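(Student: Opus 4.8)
The plan is to reduce all three parts to a single algebraic identity relating the sign tensors of a polyad $\xi$ and of any of its permutations $\xi'$. I would encode a permutation by the subset $S\subseteq[D]$ of columns that are flipped, so that in column $d\in S$ the top and bottom entries $j_d,j_d'$ are swapped, while columns $d\notin S$ are left untouched; by construction $|S|$ has the parity of the permutation. The one computation that drives everything is: for each $d$, the factor $\mathbf{1}\{i_d=j_d\}-\mathbf{1}\{i_d=j_d'\}$ in $s_\xi(\bi)$ becomes its negative when column $d$ is flipped and is unchanged otherwise. Multiplying the $D$ factors then gives the master identity $s_{\xi'}(\bi)=(-1)^{|S|}s_\xi(\bi)$ for every $\bi\in\sI$. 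I would prove this first, as the engine for the rest.

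Parts (i) and (ii) then follow quickly. For (i), flipping entries within a column does not change the unordered pair of values $\{j_d,j_d'\}$ appearing in that column, so the edge set (the indices $\bi$ with $i_d\in\{j_d,j_d'\}$ for all $d$) is identical for $\xi$ and every permutation; hence $\bi\in\sE(\xi)$ forces $\bi\in\sE(\xi')$ for all $2^D$ permutations. For the uniqueness claim, a permutation has top row equal to $\bi$ exactly when its flipped set is $S=\{d:i_d=j_d'\}$, which is forced (since $j_d\neq j_d'$, column $d$ must be flipped iff $i_d=j_d'$), so precisely one permutation is of the form $(\bi,\bi')$. For (ii), the master identity gives $s_{\xi'}=s_\xi$ when $|S|$ is even and $s_{\xi'}=-s_\xi$ when $|S|$ is odd; the two alternatives are genuinely different because some edge has $s_\xi(\bi)=\pm1\neq0$, which yields the stated ``iff''. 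The statements on $m_\xi,M_\xi$ are then immediate: an odd permutation interchanges the index sets $\{\bi:s_\xi(\bi)=1\}$ and $\{\bi:s_\xi(\bi)=-1\}$, so the two minima $m_{\xi'}=\bigwedge_{s_{\xi'}(\bi)=1}y_\bi$ and $M_{\xi'}$ swap, while an even permutation leaves both index sets fixed.

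For (iii) I would treat the two parities separately. The even case is trivial: $s_{\xi'}=s_\xi$ gives $\Xtilde_{\xi'}=\Xtilde_\xi$, $\sE(\xi')=\sE(\xi)$, $m_{\xi'}=m_\xi$, $M_{\xi'}=M_\xi$, and $y_\bi+rs_{\xi'}(\bi)=y_\bi+rs_\xi(\bi)$, so the two expressions for $\ell_\xi$ and $\ell_{\xi'}$ agree term by term. The odd case is where the only real bookkeeping lives, and is the step I expect to need the most care. Here $\Xtilde_{\xi'}=-\Xtilde_\xi$, the limits swap ($m_{\xi'}=M_\xi$, $M_{\xi'}=m_\xi$), and the transformed weights satisfy $y_\bi+rs_{\xi'}(\bi)=y_\bi-rs_\xi(\bi)$, i.e. the $r$-shift for $\xi'$ equals the $(-r)$-shift for $\xi$. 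I would then apply the change of summation variable $r\mapsto -r$ to the defining sum of $\ell_{\xi'}$: this simultaneously flips the sign of the linear term $r\,\beta^\top\Xtilde_{\xi'}$, sends the factorial ratios to those of $\xi$, and maps the range $\{-M_\xi,\dots,m_\xi\}$ back onto $\{-m_\xi,\dots,M_\xi\}$, so the summand and the summation set coincide exactly with those of $\ell_\xi$. Verifying that these three sign reversals cancel coherently is the delicate point, but once the master identity of the first paragraph is in hand it is purely a matter of substitution, so no genuine obstacle remains.
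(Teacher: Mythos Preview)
Your proposal is correct and, for parts (i) and (ii), matches the paper's argument essentially word for word: the paper also observes that flipping a single column negates the corresponding factor in $s_\xi(\bi)$, yielding your master identity $s_{\xi'}=(-1)^{|S|}s_\xi$, and derives the $m_\xi/M_\xi$ swap from it.

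For part (iii) you take a slightly different route. The paper argues conceptually from the definition $\ell_\xi(y|X,\beta)=-\ln\bP_\beta(Y=y\mid X,Y\in\sO_\xi(y))$: since $s_{\xi'}=\pm s_\xi$ the orbits coincide, $\sO_{\xi'}(y)=\sO_\xi(y)$, and the event $\{Y=y\}$ is the same in both conditionings, so the two conditional probabilities (and hence the two losses) agree automatically, with no computation. You instead work directly with the explicit LogSumExp formula and, in the odd case, perform the change of variable $r\mapsto -r$ to see that the three sign flips (in $\Xtilde_{\xi'}$, in the factorial terms, and in the summation limits) cancel. Both arguments are valid; the paper's is shorter and sidesteps the bookkeeping you flag as delicate, while yours has the virtue of being self-contained at the level of the formula and not relying on the orbit characterization.
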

\begin{proof}
    To see (i), notice that flipping the $d$-th index of $\bi$ with the $d$-th index of $\bi'$ produces a new edge that still belongs to $\sE(\xi)$. Repeating this operation for all possible combinations of indices produces all $2^D$ permutations. Also, given $\bi\in\sE(\xi)$, to find the unique $\bi'$ such that $(\bi, \bi')$ is a permutation of $\xi$ one just needs to flip the $d$-th index of $\bi$ with the $d$-th index of $\bi'$ if and only if $i_d \neq i_d'$.

    To see (ii), notice that flipping one index changes the sign of $s_\xi(\bi)$, thus flipping an even number of indices preserves the sign while flipping an odd number of indices changes it. The expressions for $m_{\xi'}(y)$ and $M_{\xi'}(y)$ follow directly from the definition.

    Finally, (iii) follows observing that $\sO_\xi(y) = \sO_{\xi'}(y)$ and so
    \[ \Pr_\beta( m_\xi(Y) = m_\xi(y) | X, Y \in \sO_\xi(y) ) =
    \begin{cases}
        \Pr_\beta( m_{\xi'}(Y) = m_{\xi'}(y) | X, Y \in \sO_{\xi'}(y) )\text{, if the permutation is even}\\
        \Pr_\beta( M_{\xi'}(Y) = M_{\xi'}(y) | X, Y \in \sO_{\xi'}(y) )\text{, if the permutation is odd}
    \end{cases}. \]
    Since the event $M_{\xi'}(Y) = M_{\xi'}(y)$ is equivalent to $m_{\xi'}(Y) = m_{\xi'}(y)$ we are done.
\end{proof}

\subsection{Efficiently gathering all active polyads}

A direct conclusion of Lemma \ref{lem:tools_polyads} is that if $\xi \in \Xia$, then exists a permutation $\xi'$ of $\xi$ such that $i_d < i_d'$ for all $d = 2, \dots, D$ and $m_{\xi'}(y)>0$. If $M_{\xi'} = 0$ then this permutation is unique, but if $M_{\xi'} > 0$ then there are two such permutations, one with $i_1 < i_1'$ and another with $i_1 > i_1'$. Based on this observation we define the following set:
\[ \Xia^\star = \left\{ \xi = (\bi, \bi') : m_\xi(y) > 0 \text{ and }i_d < i_d' \,\forall d=2,\dots,D \text{ and } (M_\xi(y) = 0 \text{ or } i_1 < i_1') \right\}. \]

The set $\Xia^\star$ contains exactly one permutation of each active polyad $\xi \in \Xia$ and, by part (iii) of Lemma \ref{lem:tools_polyads},
\begin{equation}
    \label{eq:loss_xia_star}
    \widehat{L}_\Xi(y|X,\beta) = 2^D \sum_{\xi \in \Xia^\star } \ell_\xi(y|X,\beta) \text{ for all }\beta\in\R^p.
\end{equation}

Thus, to solve \eqref{eq:def:orbit} it suffices to look at all polyads in $\Xia^\star$. The definition of $\Xia^\star$ also leads to an efficient method to construct it. Notice that $m_\xi(y)$ it is positive if and only if $y_\bi > 0$ for all $\bi$ with $s_\xi(\bi) = 1$. In particular, we need at least $y_\bi$ to be positive to have $m_\xi(y)$ positive. This suggests looping over pairs $\bi,\bi' \in E$. The procedure to do it differs slightly depending on the parity of $D$. 

First, take $D=2$ and let $i_1 \neq i_1'$ be given. To have $(\bi, \bi') \in \Xia^\star$ we need to find $i_2 \neq i_2'$ such that $y_{i_1 i_2} \wedge y_{i_1' i_2'} > 0$.
Now let $D=3$ and $i_1 \neq i_1'$ be given.
We search for $i_2 \neq i_2'$ and $i_3 \neq i_3'$ satisfying $y_{i_1 i_2 i_3} \wedge y_{i_1 i_2' i_3'} \wedge y_{i_1' i_2 i_3'} \wedge y_{i_1' i_2' i_3} > 0$, in particular, $y_{i_1 i_2 i_3} \wedge y_{i_1 i_2' i_3'} > 0$.
More generally, given $i_1$ we let
\[E_{i_1} = \{ (j_2, \dots, j_D)  : y_{i_1j_2\dots j_D} > 0 \},\]
it holds that if $(\bi, \bi') \in \Xia^\star$, then
(i) $(i_2, \dots, i_D) \in E_{i_1}$ and $(i_2',\dots, i_D') \in E_{i_1'}$ when $D$ is even;
or (ii) $(i_2, \dots, i_D), (i_2', \dots, i_D') \in E_{i_1}$ when $D$ is odd.
Thus, one only needs to loop over the pairs $i_1 \neq i_1'$ and, for each of these pairs, over $((i_2, \dots, i_D), (i_2', \dots, i_D')) \in E_{i_1} \times E_{i_1'}$ (if $D$ is even) or $((i_2, \dots, i_D), (i_2', \dots, i_D')) \in  E_{i_1} \times E_{i_1}$ (if $D$ is odd).
Then, one simply verifies the remaining conditions for $(\bi, \bi') \in \Xia^\star$.
This procedure is summarized in Algorithm \ref{alg:construct_polyads}.

\begin{algorithm}\label{alg:construct_polyads}
\caption{Construct $\Xia^\star$}
\KwIn{$\{E_{i_1}\}_{i_1\in[n_1]}$}
\KwOut{$\Xia^\star$}

\For{$i_1 \in [n_1]$}{
    \For{$i_1' \in [n_1]$, $i_1' \neq i_1$}{
        \For{$(i_2, \dots, i_D) \in E_{i_1}$}{
            $E' \gets E_{i_1'}$\textbf{ if }$D \% 2 = 0$\textbf{ else }$E_{i_1}$\;
        
            \For{$(i_2', \dots, i_D')  \in E'$ \text{ such that } $i_d < i_d'$ for all $d=2,\dots,D$}{
                $\xi=((i_1,\cdots, i_D), (i_1', \cdots, i_D'))$\;

                \If{$m_\xi(y) > 0$ and ($M_\xi(y) = 0$ or $i_1 < i_1'$)}{
                    $\Xia^\star \gets \Xia^\star \cup \{ \xi \}$\;
                }
            }
        }
    }
}
\end{algorithm}

The next theorem establishes the computational complexity of constructing $\Xia^\star$ using Algorithm \ref{alg:construct_polyads}.

\begin{theorem}
    When $D$ is odd assume there exists $c \geq 1$ that $|E_{i_1}| < c\frac{|E|}{n_1}$ for all $i_d \in [n_d]$. Make no assumption if $D$ is even. The set $\Xia^\star$ can be computed in $O(|E|^2)$ using Algorithm \ref{alg:construct_polyads}.
\end{theorem}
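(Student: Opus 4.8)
The plan is to bound the running time by counting how many times the innermost loop of Algorithm~\ref{alg:construct_polyads} executes, since every other operation is cheap. First I would record the fundamental identity
\[
\sum_{i_1 \in [n_1]} |E_{i_1}| = |E|,
\]
which holds because each positive edge $\bi \in E$ belongs to exactly one list $E_{i_1}$ (the one indexed by its first coordinate). I would then argue that a single visit to the innermost loop costs $O(1)$: since $D$ is a fixed constant, forming $\xi=(\bi,\bi')$, evaluating $m_\xi(y)$ and $M_\xi(y)$ (each inspecting $2^{D-1}$ weights), and checking the membership conditions for $\Xia^\star$ all take constant time, assuming $O(1)$ access to the weights $y_\bi$. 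Note also that the filter ``$i_d<i_d'$ for all $d=2,\dots,D$'' only discards iterates, so it can only lower the count. Thus the entire cost reduces to bounding the number of pairs $((i_2,\dots,i_D),(i_2',\dots,i_D'))$ ranged over, summed against the outer pairs $i_1\neq i_1'$.

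In the even case $E'=E_{i_1'}$, so for a fixed ordered pair $i_1\neq i_1'$ the inner loops range over $E_{i_1}\times E_{i_1'}$, giving $|E_{i_1}|\,|E_{i_1'}|$ iterates. Summing over all ordered pairs and bounding by the unrestricted double sum, which factorizes, yields
\[
\sum_{i_1\neq i_1'} |E_{i_1}|\,|E_{i_1'}| \;\le\; \Big(\sum_{i_1} |E_{i_1}|\Big)^{2} \;=\; |E|^{2},
\]
so the total is $O(|E|^2)$ with no further hypothesis.

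In the odd case $E'=E_{i_1}$, so the inner loops range over $E_{i_1}\times E_{i_1}$ and the count for $(i_1,i_1')$ is $|E_{i_1}|^2$; since there are $n_1-1$ choices of $i_1'$,
\[
\sum_{i_1\neq i_1'} |E_{i_1}|^{2} \;\le\; n_1 \sum_{i_1} |E_{i_1}|^{2}.
\]
Here I would invoke the regularity hypothesis $|E_{i_1}| < c\,|E|/n_1$ to convert one factor in each summand, obtaining
\[
\sum_{i_1} |E_{i_1}|^{2} \;<\; \frac{c\,|E|}{n_1}\sum_{i_1} |E_{i_1}| \;=\; \frac{c\,|E|^{2}}{n_1},
\]
whence the total is at most $n_1\cdot c|E|^2/n_1 = c\,|E|^2 = O(|E|^2)$.

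The only real obstacle is the odd case. Without the balancing hypothesis, $\sum_{i_1}|E_{i_1}|^2$ can be as large as $|E|^2$ — for instance if all positive edges share a single first coordinate — which combined with the outer factor $n_1$ would inflate the bound to $O(n_1|E|^2)$. The assumption $|E_{i_1}| < c|E|/n_1$ precisely rules out this concentration, spreading the degree mass across $[n_1]$ so that one factor $|E_{i_1}|$ can be replaced by $c|E|/n_1$ while the other recombines into $|E|$. No such assumption is needed in the even case, since pairing $E_{i_1}$ with $E_{i_1'}$ lets the exact factorization $(\sum_{i_1}|E_{i_1}|)^2=|E|^2$ absorb both factors directly.
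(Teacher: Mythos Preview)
Your proof is correct and follows essentially the same approach as the paper's: both reduce the cost to counting innermost-loop executions (arguing each is $O(1)$), then bound $\sum_{i_1\neq i_1'}|E_{i_1}|\,|E_{i_1'}|\le|E|^2$ in the even case and use the balancing hypothesis to get $\sum_{i_1\neq i_1'}|E_{i_1}|^2\le c|E|^2$ in the odd case. The paper additionally mentions implementing $E_{i_1}$ as hash tables to justify $O(1)$ edge-weight lookups, which you assume implicitly.
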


\begin{proof}
    First, notice that checking for $m_\xi(y)>0$ and $M_\xi(y)=0$ requires checking the values of all $2^D$ edges in $\sE(\xi)$.
    Implementing the sets $E_{i_1}$ as hash tables allows us to check these values in constant time.
    Thus, we just need to count the number of times the innermost loop is executed.
    
    If $D$ is even, the innermost loop is executed $\sum_{i_1 \neq i_1'} |E_{i_1}||E_{i_1'}| = \left( \sum_{i_1} |E_{i_1}| \right)^2 - \sum_{i_1} |E_{i_1}|^2 \leq |E|^2$ times.
    If $D$ is odd, the innermost loop is executed $\sum_{i_1 \neq i_1'} |E_{i_1}|^2 \leq c \frac{|E|}{n_1} \sum_{i_1 \neq i_1'} |E_{i_1}| \leq c|E|^2$ times.
    Thus, in both cases the total complexity is $O(|E|^2)$.
\end{proof}

\begin{remark}
    In practice, one not only keep track of the polyads in $\Xia^\star$ but also of their corresponding edge values $\{ y_\bi : \bi\in\sE(\xi) \}$ and of $\widetilde{X}_\xi$. This precomputation allows us to avoid recomputing these quantities at each iteration of the optimization algorithm. 
    Besides that, implementing $E_{i_1}$ as an ordered list allows the usage of binary search, which although theoretically slower than a hash table, tends to be faster in practice.

    Another key point is that we do not require all features $X_\bi$ to be precomputed. All that suffices is a function that can map $\bi$ into $X_\bi$, this function will be called $2^D$ times for each active polyad to obtain $\widetilde{X}_\xi$. This is essential to get an efficient implementation of our method, otherwise the computational cost would be at least the cost of computing all features, which is $O(n)$.
\end{remark}

\subsection{Solving the optimization problem}

Once the set of polyads $\Xia^\star$ is computed, we minimize the loss \eqref{eq:loss_xia_star} using Newton's method.  
Lemma~\ref{lem:derivatives} provides closed-form expressions for the gradient and Hessian of each $\ell_\xi$, so a Newton step can be computed exactly.  
When the loss is strictly convex (see Lemma \ref{lem:strictly_convex}), Newton's method converges from any initial value $\beta^0$.  
Algorithm~\ref{alg:update_beta} displays one update step from $\beta^t$ to $\beta^{t+1}$ for $t\geq0$.

\begin{algorithm}\label{alg:update_beta}
\caption{Newton update for $\beta^t$}
\KwIn{$\Xia^\star, \beta^t$}
\KwOut{$\beta^{t+1}$}

$g \gets 0$\;
$H \gets 0$\;

\For{$\xi \in \Xia^\star$}{
    $(\mu, \sigma^2) \gets \texttt{EvaluateMoments}(\xi, \beta^t)$\;
    $g \gets g + (\mu - m_\xi(y)) \widetilde{X}_\xi$\;
    $H \gets H + \sigma^2 \widetilde{X}_\xi \widetilde{X}_\xi^\top$\;
}

$\beta^{t+1} \gets \beta^t - H^{-1} g$\;
\end{algorithm}

The function \texttt{EvaluateMoments} in Algorithm \ref{alg:update_beta} must return the expectation and variance of $m_\xi(Y)$ conditioned on $Y \in \sO_\xi(y)$ when $Y$ has law parametrized by $\beta = \beta^t$. Notice that from \eqref{eq:logit}, for all $\beta$,
\begin{align*}
    \bP_\beta\!\left(m_\xi(Y)=m \,\middle|\, X, Y\in \sO_\xi(y)\right)
    & = 
    \frac{\exp\bigl(v(m,\xi;\beta)\bigr)}
    {\sum_{m'=0}^{m_\xi(y)+M_\xi(y)} \exp\bigl(v(m',\xi;\beta)\bigr)}\\
    & = 
    \frac{\exp\bigl(v(m,\xi;\beta) - v(0,\xi;\beta)\bigr)}
    {\sum_{m'=0}^{m_\xi(y)+M_\xi(y)} \exp\bigl(v(m',\xi;\beta) - v(0,\xi;\beta)\bigr)},
\end{align*}
where
\[
v(m,\xi;\beta) 
= m \,\beta^\top \widetilde{X}_\xi
\;-\;
\sum_{\bi \in \sE(\xi)} \ln \bigl(\yinf_\bi^{\,m}! \bigr),
\]
and $\yinf_\bi^{\,m} = y_\bi+(m-m_\xi(y))s_\xi(\bi)$ . Directly evaluating $v(m,\xi;\beta)$ for each $m$ would require computing log factorials, we avoid this computation by noticing that
\[
v(m,\xi;\beta) - v(m-1,\xi;\beta)
= \beta^\top \widetilde{X}_\xi 
  + \sum_{\bi : s_\xi(\bi) = -1} \ln\!\bigl(y_\bi - (m-1-m_\xi(y))\bigr)
  - \sum_{\bi : s_\xi(\bi) = 1} \ln\!\bigl(y_\bi + (m-m_\xi(y))\bigr),
\]
so the values $v(m,\xi;\beta) - v(0, \xi; \beta)$ can be computed sequentially by cumulative summation without a log factorial.

\begin{algorithm}\label{alg:evaluate_moments}
\caption{Evaluate $\mathbb{E}_\beta\left[ 
m_\xi(Y) |X,  Y \in \sO_\xi(y) \right]$ and $\mathbb{V}_\beta\left[ 
m_\xi(Y) |X,  Y \in \sO_\xi(y) \right]$}
\KwIn{$\xi, \beta$}
\KwOut{$\mu, \sigma^2$}

$v_0 \gets 0$\;
$Z \gets 1$ \tcp*[r]{normalizing constant (unscaled)}
\For{$m = 1$ \KwTo $m_\xi(y)+M_\xi(y)$}{
    $v_m \gets v_{m-1} 
    + \beta^\top\widetilde{X}_\xi
    + \sum_{\bi : s_\xi(\bi) = -1} \ln\!\bigl(y_\bi - (m-1-m_\xi(y))\bigr)
    - \sum_{\bi : s_\xi(\bi) = 1} \ln\!\bigl(y_\bi + (m-m_\xi(y))\bigr)$\;
    $Z \gets Z + e^{v_m}$\;
}

\For{$m = 0$ \KwTo $m_\xi(y)+M_\xi(y)$}{
    $p_m \gets e^{v_m}/Z$\;
}

$\mu \gets \sum_{m} m\,p_m$\;
$\sigma^2 \gets \sum_m m^2 p_m - \mu^2$\;
\end{algorithm}

The procedure \texttt{EvaluateMoments}, given in Algorithm~\ref{alg:evaluate_moments}, computes the moments $\mu,\sigma^2$ required in Algorithm~\ref{alg:update_beta} using approximately $2^D |\sO_\xi(y)|$ operations.  
Since this cost scales linearly with the orbit size, evaluating all $Y \in \sO_\xi(y)$ may become prohibitive when the orbit is large.  
In practice, whenever $|\sO_\xi(y)|$ exceeds a predefined threshold $L$, we approximate the conditional distribution of $m_\xi(Y)$ by restricting the computation to the truncated set
\[
m \in 
\Bigl[
m_\xi(y) - \bigl( L/2 \wedge m_\xi(y) \bigr), \dots, m_\xi(y) + \bigl(L/2 \wedge M_\xi(y)\bigr)
\Bigr].
\]
This truncation has negligible numerical effects, because the distribution of $m_\xi(Y)$ is concentrated around $m_\xi(y)$, and extreme values contribute essentially nothing to the expectation or variance.

\begin{theorem}\label{thm:complexity}
    Algorithm \ref{alg:update_beta} runs with $O(\widehat{N}_a)$ operations.
\end{theorem}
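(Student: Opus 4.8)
The plan is to separate the running time into the per-polyad work inside the loop and the single linear solve at the end, treating the structural constants $D$, $p$, and the truncation threshold $L$ as fixed, so that the bound is a genuine statement about scaling in the number of active polyads. First I would count the loop iterations. By Lemma~\ref{lem:tools_polyads} every active polyad has exactly $2^D$ distinct permutations, all of which are active (the orbit size is permutation invariant), and by construction $\Xia^\star$ retains exactly one representative per permutation class; hence the loop in Algorithm~\ref{alg:update_beta} runs exactly $|\Xia^\star| = \widehat{N}_a/2^D$ times, which is $\Theta(\widehat{N}_a)$ for fixed $D$.

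Next I would bound the work of a single iteration. Assuming, as in the remark following the construction of $\Xia^\star$, that $\widetilde{X}_\xi$ and the incident edge values $\{y_\bi : \bi\in\sE(\xi)\}$ are stored with each polyad, the call \texttt{EvaluateMoments}$(\xi,\beta^t)$ costs $O\bigl(2^D|\sO_\xi(y)|\bigr)$, since Algorithm~\ref{alg:evaluate_moments} builds the unnormalized log-weights $v_m$ by a single cumulative sweep over the orbit, each step touching the $2^D$ edges of $\xi$; the gradient update costs $O(p)$ and the rank-one Hessian update costs $O(p^2)$. The crux of the argument — and the step I expect to be the main obstacle — is the factor $|\sO_\xi(y)|$, which is data dependent and unbounded a priori: an orbit can be arbitrarily large when the incident counts $y_\bi$ are large, so naively $\sum_{\xi\in\Xia^\star}|\sO_\xi(y)|$ need not be $O(\widehat{N}_a)$. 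This is resolved by the truncation introduced just before the theorem: whenever $|\sO_\xi(y)|>L$ the computation is restricted to a window of size $L$ about $m_\xi(y)$, so every call costs at most $O\bigl(2^D\min(|\sO_\xi(y)|,L)\bigr)=O(2^D L)$. With $D$ and $L$ fixed this is $O(1)$, so each iteration costs $O(2^D L + p^2)$, which is constant in the sample size.

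Finally I would add the one-off cost of the solve $\beta^{t+1}\gets\beta^t - H^{-1}g$, which is $O(p^3)$ and independent of the sample size. Collecting the three contributions gives a total cost of $O\!\left(\tfrac{\widehat{N}_a}{2^D}\,(2^D L + p^2) + p^3\right)$, which, with $D$, $L$, and $p$ held fixed, simplifies to $O(\widehat{N}_a)$, as claimed. I would close by stressing that the truncation is precisely what makes the per-iteration cost independent of the data: without it, the complexity would instead scale with the total orbit mass $\sum_{\xi}|\sO_\xi(y)|$ rather than with $\widehat{N}_a$.
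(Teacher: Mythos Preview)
Your proposal is correct and takes essentially the same approach as the paper: the paper's proof is a one-line reference to the preceding discussion (the loop over $\Xia^\star$, the $O(2^D|\sO_\xi(y)|)$ cost of \texttt{EvaluateMoments}, and the truncation at threshold $L$), and you have simply spelled out that discussion in full, correctly identifying the truncation as the step that caps the per-iteration cost.
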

\begin{proof}
    Follows from the discussion above and observing that Algorithm \ref{alg:update_beta} requires exactly one loop over all $\xi \in \Xia^\star$.
\end{proof}

\subsection{Evaluating the variance}
\label{subsec:computational_variance}

We now discuss two approaches for evaluating the covariance matrix of the Polyads estimator. To shorten notation let $\nabla \ell_\xi\left(Y | X, \widehat{\beta}_\Xi\right)$ be denoted by $\nabla \widehat{\ell}_\xi$ and define
\[ \widehat{\Gamma} = \nabla^2 L_\Xi\left(Y | X, \widehat{\beta}_\Xi\right). \]
In Theorem \ref{theo:CLT_polyads_estim_Graham}, the asymptotic normality is written in terms of $\Sigma^{(k)}$, which can not be directly evaluated from the sample. In practice, \eqref{eq:asymptotic_cov} suggest to approximate it by $\widehat{\Sigma}$ given by
\begin{equation*}
   \widehat{\Sigma} = \widehat{\Gamma}^{-1} \widehat{\Omega} \widehat{\Gamma}^{-1},\text{ where }\widehat{\Omega} = \sum_{\bi\in\cI} \left( \sum_{\xi:\bi\in\cE(\xi)} \nabla \widehat{\ell}_\xi \right) \left( \sum_{\xi:\bi\in\cE(\xi)} \nabla \widehat{\ell}_\xi \right)^\top.
\end{equation*}

We also implement and empirically verify the performance of another variance estimator. Equation \eqref{eq:behavior_err_beta} suggests approximating the covariance of $\left( \nabla^2 \widehat{Q}^{(k)}\left(\overline{\beta}^{(k)}\right) \right) \left( \widehat{\beta}^{(k)} - \beta_\star\right)$ by the expectation of
\[ \left( \nabla \widehat{Q}^{(k)}\left(\beta_\star\right)\right) \left( \nabla \widehat{Q}^{(k)}\left(\beta_\star\right)\right)^\top = \left( \widehat{N}_a^{(k)} \right)^{-2}\sum_{\xi, \xi' \in \Xi^{(k)}} \left( \nabla \ell_\xi\left(Y^{(k)} | X^{(k)}, \beta_\star\right) \right) \left( \nabla \ell_{\xi'}\left(Y^{(k)} | X^{(k)}, \beta_\star\right) \right)^\top.  \]
Notice that if $\xi$ and $\xi'$ share no edges, the expectation of their corresponding term is zero since it is the product of independent quantities with zero mean. This suggests approximating $\Sigma^{(k)}$ by
\begin{equation*}
   \widehat{\Sigma}' = \widehat{\Gamma}^{-1} \widehat{\Omega}' \widehat{\Gamma}^{-1},\text{ where }\widehat{\Omega}' = \sum_{\xi, \xi' \text{ sharing edges} } \left( \nabla \widehat{\ell}_\xi \right) \left( \nabla \widehat{\ell}_{\xi'} \right)^\top.
\end{equation*}

The difference between $\Omega$ and $\Omega'$ is subtle. The next lemma illuminates this difference and provides computationally tractable expressions for $\Omega$ and $\Omega'$.

\begin{lemma}
    Let
    \[ \sI_a = \{\bi : \exists \xi \in \Xia \text{ such that } \bi \in \sE(\xi) \} \]
    be the set of all edges that belong to at least one active polyad. It holds that
    \begin{align}
        \label{eq:formula_omega}
        \widehat{\Omega} &= \sum_{\bi\in\cI_a} \left( \sum_{\bi' : (\bi, \bi') \in \Xia} 2^D \nabla \widehat{\ell}_{(\bi, \bi')} \right) \left( \sum_{\bi' : (\bi, \bi') \in \Xia} 2^D\nabla \widehat{\ell}_{(\bi, \bi')} \right)^\top\\
        & = \sum_{\bi\in\cI_a} \sum_{\bi' : (\bi, \bi') \in \Xia} \sum_{\bi'' : (\bi, \bi'') \in \Xia} \left( \nabla \widehat{\ell}_{(\bi, \bi')}\right) \left( \nabla \widehat{\ell}_{(\bi, \bi'')}\right)^\top 2^{2D} \nonumber
    \end{align}
    and
    \begin{equation}
        \label{eq:formula_omega_prime}
        \widehat{\Omega}' = \sum_{\bi\in\cI_a} \sum_{\bi' : (\bi, \bi') \in \Xia} \sum_{\bi'' : (\bi, \bi'') \in \Xia} \left( \nabla \widehat{\ell}_{(\bi, \bi')}\right) \left( \nabla \widehat{\ell}_{(\bi, \bi'')}\right)^\top 2^{D + \sum_{d=1}^D \mathbf{1}\{ i_d' \neq i_d''\}}.
    \end{equation}
\end{lemma}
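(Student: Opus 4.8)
The plan is to reduce everything to the permutation invariance collected in Lemma~\ref{lem:tools_polyads}. First I would record the consequences that drive the computation. By part (iii), $\ell_\xi(\cdot\mid X,\beta)=\ell_{\xi'}(\cdot\mid X,\beta)$ as functions of $\beta$ whenever $\xi'$ is a permutation of $\xi$, so their gradients coincide and in particular $\nabla\widehat\ell_\xi=\nabla\widehat\ell_{\xi'}$; moreover the edge set $\cE(\xi)$ is permutation invariant (a permutation only relabels which endpoint in each dimension lies in the top row), and by part (ii) activity is permutation invariant since $m_\xi+M_\xi$ is preserved. Hence every active polyad sits in a \emph{permutation class} of exactly $2^D$ active polyads, all sharing the same edge set and the same gradient, and by part (i) each edge $\bi$ of such a class determines a unique representative of the form $(\bi,\bi')$.

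For $\widehat\Omega$ I would first drop the inactive polyads, whose gradients vanish, and the edges outside $\sI_a$, for which the inner sum is then empty. For a fixed $\bi\in\sI_a$, the active polyads through $\bi$ partition into permutation classes of size $2^D$, each contributing one common gradient and each possessing a unique representative $(\bi,\bi')$ with $(\bi,\bi')\in\Xia$. Therefore $\sum_{\xi\in\Xia:\,\bi\in\cE(\xi)}\nabla\widehat\ell_\xi=2^D\sum_{\bi':(\bi,\bi')\in\Xia}\nabla\widehat\ell_{(\bi,\bi')}$, which is exactly the first displayed form of $\widehat\Omega$; expanding the outer product and using $2^D\cdot 2^D=2^{2D}$ yields the triple-sum form.

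For $\widehat\Omega'$ the two sides must be massaged separately and then matched. On the left, after discarding inactive polyads and grouping members into permutation classes, all $2^D$ members of a class share a common edge set, so two classes share an edge iff every pair of their members does; counting the $2^{2D}$ ordered member-pairs gives $\widehat\Omega'=2^{2D}\sum_{[\eta],[\eta']\text{ sharing an edge}}\nabla\widehat\ell_{[\eta]}\nabla\widehat\ell_{[\eta']}^\top$, where $\nabla\widehat\ell_{[\eta]}$ is the common class gradient. On the right, for fixed $\bi$ the representatives $(\bi,\bi')$ biject with the active classes through $\bi$, so after exchanging the order of summation the triple sum becomes $\sum_{[\eta],[\eta']}\nabla\widehat\ell_{[\eta]}\nabla\widehat\ell_{[\eta']}^\top\sum_{\bi\in\cE([\eta])\cap\cE([\eta'])}2^{D+\sum_{d}\mathbf{1}\{i'_d\neq i''_d\}}$, where $\bi'$ and $\bi''$ are the $[\eta]$- and $[\eta']$-partners of $\bi$. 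Matching the two expressions reduces \eqref{eq:formula_omega_prime} to the combinatorial identity $\sum_{\bi\in\cE([\eta])\cap\cE([\eta'])}2^{\sum_{d}\mathbf{1}\{i'_d\neq i''_d\}}=2^D$ for any two classes sharing an edge.

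This identity is the main obstacle, and I would settle it by splitting the dimensions. Write each class through its unordered endpoint pairs $P_d$ and $R_d$; a shared edge exists iff $P_d\cap R_d\neq\emptyset$ for every $d$, so set $A=\{d:P_d=R_d\}$ and $B=\{d:|P_d\cap R_d|=1\}$, whence $A\cup B=[D]$ and there are exactly $2^{|A|}$ shared edges. For a shared edge, $i'_d$ and $i''_d$ are the complements of $i_d$ in $P_d$ and in $R_d$; for $d\in A$ these complements coincide, while for $d\in B$ the shared coordinate $i_d$ is forced and its two complements are distinct, so $\sum_d\mathbf{1}\{i'_d\neq i''_d\}=|B|$ for every shared edge. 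The sum therefore equals $2^{|A|}\cdot 2^{|B|}=2^{D}$, which closes the argument.
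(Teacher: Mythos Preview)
Your proof is correct and follows essentially the same approach as the paper's. Both arguments hinge on the permutation-invariance facts from Lemma~\ref{lem:tools_polyads} to rewrite the sum over $\{\xi:\bi\in\cE(\xi)\}$ as $2^D$ times a sum over representatives $(\bi,\bi')$, and both reduce \eqref{eq:formula_omega_prime} to the same combinatorial count: the number of shared edges between two classes equals $2^{\sum_d\mathbf{1}\{i_d'=i_d''\}}$. Your framing via the partition $[D]=A\cup B$ makes explicit something the paper leaves to the reader, namely that $\sum_d\mathbf{1}\{i_d'\neq i_d''\}$ is the same for every shared edge $\bi$ (equal to $|B|$); the paper instead phrases this step as ``split the $2^{2D}$ permutation pairs evenly among the $m$ shared-edge representations,'' which is the same identity $2^{|A|}\cdot 2^{D+|B|}=2^{2D}$ read the other way.
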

\begin{proof}
To see \eqref{eq:formula_omega} we start noticing that if $\bi \not \in \sI_a$, then its corresponding term is zero. Thus, the first sum can be taken only over $\bi \in \sI_a$. Now recall from Lemma \ref{lem:tools_polyads} that if $\bi \in \sE(\xi)$ for some $\xi$, then $\bi$ belongs to all $2^D$ permutations of $\xi$ and there exists exactly one permutation that can be written as $(\bi, \bi')$ for some $\bi'$. Since each $\ell_\xi$ is invariant by permutation we have
\[ \sum_{\xi:\bi\in\cE(\xi)} \nabla \widehat{\ell}_\xi = \sum_{\bi' : (\bi, \bi') \in \Xia} 2^D \nabla \widehat{\ell}_{(\bi, \bi')} \]
and so \eqref{eq:formula_omega} is proved.

The proof of \eqref{eq:formula_omega_prime} is more intricate. Notice that if $\xi, \xi'$ share one edge we can put this edge ``in evidence'' to obtain a permutation $(\bi, \bi')$ of $\xi$ and a permutation $(\bi, \bi')$ of $\xi'$. If $\xi, \xi'$ share exactly one edge this representation is unique and a total of $2^D \times 2^D$ permutations ($2^D$ for each polyad in the pair) will be represented by the pair $(\bi, \bi'), (\bi,\bi'')$. Now assume that the polyads share exactly $m$ edges, in this case the representation is not unique anymore, as there are $m$ possible choices of $\bi$. To avoid double counting we split the $2^D \times 2^D$ permutations evenly between all possible choices of $\bi$, making each one account for $\frac{2^D \times 2^D}{m}$ permutations. It remains to understand, for a given pair $(\bi, \bi'), (\bi, \bi'')$, how many edges they share. Notice that if $i_d' = i_d''$, then we can simultaneously flip the $d$-th coordinate of $\bi$ with the $d$-th coordinate of $\bi'$ and $\bi''$ to obtain a new shared edge. It can not be done if $i_d' \neq i_d''$. Thus, the number of shared edges is $m = 2^{\sum_{d=1}^D \mathbf{1}\{ i_d' = i_d''\}}$, which yields
\[ \frac{2^D \times 2^D}{ 2^{\sum_{d=1}^D \mathbf{1}\{ i_d' = i_d''\}} } = 2^{D + \sum_{d=1}^D \mathbf{1}\{ i_d' \neq i_d''\}} \]
permutations counted for each pair $(\bi, \bi'), (\bi, \bi'')$.\end{proof}

Indeed, \eqref{eq:formula_omega} and \eqref{eq:formula_omega_prime} make explicit the two main differences between 
$\widehat{\Omega}$ and $\widehat{\Omega}'$. First, $\widehat{\Omega}$ contains duplicates of certain pairs of polyads. A closer inspection of the proof reveals that these duplicates arise precisely on pairs that share strictly more than one edge. This clarifies the role of Assumption~\ref{ass:second_order_geometry} in Theorem~\ref{theo:CLT_polyads_estim_Graham}: for the projection strategy to be valid, the covariance of the Hájek projection of 
$\nabla \widehat{Q}^{(k)}(\beta_\star)$ --- which is approximately $\widehat{N}_a^{-2}\mathbb{E}\,\widehat{\Omega}$ --- and the true covariance --- approximately $\widehat{N}_a^{-2}\mathbb{E}\,\widehat{\Omega}'$ --- must converge to each other. Second, computing $\widehat{\Omega}$ is less costly than computing $\widehat{\Omega}'$. 
For each $\mathbf{i} \in \mathcal{I}_a$, evaluating $\widehat{\Omega}$ requires only a single pass over each $\mathbf{i}'$ such that $(\mathbf{i}, \mathbf{i}') \in\Xia$. In contrast, computing $\widehat{\Omega}'$ requires an extra loop over $\bi''$ such that $(\mathbf{i}, \mathbf{i}'') \in\Xia$.

We now use equations \eqref{eq:formula_omega} and \eqref{eq:formula_omega_prime} to obtain an algorithm for computing $\widehat{\Sigma}$ and $\widehat{\Sigma}'$. We need to be able to loop over all $\bi \in \sI_a$ and, given $\bi$, to efficiently loop over all $\bi'$ such that $(\bi, \bi') \in \Xia$. Recall that $\Xia^\star$ contains exactly one permutation of each active polyad $\xi \in \Xia$, in fact, $\widehat{N}_a = |\Xia| = 2^D|\Xia^\star|$. One can loop over each $\xi \in \Xia^\star$ and compute all permutations of $\xi$. By updating a dictionary containing for each key $\bi$ the corresponding set of $\bi'$s we can easily construct the data structure needed to evaluate the covariances. In practice we also store a pointer to the original $\xi$ so that we can profit from the already evaluated $\widehat{X}_\xi$ and $\{Y_\bi : \bi \in \sE(\xi)\}$. The following result gives the computational complexity of computing each variance alternative.

\begin{theorem}
    Computing $\widehat{\Sigma}$ requires $O(\widehat{N}_a)$ operations and computing $\widehat{\Sigma}'$ requires $O(\widehat{N}_a m_a )$ operations, where $m_a$ is the maximum over all edges $\bi$ of the number of active polyads $\xi$ such that $\bi \in \sE(\xi)$.
\end{theorem}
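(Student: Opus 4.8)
The plan is to treat the dimensions $p$ and $D$ as fixed constants, so that every elementary operation --- gradient evaluation, outer product, $p\times p$ inversion and product, and the integer weight $2^{D+\sum_d \mathbf{1}\{i_d'\neq i_d''\}}$ --- costs $O(1)$. Under this convention the whole argument reduces to a combinatorial count of how many pairs, respectively triples, of polyads are visited by the summation formulas \eqref{eq:formula_omega} and \eqref{eq:formula_omega_prime}. The central bookkeeping identity I would establish first is
\[
\sum_{\bi\in\sI_a} d_\bi = \widehat{N}_a, \qquad d_\bi := \bigl|\{\bi' : (\bi,\bi')\in\Xia\}\bigr|,
\]
which is immediate once one recalls that $\Xia$ is literally a set of ordered pairs $(\bi,\bi')$ of cardinality $\widehat{N}_a$: summing over first coordinates $\bi$ the number of admissible second coordinates counts each active polyad exactly once. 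By Lemma~\ref{lem:tools_polyads}(i) every $\bi\in\sI_a$ occurs as the first coordinate of at least one active polyad, so the sum indeed ranges over all of $\sI_a$; in particular $|\sI_a|\le\widehat{N}_a$ since each $d_\bi\ge1$.

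Next I would handle the preprocessing common to both estimators. Following the construction described above, one loops once over $\Xia^\star$ and expands the $2^D$ permutations of each polyad, inserting, for every permutation $(\bi,\bi')$, the neighbour $\bi'$ into a hash map keyed by $\bi$; since $\widehat{N}_a = 2^D|\Xia^\star|$, this builds the adjacency structure in $O(\widehat{N}_a)$ time. In the same pass I would precompute and store each gradient $\nabla\widehat{\ell}_\xi$ (reusing the already-stored $\widetilde{X}_\xi$ and edge values), which is $O(1)$ per polyad by Lemma~\ref{lem:derivatives} together with the $O(1)$ moment evaluation guaranteed by the truncated version of Algorithm~\ref{alg:evaluate_moments}; the Hessian $\widehat{\Gamma}$ and its inverse are obtained in $O(\widehat{N}_a)+O(p^3)=O(\widehat{N}_a)$. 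Thus all ingredients entering $\widehat{\Sigma}$ and $\widehat{\Sigma}'$ besides $\widehat{\Omega},\widehat{\Omega}'$ are available in $O(\widehat{N}_a)$.

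For $\widehat{\Sigma}$ I would read off the cost directly from \eqref{eq:formula_omega}: for each $\bi\in\sI_a$ accumulate the vector $v_\bi=\sum_{\bi':(\bi,\bi')\in\Xia}2^D\nabla\widehat{\ell}_{(\bi,\bi')}$ in $O(d_\bi)$ operations, then add the rank-one update $v_\bi v_\bi^\top$ in $O(p^2)=O(1)$. The accumulation costs $\sum_\bi d_\bi=\widehat{N}_a$ by the identity above, and the rank-one updates cost $O(|\sI_a|)=O(\widehat{N}_a)$; forming $\widehat{\Sigma}=\widehat{\Gamma}^{-1}\widehat{\Omega}\widehat{\Gamma}^{-1}$ adds only $O(p^3)=O(1)$, giving the claimed $O(\widehat{N}_a)$. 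For $\widehat{\Sigma}'$, the formula \eqref{eq:formula_omega_prime} is a triple sum whose inner double loop over $(\bi',\bi'')$ has exactly $d_\bi^2$ iterations for each fixed $\bi$, each iteration costing $O(1)$ (including evaluating the multiplicity $2^{D+\sum_d\mathbf{1}\{i_d'\neq i_d''\}}$ and the rank-one contribution). Hence the cost is $O\bigl(\sum_{\bi\in\sI_a}d_\bi^2\bigr)$.

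The only genuinely quantitative step --- and the one I expect to be the main obstacle --- is turning this $\sum d_\bi^2$ into the stated $O(\widehat{N}_a m_a)$ bound, which requires relating the pair-degree $d_\bi$ to the edge-multiplicity $m_a$. Here I would again invoke Lemma~\ref{lem:tools_polyads}: the active polyads containing a fixed edge $\bi$ partition into permutation classes of size $2^D$, each class contributing exactly one pair $(\bi,\cdot)\in\Xia$, so the number of active polyads $\xi$ with $\bi\in\sE(\xi)$ equals $2^D d_\bi$; consequently $2^D d_\bi\le m_a$, and in particular $d_\bi\le m_a$ for every $\bi$. Combining with the counting identity gives
\[
\sum_{\bi\in\sI_a} d_\bi^2 \;\le\; \Bigl(\max_{\bi\in\sI_a} d_\bi\Bigr)\sum_{\bi\in\sI_a} d_\bi \;\le\; m_a\,\widehat{N}_a,
\]
and adding the final $O(p^3)$ for $\widehat{\Sigma}'=\widehat{\Gamma}^{-1}\widehat{\Omega}'\widehat{\Gamma}^{-1}$ yields $O(\widehat{N}_a m_a)$, which completes the proof.
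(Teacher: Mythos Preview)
Your proof is correct and follows essentially the same approach as the paper's: build the adjacency dictionary in $O(\widehat{N}_a)$, then read off the single-loop cost $\sum_{\bi}d_\bi=\widehat{N}_a$ for $\widehat{\Omega}$ and the double-loop cost $\sum_{\bi}d_\bi^2\le m_a\widehat{N}_a$ for $\widehat{\Omega}'$. Your version is in fact more explicit than the paper's, which simply asserts ``a double loop over the $\bi'$s, thus $O(\widehat{N}_a m_a)$'' without spelling out the counting identity $\sum_{\bi}d_\bi=\widehat{N}_a$ or the relation $2^D d_\bi\le m_a$ that you derive from Lemma~\ref{lem:tools_polyads}.
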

\begin{proof}
    Since the cost of updating and consulting a dictionary is constant one can construct the dictionary that maps $\bi$ to $\bi'$ in $O(\widehat{N}_a)$. For evaluating $\widehat{\Omega}$ one goes through each key $\bi$ and each set of $\bi'$s once, thus yielding $O(\widehat{N}_a)$. Evaluating $\widehat{\Omega}$ requires a double loop over the $\bi'$s, thus $O(\widehat{N}_a m_a)$. The corresponding $\widehat{\Gamma}$ is just the Hessian of the loss, which is evaluated in $O(\widehat{N}_a)$. 
\end{proof}

\subsection{Final cost analysis and practical considerations}\label{subsec:cost_considerations}

As a consequence of this section's discussion, we can provide a complete computational cost analysis of our method:

\begin{theorem} Assume that $m_a$ is bounded. When $D$ is odd, assume also that there exists $c \geq 1$ that $|E_{i_1}| < c\frac{|E|}{n_1}$ for all $i_d \in [n_d]$. Thus, evaluating $\widehat{\beta}_\Xi$ and estimating its variance requires $O(T|E|^2)$ operations, where $T$ is the number of iterations of Newton's method.
\end{theorem}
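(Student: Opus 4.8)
The plan is to assemble the per-component complexity bounds already established in this section into a single end-to-end count, the only genuinely new ingredient being the observation that the total number of active polyads is itself $O(|E|^2)$. Once that is in hand, the statement becomes a routine summation of three contributions: building the index set, running Newton's method, and estimating the variance.

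First I would bound $\widehat{N}_a$. By the construction theorem for Algorithm~\ref{alg:construct_polyads}, the set $\Xia^\star$ is produced in $O(|E|^2)$ operations, using the hypothesis $|E_{i_1}|<c|E|/n_1$ when $D$ is odd and no extra hypothesis when $D$ is even. Since the innermost loop appends at most one polyad per iteration, the very same counting argument that bounds the runtime also bounds the output size, giving $|\Xia^\star| = O(|E|^2)$. Because $D$ is fixed, this yields $\widehat{N}_a = |\Xia| = 2^D|\Xia^\star| = O(|E|^2)$, which is the bridge that converts all subsequent $O(\widehat{N}_a)$ bounds into bounds in $|E|$.

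Next I would account for the optimization and the variance. By Theorem~\ref{thm:complexity}, a single Newton update (Algorithm~\ref{alg:update_beta}) costs $O(\widehat{N}_a)$, so running $T$ iterations costs $O(T\widehat{N}_a) = O(T|E|^2)$; the one-time precomputation of $\widetilde{X}_\xi$ and of the edge values $\{y_\bi:\bi\in\sE(\xi)\}$ is carried out while building $\Xia^\star$ and is absorbed into the $O(|E|^2)$ construction cost. For the variance, the variance-complexity theorem gives $O(\widehat{N}_a)$ for $\widehat{\Sigma}$ and $O(\widehat{N}_a m_a)$ for $\widehat{\Sigma}'$; invoking the standing assumption that $m_a$ is bounded collapses both to $O(\widehat{N}_a)=O(|E|^2)$. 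Summing the three contributions,
\[
\underbrace{O(|E|^2)}_{\text{build }\Xia^\star} \;+\; \underbrace{O(T|E|^2)}_{\text{Newton}} \;+\; \underbrace{O(|E|^2)}_{\text{variance}} \;=\; O(T|E|^2),
\]
since $T\geq 1$.

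The step I expect to carry the real content — and the only one that is not a verbatim citation — is the passage from ``Algorithm~\ref{alg:construct_polyads} runs in $O(|E|^2)$'' to ``$\widehat{N}_a=O(|E|^2)$''. This is what lets every per-iteration cost of the form $O(\widehat{N}_a)$ be re-expressed in terms of $|E|$, and it is also what makes the boundedness of $m_a$ sufficient to keep the variance step subdominant rather than introducing an extra factor. Everything else is bookkeeping: verifying that the odd-$D$ hypothesis is only ever needed where the construction theorem already invoked it, and checking that $T\geq 1$ so that the Newton term dominates the two one-off $O(|E|^2)$ terms.
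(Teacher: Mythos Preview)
Your proposal is correct and follows essentially the same approach as the paper's proof: bound $\widehat{N}_a$ by $O(|E|^2)$ via the construction cost of $\Xia^\star$, then invoke the per-iteration Newton cost and the variance-estimation cost (using the boundedness of $m_a$), and sum. Your version is in fact more explicit than the paper's about the one non-trivial step, namely that the $O(|E|^2)$ runtime of Algorithm~\ref{alg:construct_polyads} bounds $|\Xia^\star|$ because each inner-loop iteration appends at most one polyad.
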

\begin{proof}
    First notice that constructing $\Xia^\star$ has cost $O(|E|^2)$, thus the size of $\Xia$, i.e. $\widehat N_a$, must be of order at most $|E|^2$. Each Newton's method update has cost $O(\widehat{N}_a)$ and, under the assumption that $m_a$ is bounded, both variance estimates have cost $O(\widehat{N}_a)$. Thus, the total cost is driven by the number of iterations of Newton's method times $|E|^2$.
\end{proof}

In practice, Newton's method converges in less than $10$ iterations, yielding $O(|E|^2)$ operations. This quantity is to be compared with the fast implementation of PPML from \cite{correia2019ppmlhdfe}, which is $O(n)$. Our analysis suggests that our method is faster when $|E| \ll \sqrt{n}$ and competitive when $|E|$ is of order $\sqrt{n}$. Although being the standard practice when reporting the complexity of algorithms, the big-O notation hides a constant that matters to practitioners. In the next section, we empirically verify that, as predicted by our cost analysis, our method outperforms PPML in terms of computational time when $|E|$ is smaller than $\sqrt{n}$ and remains competitive as $|E|$ grows. Indeed, in our computational setup (see Section \ref{sec:experiments}) the running time of our method is shorter than that of PPML as long as $|E| \leq 15 \sqrt{n}$. This is, for example, the case of a bipartite network with $n_1=n_2$ such that the average degree of a node is at most $15$.

\begin{remark}\label{rem:jochmans}
We also note that the improvements developed in this section, and particularly the construction of $\Xia^\star$, may be useful for other polyads-based methods. For example, \cite{jochmans2017two} proposes a generalized method of moments for two-way models with $n_1 = n_2 = \sqrt{n}$. His tetrad-based estimator leverages matrix-multiplication tricks and achieves a computational complexity of order $O(n^{1.1877})$ when using the best available theoretical matrix-multiplication algorithm. In contrast, our $O(|E|^2)$ complexity can yield substantial gains in sparse regimes.
\end{remark}

\section{Experiments}
\label{sec:experiments}

We provide experiments comparing our method with PPML and with the analytical debias proposed by \cite{zylkin2024bootstrap}. We consider artificial and real data. With artificial data we investigate the impact of the incidental parameter problem while knowing the correct value of $\beta_\star$. With real data we display evidence of the incidental parameter bias and show how it may lead towards wrong conclusions in inference. We also discuss the computational time and the effect of sparsity for all methods.

Our Polyads estimator is implemented as described in Section~\ref{sec:experiments}, and confidence intervals use the covariance approximation $\widehat{\Sigma}'$ from Section~\ref{subsec:computational_variance}. The experiments with artificial data were performed on a Mac Mini M4 with 16~GB of RAM with $10$ executions in parallel. The experiments with real data took place in a controlled environment for sensitve data access equipped with an Intel Xeon Gold~6444Y and 3~TB of RAM. 

\subsection{Artificial data}
\label{sec:experiments:artificial}

We conduct computational experiments using a three-way data-generating process inspired by \cite{weidner2021bias}; see Example~\ref{ex:3:way}.  
The dimensions are $n_1 = n_2$ (varied) and $n_3 = 5$ (fixed).  
The fixed effects $u_{ij}, w_{it}, v_{jt}$ are i.i.d.\ $\mathcal{N}(0,1/16)$, and $\beta_\star = 1$.  
The covariates $X_{ijt}$ are correlated with the fixed effects and, along the third axis, with their own past values:
\[
X_{ijt} =
\begin{cases}
\tfrac{1}{2} X_{ij(t-1)} + w_{it} + v_{jt} + \tfrac{1}{4}\mathcal{N}(0,1), & t>1,\\[3pt]
w_{it} + v_{jt} + \tfrac{1}{4}\mathcal{N}(0,1), & t=1.
\end{cases}
\]
The mean of the $\bi=(i,j,t)$ edge's weight satisfies
\[
\mathbb{E}(Y_{ijt}) = \exp\!\left(c + \beta_\star X_{ijt} + u_{ij} + w_{it} + v_{jt}\right),
\]
where the constant $c$ can be selected to control the density $|E|/n$ of the graph. We generate both Poisson data (satisfying Assumption~\ref{assump:model} with intensity $\lambda_\bi = \lambda_{ijt}$) and non-Poisson data.  
To obtain non-Poisson outcomes, we generate $Y_{\bi}$ as Negative Binomial via a Gamma--Poisson mixture: the rate of the Gamma controls the variance, while its shape is scaled to match the desired mean $\lambda_{\bi}$.  
Setting the rate to $\infty$ recovers the Poisson model; for experiments with overdispersion, we take the rate equal to $0.1$. We executed $600$ replications of each configuration.

We compare three estimators: PPML, PPML (Debiased), and our Polyads estimator.  
For PPML we use the fast implementation of \cite{correia2019ppmlhdfe}; for PPML (Debiased) we use the analytical correction of \cite{weidner2021bias} via their Stata package. Our first experiment varies the graph density with 
\[
|E| \in \{0.02n,\ 0.03n,\ 0.04n,\ 0.05n,\ 0.1n\},
\quad\text{and}\quad
n_1 = n_2 \in \{50,100\}.
\]
Each configuration is replicated $600$ times.  
Figures~\ref{fig:ipp_visualize_densities} and \ref{fig:ipp_visualize_boxplots} summarize the results.  
The first one shows the distributions of the normalized errors $\sqrt{n}\,(\widehat{\beta}_\Xi - \beta_\star)$ at densities 2\%, 5\%, and 10\%.  
The second reports, for $n_1=n_2=50$, from left to right, the empirical coverage of the 95\% confidence interval (which should be close to 95\%), the convergence rate, and the running time.  
We observe:

\begin{itemize}
    \item \textbf{Incidental parameter bias.}  
    PPML exhibits a clear incidental parameter problem (blue curves shifted to the right), especially at low densities.  
    Our Polyads estimator eliminate this bias and the debiased PPML, at larger sample sizes $n$, do not display bias.

    \item \textbf{Instability at low density.}  
    For small $n$ and sparse graphs, PPML and PPML (Debiased) often fail to converge and produce poor coverage.

    \item \textbf{Coverage issues.}  
    Even at larger samples, PPML intervals remain unreliable (best coverage $\approx 90\%$ for $n_1=n_2=100$ and $|E|=0.1n$).  
    The debiased method improves coverage but still undercovers at low densities and $n$, and overcovers at high densities.

    \item \textbf{Computational efficiency.}  
    Our estimator is substantially faster at low densities and remains competitive as density increases.
\end{itemize}

\begin{figure}[ht!]
    \centering
    \includegraphics[width=\linewidth]{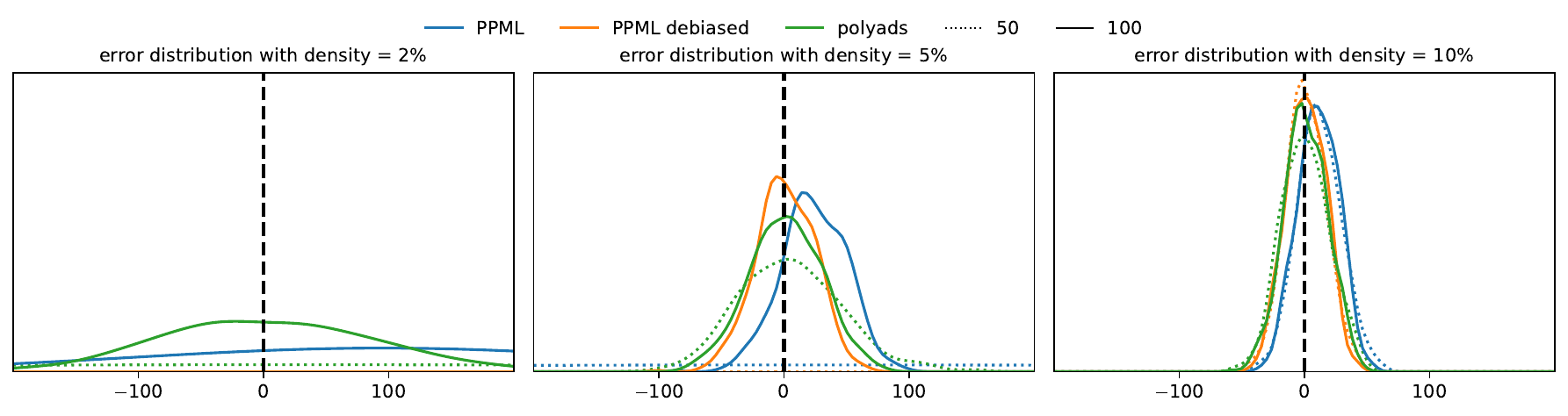}
    \caption{Distributions of the normalized errors of PPML, PPML (Debiased), and the Polyads estimator across different graph densities.}
    \label{fig:ipp_visualize_densities}
\end{figure}

\begin{figure}[ht!]
    \centering
    \includegraphics[width=\linewidth]{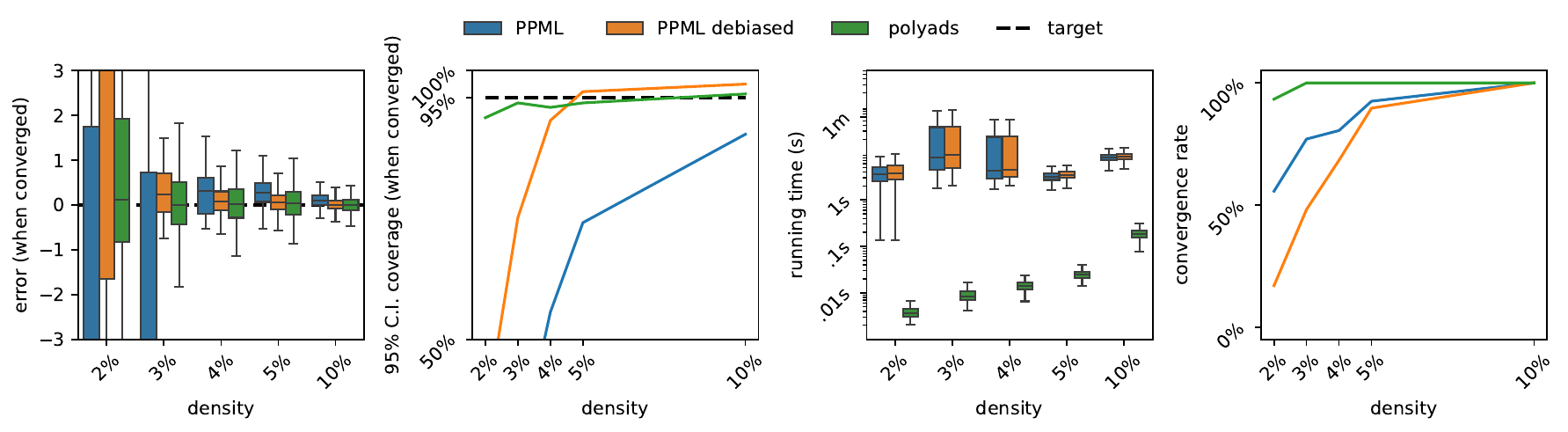}
    \caption{Comparison of PPML, PPML (Debiased), and the Polyads estimator across different graph densities while keeping $n_1=n_2=50$.}
    \label{fig:ipp_visualize_boxplots}
\end{figure}

We next consider a second experiment aimed at evaluating performance in the sparse regime. Here we let $n$ vary from $2\times 10^{5}$ to $32\times 10^{5}$ and choose the constant $c$ so that
\[
|E| \approx 4\sqrt{n},
\]
thus forcing the density $\frac{|E|}{n}$ to shrink towards zero as the sample size grows. Figure~\ref{fig:artificial_poisson_sparse} reports the results for the Poisson case and Figure~\ref{fig:artificial_nb_sparse} reports the results for the Negative Binomial case. The qualitative conclusions of both cases are similar and consistent with those found in the low-density examples from Figure \ref{fig:ipp_visualize_densities}, but become even more pronounced under sparsity:

\begin{figure}[ht!]
    \centering
    \includegraphics[width=\linewidth]{./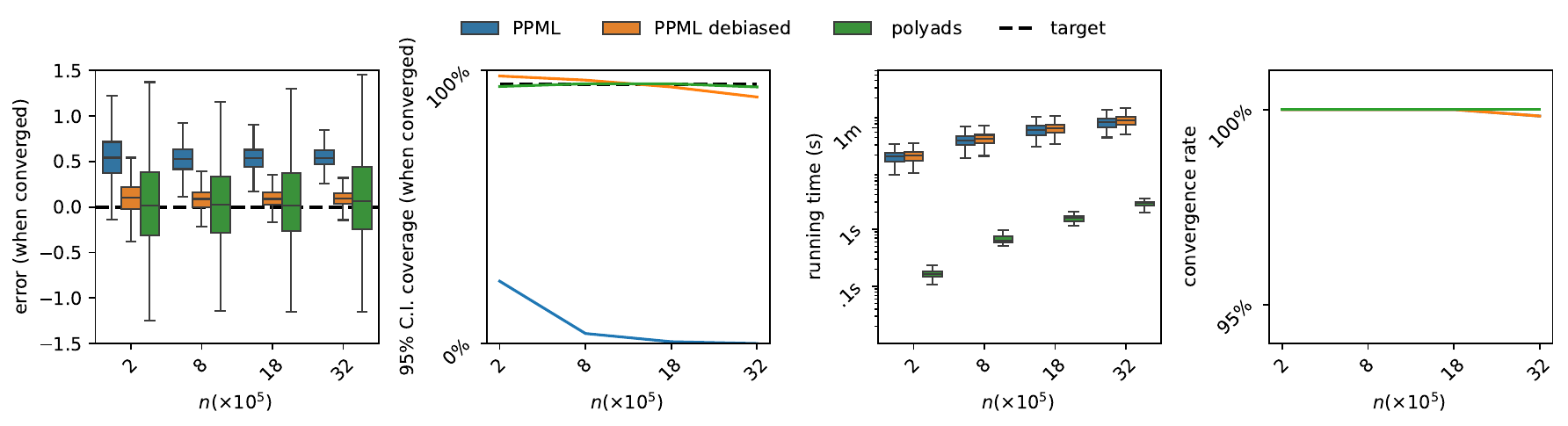}
    \caption{Sparse case ($|E| = 4\sqrt{n}$, Poisson model): Polyads estimator maintains good performance in sparse settings.}
    \label{fig:artificial_poisson_sparse}
\end{figure}

\begin{itemize}
    \item \textbf{Estimation error.} The first panel presents boxplots of the non-scaled estimation error $\widehat{\beta}_\Xi-\beta_\star$ (conditional on convergence).  
    PPML again displays a pronounced incidental parameter bias, with its median shifted upward across all sample sizes.  
    The debiased PPML estimator reduces, but does not eliminate—this distortion.  
    In contrast, the Polyads estimator remains concentrated around $\beta_\star$ for every $n$. More interestingly, as $n$ increases the variance of the PPML (Debiased) approach shrinks, but the incidental parameter bias remains relevant in a way that its coverage starts to decay. 

    \item \textbf{Coverage of the 95\% confidence interval.}
    The second panel highlights the severity of the incidental parameter problem in sparse settings.  
    PPML confidence intervals exhibit near-zero coverage throughout: the bias places the estimator far outside the nominal interval in almost all replications.  
    The debiased PPML method improves coverage but still undercovers for smaller $n$, where sparsity is most acute.  
    The Polyads estimator, by contrast, maintains coverage close to the nominal 95\% level uniformly across all sample sizes.

    \item \textbf{Running time.}
    The third panel reports computation times.  
    Both PPML and debiased PPML become increasingly expensive as $n$ grows, despite the sparsity of the graph.  
    The Polyads estimator directly leverages sparsity and is substantially faster, especially for moderate and large $n$.
\end{itemize}

Across bias, coverage, computation time, and convergence, the results in this sparse regime reinforce the findings from the low-density experiment: PPML exhibits severe incidental parameter bias and essentially zero inferential validity; the debiased PPML estimator improves upon PPML but continues to suffer from miscoverage under sparsity; the Polyads estimator remains accurate, fast, and statistically reliable, even when sparsity increases with sample size.

\begin{figure}[ht!]
    \centering
    \includegraphics[width=\linewidth]{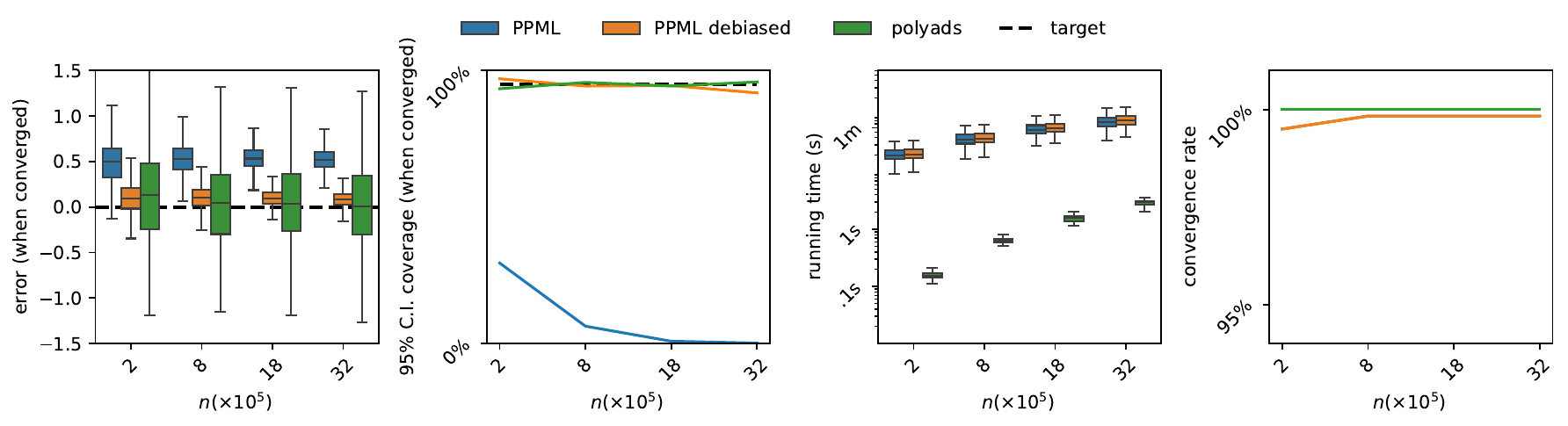}
    \caption{Sparse case ($|E| = 4\sqrt{n}$, Negative Binomial model): Polyads estimator remains robust under Negative Binomial noise.}
    \label{fig:artificial_nb_sparse}
\end{figure}

Comparing Figures \ref{fig:artificial_poisson_sparse} and~\ref{fig:artificial_nb_sparse} we notice that the Negative Binomial case closely match the Poisson one: PPML remains biased, while both the debiased PPML and our Polyads estimator remove the bias and achieve better coverage.  
Thus, although our theory assumes a Poisson model, these results indicate that the Polyads estimator is empirically robust to overdispersion and model misspecification.

\begin{figure}[ht!]
    \centering
    \includegraphics[width=\linewidth]{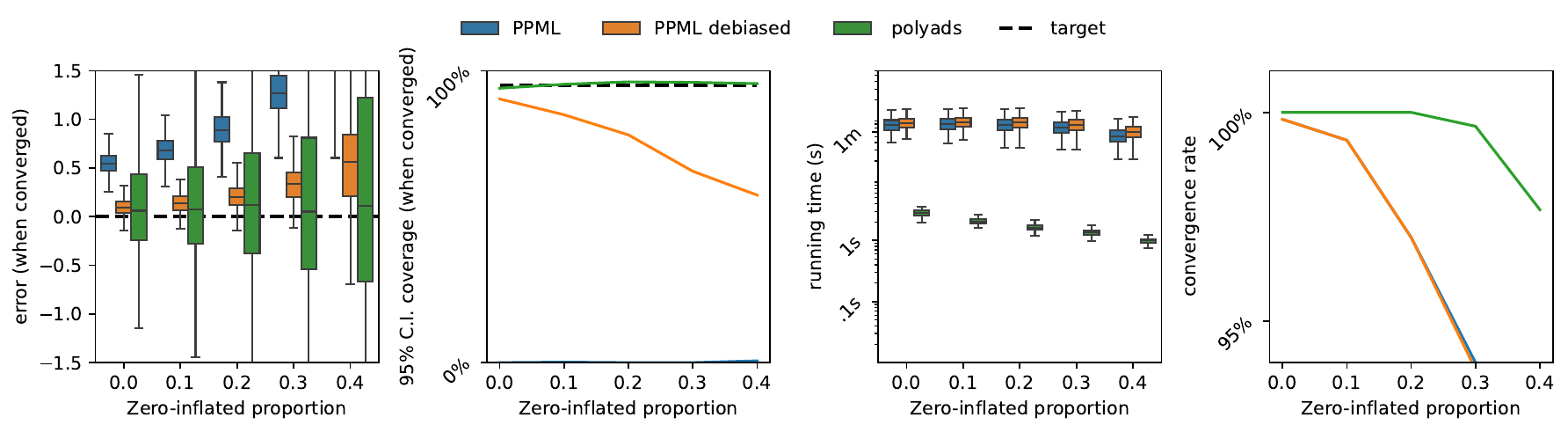}
    \caption{Sparse case ($|E| = 4\sqrt{n}$, Poisson model with inflated zeros): Polyads estimator remains robust under deletions.}
    \label{fig:zero_inflated}
\end{figure}

Finally, we also consider the robustness of the method to zero-inflation. In this case, we start with networks generated following the sparse setup with $n=32 \times 10^5$ and Poisson distribution. We proceed by deleting non-zero entries at random with different deletion probabilities ranging between $0$ and $0.4$. The results are displayed in Figure \ref{fig:zero_inflated}, which shows that both PPML and PPML (Debiased) are highly sensitive to deletions, with performance degrading as the probability of deletion increases. Meanwhile, our method continues to deliver valid confidence intervals, although with increasing variance.

Experiments about  four-way networks are  presented in Appendix~\ref{sec:four_way}. 

\subsection{Real data}
\label{sec:experiments:real}

We exploit health insurance claims data that cover the universe of physician consultations in France over the years 2016 to 2018.
The vast majority of French generalist practitioners (GPs) are subject to fee regulation.
In May 2017, the government increased the regulatory fee level by 8.7\%.
Our goal is to assess how this reform has affected the network of doctors-patients connections.
A difference-in-differences analysis shows that the stronger financial incentives have caused 
physician activity (as measured by number of visits) to rise by approximately 10\%. 
See Appendix~\ref{app:health} for  detail about the data, the considered control groups and the evaluation method.


Given the strong policy concern about geographic access to physician services, it is important to understand how the reform has transformed the patient-doctor network in the spatial dimension. 
Has the reform generated encounters of  doctors and patients located further apart? 
Using a  three-way model and controlling for dyads fixed effects allows to estimate how the reform has affected the geography of doctor-patient connections. 
We compare below 
the standard PPML estimator, the analytical correction  of \cite{weidner2021bias}, hereafter ``PPML (Debiased)'', and our Polyads estimator. As explained above, we highlight the role of geographic distance (spatial accessibility). To illustrate the asymptotic bias of PPML, we also examine whether the reform has affected gender homophily between patients and doctors. 

The outcome~$Y_{ijt}$ is the number of visits by patient~$i$ to doctor~$j$ on month~$t$. Given our two dimensions of interest (geography and gender), we aggregate data at the city-sex level. The index~$i$ (resp. $j$) thus stands thus for the set of patients (resp. doctors) in a given municipality with given gender. The high number of  potential patients in each city-sex group makes to the Poisson assumption plausible.\footnote{The aggregate number of consultations in each group is the sum of (possibly heterogeneous) Bernoulli distributions that represent the occurrence of a consultation for all potential patients in the group. This sum  follows approximately a Poisson distribution 
under the conditions exposed in \cite{le1960approximation}. The  assumption that the individual occurrences of a consultation are independent across potential patients is relaxed in \cite{galambos1973general} and \cite{serfling1978some}.
} The treatment~$T_{j}$ is a binary variable equal to 1 for sector 1 GPs, and to 0 for our control group (direct access specialist physicians, see Appendix~\ref{app:health}). The reform has been implemented from May 2017 onward, hence the definition of $\text{Post}_t$, a dummy variable equal to 1 after that date. On top of the 
interaction between~$\text{Post}_t \times T_j$ (as in any difference-in-differences approach), the model includes three features that account for homophily preferences in the gender and spatial dimensions:  (i) a dummy variable equals to 1 if patients and doctors have the same sex; (ii) a dummy variable that is equal to 1 if the doctor's practice is located in the same municipality as the patient's home; and (iii) travel time between patient's home and doctor's practice.\footnote{Travel time is measured in minutes between the centroids of municipalities.}

We consider the three-way Poisson model $Y_{ijt}\sim\mathcal{P}(\lambda_{ijt})$ with intensity given by
\begin{equation}
    \ln\lambda_{ijt}=\big(\beta_\texttt{d}  d_{ij}+\beta_\texttt{sc} \mathds{1}\{\text{city}_i=\text{city}_j\}+\beta_\texttt{ss} \mathds{1}\{\text{sex}_i=\text{sex}_j\}\big) \times \text{Post}_t \times T_j  +u_{ij}+v_{jt}+w_{it}.
    \label{eq:Poisson:health:reform}
\end{equation}

Because the full dataset contains $n_1 = 69{,}265$ patient groups, $n_2 = 16{,}941$ doctors, and $n_3 = 34$ months --- corresponding to roughly $n \approx 40$ billion edges and $|E| = 56{,}034{,}015$ --- direct estimation on the full graph is computationally infeasible for all methods. We therefore adopt a subsampling strategy combined with meta-analysis. This approach allows us to assess the robustness of the various estimators to subsampling.

\begin{table}[ht!]
\centering
\begin{tabular}{l l c c c}
\toprule
&  &  & \textbf{Subsample proportion} &  \\
\textbf{Parameter} & \textbf{Method} & \textbf{2\%} & \textbf{3\%} & \textbf{4\%} \\
\midrule
 & PPML & 1.94 {\small (1.43, 2.44)} & 1.87 {\small (1.56, 2.18)} & 1.66 {\small (1.48, 1.84)} \\
$\beta_\texttt{d} (\times 10^{4})$ & PPML (Debiased) & 15.14 {\small (-8.33, 38.61)} & 1.33 {\small (0.57, 2.09)} & 1.44 {\small (0.99, 1.89)} \\
 & Polyads & 1.23 {\small (0.26, 2.20)} & 1.73 {\small (1.18, 2.28)} & 1.45 {\small (1.03, 1.86)} \\
\midrule
& PPML & -6.54 {\small (-11.45, -1.63)} & -4.28 {\small (-6.47, -2.08)} & -4.46 {\small (-5.88, -3.04)} \\
$\beta_\texttt{sc} (\times 10^{2})$ & PPML (Debiased) & -62.17 {\small (-174.73, 50.40)} & -0.98 {\small (-5.90, 3.94)} & -3.17 {\small (-6.13, -0.21)} \\
 & Polyads & -7.60 {\small (-13.10, -2.10)} & -3.91 {\small (-6.55, -1.27)} & -4.52 {\small (-6.28, -2.75)} \\
\midrule
 & PPML & -1.54 {\small (-2.84, -0.23)} & -0.91 {\small (-1.67, -0.15)} & -0.78 {\small (-1.29, -0.27)} \\
$\beta_\texttt{ss} (\times 10^{2})$ & PPML (Debiased) & -29.78 {\small (-60.67, 1.11)} & -0.58 {\small (-1.90, 0.74)} & -2.07 {\small (-4.61, 0.47)} \\
 & Polyads & -2.48 {\small (-4.45, -0.50)} & -0.35 {\small (-1.61, 0.92)} & -0.20 {\small (-1.13, 0.73)} \\
\bottomrule
\end{tabular}
\medskip
\caption{Random effects meta-analysis estimates with 95\% confidence intervals (in parentheses) by subsample proportion and method (100 replications). $\beta_\texttt{d}$ estimates are scaled by $10^4$; $\beta_\texttt{sc}$ and $\beta_\texttt{ss}$ by $10^2$.}
\label{tab:meta_analysis}
\end{table}

For each configuration, we sample a proportion $s \in \{2\%, 3\%, 4\%\}$ of patient groups and the same proportion of doctors, while always retaining all $34$ months\footnote{Notice that we sample the groups of patients, the doctors and then get all edges through the sampled nodes, keeping all times. This sampling strategy is consistent with Assumption \ref{assump:model} and also with the sampling assumptions in \cite{fernandez2016individual, weidner2021bias, graham2017econometric, jochmans2018semiparametric}. One could also propose sampling directly the edges and not the nodes, but there is evidence that this procedure can lead to bias, see \cite{subsamplegraphs} for details.}. Each subsample consists of independent random draws of patient and doctor groups; doctors in the treatment and control groups are sampled independently to ensure comparability. We repeat the procedure independently to obtain $100$ subsamples. The final estimates are obtained via a random-effects meta-analysis using the default implementation in the \texttt{statsmodels} Python library, which is based on the iterated method of \cite{paule1982consensus}, a refinement of the classical method of \cite{dersimonian1986meta}.

\begin{figure}
    \centering
    \includegraphics[width=\linewidth]{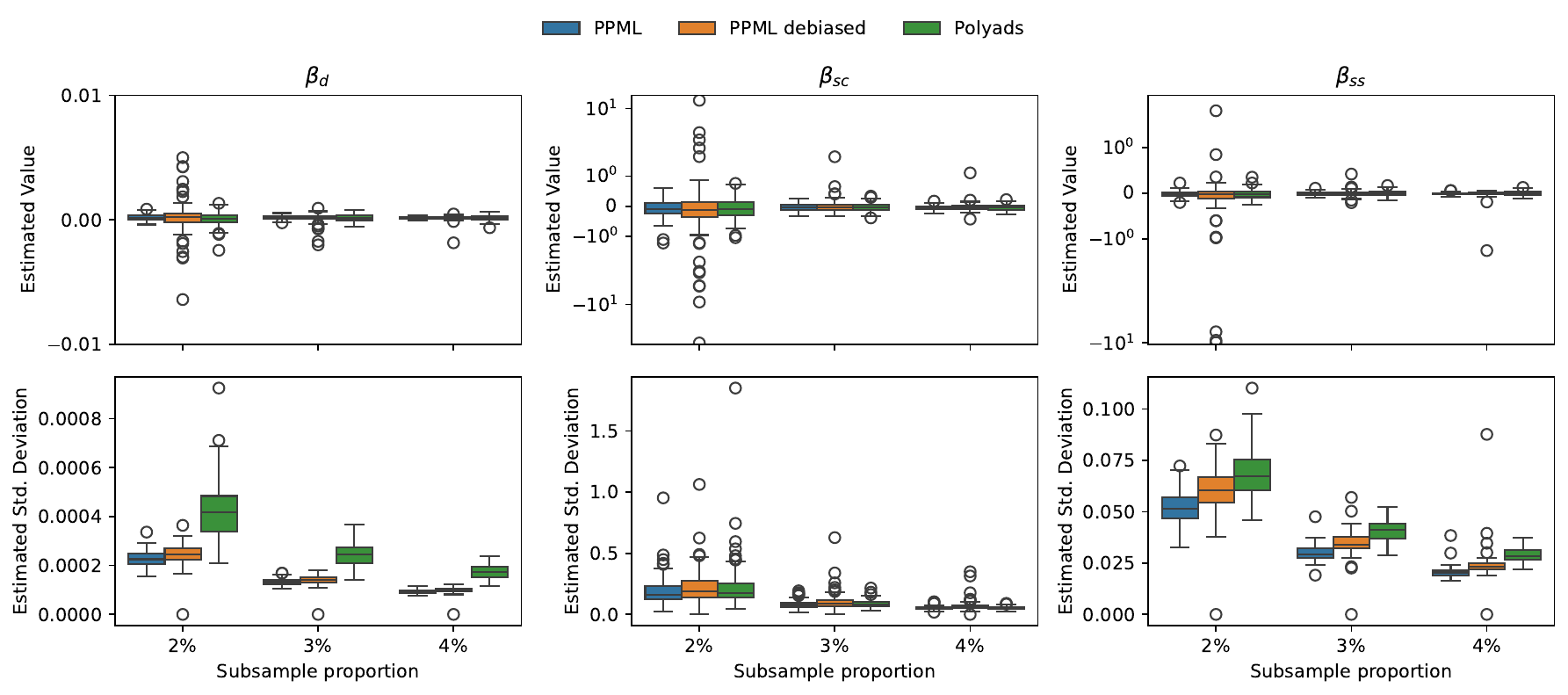}
    \caption{Estimated coefficients and standard deviations by subsample proportion and estimation method (100 replications each).}
    \label{fig:real_beta_boxplots}
\end{figure}

The results from the meta-analysis are reported in Table~\ref{tab:meta_analysis}. 
The three methods yield similar conclusions about the effect of the policy reform on the geography of the network. 
The estimates of the distance and same city parameters, $\hat\beta_\texttt{d} \approx 1.4\ 10^{-4}$
and $\hat\beta_\texttt{sc} \approx -4.5\ 10^{-2}$, show that the policy reform has attenuated the (negative) effect  of the patient-doctor distance  and the (positive) effect of patient and doctor being located in the same city.\footnote{To get a sense of the baseline values of $\beta_\texttt{d}$ and $\beta_\texttt{sc}$, we estimate a cross-sectional version of equation~\eqref{eq:Poisson:health:reform}, namely $\ln\lambda_{ij}=\beta^0_\texttt{d} d_{ij}+\beta^0_\texttt{sc}\mathds{1}\{\text{city}_i=\text{city}_j\}+\nu_i+\xi_j$ and find $\hat\beta^0_\texttt{d}\approx-0.03,\hat\beta^0_\texttt{sc}\approx7.7$. (The estimation is  based on a 3\% subsample of fee-regulated GPs in 2016.) In other words, the reform has caused these two parameters to decrease in absolute value by approximately 0,5\%.} In other words, the reform has caused the concerned doctors to attract patients located further away from their practice.
 In particular, as a result of the reform, consultations with doctors in a different city from the patient rose more rapidly (about +4 percentage points) than those with doctors in the same city
: spatial homophily (the tendency for patients and doctors to be located in the same city) 
has decreased due to the increase in doctors' fees.
Regarding gender homophily, the PPML method  disagrees 
with both PPML (Debiased) and Polyads. With a 3\% or 4\%  subsample proportion, the latter two methods do not find evidence that the policy reform has affected the degree of gender homophily, while PPML suggests reduced homophily. This is consistent with the PPML estimator of $\beta_\texttt{ss}$ being asymptotically biased.

\smallskip

Figure~\ref{fig:real_beta_boxplots} presents the results for each of the 100 replications. The top line shows that the Polyads estimator has much fewer outlier values than PPML and PPML (Debiased), the phenomenon being particularly pronounced for the 2\% and 3\% sampling rates.
The bottom line shows the distribution across the 100 subsamples of the estimated standard deviation of each estimator.
We observe that the Polyads method  yields larger standard errors, which is consistent with the experimental results of Section~\ref{sec:experiments:artificial}. 

The variance of the meta-analysis is a combination of the inter-study variance and the outer-study variance. 
The Polyads estimator has higher inter-study variance. However, it  has much smaller outer-study variance
because it has almost no outliers.
Even with PPML/PPML (Debiased) yielding smaller CIs for each run, the final aggregation yields smaller confidence intervals for the Polyads estimator. 
For instance, with a 4\% subsample, 
Table~\ref{tab:meta_analysis} reports  confidence intervals for $\beta_\texttt{sc}$ and $\beta_\texttt{ss}$ respectively 44\% and 73\% smaller under the Polyads method than under PPML (Debiased).
Because of the greater precision, the parameter $\beta_\texttt{sc}$ estimated with the Polyads methods appears significantly negative at the 5\% confidence level for the 2\%, 3\% and 4\% subsample sizes, while under PPML (Debiased) statistical significance occurs only for the 4\% subsample proportion. 

For the 2\% subsample proportion, PPML (Debiased) yields very  large large point estimates and confidence intervals, with $\hat\beta_\texttt{d}$ and $\hat\beta_\texttt{sc}$ appearing as non statistically different from zero. This is true even though
the number of non-zero observations $Y_{ijt}$ is already substantial ($|E|\approx 20,000$) for this sampling proportion.
The phenomenon is much less pronounced for the Polyads method. With this method, the confidence interval for $\hat\beta_\texttt{d}$ and $\hat\beta_\texttt{sc}$ do not contain zero.
The poorer precision and greater  instability of PPML (Debiased) comes from the existence of many outliers, see Figure~\ref{fig:real_beta_boxplots}.

\smallskip

\begin{table}[h]
\centering
\begin{tabular}{l c c c c c c}
\toprule
 
& & & & & \textbf{Time (min)} &  \\
\textbf{Subsample (\%)} 
& \textbf{$n$} & \textbf{$|E|$} & \textbf{$|\Xia^\star|$} & \textbf{PPML} & \textbf{PPML (Debiased)} & \textbf{Polyads} \\
\midrule
2\% & 15,958,490 & 22,644 & 151,743 & 19 & 53 & 5 \\
3\% & 35,906,602 & 50,617 & 739,271 & 126 & 199 & 30 \\
4\% & 63,833,959 & 89,520 & 2,301,766 & 203 & 315 & 111 \\
\bottomrule
\end{tabular}
\medskip
\caption{Computational time: average number of edges, average number of observed positive edges, average number of active polyads (without counting permutations), and average running times (in minutes) for each method across $100$ replications.}
\label{tab:computational}
\end{table}

Finally, as reported on Table~\ref{tab:computational}, the running time of PPML and debiased PPML scales with $n$ while the polyads time scales with $|E|^2$.

\bigskip

\section*{Code resources}

A fast python-based implementation of the polyads method here presented is available at the Github repository \texttt{lucasresenderc/polyads}, available at \href{https://github.com/lucasresenderc/polyads}{https://github.com/lucasresenderc/polyads}. The scripts to run all experiments and produce all figures are available in the Supplementary Material.

\section*{Acknowledgments}

We are grateful to \'Aureo de Paula, St\'ephane Bonhomme, Cl\'ement de Chaisemartin, Laurent Davezies, Yannick Guyonvarch, Xavier D'Haultf\oe uille, Koen Jochmans, Francis Kramarz, Thierry Magnac, Chris Muris, and Cavit Pakel  for insightful comments. We thank the Agence Nationale de la Recherche for financial support (ANR-23-CE36-0014).

\bibliography{references}  

\newpage

\numberwithin{equation}{section}

\setcounter{equation}{0}  
\setcounter{table}{0} 
\setcounter{figure}{0}  

\numberwithin{figure}{section}
\numberwithin{lemma}{section}
\numberwithin{table}{section}

\appendix
\numberwithin{equation}{section}
\setcounter{equation}{0}  
\setcounter{figure}{0}  
\setcounter{proposition}{0}  
\setcounter{lemma}{0}  
\setcounter{theorem}{0}  
\numberwithin{figure}{section}
\numberwithin{lemma}{section}
\numberwithin{corollary}{section}
\numberwithin{proposition}{section}
\numberwithin{theorem}{section}

\section{Information loss}\label{sec:loss_info}

Throughout this section, we specialize to the two-way ($D=2$) bipartite case for clarity of exposition; polyads are then tetrads, indexed by pairs of rows and pairs of columns. The construction extends to general $D$-way arrays exactly as described in the main text, with tetrads replaced by higher-order polyads, but the two-way case already contains the essential ideas and carries lighter notation.

The Polyads estimator circumvents the incidental parameter problem and remains computationally tractable for large sparse networks, as shown in the previous sections. Since the estimator is consistent and computationally efficient, this convenience must be paid for, if at all, through a loss of statistical efficiency. This section quantifies that loss.\footnote{We thank St\'ephane Bonhomme for raising this question.}

Proposition~\ref{prop:from:degrees:to:polyads} already suggests why such a loss should be expected. That proposition shows that the set of all graphs sharing the observed degree sequence (the conditioning set defining the CMLE) is exactly the set reachable from $Y$ by composing elementary polyad transformations $T_\xi^r$, chained in arbitrary order and arbitrary length, $y' = T_{\xi_1}^{r_1}\circ\cdots\circ T_{\xi_m}^{r_m}(y)$. The CMLE conditions on this entire reachable set at once. By contrast, the Polyads estimator conditions on each link of the chain separately, and combines them by summing scores as if these links were independent, rather than conditioning on the chain as a whole. This simplification is expected to lead to some loss of statistical efficiency, and the remainder of this section makes this loss precise.

The loss decomposes into two steps. From the unconditional MLE (UMLE, which estimates both $\beta$ and the fixed effects $\theta$, \textit{i.e.}, the classical PPML) to the Conditional MLE (CMLE, which conditions on the full degree sequence), there is no loss of information; this is a classical property of the Poisson model, analogous to the elimination of incidental parameters in conditional logit \cite{andersen1970asymptotic}. From the CMLE to the Polyads estimator, there is a loss of statistical efficiency, and this second step is the object of this appendix.

We formalize this by comparing the asymptotic covariance matrices of the UMLE, CMLE, and Polyads estimator under the Poisson model: $Y = (Y_\bi)_{\bi\in\cI}$ for $\cI=[n_1]\times [n_2]$, where the $Y_\bi$ are independent (conditionally on the design) with $Y_\bi|X\sim \cP(\lambda_\bi)$ and
\begin{equation}\label{eq:2_way_model}
    \lambda_{\bi} = \lambda_{ij} = \exp(X_{ij}^\top \beta_\star + \alpha_i + \psi_j).
\end{equation}

Two structural parameters govern the magnitude of the information loss, corresponding to the two manners in which the Polyads estimator objective departs from the CMLE. First, the CMLE extracts information edge by edge, each $Y_\bi$ contributing independently; the Polyads estimator instead extracts information polyad by polyad, from the contrast between edges entering a given tetrad with a positive and a negative sign, and it is the aggregate score across polyads, rather than across edges, that identifies $\beta$. Treating the resulting polyad-level scores as if they were mutually independent, exactly as the CMLE treats edges, gives the following two informational quantities: the sum $\sum_{\bi} \text{Var}_{\beta_\star}(Y_\bi)\tilde X_\bi^2$, the CMLE's information if computed edge by edge, and the sum $\sum_{\xi\in\Xi}\mathbb{E}_{\beta_\star}[V_\xi(Y)]\tilde X_\xi^2$, the polyad information if its terms were likewise independent across polyads. Their ratio defines the \textbf{feature concentration factor},
\begin{equation}\label{eq:feature_concen_factor}
    \kappa_X := \frac{\sum_{\bi} \text{Var}_{{\beta_\star}}(Y_\bi) \tilde{X}_\bi^2}{\sum_{\xi \in \Xi } \mathbb{E}_{{\beta_\star}}[V_\xi(Y)] \tilde{X}_\xi^2},
\end{equation}
where $V_\xi(y) = \bV_{{\beta_\star}}\left(m_\xi(Y) \mid Y\in\cO_\xi(y)\right)$, $\tilde{X}_\xi$ is the polyad (difference-in-differences) feature introduced above, and $\tilde{X}_\bi$ is the residualized edge feature obtained after partialling out the fixed effects (Theorem~\ref{theo:no_loss_info_UMLE_CMLE} gives the exact definition).

But polyads are not, in fact, independent of one another: two polyads that share an edge are statistically dependent, since that edge's realization enters both scores. This is precisely what the second quantity, the \textbf{effective expected local overlap degree}, controls: for a realization of the graph $Y$ and a polyad $\xi$, let $d_\xi(Y)$ denote the number of active polyads sharing at least one edge with $\xi$, and define
\begin{equation}\label{eq:eff_exp_loca_overlap_degree}
    \bar{m}_a := \max_{\xi \in \Xi} \frac{\mathbb{E}_{{\beta_\star}} \left[ d_\xi(Y) (\nabla l_\xi)^2 \right]}{\mathbb{E}_{{\beta_\star}} \left[ (\nabla l_\xi)^2 \right]}.
\end{equation}
Equivalently, in the auxiliary graph whose nodes are active polyads and whose edges link polyads sharing an edge, $d_\xi(Y)$ is the degree of node $\xi$, and $\bar m_a$ is an information-weighted version of this degree. It is $\bar m_a$, then, that measures how much the independence assumption implicit in $\kappa_X$ can be trusted: $\bar m_a$ stays small when active polyads are sparsely connected, as occurs when the underlying network is itself sparse, and grows as overlap among polyads increases.

The main result of this section, obtained by combining Theorems~\ref{theo:no_loss_info_UMLE_CMLE}, \ref{theo:info_loss_LB}, and~\ref{theo:UB_var_poly}, is the following bound. For the one-dimensional case ($p=1$),
\begin{equation}
    \label{eq:UandL_bound_var_polyads_UMLE}
    v_{\text{UMLE}} \leq v_{\text{Polyads}} \le (\bar{m}_a \kappa_X)\, v_{\text{UMLE}},
\end{equation}
where $v_{\text{UMLE}}$ and $v_{\text{Polyads}}$ denote the asymptotic variances of the UMLE and Polyads estimator. The left-hand inequality is the usual Cram\'er-Rao bound, since the UMLE is efficient and the Polyads estimator is a regular Z-estimator. The right-hand inequality shows that the factor $\bar{m}_a \kappa_X$ governs the variance inflation.

Section~\ref{sec:uppe_bound_info_loss} works out $\bar m_a$ and $\kappa_X$ in two contrasting examples, one sparse and one dense. In the sparse example, active polyads rarely share edges, so $\bar m_a$ is small, while $\kappa_X$ is large. In the dense example, $\bar m_a$ grows, as each active polyad now overlaps with many others, but $\kappa_X$ falls correspondingly, since the number of active polyads is then large enough that the denominator grows. In both cases the product $\bar m_a \kappa_X$ turns out to be $\Theta(1)$, suggesting that the Polyads estimator can remain efficient up to a constant in different network topologies.

We now turn to the proof of \eqref{eq:UandL_bound_var_polyads_UMLE}. We analyze the two transitions, UMLE to CMLE and CMLE to the Polyads estimator, in turn. To streamline notation, we omit the explicit conditioning on $X$ in what follows.

\subsection{No loss of information from UMLE to CMLE in the Poisson model}
In this section, we recall that there is zero loss of statistical information when estimating the structural parameters of a Poisson network using the Conditional Maximum Likelihood approach (with respect to the full vector of degrees, i.e. the exhaustive statistics of the fixed effects) versus the Unconditional Maximum Likelihood approach (estimating both structural and fixed effects parameters). Specifically, profiling out or conditioning on sufficient statistics of the (high-dimensional) fixed effects results in no efficiency loss for the structural parameters. This is a (rare) mathematical property of the Poisson model that the Polyads estimator will also benefits.


The econometric foundation for conditioning on the sum of counts (the degree sequence in a network context) to eliminate fixed effects without compromising structural parameter estimation was pioneered by \cite{hausman1984econometric}. \cite{andersen1970asymptotic} showed that the CMLE is both consistent and asymptotically efficient when conditioning on sufficient statistics. Furthermore, \cite{palmgren1981fisher} proved that the Fisher Information matrix in log-linear models becomes orthogonal under this conditioning. This foundational result guarantees that the asymptotic covariance matrices of the conditional maximum likelihood estimator and the unconditional ML estimator of the structural parameter are identical (see \cite{cameron2013regression} and \cite{wooldridge2010econometric}). For the sake of completeness, we recall this result because it controls our first potential source of information leak and provides an explicit formulae for the asymptotic covariance matrice of the UMLE and CMLE to which we compare later the one of the Polyads estimator.

We introduce the following notation: let $W = \text{Diag}(\lambda)$ be the $(n_1 n_2)\times(n_1 n_2)$ diagonal matrix with diagonal vector $\lambda = (\lambda_\bi)_\bi$. Let
\begin{equation*}
    D:\theta = \begin{pmatrix} \alpha \\ \psi \end{pmatrix} \in\bR^{n_1+n_2} \to (\alpha_i + \psi_j)_{ij} \in \bR^{n_1 n_2}
\end{equation*} 
be the fixed effects ``design matrix'', and let $X = \left( X_\bi^\top \right)_{\bi\in\cI}:\beta\in\bR^p\to (\inr{X_\bi, \beta})_{i\in\cI}$ be the design matrix of the dyads' features.

\begin{theorem}[\cite{cameron2013regression,wooldridge2010econometric}] \label{theo:no_loss_info_UMLE_CMLE}
We denote by $\text{Var}(\hat{\beta}_{\text{UMLE}})$ and $\text{Var}(\hat{\beta}_{\text{CMLE}})$ the asymptotic covariance matrices of the unconditional and conditional MLE, respectively.\footnote{In the 2-way Poisson model, as $n_1, n_2\to \infty$, they satisfy $\text{Var}(\hat{\beta}_{\text{UMLE}})^{-1/2}(\hat{\beta}_{\text{UMLE}} - \beta_\star)\to \cN(0,I_p)$ and  $\text{Var}(\hat{\beta}_{\text{CMLE}})^{-1/2}(\hat{\beta}_{\text{CMLE}} - \beta_\star)\to \cN(0,I_p)$, see \cite{fernandez2016individual, weidner2021bias,  andersen1970asymptotic}. Strict exogeneity is assumed in our model~\eqref{assump:model}.} We have 
\begin{equation*}
    \text{Var}(\hat{\beta}_{\text{UMLE}}) = \text{Var}(\hat{\beta}_{\text{CMLE}}) = \left(\tilde X^\top W \tilde X \right)^{-1},
\end{equation*} 
where $\tilde X = W^{-1/2}(I - P_{W^{1/2}D})W^{1/2}X$ and $P_{W^{1/2}D}$ is the projection operator onto the columns space of $W^{1/2}D$, i.e. $P_{W^{1/2}D} = W^{1/2}D(D^\top W D)^{-1} D^\top W^{1/2}$.
\end{theorem}

In other words, we have:
\begin{equation*}
    \text{Var}(\hat{\beta}_{\text{UMLE}}) = \text{Var}(\hat{\beta}_{\text{CMLE}}) = \left(\sum_{\bi} \bV_{{\beta_\star}}(Y_\bi) \tilde X_\bi \tilde X_\bi^\top \right)^{-1},
\end{equation*}


where $\tilde X = (\tilde X_\bi^\top)_\bi$. The information loss by estimating the fixed effects in the UMLE approach  is identical to the variance removed by conditioning on the sufficient statistics in the CMLE. This property can be slightly extended beyond the Poisson model to the linear exponential family.

\subsection{The loss of  information from CMLE to the Polyads estimator}
Instead of conditioning the likelihood on the set of \textit{all} (multi-way) networks with the same degrees as the observed network, the Polyads estimator $\hat{\beta}_{\Xi}$ operates by conditioning, for each polyad, on the subset (i.e. the orbit) it induces. Following the 'no loss of information' result from Theorem~\ref{theo:no_loss_info_UMLE_CMLE}, the only reason of information loss is due to this 'local/polyads' conditioning. In this section, we assess the information lost when transitioning from the CMLE (i.e. full degrees conditioning) to the polyads approach (i.e. polyads' orbit conditioning). 

\paragraph{The covariance matrix gap.} We recall that $\hat{\beta}_{\text{CMLE}}$ denotes the Conditional Maximum Likelihood Estimator (CMLE), which conditions on the combinatorial set of all graphs sharing the exact same degrees vector as the observed data. As shown in standard Maximum Likelihood theory \cite{andersen1970asymptotic}, the Information Matrix Equality holds, meaning the variance of the score perfectly matches the expected Hessian. Consequently, the asymptotic variance of CMLE is the inverse of the Fisher Information:
\begin{equation*}
    \text{Var}(\hat{\beta}_{\text{CMLE}}) = \mathcal{I}_{\text{CMLE}}^{-1}.
\end{equation*}

In contrast, our Polyads estimator is fundamentally a composite likelihood (or pseudo-likelihood) estimator. It is defined by averaging the loss functions $\beta\to l_\xi(Y|X,\beta)$ across all (active) polyads. Because this aggregation does not perfectly capture the global joint probability structure of overlapping edges, the Information Matrix Equality (IME) breaks down. As we established, the asymptotic covariance matrix of the Polyads estimator takes a sandwich form:
\begin{equation}\label{eq:cov_mat_polyads}
    \Sigma_{\text{Polyads}} = H_0^{-1} V_0 H_0^{-1},
\end{equation}
where $H_0 = \mathbb{E}_{{\beta_\star}}[-\nabla_\beta U_N(Y, {\beta_\star})]$ is the expected Hessian, $V_0 = \text{Var}_{{\beta_\star}}(U_N(Y, {\beta_\star}))$ is the variance of the score\footnote{We proved directional a.n. of the Polyads estimator in Theorem~\ref{theo:CLT_polyads_estim_Graham} where $V_0$ is replaced by the covariance matrix of the H{\'a}jek projection of the score function. However, the two quantities are asymptotically equivalent as proved in Proposition~\ref{prop:projection_principl}.} and $U_N(y, \beta) := \frac{1}{N_a} \sum_{\xi} \nabla l_\xi(y|X, \beta)$ is the gradient of our loss function. Because the Polyads method drops the complex global dependencies between disjoint polyads, $H_0 \neq V_0$, and the IME does not hold for the Polyads estimator's objective function.

To formally prove the loss of information incurred by the Polyads estimator relative to the CMLE, we rely on the theory of optimal estimating equations, pioneered by Godambe (1960) and prove a Cramér-Rao type of Lower Bound dedicated to the Polyads estimator looked as a regular $Z$-estimator.

\begin{theorem}[Necessary information loss] \label{theo:info_loss_LB}
Let $\mathcal{I}_{\text{CMLE}}$ be the Fisher information matrix of the CMLE, and let $\Sigma_{\text{Polyads}}$ be the asymptotic covariance matrix of the Polyads estimator introduced in \eqref{eq:cov_mat_polyads}. We have $\Sigma_{\text{Polyads}} \succeq \mathcal{I}_{\text{CMLE}}^{-1}$.
\end{theorem}

Theorem~\ref{theo:info_loss_LB} shows that there is a necessary loss of information of the Polyads estimator compare to the CMLE (resulting in larger confidence intervals). For the sake of completeness we recall the proof from \cite{MR1461808, MR999014} adapted to our setup.

\begin{proof}[Proof of Theorem~\ref{theo:info_loss_LB}] Let $S_c(\beta) = \nabla_\beta \ell_c(\beta)$ denote the score function derived from the conditional log-likelihood. We have
\begin{equation*}
    \mathbb{E}_{{\beta_\star}}[S_c({\beta_\star})] = 0 \quad \mbox{and} \quad \text{Var}_{{\beta_\star}}(S_c({\beta_\star})) = \mathbb{E}_{{\beta_\star}}[S_c({\beta_\star}) S_c({\beta_\star})^\top] = \mathcal{I}_{\text{CMLE}}.
\end{equation*}

We recall that $U_N(y, \beta) = \frac{1}{N_a} \sum_{\xi} \nabla l_\xi(y|X, \beta)$, $V_0 = \text{Var}_{{\beta_\star}}(U_N(Y, {\beta_\star}))$ and the expected Hessian is $H_0 = \mathbb{E}_{{\beta_\star}}[-\nabla_\beta U_N(Y, {\beta_\star})]$. We first observe that for all $\beta$,
\begin{equation}\label{eq:critica_eq_polyads}
    \bE_\beta\left[ U_N(Y, \beta) \mid S \right] = 0.
\end{equation}
Indeed, the Polyads pseudo-score is the averaged sum of individual polyad scores:
\begin{equation*}
     U_N(Y,\beta) = \frac{1}{N_a} \sum_{\xi \in \Xi} \nabla_\beta l_\xi(Y|X, \beta) = \frac{1}{N_a} \sum_{\xi \in \Xi} \nabla_\beta l_\xi(\beta).
\end{equation*}
For fixed polyad $\xi$ and graph $y$, the individual score is the negative gradient of the log-likelihood conditional on the orbit:
\begin{equation*}
    -\nabla_\beta l_\xi(y|X, \beta) = \nabla_\beta \log \mathbb{P}_\beta(Y=y \mid X, Y \in \mathcal{O}_\xi(y)),
\end{equation*}
and its expectation over this specific orbit is identically zero: for all $\xi$, $\beta$, and $y$,
\begin{equation*}
    \mathbb{E}_\beta \big[ \nabla_\beta l_\xi(Y|X, \beta) \mid Y \in \mathcal{O}_\xi(y) \big] = \sum_{y' \in \mathcal{O}_\xi(y)} \nabla_\beta l_\xi(y'|X, \beta) \mathbb{P}_\beta(Y=y' \mid Y \in \mathcal{O}_\xi(y)) = 0.
\end{equation*}
As established in Proposition~\ref{prop:from:degrees:to:polyads}, polyad transformations preserve the network degrees: $\delta(y') = \delta(y)$ for all $y' \in \mathcal{O}_\xi(y)$. Furthermore, by Lemma~\ref{lem:equiv_polyads}, given a polyad $\xi$, the orbits form disjoint equivalence classes, i.e. the relation $y \sim_\xi y^\prime$ iff $y\in\cO_\xi(y^\prime)$ is an equivalence relation. Therefore, for any given degrees vector $s$, the set of all valid graphs $\mathcal{Y}_s = \{y : \delta(y) = s\}$ can be partitioned into a disjoint union of orbits associated with $\xi$. Let $\mathfrak{O}_{\xi, s}:= \mathcal{Y}_s/\sim_\xi$ denote this set of disjoint orbits (= equivalence classes) inside $\mathcal{Y}_s$. We can now apply the law of total expectation to evaluate the expectation of the polyad score conditional on the degrees $S=s$:
\begin{align*}
    \mathbb{E}_\beta \big[ \nabla_\beta l_\xi(Y|X, \beta) \mid S=s \big] &= \sum_{\mathcal{O} \in \mathfrak{O}_{\xi, s}} \mathbb{E}_\beta \big[ \nabla_\beta l_\xi(Y|X, \beta) \mid Y \in \mathcal{O} \big] \mathbb{P}_\beta(Y \in \mathcal{O} \mid S=s) \\
    &= \sum_{\mathcal{O} \in \mathfrak{O}_{\xi, s}} 0 \times  \mathbb{P}_\beta(Y \in \mathcal{O} \mid S=s) = 0.
\end{align*} 
Because this conditional centering holds for every individual polyad $\xi$, it also holds for their linear combination. Therefore, the Polyads estimating equation is centered given the sufficient statistics of the fixed effects:
\begin{equation*}
    \mathbb{E}_\beta \big[ U_N(Y, \beta) \mid S \big] = \frac{1}{N_a} \sum_{\xi \in \Xi} \mathbb{E}_\beta \big[ \nabla_\beta l_\xi(Y|X, \beta) \mid S \big] = 0.
\end{equation*}Next, we differentiate the latter inequality to show the generalized Information Matrix Equality $H_0 = \text{Cov}(U_N, S_c)$.

Differentiating both sides of \eqref{eq:critica_eq_polyads} with respect to $\beta$ under the integral sign yields:
\begin{equation*}
    \int \big[ \nabla_\beta U_N(y, \beta) \big] \bP_{{\beta}}^{Y \mid S=s}(y) dy + \int U_N(y, \beta) \big[ \nabla_\beta \bP_{\beta}^{Y \mid S=s}(y)\big]^\top dy = 0.
\end{equation*}
We can rewrite the gradient of the density using the score function: 
\begin{equation*}
    S_c(\beta) = \nabla_\beta \log \bP_{\beta}^{Y \mid S=s}(Y) = \frac{\nabla_\beta \bP_{\beta}^{Y \mid S=s}(Y) }{\bP_{\beta}^{Y \mid S=s}(Y) },
\end{equation*} 
meaning $\nabla_\beta \bP_{{\beta}}^{Y \mid S=s}(Y) = S_c({\beta}) \bP_{{\beta}}^{Y \mid S=s}(Y)$. Substituting this into the second integral gives at $\beta_\star$:
\begin{equation*}
    \mathbb{E}_{{\beta_\star}}[\nabla_\beta U_N({\beta_\star})] + \mathbb{E}_{{\beta_\star}}[U_N({\beta_\star}) S_c({\beta_\star})^\top] = 0.
\end{equation*}
Rearranging this identity, this shows that the expected Hessian of the Polyads loss function equals to the covariance between the Polyads score and the CMLE score:
\begin{equation*}
    H_0 = \mathbb{E}_{{\beta_\star}}[-\nabla_\beta U_N({\beta_\star})] = \text{Cov}_{{\beta_\star}}(U_N({\beta_\star}), S_c({\beta_\star})).
\end{equation*}

Finally, we consider the stacked random vector composed of the Polyads score and the CMLE score: $Z = \begin{pmatrix} U_N({\beta_\star}) \\ S_c({\beta_\star}) \end{pmatrix}$. 
The joint covariance matrix of $Z$ must be positive semi-definite (PSD). Using the properties established above, this block matrix is:
\begin{equation*}
    \text{Var}_{{\beta_\star}}(Z) = \begin{pmatrix} \text{Var}_{{\beta_\star}}(U_N) & \text{Cov}_{{\beta_\star}}(U_N, S_c) \\ \text{Cov}_{{\beta_\star}}(S_c, U_N) & \text{Var}_{{\beta_\star}}(S_c) \end{pmatrix} = \begin{pmatrix} V_0 & H_0 \\ H_0^\top & \mathcal{I}_{\text{CMLE}} \end{pmatrix} \succeq 0.
\end{equation*}
Because $\mathcal{I}_{\text{CMLE}}$ is positive definite, the block matrix inversion formulae of PSD matrices dictates that its Schur complement in $\text{Var}_{{\beta_\star}}(Z) $ must also be positive semi-definite:
\begin{equation*}
    V_0 - H_0 \mathcal{I}_{\text{CMLE}}^{-1} H_0^\top \succeq 0.
\end{equation*}
The conclusion follows by multiplying by $H_0^{-1}$ on both sides. \hfill 
\end{proof}

\subsection{Controlling the information loss from the Polyads estimator to the CMLE.}\label{sec:uppe_bound_info_loss}


To fully understand the relative efficiency of the Polyads estimator compared to the CMLE, we quantify the magnitude of the information loss in the case of a $1D$ structural parameter, i.e. for $p=1$ --- which is particularly relevant for gravity models where $X_{ij}$ is a distance between $i$ and $j$. The main result of this section links the variance of the two estimators (the Polyads estimator and the CMLE) via two parameters $\bar{m}_a$ and $\kappa_X$ introduced in \eqref{eq:eff_exp_loca_overlap_degree} and \eqref{eq:feature_concen_factor}.


\begin{theorem}[General 1D variance control]
    \label{theo:UB_var_poly}
The asymptotic variance of the Polyads estimator (denoted by $v_{\text{Polyads}} $) compared to the one of CMLE (denoted by $v_{\text{CMLE}}$) satisfies $v_{\text{CMLE}} \leq v_{\text{Polyads}} \le (\bar{m}_a \kappa_X) v_{\text{CMLE}}$.
\end{theorem}
\begin{proof}[Proof of Theorem~\ref{theo:UB_var_poly}] Inequality $v_{\text{CMLE}} \leq v_{\text{Polyads}}$ follows from Theorem~\ref{theo:info_loss_LB}. Let us now move to the variance control of the Polyads estimator. We recall that $N_a$ is  the expected number of active polyads and $U_N(\beta) = \frac{1}{N_a} \sum_{\xi \in \Xi} \nabla l_\xi(\beta)$. For $p=1$, the asymptotic variance is $v_{\text{Polyads}} = V_0 / H_0^2$, where $H_0 = \mathbb{E}_{{\beta_\star}}[-\nabla_\beta U_N({\beta_\star})]$ and $V_0 = \text{Var}_{{\beta_\star}}(U_N({\beta_\star}))$. The expected Hessian is the normalized sum over all possible polyads:
\begin{equation*}
    H_0 = \frac{1}{N_a} \sum_{\xi \in \Xi} \mathbb{E}_{{\beta_\star}}[V_\xi(Y)] \tilde{X}_\xi^2
\end{equation*} and the variance of the normalized score is
\begin{equation*}
    V_0 = \frac{1}{N_a^2} \mathbb{E}_{{\beta_\star}} \left[ \left( \sum_{\xi \in \Xi} \nabla l_\xi \right)^2 \right] = \frac{1}{N_a^2} \sum_{\xi \in \Xi} \sum_{\xi' \in \Xi} \mathbb{E}_{{\beta_\star}} \big[ \nabla l_\xi \nabla l_{\xi'} \big].
\end{equation*}

Next, using that edge fluctuations are conditionally independent, $\mathbb{E}_{{\beta_\star}}[\nabla l_\xi \nabla l_{\xi'}] = 0$ unless $\xi$ and $\xi'$ share at least one edge (denoted $\xi' \sim \xi$), the Cauchy-Schwarz inequality, $\mathbb{E}[AB] \le \frac{1}{2}\mathbb{E}[A^2 + B^2]$ and that the score $\nabla l_{\xi'}$ is zero unless $\xi'$ is active (denoted by the indicator $\mathbf{1}_{\xi' \in \Xi_a}$), we obtain
\begin{align*}
    \sum_{\xi \in \Xi} \sum_{\xi' \sim \xi} \mathbb{E}_{{\beta_\star}} \big[ \nabla l_\xi \nabla l_{\xi'} \big] &\le \sum_{\xi \in \Xi} \sum_{\xi' \sim \xi} \mathbb{E}_{{\beta_\star}} \left[ (\nabla l_\xi)^2 \mathbf{1}_{\xi' \in \Xi_a} \right] = \sum_{\xi \in \Xi} \mathbb{E}_{{\beta_\star}} \left[ (\nabla l_\xi)^2 \sum_{\xi' \sim \xi} \mathbf{1}_{\xi' \in \Xi_a} \right].
\end{align*}

The inner sum $\sum_{\xi' \sim \xi} \mathbf{1}_{\xi' \in \Xi_a}$ represents the total number of active polyads that share at least one edge with $\xi$, it is exactly given by $d_\xi(Y)$ and so
\begin{equation*}
    V_0 \le \frac{1}{N_a^2} \sum_{\xi \in \Xi} \mathbb{E}_{{\beta_\star}} \left[ d_\xi(Y) (\nabla l_\xi)^2 \right].
\end{equation*}

Using our definition of the effective expected local overlap degree $\bar{m}_a$, we bound the expectation:
\begin{equation*}
    V_0 \le \frac{ \bar{m}_a}{N_a^2} \sum_{\xi \in \Xi} \mathbb{E}_{{\beta_\star}} \left[ (\nabla l_\xi)^2 \right].
\end{equation*} We proved in Lemma~\ref{lem:derivatives}  that  $\mathbb{E}_{{\beta_\star}}[(\nabla l_\xi)^2] = \mathbb{E}_{{\beta_\star}}[V_\xi(Y)] \tilde{X}_\xi^2$. Thus, the sum exactly equals $N_a H_0$:
\begin{equation*}
    V_0 \le \frac{\bar{m}_a}{N_a^2} (N_a H_0) = \frac{ \bar{m}_a}{N_a} H_0.
\end{equation*}As a consequence, the  variance of the Polyads estimator $v_{\text{Polyads}}$ satisfies
\begin{equation*}
    v_{\text{Polyads}}= V_0 / H_0^2 \le \frac{\bar{m}_a}{N_a H_0}.
\end{equation*}

To relate this to the exact CMLE variance, we recall that the CMLE asymptotic variance is $v_{\text{CMLE}} = \mathcal{I}_{\text{CMLE}}^{-1}$, where $\mathcal{I}_{\text{CMLE}} = \sum_{\bi} \text{Var}(Y_\bi) \tilde{X}_\bi^2$. Hence, we obtain
\begin{equation*}
    v_{\text{Polyads}} \le  \bar{m}_a \left( \frac{\mathcal{I}_{\text{CMLE}}}{N_a H_0} \right) \mathcal{I}_{\text{CMLE}}^{-1}.
\end{equation*}Finally, because $N_a H_0 = \sum_{\xi \in \Xi} \mathbb{E}_{{\beta_\star}}[V_\xi(Y)] \tilde{X}_\xi^2$, the term in parentheses is exactly the feature concentration factor $\kappa_X$ introduced above. Noting that $\mathcal{I}_{\text{CMLE}}^{-1} = v_{\text{CMLE}}$, we obtain the final bound $v_{\text{Polyads}} \le (\bar{m}_a \kappa_X) v_{\text{CMLE}}$. \hfill

\end{proof}


The next two examples give an overall idea of the order of magnitude of the effective expected local overlap degree ($\bar{m}_a$) and the feature concentration factor ($\kappa_X$). The first example is in a sparse network and the second in a dense one.

\paragraph{Example in the sparse homophily design.} Consider a bipartite graph with $n_1 = n_2 = N$. The structural parameter $\beta \in \mathbb{R}$ captures pure homophily (the ``main diagonal'' effect), such that the dyad feature is $X_{ij} = \mathbf{1}_{i=j}$. We assume constant fixed effects $\alpha_i = \alpha$ and $\psi_j = 0$. We also choose $\alpha$ and $\beta$ such that the Poisson intensities are
\begin{equation}\label{eq:def_lambda_sparse}
    \lambda_{ij} = \exp(\beta \mathbf{1}_{i=j} + \alpha) = 
    \begin{cases} 
      \lambda_{\text{in}} = \exp(\beta + \alpha) = 1 & \text{if } i=j \\
      \lambda_{\text{out}} = \exp(\alpha) = \frac{2}{N-1} & \text{if } i \ne j 
   \end{cases}
\end{equation}so that $(Y_{ij})_{ij}$ is essentially a diagonal matrix (\textit{i.e.} a very sparse setting).

Let us first give an order of magnitude of the  feature concentration factor $\kappa_X$ defined as 
\begin{equation*}
    \kappa_X = \frac{\sum_{\bi} \text{Var}_{{\beta_\star}}(Y_\bi) \tilde{X}_\bi^2}{\sum_{\xi \in \Xi } \mathbb{E}_{{\beta_\star}}[V_\xi(Y)] \tilde{X}_\xi^2},
\end{equation*}where $V_\xi(y) = \bV_{{\beta_\star}}\left(m_\xi(Y) \mid Y\in\cO_\xi(y)\right)$. The numerator is the CMLE Fisher Information. We have $X = (\mathbf{1}_{i=j})_{i,j}\in\bR^{N^2}$ and $\lambda_{ij}$ can take only two values in \eqref{eq:def_lambda_sparse}. In this context, we can compute exactly the CMLE Fisher Information:
\begin{equation*}
    \mathcal{I}_{\text{CMLE}} = \sum_{i,j} \lambda_{ij} (\tilde{X}_{ij})^2 = \lambda_{\text{in}} N(1+a) + \lambda_{\text{out}} N(N-1)a^2  = 11 N
\end{equation*} where $a:= 2\lambda_{\text{in}}/((N-1)\lambda_{\text{out}} - \lambda_{\text{in}}) = 2$. The denominator in $\kappa_X$ is the Hessian of the Polyads estimator's loss function. Many of its terms equal zero because the polyads' features are for most of them zero. Indeed, let $\xi = \begin{pmatrix}
    j_1 & j_2\\
    j_1^\prime & j_2^\prime
\end{pmatrix}$ be a given tetrad (i.e. a polyads in a $D=2$-way network). The tetrad  (difference-in-differences) feature is:
\begin{align*}
    \tilde{X}_\xi & = X_{j_1j_2} - X_{j_1j_2^\prime} - X_{j_1^\prime j_2} + X_{j_1^\prime j_2^\prime} = \mathbf{1}_{j_1 = j_2} - \mathbf{1}_{j_1=j_2^\prime} - \mathbf{1}_{j_1^\prime = j_2} + \mathbf{1}_{j_1^\prime = j_2^\prime}.
\end{align*}
In particular, we observe that $\tilde{X}_\xi \ne 0$ \textit{if and only if} the tetrad intersects the main diagonal. Tetrads strictly off the diagonal have $\tilde{X}_\xi = 0$ and contribute zero to the total information.  

The conditional variance $V_\xi(Y)$ is also zero unless the polyad is active which mostly requires to have two diagonal edges in it, i.e. only tetrads of the shape $\xi = \begin{pmatrix}
    j_1 & j_1\\
    j_1^\prime & j_1^\prime
\end{pmatrix}$ (up to permutation) bring information. In that case, $\mathbb{E}_{{\beta_\star}}[V_\xi(Y)] \approx \bP[\xi \mbox{ is active }] \approx 1$. Since there are $N(N-1)$ tetrads of the shape $\xi = \begin{pmatrix}
    j_1 & j_1\\
    j_1^\prime & j_1^\prime
\end{pmatrix}$ the numerator is of the order of $N^2$ and so $\kappa_X$ is of the order of $1/N$.

Let us now handle the effective expected local overlap degree 
    \begin{equation*}
    \bar{m}_a := \max_{\xi \in \Xi} \frac{\mathbb{E}_{{\beta_\star}} \left[ d_\xi(Y) (\nabla l_\xi)^2 \right]}{\mathbb{E}_{{\beta_\star}} \left[ (\nabla l_\xi)^2 \right]}.
\end{equation*}Given the 'mostly diagonal shape' of $Y=(Y_{ij})$, the only polyads that are expected to be active are permutations of polyads of the form $\xi = \begin{pmatrix}
    j_1 & j_1\\
    j_1^\prime & j_1^\prime
\end{pmatrix}$. Considering such a polyad then $\xi^\prime = \begin{pmatrix}
    j_1 & j_1\\
    j_1^{\prime\prime} & j_1^{\prime\prime}
\end{pmatrix}$ is also likely to be active and share an edge with $\xi$.  There are $N-1$ other possible choices of $j_1^{\prime \prime}$ for which this holds. Hence, the number of active polyads that share an edge with $\xi$ is of the order of $N$, i.e. $d_\xi(Y)\approx N$. As a consequence, we will have $\bar{m}_a \approx N$. 

Finally, applying Theorem~\ref{theo:UB_var_poly} in this sparse homophily design setting provide an upper on the loss of information of the order of a constant: 
\begin{equation*}
    v_{\text{CMLE}} \leq v_{\text{Polyads}} \lesssim v_{\text{CMLE}}.
\end{equation*}

\paragraph{Example in a dense network.} Consider a bipartite graph with $n_1=n_2=N$ and a design with no homophily structure, where $X_{ij}$, $i,j\in[N]$, are i.i.d.\ standard Gaussian random variables. We take constant fixed effects $\alpha_i=\alpha=1$ and $\psi_j=0$, and set the structural parameter to
\begin{equation}\label{eq:def_beta_dense}
    \beta = \frac{1}{\sqrt{2\ln N}},
\end{equation}
so that the Poisson intensities are
\begin{equation}\label{eq:def_lambda_dense}
    \lambda_{ij} = \exp(\beta X_{ij} + 1).
\end{equation}
The scaling \eqref{eq:def_beta_dense} is chosen so that $\beta \max_{i,j}|X_{ij}| = O(1)$, since $\max_{i,j}|X_{ij}| \approx \sqrt{2\ln(N^2)} = \Theta(\sqrt{\ln N})$ for an array of i.i.d.\ Gaussians, while for a typical pair $(i,j)$ we have $\beta X_{ij} = o(1)$. In particular, $\lambda_{ij} = e \cdot (1+o(1))$ for the bulk of the $N^2$ edges, and all intensities cell are simultaneously bounded above and below by constants with probability tending to one: this is a genuinely dense network, in contrast with the sparse homophily design above.

Let us first evaluate the feature concentration factor $\kappa_X$. Because the fixed effects are constant, the row and column means of $(X_{ij})$ used to profile them out vanish at rate $1/\sqrt{N}$, so that $\tilde{X}_{ij} = X_{ij}(1+o(1))$ to leading order. The CMLE Fisher Information is therefore
\begin{equation*}
    \mathcal{I}_{\text{CMLE}} = \sum_{i,j} \lambda_{ij} \tilde{X}_{ij}^2 \approx e \sum_{i,j} X_{ij}^2 = \Theta(N^2),
\end{equation*}
using $\mathbb{E}[X_{ij}^2]=1$ and $\lambda_{ij}\approx e$ for the typical pair.

For the denominator, consider a tetrad $\xi = \begin{pmatrix} j_1 & j_2 \\ j_1' & j_2' \end{pmatrix}$. Its DiD feature is
\begin{equation*}
    \tilde{X}_\xi = X_{j_1 j_2} - X_{j_1 j_2'} - X_{j_1' j_2} + X_{j_1' j_2'},
\end{equation*}
a sum of four independent standard Gaussians, so $\tilde{X}_\xi \sim \mathcal{N}(0,4)$ and $\mathbb{E}[\tilde{X}_\xi^2] = 4$ regardless of $N$. Since the network is dense, all four corner intensities of a generic tetrad are $\Theta(1)$ (of order $e$), so each tetrad is active with probability bounded away from $0$ uniformly in $N$, and $\mathbb{E}_{\beta_\star}[V_\xi(Y)] = \Theta(1)$ for essentially all of the $\Theta(N^4)$ tetrads $\xi \in \Xi$. Hence
\begin{equation*}
    \sum_{\xi \in \Xi} \mathbb{E}_{\beta_\star}[V_\xi(Y)] \tilde{X}_\xi^2 = \Theta(N^4) \times \Theta(1) = \Theta(N^4),
\end{equation*}
and so, recalling the definition \eqref{eq:feature_concen_factor},
\begin{equation*}
    \kappa_X = \frac{\mathcal{I}_{\text{CMLE}}}{\sum_{\xi \in \Xi} \mathbb{E}_{\beta_\star}[V_\xi(Y)] \tilde{X}_\xi^2} = \frac{\Theta(N^2)}{\Theta(N^4)} = \Theta(N^{-2}).
\end{equation*}
In words, $\kappa_X$ is small in this dense design: there are far more tetrads ($\Theta(N^4)$) than edges ($\Theta(N^2)$), and each tetrad carries an $O(1)$ amount of information.

Let us now turn to the effective expected local overlap degree $\bar{m}_a$. Fix an active tetrad $\xi = \begin{pmatrix} j_1 & j_2 \\ j_1' & j_2' \end{pmatrix}$ and consider the tetrads sharing the edge $(j_1,j_2)$, i.e.\ of the form $\xi' = \begin{pmatrix} j_1 & j_2 \\ k & \ell \end{pmatrix}$ for $k \neq j_1$, $\ell \neq j_2$. There are $\Theta(N^2)$ such choices of $(k,\ell)$, and, because the network is dense, each of the corresponding tetrads is active with probability bounded away from zero, so the number of active tetrads sharing this single edge with $\xi$ is itself $\Theta(N^2)$. Summing over the (fixed number of) edges of $\xi$ does not change the order, so $d_\xi(Y) = \Theta(N^2)$ for essentially every active tetrad, and therefore
\begin{equation*}
    \bar{m}_a = \Theta(N^2),
\end{equation*}
matching the general remark that $\bar{m}_a$ approaches $\mathcal{O}(n_1 n_2)$ in dense networks.

Combining the two rates, we obtain
\begin{equation*}
    \bar{m}_a \kappa_X = \Theta(N^2) \times \Theta(N^{-2}) = \Theta(1),
\end{equation*}
so that, applying Theorem~\ref{theo:UB_var_poly}, our upper bound remains of the order of a constant even in this dense design:
\begin{equation*}
    v_{\text{CMLE}} \leq v_{\text{Polyads}} \lesssim v_{\text{CMLE}}.
\end{equation*}

\section{Health data application}
\label{app:health}

We use claims data from the French comprehensive \textit{Système National des Données de Santé} (SNDS) database.  The data contain information on each encounter between patients and physicians. 

In this appendix, we evaluate the reform by using doctor-level data only. In other words, we aggregate the visits at the doctor level for each month between January 2016 and October 2018 and do not consider patient-doctor dyads as in the main text.\footnote{Our observation period runs from 2016 to 2018, but we exclude November and December 2018 because of right-censoring. Indeed, the processing time for claim files may last up to 40 days, and the date of record may well fall after the end of 2018. April and May 2017 are also removed from the sample to neutralize anticipation effects.} 
We restrict our attention to doctors aged 30 to 79 practicing in mainland France (at the exclusion of overseas and Corsica).

\begin{table}[ht]\centering
\def\sym#1{\ifmmode^{#1}\else\(^{#1}\)\fi}
\caption{Effect of the fee increase on physician activity}
\label{tab:activity:jt}
\begin{tabular*}{1.0\hsize}{@{\hskip\tabcolsep\extracolsep\fill}l*{2}{c}}
\toprule
Specification                   &\multicolumn{2}{c}{PPML}\\
\midrule
Control group                    &\multicolumn{1}{c}{Direct access specialists}&\multicolumn{1}{c}{GPs in unregulated sector 2}\\
\midrule
Post $\times$ Treatment     &  0.104\sym{***}&      0.131\sym{***}\\
                    &   (0.00654)         &   (0.00671)         \\
\midrule
Physician FE & Yes& Yes\\
\addlinespace
Month-year FE& Yes& Yes\\
\midrule
Observations        &     2159008         &     2070592         \\
\addlinespace
(Pseudo) R$^2$                 &     0.754           &        0.740             \\
\bottomrule
\multicolumn{3}{l}{\footnotesize \textit{Source:} French SNDS claims data aggregated at the doctor-month level.}\\
\multicolumn{3}{l}{\footnotesize \textit{Sample:} GPs and direct access specialists, mainland France.}\\
\multicolumn{3}{l}{\footnotesize Period: January 2016-October 2018, at the exclusion of April-May 2017.}\\
\multicolumn{3}{l}{\footnotesize Treatment group: GPs in regulated sector 1.}\\
\multicolumn{3}{l}{\footnotesize Dependent variable: Number of visits per GP and per month.}\\
\multicolumn{3}{l}{\footnotesize N.B. The data include 0 activity at physician-month level.}\\
\multicolumn{3}{l}{\footnotesize \sym{*} \(p<0.1\), \sym{**} \(p<0.05\), \sym{***} \(p<0.01\)}\\
\end{tabular*}
\end{table}

To identify the effect of the reform, we perform  a difference-in-differences exercise in which fee-regulated GPs are the  treated group. We consider two control groups: direct access specialists (stomatologists,  ophthalmologists and gynecologists\footnote{In section~\ref{sec:experiments:real}, we exclude gynecologists because we seek to assess how gender homophily changed after the fee increase. Patients aged less than 18 are also excluded from the analysis conducted in section~\ref{sec:experiments:real}.}); free-billing GPs in the unregulated sector.\footnote{In January 2016, 8.6\% of GPs were allowed to charge fees in excess of the regulated level.} The outcome~$Y_{jt}$ is the number of medical visits by doctor~$j$ on month~$t$ and we assume that it follows a Poisson model:~$Y_{jt}\ove{\sim}{ind.}\mathcal{P}(\lambda_{jt})$ with
\begin{equation}
    \label{eq:DinD}
    \ln \lambda_{jt}=\beta \text{Post}_t\times T_j + \alpha_j+\gamma_t,
\end{equation}
where $\text{Post}$ is a dummy equal to 1 after May 2017,~$T_j$ is a binary variable that accounts for the treatment group status, and~$(\alpha,\gamma)$ are physician and month-year fixed effects.  

Results reported in Table~\ref{tab:activity:jt} show that this reform led physicians to increase their activity by 11--14\%, depending on the specification and the control group considered. \cite{wilner2025physician} further show that the effect is more pronounced for young doctors, and they also find  suggestive evidence that access to healthcare improved in the sense that doctors actually admitted new patients following the fee increase. We did not use the Polyads estimator for the results in Table~\ref{tab:activity:jt} because the network density exceeds $90\%$, a regime where standard PPML is computationally more efficient.


\section{Experiments on the four-way case}
\label{sec:four_way}

To demonstrate the scalability and robustness of our Polyads estimator in a more demanding multi-way settings, we extend our simulation framework to a four-way case. This setup replicates an empirical scenario where an analyst tracks flows between origins and destinations across multiple sectors over time, facing a severe incidental parameter problem along four separate axes simultaneously. We let connections be driven mainly by homophily.

We consider a network structured around four indices: importing (origin) cities $i$, exporting (destination) cities $j$, sectors $k$, and time periods $t$. The dimensions are as $i \in \{1, \dots, n_{\text{cities}}\}$, $j \in \{1, \dots, n_{\text{cities}}\}$, $k \in \{1, \dots, n_{\text{sectors}}\}$, and $t \in \{1, \dots, n_{\text{ts}}\}$. The covariate $X_{ijkt}$ is structured to reflect localized, policy-induced interactions across specific sectors and time horizons. It is constructed via:
\[
X_{ijkt} = \mathds{1}\{|i - j| \le r\} \times \mathds{1}\{k \ge s_0\} \times \mathds{1}\{t \ge t_0\} \times U_{ijkt},
\]
where $\mathds{1}\{\cdot\}$ is the indicator function, $r$ represents a geographic radius neighborhood, $s_0$ and $t_0$ represent structural thresholds for the sector and temporal policy shocks respectively, and $U_{ijkt} {\ove{\sim}{i.i.d.}} \mathcal{U}(0, 1)$ introduces independent localized variation. The outcome $Y_{ijkt}$ is generated from a Poisson distribution:
\[
Y_{ijkt} \sim \mathcal{P}(\lambda_{ijkt}),
\]
where the log-intensity incorporates the true treatment effect $\beta_\star$, a baseline density constant $c$, and four distinct sets of high-dimensional fixed effects that capture overlapping three-way interactions:
\[
\ln \lambda_{ijkt} = \beta_\star X_{ijkt} + c + \theta^{(1)}_{jkt} + \theta^{(2)}_{ikt} + \theta^{(3)}_{ijt} + \theta^{(4)}_{ijk}.
\]
The unobserved fixed effects capture comprehensive group-level heterogeneity and are drawn independently from a standard normal distribution:
\[
\theta^{(1)}_{jkt}, \theta^{(2)}_{ikt}, \theta^{(3)}_{ijt}, \theta^{(4)}_{ijk} \sim \mathcal{N}(0, 1).
\]

We evaluate the performance of the estimators by varying the overall network size via the number of cities, keeping the time and sector dimensions fixed to simulate a typical sparse short-panel. Specifically, the number of origin/destination cities ($n_{\text{cities}}$) varies across $\{200, 300, 400\}$, while the number of sectors ($n_{\text{sectors}}$) is fixed at $3$ and the number of time periods ($n_{\text{ts}}$) is fixed at $5$. The structural constraints for the policy shock feature a sector threshold of $s_0 = 1$, a temporal threshold of $t_0 = 3$, and a localized geographic radius neighborhood of $r = 5$. The true parameter value is set to $\beta_\star = 20$, and we evaluate each configuration over $600$ independent Monte Carlo replications. The baseline intercept constant $c = -7$ is explicitly chosen to ensure the network remains highly sparse as $n_{\text{cities}}$ grows, artificially aggravating the incidental parameter bias for traditional maximum likelihood frameworks.

\begin{figure}[h!]
    \centering
    \includegraphics[width=\linewidth]{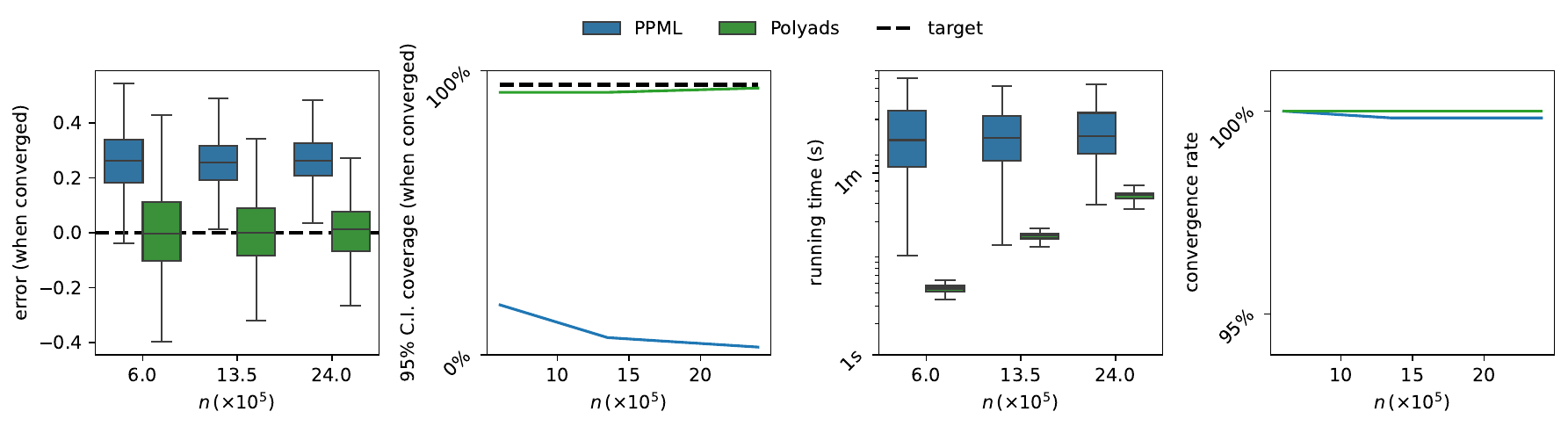}
    \caption{Comparison between PPML and Polyads on a four-way case.}
    \label{fig:4w}
\end{figure}

Figure \ref{fig:4w} displays the results. Notice that the analytical debias procedure from \cite{weidner2021bias} is not present there since it is available only for $D=3$. Similarly to the experiments with $D=3$ we observe a clear incidental parameter bias and poor coverages for PPML.

\section{Extensive-margin binary outcomes}
\label{sec:binary}

The Polyads estimator extends naturally from weighted networks to extensive-margin models in which only the existence of a link is observed. This extension follows exactly the construction of Section~\ref{sec:beta_estimation}. The generalized degrees remain sufficient statistics for the fixed effects, the polyad transformations remain degree-preserving, the conditional likelihood has the same form, and the resulting estimator is unchanged. The only modification concerns the construction of the polyad orbits. Since outcomes now belong to $\{0,1\}$ rather than $\mathbb N$, the admissible polyad transformations differ from the count case, leading to a different notion of active polyads. Once these active polyads have been identified, all subsequent derivations, including the loss function, optimization problem, and computational implementation, remain unchanged. In the case $D=2$, this estimator is closely related to the one developed by \cite{graham2017econometric}, with ours working for bipartite graphs. For $D=3$, our algorithm provides a fast implementation of the hexad logit studied by \cite{Muris_Pakel_WP_2025}.

\paragraph{Bernoulli model and sufficient statistics.}

Replace Assumption~1 by the Bernoulli specification
\begin{equation}
Y_\bi\mid X\sim\mathrm{Bernoulli}(\lambda_\bi),
\qquad
\log\frac{\lambda_\bi}{1-\lambda_\bi}
=
\beta_\star^\top X_\bi
+
\sum_{g\in\mathcal G}\theta^g_{g(\bi)}.
\label{eq:bernoulli}
\end{equation}

Exactly as in Section~\ref{sec:degree:sufficient:stat}, the generalized degrees
\[
\delta(y)=\left(\delta^g(y):g\in\mathcal G\right)
\]
remain sufficient statistics for the fixed effects, since the Bernoulli log-likelihood is still linear in the fixed effects. Consequently, conditioning on $\delta(Y)$ removes all nuisance parameters exactly as in
\eqref{eq:conditional_likelihood_degree}. Moreover, Proposition~\ref{prop:from:degrees:to:polyads} continues to hold without modification: the polyad transformations preserve all generalized degrees and, under the fixed effects structure $\mathcal G_{\max}$, generate every graph sharing the same degrees.

\paragraph{Binary polyad orbits.}

The polyad transformation
\[
T_\xi(y)=y+s_\xi,
\]
defined in Section~\ref{sec:degree:sufficient:stat}, is unchanged. The difference lies only in the set of admissible transformed graphs.

For count data, the admissible transformations are determined by the requirement that every transformed edge remains nonnegative (recall definition \eqref{eq:original-Mm}). For Bernoulli outcomes, the transformed graph must instead remain in the hypercube $\{0,1\}^{\mathcal I}$. Up to the permutations discussed in Lemma \ref{lem:tools_polyads}, positive-sign edges may therefore only be decremented when they all equal one, whereas negative-sign edges may only be incremented when they all equal zero. Accordingly, we let
\begin{equation}
m_\xi(y)
=
\bigwedge_{i:s_\xi(i)=+1}y_i \qquad \mbox{ and }
\qquad
M_\xi(y)
=
\bigvee_{i:s_\xi(i)=-1}y_i.
\label{eq:mM-binary}
\end{equation}

An informative (active) polyad therefore satisfies (again, up to a permutation discussed in Lemma \ref{lem:tools_polyads})
\begin{equation}
m_\xi(y)=1 \qquad \mbox{ and }
\qquad
M_\xi(y)=0,
\label{eq:active-binary}
\end{equation}
in which case the orbit consists of exactly two feasible graphs related by a single polyad transformation. Geometrically, every edge with positive sign must be present whereas every edge with negative sign must be absent.

\paragraph{Conditional likelihood and estimator.}

Once the active polyads have been identified, every derivation of
Section~\ref{sub:a_classification_problem_on_polyad_and_the_associated_estimator_of_}
remains valid. Indeed, the conditional likelihood over an orbit is still obtained by conditioning on
$Y\in\mathcal O_\xi(y)$, giving exactly the same loss\footnote{Note that since $Y_\bi$ takes values on $\{0,1\}$ for all $\bi$, the extra term $
\sum_{i\in\mathcal E(\xi)}
\log
\frac{Y_i!}{Y_i^{\,r}!}$ present in the original loss cancels to zero, so, one can keep the exact same definition of $\ell_\xi$.}
\[
\ell_\xi(Y\mid X,\beta)
=
\log
\left(
\sum_{r=-m_\xi(Y)}^{M_\xi(Y)}
\exp
\left\{
r\beta^\top\widetilde X_\xi
\right\}
\right),
\]
and therefore the same objective function and same Polyads estimator
\[
\widehat{\mathcal L}_\Xi(Y\mid X,\beta)
=
\sum_{\xi\in\Xi_a}
\ell_\xi(Y\mid X,\beta),
\qquad
\widehat\beta_\Xi
=
\arg\min_\beta
\widehat{\mathcal L}_\Xi(Y\mid X,\beta).
\]

Every active orbit therefore reduces to a binary conditional logit whose linear predictor is simply the generalized difference-in-differences feature
$\widetilde X_\xi$. Lemma~\ref{lem:derivatives} remains valid without modification, implying that each polyad loss is convex and that its gradient and Hessian are still given by
\eqref{eq:foc}--\eqref{eq:soc}. Although outside of the scope of this paper, all theoretical results established in Section~4 seem to extend immediately to the extensive-margin setting.

\paragraph{Computational implementation.}

The algorithms of Section~5 require only one modification. Candidate polyads are generated exactly as before using pairs of realized edges and the same tie-breaking rule defining $\Xi_a^\star$. The sole difference lies in the activity test. For count data, a candidate polyad is active whenever its orbit contains more than one nonnegative graph, which is determined through the extrema defining $m_\xi$ and $M_\xi$. In the Bernoulli model, the candidate is retained only if every positive-sign edge equals one and every negative-sign edge equals zero, namely if \eqref{eq:active-binary} holds. No other part of the algorithm changes. In particular, the computational complexity remains exactly as in the count model.

\begin{figure}[h!]
    \centering
    \includegraphics[width=\linewidth]{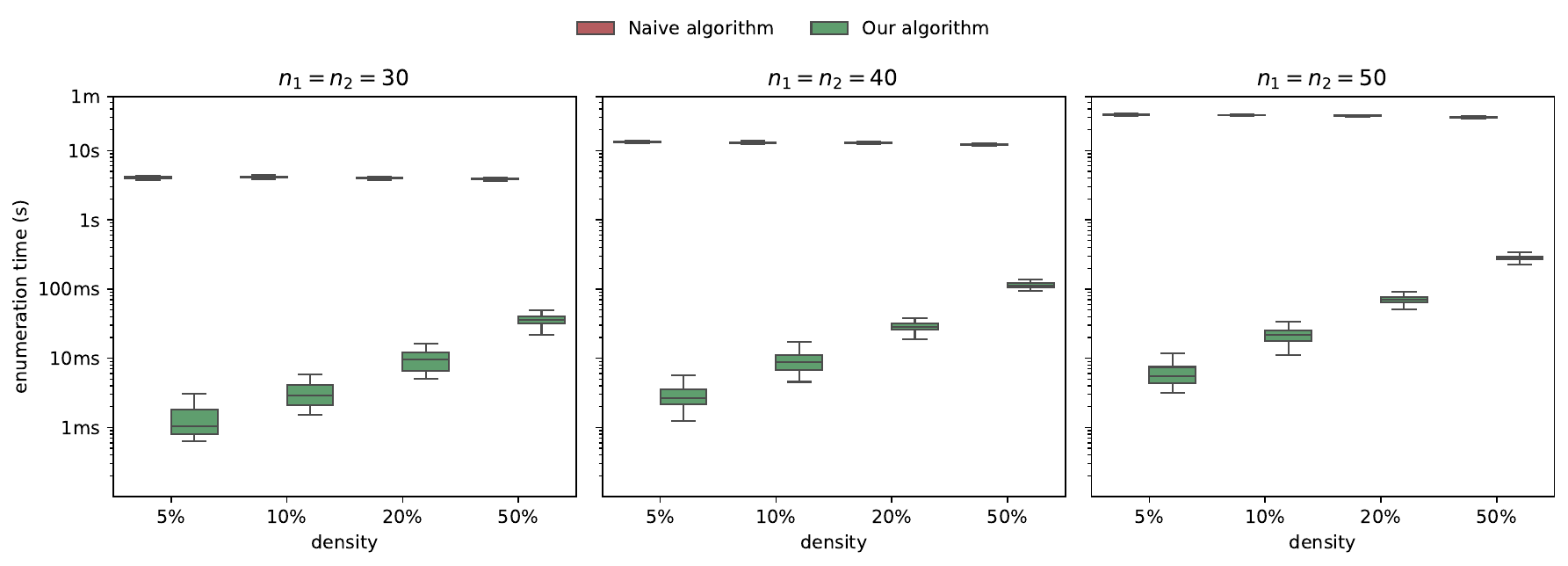}
    \caption{Comparison, in the Bernoulli case, between the naive algorithm for listing active polyads and the method we propose.}
    \label{fig:extensive-enum-times}
\end{figure}

\paragraph{Computational performance.}
The discussion of Section~5 suggests that the computational advantage of the sparse implementation should carry over unchanged to the extensive-margin setting, since only the activity test differs from the count model. We verify this claim through a Monte Carlo experiment under the Bernoulli model~\eqref{eq:bernoulli} with two-way fixed effects ($\mathcal G=\mathcal G_{\max}$).

We let $n_1=n_2\in\{30,40,50\}$ and target densities $|E|/n \in\{0.05,0.10,0.20,0.50\}$. We perform $100$ replications and compare two implementations for constructing the canonical active set $\widehat\Xi_a^\star$. The first is a \emph{naive} implementation that enumerates every ordered polyads in $\Xi$. The second is the sparse implementation of Algorithm~1, which generates candidates only from pairs of realized edges and applies exactly the same activity test. Both implementations return the same canonical collection of active polyads and therefore produce identical values of the objective function and identical estimates of $\widehat\beta_\Xi$. The experiment thus isolates the computational cost of polyads enumeration.

Figure~\ref{fig:extensive-enum-times} reports wall-clock enumeration times over the $100$ replications. Three conclusions emerge. First, the running time of the naive implementation depends almost exclusively on the total number of polyads $|\Xi|$ and is therefore nearly insensitive to network density. Second, the sparse implementation scales with the number of realized edges, yielding running times that increase with the density. Finally, the computational gains are higher for sparse networks, where the sparse implementation can be 3 orders of magnitude faster. Even for relatively dense networks, sparse enumeration remains considerably faster.

\newpage 

\begin{center}
    {\bf \large SUPPLEMENTARY MATERIAL} 
\end{center}

\bigskip

\section{Proofs of main results}

\subsection{Notation}
In several places in the proof of consistency or asymptotic normality of the Polyads estimator, we use a notation that involves a random polyad and or a random index.  In that case, we denote by $\bxi$ (in bold) a random variable with values in the set of all polyads that is uniformly distributed over this set --- whereas a deterministic polyad is denoted by $\xi$. We also denote by $\bar \bi$ a random variable that is uniformly distributed over $\cI$, the set of all edges. We denote by $\bbE_{\beta_\star}$ (resp. $\bbP_{\beta_\star}$) the expectation (resp. the probability) w.r.t. both the random polyads/edge and the $Y_\bi$'s. For instance, we will use several time the following quantity  
\begin{align*}
    \Delta_{q,N}(X) &= \bbE_{\beta_\star}\left[\inr{\nabla \ell_{\bxi}(\beta_\star),c}\inr{\nabla \ell_{\bxi^\prime}(\beta_\star),c}I\left(|\cE(\bxi) \cap \cE(\bxi^\prime)|=q\right)|X\right]\\
    &  = \frac{1}{N^2}\sum_{\xi, \xi^\prime}\bE_{\beta_\star}\left[\inr{\nabla \ell_{\xi}(\beta_\star),c}\inr{\nabla \ell_{\xi^\prime}(\beta_\star),c}I\left(|\cE(\xi) \cap \cE(\xi^\prime)|=q\right)|X\right]
\end{align*}where $\bxi$ and $\bxi^\prime$ are iid uniformly distributed over $\Xi$.    

We will also use $B_2$ as the unit Euclidean ball of $\bR^d$ or $\bR^p$ depending on the context and $B_2(\theta, \eps) = \theta+\eps B_2$. The intensity of the Poisson variable $Y_\bi| X_\bi$ under model assumption~\ref{eq:Poisson_model} is denoted by
\begin{equation*}
    \lambda_\bi(\beta_\star, \theta^\sG) = \exp\left(\beta_\star^\top X_\bi + \sum_{g\in\cG}\theta_{g(\bi)}^g\right),
\end{equation*}the expected number of active polyads (conditionally on $X$) is denoted by $N_a$. The operator norm of a matrice (ie its largest singular value) is denoted by $\norm{\cdot}_{op}$. To simplify notation we drop the $k$ index everywhere.

\subsection{Consistency of $\widehat{\beta}_\Xi$ in the Poisson model} 
\label{sub:consistency_of_}

The proof of the consistency result uses the convexity of the loss function as a key ingredient. It is based on a slight modification of Theorem 2.7 of \cite{newey1994large} that requires to revisit some classical results in convex analysis (such as those from Chapter~10 in \cite{rockafellar-1970a}) as well as some results in asymptotic statistics from \cite{newey1994large} and \cite{Andersen1982}. All these revisited versions of these classical results may be found in Section~\ref{sec:appendix_a_proof_of_lemma_lem:suf_beta}. Theorem~\ref{theo:consistency_tetrads} is a consequence of Lemma~\ref{lem:suf_beta} below whose proof may be found also in Section~\ref{sec:appendix_a_proof_of_lemma_lem:suf_beta}.

\begin{lemma}\label{lem:suf_beta} Let \( \Theta \) be a non-empty open and convex set in $\bR^d$.  Let $(\hat Q_n)_n$ be a sequence of random convex functions and $(Q_n)_n$ be a sequence of (deterministic) convex functions all defined on $\Theta$.  We assume that there exists $\theta_0\in\Theta$ and $\eps_0>0$ such that $\theta_0+3\eps_0 B_2\subset \Theta$ and the following holds:
\begin{enumerate}
    \item for $n$ large enough,  $\hat Q_n$ is uniquely minimized over $\Theta$ by $\hat \theta_n$;
    \item for all $0<\eps\leq \eps_0$ there exists $\eta>0$ and $n_0$ such that for all $n\geq n_0$ and all $\theta\in\Theta$, if $\norm{\theta - \theta_0}_2= \eps$ then $Q_n(\theta) - Q_n(\theta_0)\geq \eta$;
    \item there exists  $\theta_1\in \theta_0 + 3 \eps_0 B_2$ and $L_1\in\bR$ such that $\inf_n Q_n(\theta_1)\geq L_1$ and  for every $\theta\in \theta_0+3 \eps_0 B_2$, there exists $L>0$ such that $\sup_{n}Q_n(\theta)\leq L$;
    \item for all \( \theta \in \theta_0+3 \eps_0 B_2 \), as $n$ tends to infinity, \( \hat Q_n(\theta) - Q_n(\theta) \overset{p}{\to}  0 \).
\end{enumerate}
Then $ \hat{\theta}_n \overset{p}{\to} \theta_0$.
\end{lemma}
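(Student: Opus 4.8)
The plan is to exploit convexity to reduce consistency to a value comparison on a single sphere. The geometric engine is the elementary fact that for a convex function $f$ on $\bR^d$, if $f(\theta) > f(\theta_0)$ for every $\theta$ with $\norm{\theta - \theta_0}_2 = \eps$, then every minimizer of $f$ lies in the open ball $\{\norm{\theta - \theta_0}_2 < \eps\}$: were a minimizer $\hat\theta$ to satisfy $\norm{\hat\theta - \theta_0}_2 \geq \eps$, the point $\theta^\ast = (1-t)\theta_0 + t\hat\theta$ with $t = \eps / \norm{\hat\theta - \theta_0}_2 \in (0,1]$ lies on the sphere, and convexity together with $f(\hat\theta) \leq f(\theta_0)$ gives $f(\theta^\ast) \leq f(\theta_0)$, contradicting the hypothesis. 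Applying this to $f = \hat Q_n$, which has the unique minimizer $\hat\theta_n$ by item~1, it suffices to show that for each fixed $0 < \eps \leq \eps_0$, with probability tending to one one has $\hat Q_n(\theta) > \hat Q_n(\theta_0)$ for all $\theta$ on the sphere $S_\eps = \{\norm{\theta - \theta_0}_2 = \eps\}$.

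To produce this value gap I would transfer the deterministic separation in item~2 to $\hat Q_n$ by approximating $\hat Q_n$ by $Q_n$ uniformly on the compact ball $\bar{B}(\theta_0, \eps)$. The key point is that, although the difference $\hat Q_n - Q_n$ need not be convex, convexity of each term yields a uniform-in-$n$ (and in probability) Lipschitz bound on a shrunken ball. Concretely, item~3 bounds $Q_n$ above by $L$ on $\theta_0 + 3\eps_0 B_2$, and the single anchoring value $\inf_n Q_n(\theta_1) \geq L_1$ propagates through convexity to a two-sided bound on $\bar{B}(\theta_0, 2\eps_0)$; the quantitative local Lipschitz property of convex functions (Rockafellar, Theorem~10.6) then delivers a common Lipschitz constant for all $Q_n$ on $\bar{B}(\theta_0, \eps_0)$. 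For $\hat Q_n$, item~4 gives $\hat Q_n(\theta) = Q_n(\theta) + o_p(1)$ at each point of a finite net, so the same two-sided bounds (up to an $o_p(1)$) propagate by convexity to $\hat Q_n$ on the convex hull of the net, yielding identical Lipschitz control with probability tending to one.

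With uniform Lipschitz constants in hand, a standard net argument upgrades the pointwise convergence of item~4 to uniform convergence on the compact ball. Choosing a $\delta$-net $\{\vartheta_1, \dots, \vartheta_K\}$ of $\bar{B}(\theta_0, \eps)$, one has $\max_{k \leq K} |\hat Q_n(\vartheta_k) - Q_n(\vartheta_k)| \overset{p}{\to} 0$ (a maximum over finitely many terms), and the Lipschitz estimates fill in the gaps, so that $\sup_{\theta \in \bar{B}(\theta_0, \eps)} |\hat Q_n(\theta) - Q_n(\theta)| \overset{p}{\to} 0$. Combining this with the separation $Q_n(\theta) - Q_n(\theta_0) \geq \eta$ of item~2, for every $\theta \in S_\eps$ one obtains
\[
\hat Q_n(\theta) - \hat Q_n(\theta_0) \;\geq\; \eta - 2 \sup_{\vartheta \in \bar{B}(\theta_0,\eps)} \bigl| \hat Q_n(\vartheta) - Q_n(\vartheta) \bigr|,
\]
which exceeds $\eta/2 > 0$ uniformly over the sphere with probability tending to one. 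The geometry lemma then forces $\hat\theta_n \in \{\norm{\theta - \theta_0}_2 < \eps\}$ with probability tending to one; since $0 < \eps \leq \eps_0$ is arbitrary, this is precisely $\hat\theta_n \overset{p}{\to} \theta_0$.

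The main obstacle is the uniform approximation step. Because no limiting objective $Q_\infty$ is assumed and $\hat Q_n - Q_n$ is not convex, the usual convexity lemma — pointwise convergence of convex functions implies local uniform convergence — does not apply directly to the difference. The real work is therefore in establishing uniform-in-$n$ Lipschitz control of both $\hat Q_n$ and $Q_n$ from the crude two-sided bounds furnished by items~3 and~4, which is exactly where the convex-analytic machinery of \cite{rockafellar-1970a} and the asymptotic-statistics refinements of \cite{Andersen1982} enter.
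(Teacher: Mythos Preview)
Your proposal is correct and follows essentially the same architecture as the paper's proof: upgrade the pointwise convergence of item~4 to uniform convergence on a compact ball using convex-analytic Lipschitz control, then use convexity to force the minimizer inside the ball once the values on the bounding sphere exceed the center value. The paper packages the latter step slightly differently---it introduces an auxiliary constrained minimizer $\tilde\theta_n = \argmin_{\theta\in\theta_0+2\eps B_2}\hat Q_n(\theta)$, shows $\tilde\theta_n\overset{p}{\to}\theta_0$ via Lemma~\ref{lem:N_McF_compact}, and then argues that $\tilde\theta_n=\hat\theta_n$---but this is equivalent to your direct sphere comparison.

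The one substantive difference is in the uniform-convergence step. The paper's Lemma~\ref{lem:unif_cv} passes through Andersen's subsequence device (Lemma~\ref{lem:Andersen82}): extract a subsequence along which pointwise convergence is almost sure, apply the deterministic Rockafellar-type lemmas (Lemmas~\ref{lem:extension_10_6} and~\ref{lem:rockafellar_70}) to that subsequence, then conclude uniform convergence in probability from the arbitrariness of the initial subsequence. Your route---bound $\hat Q_n$ above on a simplex containing the ball by applying item~4 at finitely many vertices, deduce a high-probability Lipschitz constant via Rockafellar~10.6, then run a finite $\delta$-net argument directly in probability---is a legitimate and somewhat more elementary alternative that sidesteps the subsequence extraction entirely. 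Both approaches ultimately rest on the same convex-analytic fact (uniform upper bounds plus a single lower bound yield equi-Lipschitz control on interior compacts), and either is sufficient here.
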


They are two main advantages of Lemma~\ref{lem:suf_beta}: first, as in Theorem~2.7 from \cite{newey1994large}, we don't need to assume that $\Theta$ is compact (that will be useful for us since we are minimizing over all $\beta\in\bR^p$ to define the Polyads estimator); second, we don't need to assume that the sequence of  risk functions $(Q_N)_N$ has a limit or that it is the same function for all $N$. The latter assumption is problematic when the fixed effects are considered as parameters (or when they are random variables and we work conditionally to them); the same remark holds for the co-variable vectors $(X_{\bi})_\bi$. This is our main motivation to use Lemma~\ref{lem:suf_beta} in place of the classical Theorem~2.7 from \cite{newey1994large} where a limiting function is assumed to exists.

\begin{proof}[Proof of Theorem~\ref{theo:consistency_tetrads}]
 To prove Theorem~\ref{theo:consistency_tetrads}, we apply Lemma~\ref{lem:suf_beta}  to $\Theta = \bR^d$, $n = N$ the number of polyads, for all $\beta\in\bR^d$,   $\hat Q_N(\beta) = N_a^{-1} L(Y|X, \beta)$ and $Q_N(\beta) = N_a^{-1} \bE_{\beta_\star} [L(Y|X, \beta)|X]$ so that $\theta_0= \beta_\star$ and $\eps_0=1/3$. Under the assumptions from Theorem~\ref{theo:consistency_tetrads}, the first point from Lemma~\ref{lem:suf_beta} is satisfied. The second point follows from strong convexity of $Q_N$ at $\beta_\star$ due to Assumption~\ref{ass:consistency} and the computation of the Hessian in Lemma~\ref{lem:derivatives}. It therefore only remains to show the third and fourth items of Lemma~\ref{lem:suf_beta}.

We start with the third item from Lemma~\ref{lem:suf_beta}. We introduce some tools that will be useful to check this condition. We recall that the cross entropy between two probability measures $P,Q$ on $\bN^\cI$ is defined as 
$$
H(P,Q) = -\bE_P \ln Q = - \sum_{y\in\bN^\cI} P(\{y\}) \ln Q(\{y\}).
$$ We know that $Q\to H(P,Q)$ is minimal at $Q=P$ and that  $H(P,P)\geq0$. We also have $H(P,Q) - H(P,P)=KL(P,Q)$ where
$$
KL(P,Q)=\sum_{y} P(\{y\}) \ln\left(\frac{P(\{y\})}{Q(\{y\})}\right)
$$is the Kulback-Leiber divergence between $P$ and $Q$. In particular, if there exists $C>0$ such that $P(\{y\})\leq C Q(\{y\}), \forall y$ then $0\leq KL(P,Q) \leq \ln(C)$. Next, given $\xi\in\Xi$, $\beta\in\bR^p$ and $z\in\bN^{\cI}$, we define a probability measure on $\bN^\cI$ by
\begin{equation*}
    \bP_{\beta, \cO_\xi(z)}(\{y\}|X) = \left\{ 
    \begin{array}{cc}
        \bP_\beta[Y=y|Y\in\cO_\xi(z), X] & \mbox{ if } y\in\cO_\xi(z)\\
        0 & \mbox{ otherwise.}
    \end{array}
    \right.
\end{equation*}It follows from Lemma~\ref{lem:equiv_polyads} from Section~\ref{sec:appendix_a} that $y\in\cO_\xi(z)$ iff $\cO_\xi(y) = \cO_\xi(z)$. As a consequence, when $y\in\cO_\xi(z)$, we also have
\begin{equation*}
    -\ln \bP_{\beta, \cO_\xi(z)}(\{y\}|X) = -\ln\left[\frac{\bP_\beta[Y=y|X]}{\bP_\beta[Y\in\cO_\xi(y)|X]}\right] = \ell_\xi(y|X, \beta).
\end{equation*}

\begin{proposition}\label{prop:check_assum_consistency_risk_fct}
     We have for all $\beta\in\bR^p$, 
    \begin{equation*}
        Q_N(\beta) = \frac{1}{N_a} \sum_{\xi\in\Xi} \bE_{\beta_\star}\left[ H(\bP_{\beta_\star, \cO_\xi(Y)}(\cdot|X), \bP_{\beta, \cO_\xi(Y)}(\cdot|X))|X\right].
    \end{equation*}
    As a consequence, $Q_N(\beta_\star)\geq0$ and so one can take $L_1=0$ in \textit{item~3} from Lemma~\ref{lem:suf_beta} without further assumption. We also have for all $\beta\in\bR^p$,
    \begin{equation*}
        Q_N(\beta)\leq c_0\left( 1+ \bar\lambda\right)\left( 1 + L_N\norm{\beta - \beta_\star}_2\right).
    \end{equation*} As a consequence, one can take $L$ as some absolute in \textit{item~3} from Lemma~\ref{lem:suf_beta} as long as $L_N:=\max_\xi\norm{\tilde X_\xi}$ and the fixed effects are uniformly bounded from above as $k\to+\infty$.
\end{proposition}

\begin{proof}[Proof of Proposition~\ref{prop:check_assum_consistency_risk_fct}]

Let $\beta\in\bR^p$ and $\xi$ be a polyads. We have
    \begin{align*}
        \bE_{\beta_\star}\ell_\xi(Y|X, \beta) = -\sum_{y\in\bN^\cI} \bP_{\beta_\star}[Y=y] \ln\left[\frac{\bP_\beta[Y=y|X]}{\bP_\beta[Y\in\cO_\xi(y)|X]}\right].
    \end{align*}As a consequence, if we denote by $\mathfrak{O}_\xi$ the set of all orbits associated with polyads $\xi$, ie $\mathfrak{O}_\xi=\{\cO_\xi(z):z\in\bN^{\cI}\}$ and if we apply Lemma~\ref{lem:equiv_polyads} from Section~\ref{sec:appendix_a} to get that $y\in\cO$ iff $\cO_\xi(y) = \cO$ for all $\cO\in\mathfrak{O}$, we obtain
\begin{align*}
  &\bE_{\beta_\star}\ell_\xi(Y|X, \beta) = -\sum_{\cO\in\mathfrak{O}_\xi} \sum_{y\in\cO} \bP_{\beta_\star}[Y=y|X] \ln\left[\frac{\bP_\beta[Y=y|X]}{\bP_\beta[Y\in\cO |X]}\right]\\ 
  &=  -\sum_{\cO\in\mathfrak{O}_\xi} \bP_{\beta_\star}[Y\in\cO|X] \sum_{y\in\cO} \frac{\bP_{\beta_\star}[Y=y|X]}{\bP_{\beta_\star}[Y\in\cO|X]} \ln\left[\frac{\bP_\beta[Y=y|X]}{\bP_\beta[Y\in\cO |X]}\right]\\
  &  = \sum_{\cO\in\mathfrak{O}_\xi} \bP_{\beta_\star}[Y\in\cO|X] \sum_{y\in\cO} H\left(\bP_{\beta_\star, \cO}(\cdot|X), \bP_{\beta, \cO}(\cdot|X)\right)\\ 
  &= \sum_{\cO\in\mathfrak{O}_\xi} \sum_{y\in\cO} \bP_{\beta_\star}[Y = y|X]  H\left(\bP_{\beta_\star, \cO_\xi(y)}(\cdot|X), \bP_{\beta, \cO_\xi(y)}(\cdot|X)\right)\\
  & = \bE_{\beta_\star} H\left(\bP_{\beta_\star, \cO_\xi(Y)}(\cdot|X), \bP_{\beta, \cO_\xi(Y)}(\cdot|X)\right)
\end{align*}where Lemma~\ref{lem:equiv_polyads} from Section~\ref{sec:appendix_a} has been used in the last but one equality.

It follows from the definition of the KL-divergence based on the cross entropy recalled above that
\begin{equation*}
        Q_N(\beta) - Q_N(\beta_\star) = \frac{1}{N_a} \sum_{\xi\in\Xi} \bE_{\beta_\star}\left[ KL(\bP_{\beta_\star, \cO_\xi(Y)}(\cdot|X), \bP_{\beta, \cO_\xi(Y)}(\cdot|X))|X\right].
    \end{equation*} Next, for all $y\in\bN^\cI$ and all $\cO\in\mathfrak{O}_\xi$ such that $y\in\cO$, we have
\begin{equation*}
 \frac{\bP_{\beta_\star, \cO}(y|X)}{\bP_{\beta, \cO}(y|X)}= \frac{\bP_{\beta_\star}[Y=y]}{\bP_{\beta}[Y=y]}\frac{\bP_{\beta}[Y\in\cO]}{\bP_{\beta_\star}[Y\in\cO]} = \frac{\sum_{r=-m}^M \exp(r\inr{\tilde X_\xi, \beta_\star}) \prod_{\bi}(y_\bi^r!)^{-1}}{\sum_{r=-m}^M \exp(r\inr{\tilde X_\xi, \beta}) \prod_{\bi}(y_\bi^r!)^{-1}}\leq \max_{r=-m,\ldots, M} \exp\left(r\inr{\tilde X_\xi, \beta_\star - \beta}\right)
\end{equation*}where $m = m_\xi(y)$ and $M=M_\xi(y)$. As a consequence, for $m = m_\xi(Y)$ and $M=M_\xi(Y)$, it follows from Lemma~\ref{lem:moment_min_Poisson} that
\begin{equation*}
        Q_N(\beta) - Q_N(\beta_\star) \leq \frac{1}{N_a} \sum_{\xi\in\Xi} \bE_{\beta_\star}\left[(m+M)|X\right] |\inr{\tilde X_\xi, \beta_\star - \beta}| \leq c_0(\bar{\lambda} + 1) L_N \norm{\beta_\star - \beta}_2
    \end{equation*}where $L_N = \max_\xi \norm{\tilde X_\xi}_2$. Hence, it only remains to show that $\sup_{N}Q_N(\beta_\star)$ is finite. We have that 
  \begin{equation*}
        Q_N(\beta_\star) = \frac{1}{N_a} \sum_{\xi} \bE_{\beta_\star} \left[Ent(\bP_{\beta_\star, \cO_\xi(Y)}(\cdot|X))|X\right]
    \end{equation*} where $Ent(P) = H(P,P)$ is the entropy of a probability distribution $P$. We know that the maximal entropy is achieved by the uniform distribution and that it is equal to $\ln(n)$ for distributions supported over a set of cardinality $n$. In our case, the probability distributions $\bP_{\beta, \cO_\xi(z)}(\cdot|X)$ are supported on $\cO_\xi(z)$, hence,  for almost all $Y$, we have 
    $$
    Ent(\bP_{\beta_\star, \cO_\xi(Y)}(\cdot|X))\leq \ln |\cO_\xi(Y)| = \ln (m_\xi(Y)+M_\xi(Y)).
    $$Therefore, it follows from Lemma~\ref{lem:moment_min_Poisson} that there exists an absolute constant such that
    \begin{equation*}
      Q_N(\beta_\star) \leq \frac{1}{N_a} \sum_{\xi}  \bE_{\beta_\star} \left[\ln (m_\xi(Y)+M_\xi(Y))|X\right]\leq c_0(\bar{\lambda} + 1).
    \end{equation*} 
\end{proof}

Let us now turn to the fourth items of Lemma~\ref{lem:suf_beta},  which is the pointwise convergence in probability of $\hat Q_N-Q_N$ to $0$ for $\beta \in \beta_\star + B_2$. Let $\beta\in\bR^d$ be such that $\norm{\beta_\star - \beta}_2\leq1$.  By Chebyshev's inequality we only need to show that $N_a^{-2}\bV_{\beta_\star}(L(Y|X,\beta)|X)\to 0$ as $N$ tends to infinity. Since, two polyads $\xi$ and $\xi^\prime$ that do not have an edge in common are independent we obtain:
\begin{align*}
 \bV_{\beta_\star}(L(Y|X,\beta)|X)  = \sum_{\substack{\xi, \xi^\prime\in\Xi\\\cE(\xi)\cap \cE(\xi^\prime)\neq \emptyset}}\bE_{\beta_\star}\left[(\ell_\xi(Y | X, \beta)-\bE_{\beta_\star}\ell_\xi(Y | X, \beta))(\ell_{\xi^\prime}(Y | X, \beta)-\bE_{\beta_\star}\ell_{\xi^\prime}(Y | X, \beta)) |X\right]. 
\end{align*}

It follows from Lemma~\ref{lem:sampling_device_for_consistency} that there exists some constant $C_1>0$  such that  for all $\xi, \xi^\prime\in\Xi$, we have
$$
\bE_{\beta_\star}\left[(\ell_\xi(Y | X, \beta)-\bE_{\beta_\star}\ell_\xi(Y | X, \beta))(\ell_{\xi^\prime}(Y | X, \beta)-\bE_{\beta_\star}\ell_{\xi^\prime}(Y | X, \beta)) |X\right] \leq C_1 \bP_{\beta_\star}\left[\xi \mbox{ and  } \xi^\prime  \mbox{ are both active}|X\right].
$$ According to Lemma~\ref{lem:sampling_device_for_consistency} we can choose
\begin{equation}\label{eq:def_C1}
    C_1 = c_0 \left(1+L_N^2(\norm{\beta_\star}_2^2+1) + \ln^2(\bar \lambda+1)\right)(\bar\lambda+1)^2.
\end{equation}where $L_N = \max_\xi \norm{\tilde X_\xi}_2$ and  
$$
\bar \lambda = \max_{\bi\in\cI}\left(\exp\left(\inr{X_\bi, \beta_\star} + L_N + \sum_{g\in\cG}\theta^g_{g(\bi)}\right)\right)
$$(note that we used $\norm{\beta_\star - \beta}_2\leq1$ to define $\bar \lambda $).

As a consequence, we get 
\begin{align}\label{eq:control_var_consistency_1}
  \nonumber  &\bE_{\beta_\star}\left[(\hat Q_N(\beta) - Q_N(\beta))^2|X\right] \leq \frac{C_1}{N_a^2} \sum_{\substack{\xi, \xi^\prime\in\Xi\\\cE(\xi)\cap \cE(\xi^\prime)\neq \emptyset}}\bP_{\beta_\star}\left[\xi \mbox{ and  } \xi^\prime  \mbox{ are both active}|X\right]\\
  & \leq C_1\frac{ N^2\bbP_{\beta_\star}[\bxi \text{ and  } \bxi^\prime \text{ are active and they share at least one edge in common }] }{N_a^2}\\
  &\leq C_1\frac{ \bbP_{\beta_\star}[\bxi \text{ and  } \bxi^\prime \text{ are active and they share at least one edge in common }] }{\bbP_{\beta_\star}[\bxi \text{ is active}]^2}
  \end{align}then we conclude using Assumption~\ref{ass:first_order_geometry}.

\end{proof}

\subsection{Asymptotic normality of the Polyads estimator} 
\label{sec:asymptotic_normality_of_the_polyads_estimator}
In this section we prove the asymptotic normality property of the Polyads estimator as stated in Theorem~\ref{theo:CLT_polyads_estim_Graham}. Classical results such as the one in \cite{newey1994large} apply when the parameter space is compact and when there exists some limit risk function; two assumptions we want to avoid. As a consequence, we first state a general asymptotic normality result and then we apply it to obtain  Theorem~\ref{theo:CLT_polyads_estim_Graham}. 




%


\subsubsection{Main theorem for asymptotic normality under convexity assumption} 
\label{sub:main_theorem_for_asymptotic_normality_under_convexity_assumption}
The following result is adapted from Victor-Emmanuel Brunel lecture notes \cite{victor} who cites \cite{MR1026303} and \cite{MR1186263} as classical references for asymptotic statistics under convexity. However, all the later results use a loss function which is a sum of independent variables. This is not our case here since our loss function sums over all polyads that are not necessarily independent, moreover, as mentioned previously in the section on the  consistency result above, we do not assume the existence of a limiting risk function; we only require the existence of a limiting covariance matrix of the risk in $\beta_\star$. We therefore need to adapt the classical proof of asymptotic normality under convexity assumption to our setup. The proof of asymptotic normality of the Polyads estimator will rely on the following general result.

\begin{theorem}
    \label{theo:general_CLT} Let \( \Theta \) be a non-empty, open and convex set in $\bR^d$, let $\theta_0$ be in $\Theta$ and denote $\eps_0>0$ such that $B_2(\theta_0, 3\eps_0)\subset \Theta$.   Let $(\hat Q_n)_n$ be a sequence of random functions defined on $\Theta$ and denote $Q_n = \bE \hat Q_n$. We assume that
\begin{enumerate}
    \item for $n$ large enough,  $\hat Q_n$ is convex, twice differentiable and $\inf_{\theta\in\Theta}\hat Q_n(\theta)$ is achieved at $\hat\theta_n\in\Theta$;
    \item for $n$ large enough, $Q_n$ is twice differentiable, it is minimized at $\theta_0$ over $\Theta$, as $n$ tends to infinity, $\sup\left(\norm{\nabla^2 Q_n(\theta) - \nabla^2 Q_n(\theta_0)}_{op}:\norm{\theta_0- \theta}_2\leq 2 \eps_0/\sqrt{n} \right)\to 0$ and there exists $H_0\succ0$ such that, as $n$ tends to infinity, $\nabla^2 Q_n(\theta_0) \to H_0$;
    
    \item as $n$ tends to infinity, $\sup\left(\norm{\nabla^2 \hat Q_n(\theta) - \nabla^2 Q_n(\theta)}_{op}: \norm{\theta_0 - \theta}_2\leq 2 \eps_0/\sqrt{n} \right)\ove{\to}{p}0$;
    \item there exists $V_0\succ0$ such that, as $n$ tends to infinity, $\sqrt{n} \nabla \hat Q_n(\theta_0)\overset{d}{\to} \cN(0, V_0)$.
\end{enumerate}
Then, as $n$ tends to infinity, $\sqrt{n}(\hat{\theta}_n  - \theta_0)\overset{d}{\to} \cN(0, \Sigma_0)$ where $\Sigma_0 =  H_0^{-1} V_0 H_0^{-1}$.
\end{theorem}


\begin{proof}
    For all $t\in\bR^d$ such that $\theta_0+t\in\Theta$, we define
    \begin{equation*}
        Z_n(t):= \hat Q_n\left(\theta_0 + \frac{t}{\sqrt{n}}\right) - \hat Q_n(\theta_0) - \inr{\nabla \hat Q_n(\theta_0), \frac{t}{\sqrt{n}}}.
    \end{equation*} Let $t\in\bR^d$ be such that $\norm{t}_2\leq 2 \eps_0$. It follows from  a second order Taylor expansion of $\hat Q_n$ at $\theta_0$ that there exists $\hat\theta_1\in[\theta_0, \theta_0+t/\sqrt{n}]$ such that
       \begin{align*}
   n\left|Z_n(t) - \bE Z_n(t)\right| & = \frac{1}{2}\left|t^\top \left(\nabla^2 \hat Q_n(\hat \theta_1) - \bE\nabla^2 \hat Q_n(\hat \theta_1)\right) t \right| \leq   \frac{\norm{t}_2^2}{2}\sup_{\theta_1\in [\theta_0, \theta_0+t/\sqrt{n}]}\left\|\nabla^2 Q_n(\theta_1) - \nabla^2 \hat Q_n(\theta_1)\right\|_{op}\\
   &\leq 2\eps^2_0 \sup\left(\norm{\nabla^2 \hat Q_n(\theta) - \nabla^2 Q_n(\theta)}_{op}: \norm{\theta_0 - \theta}_2\leq 2 \eps_0/\sqrt{n} \right) 
    \end{align*}and so by \textit{item~3} above, we obtain
    \begin{equation}\label{eq:conv_compact_set}
    n \sup_{t:\norm{t}_2\leq 2 \eps_0} \left|Z_n(t) - \bE Z_n(t) \right| \overset{p}{\to} 0.
\end{equation}

Next, we define for all $t\in\bR^d$ such that $\theta_0+t\in\Theta$, $\hat F_n(t) = \hat Q_n(\theta_0+t/\sqrt{n}) - \hat Q_n(\theta_0)$ and 
\begin{equation*}
    \hat H_n(t) = \inr{\nabla \hat Q_n(\theta_0), \frac{t}{\sqrt{n}}} + \frac{1}{2}\left(\frac{t}{\sqrt{n}}\right)^\top H_0 \left(\frac{t}{\sqrt{n}}\right).
\end{equation*}It follows from a second order Taylor expansion  of $Q_n$ at $\theta_0$ that
\begin{align*}
  & n \sup_{t:\norm{t}_2\leq 2 \eps_0}|\hat F_n(t) - \hat H_n(t)| = n \sup_{t:\norm{t}_2\leq 2 \eps_0}\left| Z_n(t) - \frac{1}{2}\left(\frac{t}{\sqrt{n}}\right)^\top H_0 \left(\frac{t}{\sqrt{n}}\right) \right|\\
 & \leq  n \sup_{t:\norm{t}_2\leq 2 \eps_0} \left|Z_n(t) - \bE Z_n(t) \right| + n \sup_{t:\norm{t}_2\leq 2 \eps_0} \left|\bE Z_n(t) - \frac{1}{2}\frac{t}{\sqrt{n}}\nabla^2 Q_n(\theta_0)\frac{t}{\sqrt{n}} \right|  \\
 & + \frac{1}{2} \sup_{t:\norm{t}_2\leq 2 \eps_0} \left|t^\top \left(\nabla^2 Q_n(\theta_0) - H_0\right)t\right|\\
 & \leq  n \sup_{t:\norm{t}_2\leq 2 \eps_0} \left|Z_n(t) - \bE Z_n(t) \right| + \sup_{\norm{\theta_1- \theta_0}_2\leq 2\eps_0/\sqrt{n}} \norm{\nabla^2 Q_n(\theta_1) - \nabla^2 Q_n(\theta_0))}_{op}  + 2\eps_0^2 \norm{\nabla^2 Q_n(\theta_0) - H_0}_{op}.
\end{align*} Hence, it follows from \eqref{eq:conv_compact_set} and \textit{item~2} above that  
\begin{equation}
    \label{eq:cv_compact_set_F_H}
    n \sup_{t:\norm{t}_2\leq 2 \eps_0}|\hat F_n(t) - \hat H_n(t)|  \overset{p}{\to} 0.
\end{equation}

By definition, $\hat t_n = \sqrt{n}(\hat\theta_n - \theta_0)$ is a minimizer of $\hat F_n$ and $\tilde t_n = -H_0^{-1} (\sqrt{n} \nabla \hat Q_n(\theta_0))$ is a minimizer of $\hat H_n$. It follows from the convergence assumption on the sequence of gradient that $\tilde t_n \overset{d}{\to}\cN(0, \Sigma_0)$; hence, by Slutsky's Lemma, it only remains to show that $\hat t_n - \tilde t_n \overset{p}{\to} 0$ to get the result. 

Let $0<\eps\leq \eps_0$. Let us prove that $\bP\left[\norm{\hat t_n - \tilde t_n}_2\geq \eps\right]\to 0$. Let $0<\eps_1<1$. Let us show that there exists $n_0\in\bN$ such that for all $n\geq n_0, \bP\left[\norm{\hat t_n - \tilde t_n}_2\geq \eps\right]\leq \eps_1$.  
We have $\tilde t_n = \cO_p(1)$, hence, there exists $n_1$ such that $\bP\left[\norm{\tilde t_n}_2\leq \eps_0 \right]\geq 1-\eps_1/2$ for all $n\geq n_1$. It follows from \eqref{eq:cv_compact_set_F_H} that there exists $n_2$ such that for all $n\geq n_2$, $\bP\left[n\sup_{t:\norm{t}_2\leq 2 \eps_0}|\hat F_n(t) - \hat H_n(t)|\leq \eps^2\lambda_{min}/12\right]\geq 1-\eps_1/2$ where $\lambda_{min}$ is the smallest singular value of $H_0$ (it is positive by assumption). Next, we check that for all $t\in\bR^d$ such that $\theta_0+t\in\Theta$, 
\begin{equation*}
    \hat H_n(t) - \hat H_n(\tilde t_n) = \frac{1}{2n}\norm{H_0^{1/2}(t-\tilde t_n)}_2^2 \geq \frac{\lambda_{min}}{2n}\norm{t - \tilde t_n}_2^2.
\end{equation*}

Denote by $\Omega_0$ the event onto which $n\sup_{t:\norm{t}_2\leq 2 \eps_0}|\hat F_n(t) - \hat H_n(t)|\leq \eps^2\lambda_{min}/12$ and $\norm{\tilde t_n}_2\leq \eps_0$. Let $n$ be larger than $\max(n_1, n_2)$ and denote by $S_2(\tilde t_n, \eps) = \{t:\norm{t - \tilde t_n}_2 =  \eps\}$. On $\Omega_0$, we have $S_2(\tilde t_n, \eps)\subset B_2(0, 2\eps_0)$ and so for all $t$ such that $\norm{t-\tilde t_n}_2 = \eps$, 
\begin{align*}
     \hat F_n(t) & \geq \hat H_n(t) - \frac{\eps^2\lambda_{min}}{12n} > \hat H_n(\tilde t_n) + \frac{\lambda_{min}}{2n}\norm{t - \tilde t_n}_2^2 - \frac{\eps^2\lambda_{min}}{12n} = \hat H_n(\tilde t_n) + \frac{5\eps^2\lambda_{min}}{12n}\\
     & \geq \hat F_n(\tilde t_n) - \frac{\eps^2\lambda_{min}}{12n} +\frac{5\eps^2\lambda_{min}}{12n} = \hat F_n(\tilde t_n) + \frac{\eps^2\lambda_{min}}{3n}. 
 \end{align*} We proved that, on the event $\Omega_0$, for all $t$ such that $\norm{t-\tilde t_n}_2 = \eps$, we have $\hat F_n(t) > \hat F_n(\tilde t_n)$. It follows from the convexity of $\hat F_n$ that this results extends to all $t$ such that $\norm{t-\tilde t_n}_2 \geq \eps$. As a consequence, we conclude that the minimizer $\hat t_n$ of $\hat F_n$ necessary lies in $B_2(\tilde t_n, \eps)$. In other words, we proved that for all $n\geq \max(n_1, n_2)$, $\bP\left[\norm{\hat t_n - \tilde t_n}_2>\eps\right]\leq \bP[\Omega_0^c]\leq \eps_1.$
\end{proof}

There are situations where only directional CLTs for the gradient of the loss function at $\theta_0$ are available and where the Cr{\'a}mer-Wold device does not apply. In that case, we may also prove directional CLTs for $(\hat \theta_n)_n$.

\begin{theorem}\label{theo:general_directional_CLT}Let $u\in\bR^d$.
We consider the same setup as Theorem~\ref{theo:general_CLT} except that \textit{item~4} is replaced by:
\begin{itemize}
    \item[$4^\prime$.] There exists a sequence of PSD matrices $(\hat V_n)_n$ such that,  
    $$
    \frac{\inr{\sqrt{n} \nabla \hat Q_n(\theta_0), H_0^{-1}u}}{\norm{\hat V_n^{1/2}H_0^{-1}u}_2}\overset{d}{\to} \cN(0, 1)
    $$and $1= \cO_\bP\left(\norm{\hat V_n^{1/2} H_0^{-1}u}_2\right)$.
\end{itemize}
Then  as $n$ tends to infinity,  
$$
\frac{\inr{\sqrt{n}(\hat{\theta}_n  - \theta_0),u}}{ \norm{\hat V_n^{1/2} H_0^{-1}u}_2}\overset{d}{\to} \cN(0, 1).$$   
\end{theorem}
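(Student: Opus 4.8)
The plan is to reuse almost the entire proof of Theorem~\ref{theo:general_CLT} and replace only its final distributional step by a one-dimensional Slutsky argument. The structural observation that makes this cheap is that the proof of Theorem~\ref{theo:general_CLT} splits into two logically separate parts: a purely convex-analytic localization, which uses only hypotheses 1--3 (together with the tightness of the surrogate minimizer) to establish the intermediate conclusion $\hat t_n - \tilde t_n \overset{p}{\to} 0$, where $\hat t_n := \sqrt{n}(\hat\theta_n - \theta_0)$ and $\tilde t_n := -\sqrt{n}\,H_0^{-1}\nabla \hat Q_n(\theta_0)$ is the minimizer of the quadratic surrogate $\hat H_n$; and a final step in which point 4 is invoked to pass a distributional limit to $\tilde t_n$. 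Since Theorem~\ref{theo:general_directional_CLT} keeps hypotheses 1--3 verbatim, I would simply inherit $\hat t_n - \tilde t_n \overset{p}{\to} 0$ and rework the second part directionally.

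First I would record the elementary decomposition
\[
\frac{\inr{\hat t_n, u}}{\norm{\hat V_n^{1/2} H_0^{-1}u}_2}
= \frac{\inr{\tilde t_n, u}}{\norm{\hat V_n^{1/2} H_0^{-1}u}_2}
+ \frac{\inr{\hat t_n - \tilde t_n, u}}{\norm{\hat V_n^{1/2} H_0^{-1}u}_2},
\]
noting that the left-hand numerator $\inr{\hat t_n, u} = \inr{\sqrt{n}(\hat\theta_n - \theta_0), u}$ is exactly the quantity appearing in the statement. For the leading term, the symmetry of $H_0^{-1}$ gives $\inr{\tilde t_n, u} = -\inr{\sqrt{n}\nabla \hat Q_n(\theta_0),\, H_0^{-1}u}$, so hypothesis $4'$ yields directly
\[
\frac{\inr{\tilde t_n, u}}{\norm{\hat V_n^{1/2} H_0^{-1}u}_2} \overset{d}{\to} \cN(0,1),
\]
the sign being irrelevant by symmetry of the centered Gaussian.

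For the remainder term I would combine two facts: $\inr{\hat t_n - \tilde t_n, u} = o_\bP(1)$, which follows from the inherited convergence $\hat t_n - \tilde t_n \overset{p}{\to} 0$ and the fact that $u$ is a fixed vector; and $1/\norm{\hat V_n^{1/2}H_0^{-1}u}_2 = \cO_\bP(1)$, which is precisely the second clause of hypothesis $4'$. Their product is therefore $o_\bP(1)$, so the remainder vanishes in probability. Applying Slutsky's lemma to the displayed decomposition then gives the claimed directional central limit theorem.

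The argument is thus mostly a matter of re-use, and the single point that genuinely requires care — and the reason hypothesis $4'$ carries the extra clause $1 = \cO_\bP(\norm{\hat V_n^{1/2}H_0^{-1}u}_2)$ — is that the normalizer $\norm{\hat V_n^{1/2}H_0^{-1}u}_2$ is random and may degenerate toward zero. Without control of its reciprocal one could not rule out the $o_\bP(1)$ numerator of the remainder being inflated into something non-negligible after division; the $\cO_\bP$ clause is exactly what tames this and lets the Slutsky step close. I expect this to be the main (indeed the only) obstacle, so in writing the proof I would be explicit that directional tightness of $\tilde t_n$ and the boundedness in probability of the inverse normalizer are what carry the localization and the division, respectively.
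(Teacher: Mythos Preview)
Your proposal is correct and matches the paper's proof essentially line for line: the paper too writes the decomposition $\inr{\hat t_n,u}/\norm{\hat V_n^{1/2}H_0^{-1}u}_2$ as a main term (handled by $4'$) plus the remainder $\inr{\hat t_n-\tilde t_n,u}/\norm{\hat V_n^{1/2}H_0^{-1}u}_2$, inherits $\hat t_n-\tilde t_n\overset{p}{\to}0$ from the localization argument of Theorem~\ref{theo:general_CLT}, bounds the remainder by $\cO_\bP(1)\norm{\hat t_n-\tilde t_n}_2$ using the second clause of $4'$, and finishes with Slutsky. The only cosmetic difference is that you flag the role of the $\cO_\bP$ clause more explicitly than the paper does.
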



\begin{proof}[Proof of Theorem~\ref{theo:general_directional_CLT}]
    The proof of Theorem~\ref{theo:general_directional_CLT} follows the same line as the one of Theorem~\ref{theo:general_CLT} except that we have 
   \begin{equation}\label{eq:decomp_direct_CLT}
        \frac{\inr{\sqrt{n}(\hat{\theta}_n  - \theta_0),u}}{ \norm{\hat V_n^{1/2} H_0^{-1}u}_2} = \frac{\inr{\hat t_n - \tilde t_n,u}}{ \norm{\hat V_n^{1/2} H_0^{-1}u}_2} - \frac{\inr{\sqrt{n} \nabla \hat Q_n(\theta_0), H_0^{-1}u}}{ \norm{\hat V_n^{1/2} H_0^{-1}u}_2}. 
    \end{equation} By the assumption of the directional asymptotic normality of the gradient (i.e. \textit{item~$4^\prime$}), we have
    \begin{equation*}
        \frac{\inr{\sqrt{n} \nabla \hat Q_n(\theta_0), H_0^{-1}u}}{ \norm{\hat V_n^{1/2} H_0^{-1}u}_2} \to \cN(0,1)
    \end{equation*}hence by Slutsky it only remains to show that the first term in the right-hand side equality of \eqref{eq:decomp_direct_CLT} tends to $0$ in probability. This can be proved by using that 
    \begin{equation*}
        \frac{\left|\inr{\hat t_n - \tilde t_n,u}\right|}{ \norm{\hat V_n^{1/2} H_0^{-1}u}_2} \leq \cO_\bP(1)\norm{\hat t_n - \tilde t_n}_2,
    \end{equation*}together with the argument in the proof of Theorem~\ref{theo:general_CLT} that shows that $\hat t_n - \tilde t_n\overset{p}{\to}0$. 
\end{proof}

\subsubsection{Proof of Theorem~\ref{theo:CLT_polyads_estim_Graham}}\label{sec:proof_of_CLT_polyads}



To show a.n. of the Polyads estimator, we apply Theorem~\ref{theo:general_directional_CLT}. We therefore need to check the assumptions from this theorem for the choice of loss and risk functions from \eqref{eq:def_risk}.

 First, convexity and the $\cC^2$-regularity of $\hat Q_N$ follow from Lemma~\ref{lem:derivatives} as well as for the existence of the Polyads estimator. It remains to show the properties of the risk functions $Q_N$ (given Assumption \ref{ass:consistency}, we only need to check the uniform continuity of the Hessian), the uniform  convergence over a compact set around $\beta_\star$ in probability of Hessian matrices and the directional a.n. of the gradient of the loss function at $\beta_\star$. The next two sections are devoted to this latter task. It goes through the directional a.n. of the H{\'a}jek projection of the gradient.

\subsubsection{H{\'a}jek projection of the gradient of the loss function at $\beta_\star$}
\label{sub:hajeck}


We use the notation $\beta\to\ell_\xi(\beta) = \ell_{\xi}(Y|X, \beta)$ for all $\xi\in\Xi$ and
\begin{equation*}
    U_N:=\nabla \hat Q_N(\beta_\star) = \frac{1}{{N_a}}\sum_{\xi\in\Xi} \nabla \ell_\xi(\beta_\star).
\end{equation*}



The main difficulty in proving a CLT for $U_N$ is that it is not a sum of independent variables. However, we  prove in this section, as in \cite{graham2017econometric,jochmans2018semiparametric}, that it can be well approximated (in a sense given in Proposition~\ref{prop:projection_principl} below) by its H{\'a}jek projection - which is a sum of independent variables.

We first show that 
\begin{equation*}
     U_N^* := \sum_{\bi} \bE_{\beta_\star} [U_N|X, Y_\bi]
 \end{equation*} is the H{\'a}jeck projection of $U_N$ onto the linear sub-space $\cS$ of $\bR^p$-valued random variables in $L^2(\bP_{\beta^*}[\cdot|X])$ defined by
 \begin{equation*}
     \cS:= \left\{\sum_{\bi}Z_{\bi}: Z_{\bi} {\mbox{ is }} \sigma(X, Y_\bi)-{\mbox{measurable and }} \bE[\norm{Z_{\bi}}_2^2|X]<\infty \right\}.
 \end{equation*} In other word, we want to show that 
 \begin{equation}\label{eq:min_pb_hajek_proj}
     \min\left(\bE_{\beta_\star}[\norm{U_N - Z}_2^2|X]: Z\in \cS\right)
 \end{equation}is achieved by $U_N^*$. Note that the $(\sigma(X, Y_\bi))_{\bi}$ are not independent in general, however, they are independent conditionally on $X$ because all edges weights are independent conditionally on $X$ in model assumption \eqref{assump:model}. That is the reason why we are considering the H{\'a}jeck projection conditionally to $X$. However, because of this conditioning, we cannot directly apply   Lemma~11.10 from \cite{MR1652247} but we can adapt its proof to our context. We therefore follow the proof strategy of  Lemma~11.10 from \cite{MR1652247} to show that $U_N^*$ is solution to \eqref{eq:min_pb_hajek_proj}. 


 First, it is clear that $U_N^*$ belongs to $\cS$ (note that $U_N$ has a second moment because of Lemma~\ref{lem:derivatives} and $m_\xi(Y)$ has a second moment). Next, we want to show that $U_N-U_N^*$ is orthogonal to $\cS$ conditionally on $X$. To prove it, we show that for every $\bi$ and  $Z_{\bi}$ which is $\sigma(X, Y_\bi)$-measurable with a second moment, we have 
\begin{equation*}
    \bE_{\beta_\star}[\inr{Z_{\bi},U_N-U_N^*}|X] = 0.
\end{equation*}Given that $\bE_{\beta_\star}[\inr{Z_{\bi},U_N-U_N^*}|X] =  \bE_{\beta_\star}[\bE_{\beta_\star}[\inr{Z_{\bi},U_N-U_N^*}|X, Y_\bi]|X]$ (see Lemma~\ref{lem:simple_vaart}), it is enough to show that $\bE_{\beta_\star}[\inr{Z_{\bi},\bE[(U_N-U_N^*)|X, Y_\bi]}|X] = 0$. The latter will be true if we show that 
\begin{equation*}
    \bE_{\beta_\star}[U_N|X, Y_\bi] = \bE_{\beta_\star}[U_N^*|X, Y_\bi].
\end{equation*}Given the definition of $U_N^*$, we only need to prove that $\bE_{\beta_\star}[\bE_{\beta_\star}[U_N|X, Y_\bi]|X, Y_{\bi^\prime}] = 0$ for all $\bi^\prime$ different from $\bi$. But for all $\bi^\prime$ different from $\bi$, $Y_{\bi}$ is independent of $Y_\bi$ conditionally on $X$, hence, by Lemma~\ref{lem:simple_cond} and Lemma~\ref{lem:simple_vaart}, we have 
\begin{equation*}
    \bE_{\beta_\star}\left[\bE_{\beta_\star}[ U_N|X, Y_{\bi}|] |X, Y_{\bi^\prime}\right] = \bE_{\beta_\star}\left[\bE_{\beta_\star}[ U_N|X, Y_{\bi}|] |X\right] = \bE_{\beta_\star}[ U_N|X] =0
\end{equation*}by definition of $\beta_\star$. As a consequence, we proved that $U_N-U^*_N$ is orthogonal to $\cS$ conditionally on $X$. This proved that $U_N^*$ is the H{\'a}jeck projection of $U_N$ conditionally on $X$, ie a solution to \eqref{eq:min_pb_hajek_proj}. 




Theorem~\ref{theo:CLT_polyads_estim_Graham} is a directional CLT for the Polyads estimator. To prove it we will apply Theorem~\ref{theo:general_directional_CLT} that requires a directional CLT for $U_N$. Our approach is to show that such a directional CLT for $U_N$ may be derived from a directional CLT for its H{\'a}jek projection. To that end, we need to prove an asymptotic equivalence between the directional projections $\inr{U_n, c}$ and $\inr{U_n^*, c}$ for a given $c\in\bR^p$. The following result show that such a result holds when the variances of $\inr{U_n, c}$ and $\inr{U_n^*, c}$ are asymptotically equivalent. 

 \begin{proposition}\label{prop:projection_principl}
 Let $c\in\bR^p$.
 For almost all $X$, the following statement holds: 
     \begin{equation*}
        \mbox{ if  }  \frac{\bE_{\beta_\star}\left[\inr{U_N,c}^2|X\right]}{\bE_{\beta_\star}\left[\inr{U_N^*,c}^2|X\right]}\to 1 \mbox{ then  }\frac{\inr{U_N,c}}{\sqrt{\bE_{\beta_\star}[\inr{U_N^*,c}^2|X]}} - \frac{\inr{U_N^*,c}}{\sqrt{\bE_{\beta_\star}[\inr{U_N^*,c}^2|X]}} \overset{\bP_{\beta_\star}[\cdot|X]}{\longrightarrow}0.
     \end{equation*}
 \end{proposition}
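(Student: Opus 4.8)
The plan is to reduce the stated convergence to a single conditional second-moment estimate and then invoke the (conditional) Chebyshev inequality pathwise in $X$. Fix a realization of $X$ in the full-measure set on which the preceding construction is valid and on which $\bE_{\beta_\star}[\inr{U_N^*,c}^2\mid X]>0$; everything below is then a deterministic statement about the sequences indexed by $N$. Write $a_N=\inr{U_N,c}$ and $b_N=\inr{U_N^*,c}$. Since $\beta_\star$ satisfies $\bE_{\beta_\star}[U_N\mid X]=0$ and $U_N^*$ is the H\'ajek projection of $U_N$, both $a_N$ and $b_N$ have zero conditional mean, so it suffices to control the conditional variance of the difference. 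Concretely, by the conditional Chebyshev inequality, for every $\eps>0$,
\[
\bP_{\beta_\star}\!\left[\left.\left|\frac{a_N-b_N}{\sqrt{\bE_{\beta_\star}[b_N^2\mid X]}}\right|\geq \eps \,\right|\, X\right]
\leq \frac{1}{\eps^2}\,\frac{\bE_{\beta_\star}\!\left[(a_N-b_N)^2\mid X\right]}{\bE_{\beta_\star}\!\left[b_N^2\mid X\right]},
\]
so the whole proposition follows once I show the right-hand ratio tends to $0$.

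The key step is a Pythagorean decomposition in the direction $c$, which I would obtain from the orthogonality of $U_N-U_N^*$ to the subspace $\cS$ established in Section~\ref{sub:hajeck}. The point is that the random vector $\inr{U_N^*,c}\,c$ lies in $\cS$: indeed $U_N^*=\sum_{\bi}\bE_{\beta_\star}[U_N\mid X,Y_\bi]$, hence $\inr{U_N^*,c}\,c=\sum_{\bi}\inr{\bE_{\beta_\star}[U_N\mid X,Y_\bi],c}\,c$ is a sum of $\sigma(X,Y_\bi)$-measurable, square-integrable $\bR^p$-valued terms. Applying the conditional orthogonality $\bE_{\beta_\star}[\inr{Z,U_N-U_N^*}\mid X]=0$ with $Z=\inr{U_N^*,c}\,c$ yields that the cross term vanishes, $\bE_{\beta_\star}[\,b_N\,(a_N-b_N)\mid X]=0$. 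Expanding $a_N^2=(b_N+(a_N-b_N))^2$ and taking conditional expectations then gives
\[
\bE_{\beta_\star}\!\left[(a_N-b_N)^2\mid X\right]=\bE_{\beta_\star}\!\left[\inr{U_N,c}^2\mid X\right]-\bE_{\beta_\star}\!\left[\inr{U_N^*,c}^2\mid X\right].
\]

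Dividing by $\bE_{\beta_\star}[\inr{U_N^*,c}^2\mid X]$ turns the right-hand side into $\bE_{\beta_\star}[\inr{U_N,c}^2\mid X]\big/\bE_{\beta_\star}[\inr{U_N^*,c}^2\mid X]-1$, which converges to $0$ precisely by the hypothesis that the variance ratio tends to $1$. Combining this with the conditional Chebyshev bound above completes the argument. The proof is essentially routine; the only points requiring care are measure-theoretic rather than analytic, namely (i) verifying that $\inr{U_N^*,c}\,c$ genuinely belongs to $\cS$ so that the already-proven orthogonality applies — this relies on the square-integrability of $U_N$ and $m_\xi(Y)$ noted via Lemma~\ref{lem:derivatives} — and (ii) restricting to the full-measure set of $X$ on which the conditioning and the denominator are well behaved, so that the ``for almost all $X$'' qualifier is legitimate. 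I do not anticipate a substantive obstacle beyond bookkeeping these conditional-expectation manipulations, which parallel the classical H\'ajek projection lemma (Lemma~11.10 of \cite{MR1652247}) adapted to the conditional-on-$X$ setting.
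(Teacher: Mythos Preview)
Your proposal is correct and follows essentially the same approach as the paper: both reduce the claim via conditional Chebyshev to the Pythagorean identity $\bE_{\beta_\star}[(a_N-b_N)^2\mid X]=\bE_{\beta_\star}[a_N^2\mid X]-\bE_{\beta_\star}[b_N^2\mid X]$ and then use the variance-ratio hypothesis. The only cosmetic difference is that you derive the Pythagorean identity by invoking the already-established orthogonality of $U_N-U_N^*$ to $\cS$ with the test vector $Z=\inr{U_N^*,c}\,c$, whereas the paper recomputes the cross term directly; your route is slightly cleaner but equivalent.
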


\begin{proof}[Proof of Proposition~\ref{prop:projection_principl}]
We provide the  proof of Proposition~\ref{prop:projection_principl} for the sake of completeness. We denote $\bE_{\beta_\star} = \bE$. We first show that 
\begin{equation}\label{eq:pythagor_projection_hajek}
    \bE[\inr{U_N-U_N^*,c}^2|X] = \bE[\inr{U_N,c}^2|X] - \bE[\inr{U_N^*,c}^2|X].
\end{equation}We have 
\begin{align*}
    \bE\left[\inr{U_n,c}\inr{U_N^*,c}|X\right] = \sum_\bi \bE\left[\inr{U_N,c} \bE[\inr{U_N,c}|X,Y_\bi] |X\right]  = \sum_\bi \bE\left[ \left(\bE[\inr{U_N,c}|X,Y_\bi]\right)^2|X\right]
\end{align*}and, since $Y_\bi$ and $Y_\bj$ are independent conditionally on $X$ when $\bi\neq \bj$ and $\bE[U_N|X]=0$, we also have 
\begin{align*}
  \bE\left[\inr{U_n^*,c}\inr{U_N^*,c}|X\right] = \sum_{\bi,\bj} \bE\left[ \bE[\inr{U_N,c}|X, Y_\bj] \bE[\inr{U_N,c}|X,Y_\bi]|X\right] =  \sum_{\bi} \bE\left[ \left(\bE[\inr{U_N,c}|X,Y_\bi]\right)^2|X\right]. 
\end{align*}Hence, the latter two quantities are equal and so \eqref{eq:pythagor_projection_hajek} follows. We conclude the proof by using a second order Chebyshev's inequality and by controlling the second order moment with
\begin{equation*}
     \frac{\bE[\inr{U_N - U_N^*,c}^2|X]}{\bE\left[\inr{U_N^*,c}^2|X\right]} = \frac{\bE[\inr{U_N,c}^2|X] - \bE[\inr{U_N^*,c}^2|X]}{\bE[\inr{U_N^*,c}^2|X]}= 1- \frac{\bE[\inr{U_N^*,c}^2|X]}{\bE[\inr{U_N,c}^2|X]}.
 \end{equation*} 
\end{proof}

Following Proposition~\ref{prop:projection_principl}, our next step is to find asymptotic equivalents for $\bE[\inr{U_N,c}^2|X]$ and $\bE[\inr{U_N^*,c}^2|X]$ and show that they are the same. This will show that the condition $\bE[\inr{U_N,c}^2|X] / \bE[\inr{U_N^*,c}^2|X]\to 1$ of Proposition~\ref{prop:projection_principl} holds and so we will be able to apply Proposition~\ref{prop:projection_principl}.  
We have
\begin{align*}
    \bE_{\beta_\star}[\inr{U_N,c}^2|X] &= \frac{1}{N_a^2}\sum_{\xi, \xi^\prime} \bE_{\beta_\star}[\inr{\nabla \ell_{\xi}(\beta_\star),c} \inr{\nabla \ell_{\xi^\prime}(\beta_\star),c}|X] \\
    & = \frac{1}{N_a^2}\sum_{q=0}^{2^D}\sum_{\substack{\xi, \xi^\prime\\ |\cE(\xi)\cap\cE(\xi^\prime)|=q}} \bE_{\beta_\star}\left[\inr{\nabla \ell_{\xi}(\beta_\star),c}\inr{\nabla \ell_{\xi^\prime}(\beta_\star),c}|X\right].
\end{align*} If $\xi$ and $\xi^\prime$ are two polyads with no edge in common then $\nabla \ell_{\xi}(\beta_\star)$ and $\nabla \ell_{\xi^\prime}(\beta_\star)$ are independent conditionally on $X$ and so we have $\bE_{\beta_\star}\left[\inr{\nabla \ell_{\xi}(\beta_\star),c} \inr{\nabla \ell_{\xi^\prime}(\beta_\star),c}|X\right]=0$ in that case because $\bE_{\beta_\star}[\nabla \ell_{\xi}(\beta_\star)|X]=0$ by definition of $\beta_\star$. Moreover, the total number of edges in common of two given polyads lies in $\{0,1,2,2^2,\cdots, 2^D\}$. Hence, we have
\begin{align*}
    \bE_{\beta_\star}[\inr{U_N,c}^2|X] = \frac{1}{N_a^2}\sum_{\xi, \xi^\prime}\Big(&\bE_{\beta_\star}\left[\inr{\nabla \ell_{\xi}(\beta_\star),c}\inr{\nabla \ell_{\xi^\prime}(\beta_\star),c}|X\right]I\left(|\cE(\xi) \cap \cE(\xi^\prime)|=1\right)\\
    & + \sum_{r= 1}^D \bE_{\beta_\star}\left[\inr{\nabla \ell_{\xi}(\beta_\star),c}\inr{\nabla \ell_{\xi^\prime}(\beta_\star),c}|X\right]I\left(|\cE(\xi) \cap \cE(\xi^\prime)|=2^r\right)\Big).
\end{align*}Our aim is now to show that, under the condition of Theorem~\ref{theo:CLT_polyads_estim_Graham}, the dominating term in the sum above is the one with all couples of polyads sharing exactly one edge in common. To that end, we introduce the following notation: for all $q\in[2^D]$,
\begin{equation}\label{eq:Delta_q_N}
    \Delta_{q,N}(X) := \bbE_{\beta_\star}\left[\inr{\nabla \ell_{\bxi}(\beta_\star),c}\inr{\nabla \ell_{\bxi^\prime}(\beta_\star),c}I\left(|\cE(\bxi) \cap \cE(\bxi^\prime)|=q\right)|X\right]
\end{equation}where $\bxi$ and $\bxi^\prime$ are two independent random variables with values in the set of all polyads and uniformly distributed over this set and independent of $Y$ and $X$. Using these notation, we have 
\begin{equation*}
    \bE_{\beta_\star}[\inr{U_N,c}^2|X] = \frac{N^2}{N_a^2} \sum_{q=1}^{2^D} \Delta_{q,N}(X)  = T_1 + \frac{N^2}{N_a^2}\sum_{q=2}^{2^D} \Delta_{q,N}(X)
\end{equation*}where 
\begin{equation}\label{eq:T1}
    T_1 : = \frac{N^2}{N_a^2}\Delta_{1,N}(X)
\end{equation} is the term that we are proving now to be the dominant term in the variance of $\inr{U_N,c}$ as $n$ grows. It follows from Lemma~\ref{lem:Delta} and Assumption \ref{ass:second_order_geometry} that as $n$ grows to $\infty$, 
\begin{align*}
   \frac{N^2}{N_a^2}\sum_{q=2}^{2^D} \Delta_{q,N}(X) \lesssim  \frac{1}{N_a^2}\sum_{\substack{\xi, \xi^\prime\\ |\cE(\xi) \cap \cE(\xi^\prime)|\geq2}} \bP_{\beta_\star}\left[\xi \mbox{ and  } \xi^\prime \mbox{ are active}|X\right] |\inr{\tilde X_\xi,c}\inr{\tilde X_{\xi^\prime},c}| =  o (T_1).
\end{align*}

 Then, regarding the variance of the H{\'a}jek projection in direction $c$, we have
 \begin{equation*}
      \bE_{\beta_\star}[\inr{U_N^*,c}^2|X]  =  \sum_{\bi , \bi^\prime}  \bE_{\beta_\star} \left[ \bE_{\beta_\star} [\inr{U_N,c}|X, Y_\bi] \bE_{\beta_\star} [\inr{U_N,c}|X, Y_{\bi^\prime}] \big|X\right]. 
 \end{equation*}Since $Y_\bi$ and $Y_{\bi^\prime}$ are independent conditionally to $X$ when $\bi\neq \bi^\prime$, we deduce from Lemma~\ref{lem:simple_cond} and Lemma~\ref{lem:simple_vaart} that  
 \begin{align*}
   &\bE_{\beta_\star}  \left[ \bE_{\beta_\star} [\inr{U_N,c}|X, Y_\bi] \bE_{\beta_\star} [\inr{U_N,c}|X, Y_\bi^\prime] \big|X\right] =   \bE_{\beta_\star}  \left[ \bE_{\beta_\star} [\inr{U_N,c}|X, Y_\bi] \big|X\right] \bE_{\beta_\star}  \left[\bE_{\beta_\star} [\inr{U_N,c}|X, Y_\bi^\prime] \big|X\right]\\ 
   &= \left(\bE_{\beta_\star} [\inr{U_N,c}|X]\right)^2=0
 \end{align*}and so 
 \begin{equation*}
     \bE_{\beta_\star}[\inr{U_N^*,c}^2|X] = \sum_{\bi}  \bE_{\beta_\star} \left[ \left(\bE_{\beta_\star} [\inr{U_N,c}|X, Y_\bi] \right)^2 \big|X\right]. 
 \end{equation*}For all $\bi\in\cI$, we have
 \begin{equation*}
     \left(\bE_{\beta_\star} [\inr{U_N,c}|X, Y_\bi]\right)^2  
     = \frac{1}{N_a^2}\sum_{\xi, \xi^\prime} \bE[\inr{\nabla \ell_\xi(\beta_\star),c}|X, Y_\bi]  \bE[\inr{\nabla \ell_{\xi^\prime}(\beta_\star),c}|X, Y_\bi]
 \end{equation*}and so
 \begin{align*}
     \bE\left[\inr{U_N^*,c}^2|X\right] = \frac{1}{N_a^2}\sum_{\xi, \xi^\prime} \sum_{\bi}\bE\left[\bE[\inr{\nabla \ell_\xi(\beta_\star),c}|X, Y_\bi]  \bE[\inr{\nabla \ell_{\xi^\prime}(\beta_\star),c}|X, Y_\bi]|X\right]. 
 \end{align*}We first note that if $\bi$ is not an edge of $\xi$ then $\bE[\nabla \ell_\xi(\beta_\star)|X, Y_\bi]=0$ because in that case $\nabla \ell_\xi(\beta_\star)$ and $Y_\bi$ are independent conditionally to $X$ and we always have $\bE_{\beta_\star}[\nabla \ell_\xi(\beta_\star)|X] = 0$ by definition of $\beta_{\star}$.  We therefore obtain that 
  \begin{align}\label{eq:decomp_U_N_star}
 \nonumber   &\bE\left[\inr{U_N^*,c}^2|X\right] = \frac{1}{N_a^2}\sum_{\xi, \xi^\prime} \left[\sum_{\bi\in \cE(\xi)\cap \cE(\xi^\prime)} \bE\left[\bE[\inr{\nabla \ell_\xi(\beta_\star),c}| X, Y_\bi]  \bE[\inr{\nabla \ell_{\xi^\prime}(\beta_\star),c}| X, Y_\bi]|X\right]\right]\\
 & = \frac{1}{N_a^2} \sum_{q=1}^{2^D}\sum_{\xi, \xi^\prime} \sum_{\bi \in \cE(\xi)\cap \cE(\xi^\prime)} \bE\left[\bE[\inr{\nabla \ell_\xi(\beta_\star),c}|X, Y_\bi]  \bE[\inr{\nabla \ell_{\xi^\prime}(\beta_\star),c}|X, Y_\bi]|X\right] I\left(|\cE(\xi)\cap\cE(\xi^\prime)|=q\right).
  \end{align} Moreover, if $\xi$ and $\xi^\prime$ share exactly one edge, denoted by $\bi$, in common then  $\nabla \ell_\xi(\beta_\star)$ and $\nabla \ell_{\xi^\prime}(\beta_\star)$ are independent conditionally to $\sigma(X, Y_\bi)$, hence 
 \begin{align*}
        &\bE_{\beta_\star} \left[\bE\left[\inr{\nabla \ell_\xi(\beta_\star),c}| X, Y_\bi\right] \bE\left[\inr{\nabla \ell_{\xi^\prime}(\beta_\star),c}| X, Y_\bi\right]|X\right] =\bE\left[\bE\left[\inr{\nabla \ell_\xi(\beta_\star),c} \inr{\nabla \ell_{\xi^\prime}(\beta_\star),c}| X, Y_\bi\right]|X\right]\\ 
        &= \bE\left[\inr{\nabla \ell_\xi(\beta_\star),c} \inr{\nabla \ell_{\xi^\prime}(\beta_\star),c}|X\right] 
  \end{align*}and so for the '$q=1$' term  in \eqref{eq:decomp_U_N_star}, we get
  \begin{align*}
      &\frac{1}{N_a^2}\sum_{\xi, \xi^\prime} \sum_{\bi \in \cE(\xi)\cap \cE(\xi^\prime)} \bE\left[\bE[\inr{\nabla \ell_\xi(\beta_\star),c}|X, Y_\bi]  \bE[\inr{\nabla \ell_{\xi^\prime}(\beta_\star),c}|X, Y_\bi]|X\right] I\left(|\cE(\xi)\cap\cE(\xi^\prime)|=1\right)\\ 
      &= \frac{N^2}{N_a^2} \Delta_{1,N}(X) = T_1
  \end{align*}where $\Delta_{1,N}(X)$ has been introduced in \eqref{eq:Delta_q_N} and $T_1$ in \eqref{eq:T1}. This shows that in the variance decomposition of both $\inr{U_N,c}$ and $\inr{U_N^*,c}$ the term for $q=1$ (i.e. the term coming from all couples of polyads sharing exactly one edge in common) is the same given by $T_1$. The next step is to show, under Assumption \ref{ass:second_order_geometry}, that this term is also the dominant term in $\bE\left[\inr{U_N^*,c}^2|X\right]$ as $n$ grows.
It follows from Lemma~\ref{lem:Delta} that
\begin{align*}
&\frac{1}{N_a^2}\sum_{\substack{\xi, \xi^\prime\\ |\cE(\xi)\cap\cE(\xi^\prime)|\geq2}} \sum_{\bi \in \cE(\xi)\cap \cE(\xi^\prime)} \bE\left[\bE[\inr{\nabla \ell_\xi(\beta_\star),c}|X, Y_\bi]  \bE[\inr{\nabla \ell_{\xi^\prime}(\beta_\star),c}|X, Y_\bi]|X\right]  \\ 
&\lesssim \frac{1}{N_a^2}\sum_{\substack{\xi, \xi^\prime\\ |\cE(\xi)\cap\cE(\xi^\prime)|\geq2}} \sum_{\bi \in \cE(\xi)\cap \cE(\xi^\prime)} \bP_{\beta_\star}\left[\xi \mbox{ and  } \xi^\prime \mbox{ are active}|X\right]) |\inr{\tilde X_\xi,c}\inr{\tilde X_{\xi^\prime},c}|
\end{align*}and so we conclude that $T_1$ is the dominating term in the variance of $\inr{U_N^*,c}$ thanks to Assumption \ref{ass:second_order_geometry}.


We conclude that both $\bE\left[\inr{U_N,c}^2|X\right]$ and $\bE\left[\inr{U_N^*,c}^2|X\right]$ are asymptoticaly equivalent to $T_1$ and so their ratio tends to $1$ as $n$ grows. Then, it follows from Proposition~\ref{prop:projection_principl} that, as $n$ tends to infinity, 
\begin{equation}\label{eq:final_result_a_e_UN_UN_star}
    \frac{\inr{U_N,c}}{\sqrt{\bE_{\beta_\star}[\inr{U_N^*,c}^2|X]}} - \frac{\inr{U_N^*,c}}{\sqrt{\bE_{\beta_\star}[\inr{U_N^*,c}^2|X]}} \overset{\bP_{\beta_\star}[\cdot|X]}{\longrightarrow}0.
\end{equation}

\subsubsection{Asymptotic normality of the H{\'a}jek projection of the gradient} 
\label{sub:asymptotic_normality_of_the_h}
We recall that  $U^*_N$ is the H{\'a}jek projection of $U_N$ where
\begin{equation*}
    U_N^* := \sum_{\bi} \bE_{\beta_\star} [U_N|X, Y_\bi] \mbox{ and } U_N= \frac{1}{{N_a}}\sum_{\xi\in\Xi} \nabla \ell_\xi(\beta_\star). 
\end{equation*} Our aim, in this section, is to show a directional CLT for $U_N^*$ conditionally to $X$. Together with the asymptotic equivalence proved in \eqref{eq:final_result_a_e_UN_UN_star} this will prove a similar directional CLT for $U_N$. 



We know that if $\bi$ is not an edge of $\xi$ then $\bE[\nabla \ell_\xi(\beta_\star) | X, Y_\bi]=0$ by definition of $\beta_\star$. As a consequence, we have
\begin{equation}\label{eq:def_U_N_star_section_5_5}
    U_N^* = \frac{1}{{N_a}}\sum_{\bi} \sum_{\xi: \bi\in\cE(\xi)} \bE[\nabla \ell_\xi(\beta_\star) | X, Y_\bi] = \frac{1}{{N_a}} \sum_{\bi} \bar{s}_{\bi}
\end{equation}where, we recall that for all $\bi\in\cI$
$$
\bar{s}_{\bi} = \sum_{\xi:\bi\in\cE(\xi)} \bE[\nabla \ell_\xi(\beta_\star) | X, Y_\bi]. 
$$

All the point of working on $U_N^*$ and note directly on $U_N$ is that, conditionally on $X$, the $\bar{s}_{\bi}$'s are independent because $\bar{s}_\bi$ is $\sigma(X, Y_\bi)$-measurable and the edge weights $Y_{\bi}$'s are independent conditionally on $X$. To prove a.n. for $U_N^*$ we apply \cite{MR2294976} as in \cite{graham2017econometric}.

To simplify the exposition we denote by $\bE$ the expectation under $\bP_{\beta_\star, \theta}$, i.e. $\bE_{\beta_\star}$ and  $\alpha_{1,N} = \alpha_{1,N}(X)$. Let $c\in\bR^p$.  We first compute the variance of the H{\'a}jek projection using the notation from \eqref{eq:def_U_N_star_section_5_5}: since $\bar{s}_{\bi}$ and $\bar{s}_{\bi^\prime}$ are independent conditionally on $X$  when $\bi\neq \bi^\prime$ and $\bE[\bar s_\bi|X]=0$, we have
\begin{align}\label{eq:var_UN_star_equiv_Delta_N}
  \bE\left[\inr{U_N^*,c}^2|X\right] =  \frac{1}{N_a^2} \sum_{\bi, \bi^\prime} c^\top \bE\left[\bar{s}_{\bi} \bar{s}_{\bi^\prime}^\top|X\right] c = \frac{1}{{N_a}^2} \sum_{\bi} c^\top \bE\left[\bar{s}_{\bi} \bar{s}_{\bi}^\top|X\right] c= \frac{n \alpha_{1,N}}{{N_a}^2} c^\top\tilde \Delta_N c
\end{align} where $n = \prod_d n_d$ and
\begin{equation*}
   \tilde \Delta_N  = \frac{1}{n}\sum_{\bi} \tilde \Delta_{\bi} \mbox{ for }   \tilde\Delta_{\bi} = \alpha_{1, N}^{-1}\bE\left[\bar{s}_{\bi} \bar{s}_{\bi}^\top| X\right].
\end{equation*} Therefore, we want to show that for almost all $X$, conditionally on $X$,
\begin{equation*}
\frac{{N_a}}{\sqrt{c^\top\tilde \Delta_N c}}\inr{\frac{1}{\sqrt{n\alpha_{1, N}}} U_N^*,c} = 
    \frac{1}{\sqrt{c^\top\tilde \Delta_N c}}\inr{\frac{1}{\sqrt{n\alpha_{1, N}}}\sum_{\bi}\bar{s}_{\bi},c}\overset{d}{\to} \cN(0, 1).
\end{equation*}

For all $\bi$, we set
\begin{equation*}
    R_\bi = \frac{1}{\sqrt{\alpha_{1, N}}}\frac{\inr{\bar{s}_\bi, c}}{\sqrt{c^\top\tilde \Delta_N c}}.
\end{equation*}Our aim is to show that for almost all $X$, conditionally on $X$, $n^{-1/2} \sum_{\bi} R_\bi \overset{d}{\to} \cN(0,1)$ as $n\to+\infty$.

Let us first start with the computation of the conditional expectation and variance of $R_\bi$. Since $\tilde \Delta_{N}$ is $\sigma(X)$-measurable, we have
\begin{equation*}
     \bE[R_{\bi}|X] = \frac{1}{\sqrt{\alpha_{1, N}}}\frac{\inr{\bE[\bar{s}_\bi|X], c}}{\sqrt{c^\top\tilde \Delta_N c}}=0
 \end{equation*} because $\bE[\bar{s}_\bi|X]=0$ since $\bE[\nabla \ell_\xi(\beta_\star)|X]=0$ for all polyads $\xi$. To compute the variance of $R_\bi$ conditionally on $X$, we compute its second moment and since $\tilde \Delta_{N}$ is $\sigma(X)$-measurable, it only depends on the second moment of $\inr{\bar{s}_\bi, c}$ conditionally on $X$, which is the quantity appearing in the asymptotic covariance matrix from Theorem~\ref{theo:CLT_polyads_estim_Graham}:
 \begin{align*}
      \bE\left[R_\bi^2|X\right] = \frac{c^\top\tilde \Delta_\bi c}{c^\top\tilde \Delta_N c}:= \sigma_\bi^2.
  \end{align*} 


 Now, as in \cite{MR2294976} and \cite{graham2017econometric}, we consider independent Gaussian variables with the same first and second moments as the $R_\bi$'s: let $(G_\bi)_\bi$ be independent Gaussian variable with mean $0$ and variances $(\sigma_\bi^2)_\bi$. In particular, we observe that, conditionally on $X$, $n^{-1/2} \sum_\bi G_\bi$ is a $\cN(0,1)$ random variable since $n^{-1}\sum_\bi \sigma_\bi^2=1$. As a consequence, in order to show that $n^{-1/2} \sum_{\bi} R_\bi \overset{d}{\to} \cN(0,1)$ conditionally on $X$, it is enough to show that for all functions $f\in\cC^3$ so that $\norm{f^{(r)}}_{\infty}\leq L$ for all $r=0, 1, 2, 3$, we have
 \begin{equation}
     \label{eq:our_aim_CLT_chatterjee}
     \bE\left[f\left(\frac{1}{\sqrt{n}}\sum_{\bi} R_\bi\right) - f\left(\frac{1}{\sqrt{n}}\sum_{\bi} G_\bi\right) \bigg| X\right]\to 0.
 \end{equation}

 Let $f\in\cC^3$ so that $\norm{f^{(r)}}_{\infty}\leq L$ for all $r=0, 1, 2, 3$. Our aim is to show that \eqref{eq:our_aim_CLT_chatterjee} holds. As in \cite{MR2294976} and \cite{graham2017econometric}, we consider
 \begin{equation*}
     Z_\bi = (R_1, R_2, \ldots, R_\bi, G_{\bi+1}, \ldots, G_{n})\mbox{ and }
     Z_\bi = (R_1, R_2, \ldots, R_{\bi-1},0, G_{\bi+1}, \ldots, G_{n})
 \end{equation*}where $n = \prod_d n_d = |\cI|$ is the total number of possible edges. Using a second order Taylor approximation of $f$, we get, as $n$ tends to infinity,
 \begin{align*}
  &f\left(\frac{1}{\sqrt{n}}\sum_{\bi} R_\bi\right) - f\left(\frac{1}{\sqrt{n}}\sum_{\bi} G_\bi\right)  = \sum_{\bi=1}^{n} f\left(\frac{1}{\sqrt{n}}\inr{Z_\bi, \1} \right) - f\left(\frac{1}{\sqrt{n}}\inr{Z_{\bi-1}, \1}\right) \\
  & =  \sum_{\bi=1}^{n} f\left(\frac{1}{\sqrt{n}}\inr{Z_\bi^0, \1} + \frac{R_\bi}{\sqrt{n}} \right) - f\left(\frac{1}{\sqrt{n}}\inr{Z_{\bi}^0, \1} + \frac{G_\bi}{\sqrt{n}}\right) \\
  & = \sum_{\bi=1}^n \left(\frac{R_\bi}{\sqrt{n}} - \frac{G_\bi}{\sqrt{n}}\right)f^\prime\left(\frac{1}{\sqrt{n}}\inr{Z_{\bi}^0, \1}\right) + \frac{1}{2}\left(\frac{R_\bi^2}{n} - \frac{G_\bi^2}{n}\right)f^{\prime\prime}\left(\frac{1}{\sqrt{n}}\inr{Z_{\bi}^0, \1}\right) + \cO\left(\frac{|R_\bi|^3}{n^{3/2}} + \frac{|G_\bi|^3}{n^{3/2}}\right)
 \end{align*}where $\1=(1)_{1}^n$. Next, we observe that $Z_\bi^0$ is independent of $R_\bi$ and $G_\bi$ conditionally on $X$ and since $R_\bi$ and $G_\bi$ have the same first and second moments conditionally on $X$, we obtain that, as $n$ tends to infinity
\begin{equation}\label{eq:UB_R_3_Y_3}
    \left| \bE\left[f\left(\frac{1}{\sqrt{n}}\sum_{\bi} R_\bi\right) - f\left(\frac{1}{\sqrt{n}}\sum_{\bi} G_\bi\right) \bigg| X\right] \right| = \cO\left(\frac{1}{n^{3/2}} \sum_\bi \bE\left[|R_\bi|^3 + |G_\bi|^3|X\right]\right).
\end{equation}Next, we show that the right-hand side from \eqref{eq:UB_R_3_Y_3} goes to $0$ as $n$ goes to infinity. First, we observe that for all $\bi$, $\bE[|G_\bi|^3|X] \lesssim \left(\bE[|G_\bi|^2|X]\right)^{3/2} = \left(\bE[R_\bi^2|X]\right)^{3/2}\leq \bE[|R_\bi|^3|X]$. Hence, we only need to show that $(1/n)\sum_\bi \bE\left[|R_\bi|^3|X\right] = o(n^{1/2})$. We have
\begin{equation*}
    \frac{1}{n}\sum_{\bi} \bE[|R_\bi|^3|X] = \bE\left[ \frac{\frac{1}{n}\sum_\bi |Z_\bi|^3 }{\left(\frac{1}{n}\sum_\bi|Z_\bi|^2\right)^{3/2}}\bigg| X\right] \mbox{ where  } Z_\bi = \inr{\bar s_\bi,c}.
\end{equation*}We note that $Z_\bi$'s are independent conditionally to $X$ because $Z_\bi$ is $\sigma(X, Y_\bi)$-measurable and the $Y_\bi$'s are independent conditionally on $X$.  Then, we apply Proposition~\ref{prop:ratio_L3_L2_V0} to $X_{\bi,n} = Z_\bi$ --- thanks to Assumption~\ref{ass:third_order_and_non_zero_variance}, the condition~\eqref{eq:weak_assum_for_L3_L2} required to apply Proposition~\ref{prop:ratio_L3_L2_V0} is satisfied.  It then follows from Proposition~\ref{prop:ratio_L3_L2_V0}, for $S_{k,n}=\frac{1}{n}\sum_\bi \bE[|Z_\bi|^k|X]$, that when $n\to+\infty$,
\begin{equation*}
  \frac1n\sum_\bi \bE\left[|R_\bi|^3|X\right]  =   \bE\left[ \frac{\frac{1}{n}\sum_\bi |Z_\bi|^3 }{\left(\frac{1}{n}\sum_\bi|Z_\bi|^2\right)^{3/2}}\bigg| X\right] =  o(\sqrt{n}).
\end{equation*}

 In the end, we showed that as $n$ goes to infinity,
 \begin{equation}\label{eq:cv_result_CLT_Hajek_proj}
    \bE\left[f\left(\frac{1}{\sqrt{n}}\sum_{\bi} R_\bi\right)\bigg|X\right] \to \bE f(g) 
 \end{equation}where $g\sim \cN(0,1)$ for all $f\in\cC^3$ so that $\norm{f^{(r)}}_{\infty}\leq L$ for all $r=0, 1, 2, 3$. We conclude that
 for almost all $X$, conditionally on $X$, as $n$ tends to infinity, 
 \begin{equation*}
 \frac{\inr{U_N^*,c}}{\sqrt{\bE_{\beta_\star}[\inr{U_N^*,c}^2|X]}} =    \frac{{N_a}}{\sqrt{c^\top\tilde \Delta_N c}}\inr{\frac{1}{\sqrt{n\alpha_{1, N}}} U_N^*,c} =  \frac{1}{\sqrt{n}}\sum_{\bi} R_\bi \overset{d}{\to} \cN(0,1) 
 \end{equation*}and so, it follows from \eqref{eq:final_result_a_e_UN_UN_star} and Slutsky's lemma that the same result holds for the gradient itself: 
  \begin{equation}
     \label{eq:final_CLT}
    \frac{{N_a}}{\sqrt{c^\top\tilde \Delta_N c}}\inr{\frac{1}{\sqrt{n\alpha_{1, N}}} U_N,c} \overset{d}{\to} \cN(0,1) 
 \end{equation}where this convergence holds w.r.t. $\bP_{\beta_\star}[\cdot|X]$.

\subsubsection{Uniform convergence of the Hessian over the compact set $B_2(\beta_\star, 1)$} 
 \label{sub:uniform_convergence_of_the_hessian_over_compact_sets}
 The final ingredient needed to apply Theorem~\ref{theo:general_directional_CLT} is the uniform convergence in probability over the compact set $K:=B_2(\beta_\star, 1) =\{\beta:\norm{\beta-\beta_\star}_2\leq 1\}$ of the Hessian matrices of the loss functions (it is item~3. from Theorem~\ref{theo:general_CLT} for $\eps_0=1/2$). It is the aim of this section to check the uniform convergence (conditionally on $X$) $\sup_{\beta\in K}\norm{\nabla^2 \hat Q_N(\beta) - \nabla^2 Q_N(\beta)}_{op}\ove{\to}{p}0$ where, we recall that for all $\beta\in\bR^p$,
\begin{equation*}
    \hat Q_N(\beta) = \frac{1}{{N_a}}L(Y | X, \beta)  \mbox{ and } Q_N(\beta) = \bE_{\beta_\star}[\hat Q_N(\beta)|X].
\end{equation*}
 
In \cite{graham2017econometric,jochmans2018semiparametric}, the authors apply Lemma~2.9 from \cite{newey1994large} to prove the uniform convergence of the sequence of Hessian matrices of the loss function. However, the latter result requires the existence of a limit risk function that we don't have. We therefore apply Corollary~2.2 from \cite{newey91}. We recall this result adapted to our setup (we work with Hessian functions with values in $\bR^{p\times p}$ unlike \cite{newey91}).

\begin{proposition}[Corollary~2.2 in \cite{newey91}]
    Let $(\hat H_n)_n$ (resp. $(H_n)_n$) be a sequence of random (resp. deterministic) functions with values in $\bR^{p\times p}$ defined on a compact set $K$ of $\bR^p$. We assume that:
    \begin{itemize}
          \item for all $\beta\in K$, $\hat H_n(\beta) - H_n(\beta) \overset{p}{\to} 0$;
          \item there exists $(B_n)_n$ such that $B_n=\cO_p(1)$ and for all $\beta_0, \beta_1\in K, \norm{\hat H_n(\beta_0) - \hat H_n(\beta_1)}_{op}\leq B_n \norm{\beta_0 - \beta_1}_2$;
          \item $(H_n)_n$ is equicontinuous.
      \end{itemize} Then, we have $\sup_{\beta\in K} \norm{\hat H_n(\beta) - H_n(\beta)}_{op} \overset{p}{\to} 0$. 
\end{proposition}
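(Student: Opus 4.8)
The plan is to run the classical net argument that upgrades pointwise convergence to uniform convergence over a compact set, letting the Lipschitz hypothesis absorb the oscillation of the random $\hat H_n$ and the equicontinuity absorb that of the deterministic $H_n$. I would fix $\eps>0$ and $\eta>0$ and aim to show $\limsup_n \bP\big(\sup_{\beta\in K}\norm{\hat H_n(\beta)-H_n(\beta)}_{op}>\eps\big)\leq\eta$; since $\eta$ is then arbitrary this yields the conclusion. The crucial first step is to choose the constants in the correct order. Because $B_n=\cO_p(1)$, I would first pick $C>0$ with $\limsup_n \bP(B_n>C)\leq\eta$. Only then, using the equicontinuity of $(H_n)_n$, would I pick $\delta_1>0$ such that $\norm{\beta-\beta'}_2\leq\delta_1$ forces $\norm{H_n(\beta)-H_n(\beta')}_{op}\leq\eps/3$ for every $n$. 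Finally I would set $\delta:=\min\{\delta_1,\ \eps/(3C)\}$ and, invoking the compactness of $K$, extract a finite $\delta$-net $\beta_1,\dots,\beta_M\in K$ so that each $\beta\in K$ lies within distance $\delta$ of some net point.

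The second step is the standard triangle-inequality split. For $\beta\in K$ with nearest net point $\beta_j$ I would write
\[
\norm{\hat H_n(\beta)-H_n(\beta)}_{op}\leq \norm{\hat H_n(\beta)-\hat H_n(\beta_j)}_{op} + \norm{\hat H_n(\beta_j)-H_n(\beta_j)}_{op} + \norm{H_n(\beta_j)-H_n(\beta)}_{op},
\]
bounding the first summand by $B_n\delta$ through the stochastic Lipschitz hypothesis, the third by $\eps/3$ through equicontinuity, and leaving the middle summand to be controlled at the finite net. Consequently, on the event
\[
\Omega_n := \{B_n\leq C\}\cap\Big\{\max_{1\leq j\leq M}\norm{\hat H_n(\beta_j)-H_n(\beta_j)}_{op}\leq \eps/3\Big\},
\]
every summand is at most $\eps/3$ (the first because $B_n\delta\leq C\cdot\eps/(3C)=\eps/3$), whence $\sup_{\beta\in K}\norm{\hat H_n(\beta)-H_n(\beta)}_{op}\leq \eps$.

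The final step is to bound $\bP(\Omega_n^c)$. The pointwise convergence $\hat H_n(\beta_j)-H_n(\beta_j)\overset{p}{\to}0$ holds at each of the \emph{finitely many} net points, so a union bound gives $\bP\big(\max_j\norm{\hat H_n(\beta_j)-H_n(\beta_j)}_{op}>\eps/3\big)\to0$, and therefore
\[
\limsup_n \bP\Big(\sup_{\beta\in K}\norm{\hat H_n(\beta)-H_n(\beta)}_{op}>\eps\Big)\leq \limsup_n \bP(B_n>C) + \lim_n \bP\Big(\max_j\norm{\hat H_n(\beta_j)-H_n(\beta_j)}_{op}>\eps/3\Big)\leq \eta.
\]
I do not expect any genuine obstacle here, since the statement is a soft real-analysis fact; the only point requiring care is precisely the sequencing of constants described above. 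One must fix $C$ first, from the $\cO_p(1)$ control of $B_n$ and depending only on $\eta$, and only afterwards shrink the net radius $\delta$ below both the equicontinuity modulus $\delta_1$ and the threshold $\eps/(3C)$, so that the stochastic Lipschitz term $B_n\delta$ and the deterministic equicontinuity term are dominated simultaneously; the finiteness of the net then lets the pointwise convergence finish the argument.
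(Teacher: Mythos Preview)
Your proof is correct: the $\delta$-net argument with the constants chosen in the order you describe (first $C$ from $B_n=\cO_p(1)$, then $\delta$ small enough for both the equicontinuity modulus and the Lipschitz bound) is exactly the right way to do this, and the union bound over the finite net closes the argument cleanly.

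The paper takes a different, much shorter route: it does not reprove the result but instead reduces the matrix-valued statement to the scalar-valued Corollary~2.2 of Newey (1991) by writing
\[
\sup_{\beta\in K}\norm{\hat H_n(\beta)-H_n(\beta)}_{op}=\sup_{(\beta,A)\in K\times B}\bigl|\inr{A,\hat H_n(\beta)}-\inr{A,H_n(\beta)}\bigr|,
\]
where $B$ is the (compact) dual ball of the operator norm in $\bR^{p\times p}$, and then invokes the cited scalar result on the compact set $K\times B$. What you do differently is give a self-contained proof that does not rely on the external reference; in effect you are reproving Newey's corollary directly in the matrix setting. Your approach buys independence from the citation and makes the mechanism transparent; the paper's approach buys brevity at the cost of deferring the actual work to the literature.
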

\begin{proof}
   The proof is a straightforward application of the one-dimensional result given in  Corollary~2.2 in \cite{newey91} since one can write 
   \begin{equation*}
       \sup_{\beta\in K} \norm{\hat H_n(\beta) - H_n(\beta)}_{op} = \sup_{(\beta,A)\in K\times B}|\inr{A,\hat H_n(\beta)} - \inr{A, H_n(\beta)}|  
   \end{equation*}where $B$ is the dual (compact) ball of the operator norm in $\bR^{p\times p}$.
\end{proof}

The equicontinuity of the family of Hessian matrix of the risk functions is assumed in Assumption~\ref{ass:technical_CLT}. We therefore, only have to show the Lipshitz property of the Hessian matrices of the loss functions and their point wise convergence in probability, i.e.:
\begin{itemize}
     \item[(a)] for all $\beta_0, \beta_1\in K$, $\norm{\nabla^2 \hat Q_N(\beta_0) - \nabla^2 \hat Q_N(\beta_1)}_{op} = O_p(1)\norm{\beta_1 - \beta_0}_2$
     \item[(b)] for all $\beta\in K$,  $\nabla^2 \hat Q_N(\beta) - \nabla^2 Q_N(\beta)\overset{p}{\to} 0$
 \end{itemize} 

We start with the point-wise convergence in probability over $K$. To prove this result it is enough to show that all $p^2$ entries of the matrix $\nabla^2 \hat Q_N(\beta) - \nabla^2 Q_N(\beta)$ tend to zero in probability. Let $\beta\in\bR^p$ and $(i,j)\in[p]^2$. It follows from Lemma~\ref{lem:derivatives} that
\begin{align*}
    \left(\nabla^2 \hat Q_N(\beta) - \nabla^2 Q_N(\beta)\right)_{ij} &= \frac{1}{{N_a}} \sum_{\xi}\left( \mathbb{V}_\beta\left[ 
m_\xi(Y^\prime) |X,  Y^\prime \in \sO_\xi(Y) \right] -  \bE_{\beta_\star}\left[ \mathbb{V}_\beta\left[ 
m_\xi(Y^\prime) |X,  Y^\prime \in \sO_\xi(Y) \right]\right]\right)(\tilde X_\xi)_i (\tilde X_\xi^\top)_j 
\end{align*}where $Y^\prime$ and $Y$ are independent random variables with values in $\bN^\cI$ distributed according to $\bP_{\beta, \theta}$ and $\bP_{\beta_\star, \theta}$ respectively. It follows from Chebyshev's inequality that we only have to show that the second moment of $\left(\nabla^2 \hat Q_N(\beta) - \nabla^2 Q_N(\beta)\right)_{ij}$ tends to $0$. As in the proof of consistency, we use that two polyads having no edge in common are independent so that we get
\begin{align*}
&\bE_{\beta_\star}\left[\left(\nabla^2 \hat Q_N(\beta) - \nabla^2 Q_N(\beta)\right)_{ij}^2\bigg| X\right]\\
 &\bE_{\beta_\star}\left[\left(\frac{1}{{N_a}} \sum_{\xi}\left( \mathbb{V}_\beta\left[ 
m_\xi(Y^\prime) |X,  Y^\prime \in \sO_\xi(Y) \right] -  \bE_{\beta_\star}\left[ \mathbb{V}_\beta\left[ 
m_\xi(Y^\prime) |X,  Y^\prime \in \sO_\xi(Y) \right]\right]\right)(\tilde X_\xi)_i (\tilde X_\xi^\top)_j\right)^2 \bigg|X\right]  \\
&= \frac{1}{{N_a}^2}\sum_{\substack{\xi, \xi^\prime\in\Xi\\\cE(\xi)\cap \cE(\xi^\prime)\neq \emptyset}} \bE_{\beta_\star}\left[\left(V_\xi(Y) - \bE_{\beta_\star}V_\xi(Y)\right) \left(V_{\xi^\prime}(Y) - \bE_{\beta_\star}V_{\xi^\prime}(Y)\right) |X \right](\tilde X_\xi)_i (\tilde X_\xi^\top)_j  (\tilde X_{\xi^\prime})_i (\tilde X_{\xi^\prime}^\top)_j 
\end{align*}where we denote for all $y$, $V_\xi(y) = \mathbb{V}_\beta\left[m_\xi(Y^\prime) |X,  Y^\prime \in \sO_\xi(y) \right]$. Let us now take a closer look at the correlation terms in the sum above: let $\xi$ and $\xi^\prime$ be two polyads, we have
\begin{align*}
    \bE_{\beta_\star}\left[\left(V_\xi(Y) - \bE_{\beta_\star}V_\xi(Y)\right) \left(V_{\xi^\prime}(Y) - \bE_{\beta_\star}V_{\xi^\prime}(Y)\right) |X \right]\leq \bE_{\beta_\star}\left[V_\xi(Y)V_{\xi^\prime}(Y) |X \right]
\end{align*}because $V_\xi(Y)\geq 0$ a.s.. Next, we have $m_\xi(y^\prime)\leq m_\xi(Y) + M_\xi(Y)$ for all $y^\prime\in \cO_\xi(Y)$ and so $V_\xi(Y)\leq m_\xi(Y) + M_\xi(Y)$. Hence, it follows from Lemma~\ref{lem:moment_min_Poisson} that
\begin{equation}\label{eq:bound_correlation_variance_for_Hessian_cv}
    \bE_{\beta_\star}\left[V_\xi(Y)V_{\xi^\prime}(Y) |X \right]\leq \bE_{\beta_\star}\left[(m_\xi(Y) + M_\xi(Y))(m_{\xi^\prime}(Y) + M_{\xi^\prime}(Y))|X \right]  \leq c_0 (\bar\lambda+1)^2 \bP_{\beta_\star}\left[ \xi \mbox{ and } \xi^\prime \mbox{ are active}|X \right]
\end{equation}and so under Assumption \ref{ass:first_order_geometry}, we have $\nabla^2 \hat Q_N(\beta) - \nabla^2 Q_N(\beta)\overset{p}{\to} 0$.

Next, we move to the Lipschitz property over $K=B_2(\beta_\star, 1)$ of the Hessian matrices of the loss functions from point \textit{(a)} above. Let $\beta_0, \beta_1\in K$. 
We compute the third order derivative of the loss functions to prove the Lipschitz property of the Hessian. For all $s\in\bN^*, y=(y_\bi)_\bi$ and $\beta\in\bR^p$, we let 
\begin{equation*}
   \kappa_{s, \xi}(\beta, y) := \bE_\beta[ (m_\xi(Y^\prime) - m_\xi(y))^s| X, Y^\prime \in\cO_\xi(y)] = \frac{\sum_{r=-m_\xi(y)}^{M_\xi(y)} r^s \exp(a_r+r \inr{\widetilde{X}_\xi, \beta})}{\sum_{R=-m_\xi(y)}^{M_\xi(y)}\exp(a_R+R \inr{\widetilde{X}_\xi, \beta})}  
\end{equation*}where the last inequality follows from the change of variable $m=r+m_\xi(y)$ and the result on the distribution of $m_\xi(Y)$ given $X$ and $Y\in\cO_\xi(y)$ from Lemma~\ref{lem:derivatives}. Next, it follows from \eqref{eq:computation_Hessian} that for all $\beta$:
\begin{equation*}
    \nabla^3_\beta \ell_\xi(y|X, \beta)  = \left(\kappa_3(\beta, y) - 3\kappa_2(\beta, y)\kappa_1(\beta, y) +2 \kappa_1(\beta, y)^3\right) \widetilde{X}_\xi \widetilde{X}_\xi^\top \widetilde{X}_\xi^\top
\end{equation*}where for any vector $u\in\bR^p$, $uu^\top u^\top$ is the linear operator $h\in\bR^p\to \inr{u,h} uu^\top \in  \bR^{p\times p}$. As a consequence, if we denote by $B_2^p = \{x\in\bR^p:\norm{x}_2\leq 1\}$,  it follows from a Taylor expansion that for all $\beta_0, \beta_1\in\bR^p$,
\begin{align*}
    &\norm{\nabla^2 \hat Q_N(\beta_0) - \nabla^2 \hat Q_N(\beta_1)}_{op}  =  \sup_{x \in B_2^p}|\inr{xx^\top, \nabla^2 \hat Q_N(\beta_0) } - \inr{xx^\top, \nabla^2 \hat Q_N(\beta_1)}|\\  
    &\leq \sup_{x \in B_2^p, \bar\beta\in[\beta_0, \beta_1]} \left|\inr{xx^\top, \frac{1}{{N_a}}\sum_{\xi}\nabla^3_\beta \ell_\xi(y|X, \bar\beta)(\beta_0 - \beta_1)}\right| \\  
    &\leq \sup_{x \in B_2^p, \bar\beta\in[\beta_0, \beta_1]}  \frac{1}{{N_a}}\sum_{\xi} |\inr{xx^\top, \widetilde{X}_\xi \widetilde{X}_\xi^\top}| \left|\kappa_{3, \xi}(\bar\beta, Y) - 3\kappa_{2, \xi}(\bar\beta, Y)\kappa_{1, \xi}(\bar\beta, Y) +2 \kappa_{1, \xi}^3(\bar\beta, Y) \right| \norm{\widetilde{X}_\xi}_2\norm{\beta_0 - \beta_1}_2.
\end{align*}Next, we use that $\inr{xx^\top, \widetilde{X}_\xi \widetilde{X}_\xi^\top} = \inr{x, \widetilde{X}_\xi}^2\leq \norm{\widetilde{X}_\xi}_2^2$ to get
\begin{align*}
   &\norm{\nabla^2 \hat Q_N(\beta_0) - \nabla^2 \hat Q_N(\beta_1)}_{op}\\  
   &\leq \max_{\xi\in\Xi} \norm{\widetilde{X}_\xi}_2^3 \max_{\beta\in K}\left[\frac{1}{{N_a}}\sum_{\xi\in \Xi}\left|\kappa_{3, \xi}(\beta, Y) - 3\kappa_{2, \xi}(\beta, Y)\kappa_{1, \xi}(\beta, Y) +2 \kappa_{1, \xi}^3(\beta, Y) \right| \right] \norm{\beta_0 - \beta_1}_2.
\end{align*} Next, we show that for all $k=1, 2,3$, we have
\begin{equation}\label{eq:uniformly_bounded_terms_for_Hessian_conv}
    \sup_{\beta\in K}\frac{1}{{N_a}}\sum_{\xi\in \Xi}|\kappa_{k, \xi}(\beta, Y)| = \cO_{\bP_{\beta_\star}[\cdot|X]}(1). 
\end{equation}It follows from Lemma~\ref{lem:m_M_orbits} that if $Y^\prime \in\cO_\xi(Y)$ then $|m_\xi(Y^\prime) - m_\xi(Y)|\leq m_\xi(Y) + M_\xi(Y) = |\cO_\xi(Y)|$. As a consequence, we have for all $\beta$ and $\xi$,
\begin{equation*}
    |\kappa_{k, \xi}(\beta, Y)| \leq  \bE_\beta[ |m_\xi(Y^\prime) - m_\xi(Y)|^k| X, Y^\prime \in\cO_\xi(Y), Y]\leq |\cO_\xi(Y)|^k
\end{equation*} and so we have 
\begin{equation*}
    \sup_{\beta\in K}\frac{1}{{N_a}}\sum_{\xi\in \Xi}|\kappa_{k, \xi}(\beta, Y)| \leq \frac{1}{{N_a}}\sum_{\xi\in \Xi} |\cO_\xi(Y)|^k
\end{equation*}By Chebyshev, we only need to show that for all  $k=1, 2,3, \sum_{\xi\in \Xi} \bE_{\beta_\star}[|\cO_\xi(Y)|^k|X] = \cO(N_a)$ as $n$ tends to $\infty$. This result follows from Lemma~\ref{lem:moment_min_Poisson} since we have
\begin{align*}
   &\bE_{\beta_\star}[|\cO_\xi(Y)|^k|X]  \leq 2^{k-1}\bE_{\beta_\star}\left[|m_\xi(Y)|^k + |M_\xi(Y)|^k |X\right] \\ 
   &\leq 2^{k-1} C_{1} \left(\bP_{\beta_\star}[m_\xi(Y)\geq 1|X] + \bP_{\beta_\star}[M_\xi(Y)\geq 1|X]\right)\leq 2^{k-1} C_{1}\bP_{\beta_\star}[\xi \mbox{ is active}|X]
\end{align*}where $C_{1}$ is the constant appearing in Lemma~\ref{lem:moment_min_Poisson} for $k=3$. Next, since we assumed the intensity of the $Y_\bi$'s to be uniformly bounded from above we obtain the Lipschitz property of the Hessian matrices of the loss functions.

\subsubsection{Final step to the proof of Theorem~\ref{theo:CLT_polyads_estim_Graham}: apply Theorem~\ref{theo:general_directional_CLT}} 
\label{sub:final_step_apply_theorem_theo:general_clt}
The statements in this section hold for almost all $X$, contionally on $X$. Both convergence in distribution and in probability are therefore given w.r.t. to the  probability distribution $\bP_{\beta_\star}$ conditionally on $X$ that we denote by $\bP_{\beta_\star}[\cdot|X]$. Let us now gather all the results we obtained previously and apply Theorem~\ref{theo:general_directional_CLT}. Let $c\in\bR^p$.

We obtained in \eqref{eq:final_CLT} (applied to $c = \Gamma^{-1}c$) that 
\begin{equation}\label{eq:final_conv_gradient}
    \frac{{N_a}}{\sqrt{n\alpha_{1, N}}}\frac{\inr{ \nabla \hat Q_N(\beta_\star), \Gamma^{-1} c}}{\sqrt{c^\top\Gamma^{-1}\tilde \Delta_N \Gamma^{-1}c}}  \overset{d}{\to} \cN(0,1).
\end{equation}As a consequence, the directional CLT for the gradient granted in \textit{item~$4^\prime$} of Theorem~\ref{theo:general_directional_CLT} is satisfied with $\hat V_n$ defined as the deterministic matrix
\begin{equation*}
    \hat V_n := \frac{n \alpha_{1,N}}{{N_a}} \tilde \Delta_N  = \frac{1}{N_a} \sum_i \bE\left[\bar{s}_{\bi}\bar{s}_{\bi}^\top | X\right]
\end{equation*}and where $N_a$ plays the role of the number of data (called $n$ in Theorem~\ref{theo:general_directional_CLT}).

In order to apply Theorem~\ref{theo:general_directional_CLT}, we need to check that there exists some absolute constant $c_0>0$ and $n_0$ such  that for all $n\geq n_0$,
\begin{equation}\label{eq:assumption_on_smallest_sing_var}
    \norm{\hat V_n^{1/2}\Gamma^{-1}c}_2 = \sqrt{\frac{ n\alpha_{1,N}}{{N_a}}} \norm{\tilde \Delta_N^{1/2} \Gamma^{-1}c}_2\geq c_0.
\end{equation}In \eqref{eq:var_UN_star_equiv_Delta_N}, we proved that
\begin{equation*}
  \bE\left[\inr{U_N^*,\Gamma^{-1} c}^2|X\right] = \frac{n \alpha_{1,N}}{{N_a}^2} \norm{\tilde \Delta_N^{1/2} \Gamma^{-1}c}_2^2 = \frac{1}{N_a}\norm{\hat V_n^{1/2} \Gamma^{-1}c}_2^2
\end{equation*} and we showed in Section~\ref{sub:hajeck} that $\bE\left[\inr{U_N^*,\Gamma^{-1} c}^2|X\right] $ is asymptotically equivalent to $T_1$ defined in \eqref{eq:T1} as $T_1 = N^2 \Delta_{1,N}(X)/N_a^2$ where $\Delta_{1,N}$ is defined in \eqref{eq:Delta_q_N} where $c$ is taken equal to $\Gamma^{-1} c$ in this equation.  Finally, under Assumption \ref{ass:second_order_geometry}, we have $N_a T_1$ which is lower bounded by an absolute constant for $n$ large enough and so \eqref{eq:assumption_on_smallest_sing_var} follows.

We apply Theorem~\ref{theo:general_directional_CLT} to the convex (random) loss functions $\hat Q_N: \beta\in\bR^p \to (1/{N_a}) L(Y|X,\beta)$ and its associated risk function $\beta \to Q_N(\beta) = \bE_{\beta_\star}[\hat Q_N(\beta)|X]$. 

It follows from Lemma~\ref{lem:derivatives} that for $V_\xi(y)= \bV_{\beta_\star}\left(m_\xi(Y^\prime)|X, Y^\prime\in \cO_\xi(y)\right)$, the Hessian of the risk function is 
\begin{equation*}
    \nabla^2 Q_N(\beta) = \frac{1}{N_a}\sum_{\xi\in\Xi}\bE_{\beta_\star} \left[V_\xi(Y)|X \right] \tilde X_\xi \tilde X_\xi^\top.
\end{equation*}Hence, the risk function is convex and twice differentiable. It follows from Assumption~\ref{ass:consistency} that  $\beta_\star$ is the unique minimum of $Q_N$. Therefore, Assumption~\ref{ass:consistency} together with Assumption~\ref{ass:technical_CLT} shows that the first item of Theorem~\ref{theo:general_CLT} is satisfied. The second item of Theorem~\ref{theo:general_CLT} follows from Lemma~\ref{lem:derivatives}  and Assumption~\ref{ass:consistency}. The third item of Theorem~\ref{theo:general_CLT} has been proved in Section~\ref{sub:uniform_convergence_of_the_hessian_over_compact_sets}. Finally item $4^\prime$ from Theorem~\ref{theo:general_directional_CLT} was proved right above. It follows from Theorem~\ref{theo:general_directional_CLT} that 
\begin{equation*}
     \frac{N_a}{\sqrt{n\alpha_{1, N}}}\frac{\inr{\hat \beta_\Xi - \beta_\star,c}}{\sqrt{c^\top \Gamma^{-1}\tilde \Delta_N \Gamma^{-1}c}}=\frac{\inr{ \sqrt{N_a}\left(\hat \beta_\Xi - \beta_\star\right),c}}{\norm{V_n^{1/2}\Gamma^{-1}c}_2}   \overset{d}{\to} \cN(0,1).
\end{equation*}








\section{Auxiliary results}
In this section, we collect several technical tools used to prove the consistency result from Theorem~\ref{theo:consistency_tetrads} and the asymptotic normality of Theorem~\ref{theo:CLT_polyads_estim_Graham}.

\subsection{Characterization of degree-preserving transformations}\label{subsec:degree_preserving}

The aim of this subsection is to proof the counterpart of Proposition \ref{prop:from:degrees:to:polyads}, which assumes $\sG = \sG^{\max}$. We need to show that if two graphs $y,y' \in \ZZ^\sI$ satisfy $\delta(y) = \delta(y')$, then it is possible to find a finite number $m$, polyads $\xi_1, \dots, \xi_m$ and integers $r_1,\dots,r_m$ such that
\[ y' = T_{\xi_1}^{r_1} \circ \cdots \circ T_{\xi_m}^{r_m}(y). \]

We will prove by induction on both the dimension $D$ and the component sizes $n_1,\dots,n_D$. We start with the following lemma, that is the base case:

\begin{lemma}
    If two graphs $y \neq y' \in\ \ZZ^\sI$ with $n_1=n_2= \dots = n_D = 2$ satisfy $\delta(y) = \delta(y')$, then there exist a polyad $\xi$ and an integer $r$ such that $y' = T_\xi^r(y)$.
\end{lemma}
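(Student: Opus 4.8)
The plan is to exploit that, when every $n_d = 2$, the index set $\sI = \{1,2\}^D$ carries essentially a single polyad. Write
\[
\xi_0 = \begin{pmatrix} 1 & \cdots & 1 \\ 2 & \cdots & 2 \end{pmatrix};
\]
its edge set is all of $\sI$, so $\cE(\xi_0) = \sI$, and a direct evaluation of the sign gives $s_{\xi_0}(\bi) = (-1)^{|\{d \in [D] : i_d = 2\}|}$. By Lemma~\ref{lem:tools_polyads}(ii) every other polyad on $\sI$ is a permutation of $\xi_0$ and hence has sign $\pm s_{\xi_0}$, so it suffices to produce an integer $r$ with $y' = T_{\xi_0}^r(y)$ (a sign change of $r$ absorbs the choice of permutation). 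First I would set $d := y' - y \in \ZZ^\sI$ and reduce the claim to showing $d = r\, s_{\xi_0}$ for some $r \in \ZZ$.

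Next I would translate the hypothesis $\delta(y) = \delta(y')$ into pointwise conditions on $d$. For the maximal levels $g = [D] \setminus \{e\} \in \GGmax$, fixing the coordinates $(i_{d})_{d \neq e}$ and reading off the degree identity $\delta^g_\rho(y)=\delta^g_\rho(y')$ (a sum over the two values $i_e \in \{1,2\}$) gives, for every edge $\bi$ and every coordinate $e \in [D]$,
\[
d_{\bi} + d_{\bi^{(e)}} = 0,
\]
where $\bi^{(e)}$ denotes the edge obtained from $\bi$ by flipping its $e$-th coordinate ($1 \leftrightarrow 2$). In words: flipping any single coordinate negates $d$.

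From here I would propagate these relations from the base edge $\bi_\ast = (1,\dots,1)$. Any $\bi \in \sI$ is reached from $\bi_\ast$ by flipping exactly the coordinates $\{d : i_d = 2\}$, and each flip introduces a factor $-1$, so $d_{\bi} = (-1)^{|\{d : i_d = 2\}|}\, d_{\bi_\ast} = s_{\xi_0}(\bi)\, d_{\bi_\ast}$. Setting $r := d_{\bi_\ast} = y'_{\bi_\ast} - y_{\bi_\ast}$ then yields $d = r\, s_{\xi_0}$, that is $y' = y + r\, s_{\xi_0} = T_{\xi_0}^r(y)$, as required; $r$ is automatically an integer since $y,y' \in \ZZ^\sI$.

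The one point that needs care---the main, if modest, obstacle---is the consistency of this propagation: the value assigned to $d_{\bi}$ must not depend on the order in which coordinates are flipped. This is automatic because single-coordinate flips commute and are involutions, so the $\pm1$ factor accumulated along any path from $\bi_\ast$ to $\bi$ depends only on the parity $|\{d : i_d = 2\}|$; equivalently, the flip relations are exactly the linear constraints whose solution space is the line spanned by $s_{\xi_0}$. I would also remark that only the $(D-1)$-element levels of $\delta$ enter the argument, so it goes through verbatim whenever $\GGmax \subseteq \sG$, in particular for $\sG = \GGmax$.
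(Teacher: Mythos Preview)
Your proof is correct and follows essentially the same approach as the paper: choose the canonical polyad $\xi_0$ with first row $(1,\dots,1)$, set $r = y'_{(1,\dots,1)} - y_{(1,\dots,1)}$, and propagate from the base edge using the $\GGmax$ degree constraints (each one being a two-term sum when $n_d=2$). Your phrasing is slightly more algebraic---you show directly that $d = y'-y$ lies in the one-dimensional span of $s_{\xi_0}$---whereas the paper checks entry by entry that $T_{\xi_0}^r(y)$ and $y'$ agree, but the underlying propagation is identical.
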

\begin{proof}Let $r = y'_{1\dots 1} - y_{1\dots 1}$. In a graph with $n_1 = n_2 = \dots = n_D = 2$ there are only $2^D$ polyads. Take
\[ \xi = \begin{pmatrix}
    1 & 1 & \cdots & 1\\ 
    2 & 2 & \cdots & 2\\ 
\end{pmatrix}.\]

We now check that $y' = T_\xi^r(y)$. First, notice that $T_\xi^r(y)_{1\dots 1} = y_{1\dots 1} + s_\xi(1\dots1) r = y_{1\dots 1} + r = y'_{1\dots 1}$. Now we look at $\bi = 1\dots 1 2$ (i.e., the last index is $2$). We know that $T_\xi^r$ preserves degrees. Looking at the degree associated with $i_1=\dots=i_{D-1}=1$ we have
\[ y'_{1\dots 1 1} + y'_{1\dots 1 2} = y_{1\dots 1 1} + y_{1\dots 1 2} = T_\xi^r(y)_{1\dots 1 1} + T_\xi^r(y)_{1\dots 1 2} = y'_{1\dots 1 1} + T_\xi^r(y)_{1\dots 1 2} \]
and so $T_\xi^r(y)_{1\dots 1 2} = y'_{1\dots 1 2}$. The same argument holds to proof the equalities for all $\bi$'s that differ from $1\dots1$ in one index. Once those are proved they can be used to ensure equality for the $\bi$'s that differ in two indices from $1\dots1$, but only by one index from the already shown. And so on, until we equality is proved for $2\dots2$.
\end{proof}

Now we show that the proposition holds for any sizes $n_1,n_2 \geq 2$ when $D=2$. We start noticing that if $n_1=n_2=2$ it is proven by the previous lemma. By induction we assume it is proven for $n_1, n_2 \geq 2$ and we show for $n_1+1, n_2$ (the case $n_1, n_2+1$ follows permuting the order of the indices). The strategy will be to make $y_\bi = y'_\bi$ for all $\bi = (i_1, i_2)$ such that $i_1 = n_1+1$, once this is done we can get back to the induction assumption since the degrees of the $n_1 \times n_1$ subgraphs obtained removing the indices with $i_1 = n_1+1$ from $y'$ and from the transformed $y$ must be the same. Let $i_2 \in [n_2]$, $i_2 \neq 1$, define the polyad
\[ \xi = \begin{pmatrix}
    n_1+1 & i_2\\ 
    1 & 1\\ 
\end{pmatrix},\]
notice that it can be applied $y'_{n_1+1, i_2} - y_{n_1+1, i_2}$ times to make the index $\bi = (n_1+1, i_2)$ of the transformed $y$ match the one of $y'$. Let $T$ be the composition of these operations for $i_2 = 2, \dots, n_2$. It remains to show that $y'_{n_1+1, 1} = T(y)_{n_1+1, 1}$. The degree equality associated with $i_1 = n_1+1$ yields
\[ y'_{n_1+1,1} + \sum_{i_2=1}^{n_2}y'_{n_1+1, i_2} = y_{n_1+1,1} + \sum_{i_2=1}^{n_2}y_{n_1+1, i_2} = T(y)_{n_1+1,1} + \sum_{i_2=1}^{n_2}T(y)_{n_1+1, i_2} = T(y)_{n_1+1,1} + \sum_{i_2=1}^{n_2}y'_{n_1+1, i_2}\]
and so $y'_{n_1+1,1} =  T(y)_{n_1+1,1}$. We have thus proven the induction step for $D=2$.

Finally we move to $D > 2$. The base case with $n_1=\cdots=n_D =2$ already given by the Lemma. By induction assume that it holds for $n_1,\dots,n_D \geq 2$ and for dimension $D-1$ too. We now show it holds for $n_1+1,n_2,\dots,n_D$. Notice that $\{y_\bi : i_1 = n_1+1\}$ and $\{y'_\bi : i_1 = n_1+1\}$ are $n_2 \times \dots \times n_D$ graphs and, since the degrees of $y$ are the same as the ones of $y'$, they have the same degrees. For example, the degree associated with $(i_2, \dots, i_{D-1})$ in the subgraphs is the one previously associated with $(n_1+1, i_2, \dots, i_{D-1})$. Therefore, there are $D-1$-dimensional polyads that can turn one subgraph into the other. To obtain $D$-dimensional polyads from these we can simply concatenate with the entry $i_1 = n_1+1$ and $i_1' = 1$. This finishes the proof.

\subsection{General results on orbits, $m_\xi$ and $M_\xi$ and the proof of Lemma~\ref{lem:derivatives}} 
\label{sec:appendix_a}
We start with two simple observations on the orbits that justifies its name: an orbit is an equivalent class.

\begin{lemma}\label{lem:equiv_polyads}
    Let $\xi$ be a polyad. For all $y,z\in\bN^{\cI}$, the following are equivalent:
    \begin{itemize}
        \item[(a)] $y\in\cO_\xi(z)$
        \item[(b)] $\cO_\xi(y) = \cO_\xi(z)$.  
    \end{itemize}
\end{lemma}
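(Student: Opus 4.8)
The plan is to recognize that the lemma simply asserts that ``lying in a common orbit'' is a well-defined equivalence relation, so that each orbit is determined by any one of its members. The direction (b) $\Rightarrow$ (a) is immediate: taking $r=0$ in the definition \eqref{eq:def:orbit} gives $y\in\cO_\xi(y)$, so the equality $\cO_\xi(y)=\cO_\xi(z)$ forces $y\in\cO_\xi(z)$. The substance of the argument lies in the converse, (a) $\Rightarrow$ (b).

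First I would unfold the hypothesis $y\in\cO_\xi(z)$ as $y=T_\xi^{r_0}(z)=z+r_0 s_\xi$ for some integer $r_0$ with $-m_\xi(z)\le r_0\le M_\xi(z)$. The key step is then to track how the bounds $m_\xi$ and $M_\xi$ transform under this shift. Since $s_\xi(\bi)=+1$ on precisely the edges entering $m_\xi$ and $s_\xi(\bi)=-1$ on those entering $M_\xi$, and since the minimum commutes with adding a constant, I obtain
\[
m_\xi(y)=\bigwedge_{\bi:\,s_\xi(\bi)=1}\bigl(z_\bi+r_0\bigr)=m_\xi(z)+r_0,
\qquad
M_\xi(y)=\bigwedge_{\bi:\,s_\xi(\bi)=-1}\bigl(z_\bi-r_0\bigr)=M_\xi(z)-r_0.
\]
The constraint $-m_\xi(z)\le r_0\le M_\xi(z)$ guarantees that both right-hand sides are nonnegative, which is consistent with $y\in\bN^\cI$.

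Finally I would perform the change of index $r=r_0+s$ in the orbit of $y$. Writing $\cO_\xi(y)=\{\,z+(r_0+s)s_\xi:\ -m_\xi(y)\le s\le M_\xi(y)\,\}$ and inserting the two identities above, the range $-m_\xi(y)\le s\le M_\xi(y)$ translates exactly into $-m_\xi(z)\le r\le M_\xi(z)$, whence $\cO_\xi(y)=\{\,T_\xi^r(z):\ -m_\xi(z)\le r\le M_\xi(z)\,\}=\cO_\xi(z)$, as claimed. There is no genuine obstacle in this proof; the only point requiring a little care is the transformation law for $m_\xi$ and $M_\xi$, together with the verification that the shifted bounds stay nonnegative, so that the reindexed set is literally the orbit prescribed by \eqref{eq:def:orbit}.
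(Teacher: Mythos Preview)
Your proof is correct. The paper does not actually write out a proof of this lemma; it merely introduces it as one of ``two simple observations on the orbits that justifies its name: an orbit is an equivalent class,'' and then moves on. The transformation laws $m_\xi(y)=m_\xi(z)+r_0$ and $M_\xi(y)=M_\xi(z)-r_0$ that you derive are exactly the content of the companion Lemma~\ref{lem:m_M_orbits}, so your argument is the natural one and matches the paper's intended (but unwritten) proof.
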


\begin{lemma}\label{lem:m_M_orbits}Let $y,y^\prime\in\bN^{\cI}$ and assume that $y^\prime \in \cO_\xi(y)$. Let $-m_\xi(y)\leq r \leq M_\xi(y)$ then the following are equivalent:
\begin{itemize}
    \item[(i)] $y^\prime=y^r$
    \item[(ii)] $m_\xi(y^\prime) = m_\xi(y) + r$
    \item[(iii)] $M_\xi(y^r) = M_\xi(y)-r$.   
\end{itemize}
\end{lemma}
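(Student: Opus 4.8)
The plan is to reduce all three conditions to a single statement about the unique shift relating $y'$ to $y$, after which each equivalence becomes a one-line computation. First I would use the hypothesis $y' \in \cO_\xi(y)$: by the definition of the orbit there exists an integer $s$ with $-m_\xi(y) \leq s \leq M_\xi(y)$ such that $y' = y^s = T_\xi^s(y)$. This $s$ is unique, because every polyad carries $2^D$ edges on which $s_\xi(\bi) = \pm 1$, so the tensor $s_\xi$ is not identically zero; consequently $y^s = y^{s'}$ forces $(s-s')s_\xi = 0$, i.e. $s = s'$ (alternatively one may invoke Lemma~\ref{lem:equiv_polyads} to treat the orbit as a genuine equivalence class).

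Next I would compute $m_\xi$ and $M_\xi$ of the shifted graph directly from the definitions. Since $y^s_\bi = y_\bi + s\,s_\xi(\bi)$, restricting to the edges with $s_\xi(\bi) = 1$ gives
\[
m_\xi(y^s) = \bigwedge_{\bi : s_\xi(\bi) = 1} \bigl(y_\bi + s\bigr) = m_\xi(y) + s,
\]
and restricting to the edges with $s_\xi(\bi) = -1$ gives
\[
M_\xi(y^s) = \bigwedge_{\bi : s_\xi(\bi) = -1} \bigl(y_\bi - s\bigr) = M_\xi(y) - s.
\]
Applying these to $y' = y^s$ yields $m_\xi(y') = m_\xi(y) + s$ and $M_\xi(y') = M_\xi(y) - s$.

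Finally, the three equivalences follow at once, since each condition is just a reparametrization of $s = r$. Condition (i), $y' = y^r$, holds iff $s = r$ by uniqueness of the shift. Condition (ii), $m_\xi(y') = m_\xi(y) + r$, rewrites as $m_\xi(y) + s = m_\xi(y) + r$, i.e. $s = r$; and condition (iii), in the form $M_\xi(y') = M_\xi(y) - r$ (equivalently $M_\xi(y^r) = M_\xi(y)-r$ once $y' = y^r$ is in force), rewrites as $M_\xi(y) - s = M_\xi(y) - r$, again $s = r$. Hence (i), (ii) and (iii) are all equivalent to $s = r$, and therefore to one another. The proof presents no genuine obstacle: the only point needing a word of care is the uniqueness of $s$, which is exactly where the non-vanishing of $s_\xi$ on the edges of $\xi$ enters; the remaining content is the pair of displayed identities for the shifted minima, read off directly from the definitions of $m_\xi$, $M_\xi$ and $T_\xi$.
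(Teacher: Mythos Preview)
Your argument is correct. The paper does not supply a proof of this lemma (it is listed among the ``simple observations on the orbits'' in the appendix and left unproved), and your direct computation of $m_\xi(y^s)=m_\xi(y)+s$ and $M_\xi(y^s)=M_\xi(y)-s$, combined with the uniqueness of the shift $s$ coming from $s_\xi\neq 0$, is exactly the verification one expects.

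Your parenthetical on~(iii) is well placed: as literally written, $M_\xi(y^r)=M_\xi(y)-r$ is an identity that holds for every admissible $r$ (by your own displayed computation) and so cannot be equivalent to~(i) or~(ii). The intended statement is $M_\xi(y')=M_\xi(y)-r$, which is how the lemma is actually used later in the paper (for instance in the proof of Lemma~\ref{lem:Delta}, where it yields $M_\xi(Y')\leq m_\xi(Y)+M_\xi(Y)$), and this is the version your argument establishes.
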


\begin{proposition}\label{prop:diff_in_diff}(the diff-in-diff property of $s_\xi$)
     Let $g$ be an ordered tuple of numbers in $[D]$ such that $g\neq \emptyset$ and $g\neq (1, 2, \ldots, D)$. We have for all $(i_d)_{d\in g}\in \cI_g$,
     \begin{equation*}
         \sum_{\bi=(i_d)_{d=1}^D:(i_d)_{d\in\bar g}\in\cI_{\bar g}} s_\xi(\bi) = 0
     \end{equation*}where $\cI_{g} = \prod_{d\in g}[n_d]$ and $\cI_{\bar g} = \prod_{d\notin g}[n_d]$\footnote{Here, we identify the tuple $g$ with the set of  elements made of the coordinates in $g$.}.
     In particular, for all $\theta = \left(\theta_{g(\bi)}^g \in \R : \bi \in \sI, g \in \sG\right)$ and $g\in\cG$, we have 
\begin{equation}
    \label{eq:preserve_suf}
   \sum_{\bi} s_\xi(\bi) \theta_{g(\bi)}^g =  \sum_{\bi : s_\xi(\bi) \neq 0} s_\xi(\bi) \theta_{g(\bi)}^g = 0.
\end{equation}
 \end{proposition}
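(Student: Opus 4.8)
The plan is to exploit the multiplicative structure of the sign function. Writing $\xi=\begin{pmatrix} j_1 & \cdots & j_D\\ j_1^\prime & \cdots & j_D^\prime \end{pmatrix}$, recall that $s_\xi(\bi)=\prod_{d=1}^D\big(\mathbf{1}\{i_d=j_d\}-\mathbf{1}\{i_d=j_d^\prime\}\big)$, so its factors split into those indexed by $d\in g$ and those indexed by $d\in\bar g$. I would first note that the first identity is exactly a reformulation of \eqref{eq:sum:sign} in the proof of Proposition~\ref{prop:from:degrees:to:polyads} (take $\rho=(i_d)_{d\in g}$), which was obtained there by a counting argument; but the cleanest self-contained route is through this factorization. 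Since the coordinates $(i_d)_{d\in g}$ are held fixed while $(i_d)_{d\in\bar g}$ range over $\cI_{\bar g}$, the $g$-indexed factors are constant in the sum and pull out, leaving
\[
\sum_{(i_d)_{d\in\bar g}\in\cI_{\bar g}} s_\xi(\bi)
=\Bigg[\prod_{d\in g}\big(\mathbf{1}\{i_d=j_d\}-\mathbf{1}\{i_d=j_d^\prime\}\big)\Bigg]
\sum_{(i_d)_{d\in\bar g}\in\cI_{\bar g}}\ \prod_{d\in\bar g}\big(\mathbf{1}\{i_d=j_d\}-\mathbf{1}\{i_d=j_d^\prime\}\big).
\]

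The decisive step is that the sum of a product over independent index ranges equals the product of the one-dimensional sums,
\[
\sum_{(i_d)_{d\in\bar g}\in\cI_{\bar g}}\ \prod_{d\in\bar g}\big(\mathbf{1}\{i_d=j_d\}-\mathbf{1}\{i_d=j_d^\prime\}\big)
=\prod_{d\in\bar g}\Bigg[\sum_{i_d=1}^{n_d}\big(\mathbf{1}\{i_d=j_d\}-\mathbf{1}\{i_d=j_d^\prime\}\big)\Bigg].
\]
Because $j_d\neq j_d^\prime$ both lie in $[n_d]$ (Definition~\ref{def:polyads}), each inner sum equals $1-1=0$; and since the hypothesis $g\neq(1,2,\ldots,D)$ forces $\bar g\neq\emptyset$, at least one factor of the outer product vanishes. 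Hence the whole expression is $0$, which is the first claim.

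For the identity \eqref{eq:preserve_suf} I would use that $\theta^g_{g(\bi)}$ depends on $\bi$ only through $g(\bi)=(i_d)_{d\in g}$. Grouping the sum over $\bi$ by the value $\rho=(i_d)_{d\in g}\in\cI_g$ gives
\[
\sum_{\bi} s_\xi(\bi)\,\theta^g_{g(\bi)}
=\sum_{\rho\in\cI_g}\theta^g_\rho\ \sum_{\bi:\,g(\bi)=\rho} s_\xi(\bi),
\]
and every inner sum vanishes by the first part, applied with the fixed coordinates equal to $\rho$. The restriction of the outer sum to edges with $s_\xi(\bi)\neq0$ is harmless, since the omitted terms are zero anyway. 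As each $g\in\cG$ is a proper subset of $[D]$, the hypothesis $g\neq(1,\ldots,D)$ holds, so the argument applies to every fixed-effect level.

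I do not expect a genuine obstacle: the statement is a direct consequence of the product form of $s_\xi$ together with $j_d\neq j_d^\prime$. The only points requiring mild care are the bookkeeping of fixed versus free coordinates and the justification of exchanging the product with the sum (distributivity over independent index ranges), both of which are routine. It is worth emphasizing that the essential hypothesis is $\bar g\neq\emptyset$, i.e.\ $g\neq(1,\ldots,D)$; the nonemptiness of $g$ itself plays no role in the vanishing.
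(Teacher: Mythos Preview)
Your proof is correct and complete. It differs from the paper's argument in a minor but genuine way: the paper establishes the first identity by a counting/bijection argument --- for a fixed $g$-part with nonzero sign, there are $2^{D-|g|}$ edges in $\cE(\xi)$, and flipping any one coordinate $d\in\bar g$ pairs each $+1$ edge with a $-1$ edge --- whereas you factorize the sum directly as a product of one-dimensional sums $\sum_{i_d\in[n_d]}(\mathbf{1}\{i_d=j_d\}-\mathbf{1}\{i_d=j_d'\})=0$. Your route is slightly cleaner in that it handles at once both the case where the $g$-part already forces $s_\xi(\bi)=0$ (the prefactor vanishes) and the case where it does not (the $\bar g$-product vanishes), without any case distinction; the paper's bijection implicitly treats the former as trivial. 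Your derivation of \eqref{eq:preserve_suf} by grouping over $\rho=g(\bi)$ is the same as the paper's. Your closing remark that only $\bar g\neq\emptyset$ is used is accurate and worth keeping.
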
 

\begin{proof}
     To verify the first equality, let $\bi'$ be fixed such that $s_\xi(\bi') \neq 0$. There are $2^{D-|g|}$ choices of $\bi$ such that $g(\bi) = g(\bi')$ and $s_\xi(\bi) = 0$, half of these choices have $s_\xi(\bi) = 1$ and the other half $s_\xi(\bi) = -1$ since it suffices to flip one index $d \not \in g$ to obtain a bijection between positive and negative signs. The second inequality is a direct consequence of the first property.
 \end{proof}

\begin{remark}
      The diff-in-diff property also holds if we define polyads with equal nodes for some axis $d\in[D]$, i.e. we may not assume that $j_d\neq j_d^\prime$ for all $d\in[D]$ - the minimal requirement is to have at least $2$ axis with two different nodes --  i.e. a tetrads. This may be useful when the number of polyads is too large regarding computational cost. 
  \end{remark}

Next we prove Lemma~\ref{lem:derivatives}.
\begin{proof}
    Let $\xi$ be a polyad, $\beta\in\bR^p$ and $y=(y_\bi)_{\bi}$. Let $r\in\{-m_\xi(y), \ldots,  M_\xi(y)\}$. It follows from the diff-in-diff property of the sign function from Proposition~\ref{prop:diff_in_diff} that
\begin{align}\label{eq:proba_distrib_m(Y)}
  \nonumber&\bP_{\beta}[Y=y^r|X, Y \in\cO_\xi(y)] = \frac{\bP[Y=y^r|X]}{\sum_{R=-m_\xi(y)}^{M_\xi(y)} \bP[Y=y^R|X]} = \frac{\Pi_{\bi\in\cE(\xi)} \exp(-\lambda_\bi) \lambda_\bi^{y_\bi^r}/y_\bi^r! }{\sum_{R=-m_\xi(y)}^{M_\xi(y)} \Pi_{\bi\in\cE(\xi)} \exp(-\lambda_\bi) \lambda_\bi^{y_\bi^R}/y_\bi^R!}\\
  \nonumber&=\left[\sum_{R=-m_\xi(y)}^{M_\xi(y)} \Pi_{\bi\in\cE(\xi)} \frac{y_\bi^r!}{y_\bi^R!}  \lambda_\bi^{y_\bi^R-y_\bi^r}\right]^{-1} = \left[\sum_{R=-m_\xi(y)}^{M_\xi(y)} \exp(-a_r)\exp(a_R)\exp[(R-r)\inr{\tilde X_\xi, \beta}]\right]^{-1}\\ 
  &= \frac{\exp(a_r + r\inr{\widetilde{X}_\xi, \beta})}{\sum_{R=-m_\xi(y)}^{M_\xi(y)}\exp(a_R+R \inr{\widetilde{X}_\xi, \beta})}   
    \end{align}  where $a_R = \ln(\Pi_{\bi\in\cE(\xi)}y_\bi!/y_\bi^R!)$ for all $R\in\{-m_\xi(y), \ldots,  M_\xi(y)\}$. Next, let $m\in\{0, \ldots, m_\xi(y) +   M_\xi(y)\}$.  We observe that if $Y\in\cO_\xi(y)$ then $m_\xi(Y)=m$ iff $Y=y^{m-m_\xi(y)}$. As a consequence, it follows from  \eqref{eq:proba_distrib_m(Y)} that
    \begin{align*}
        \ln \Pr_\beta( m_\xi(Y) = m |X,  Y \in \sO_\xi(y) ) = \ln \Pr_\beta( Y = y^{m-m_\xi(y)} |X,  Y \in \sO_\xi(y) ) = a_{m-m_\xi(y)} + (m-m_\xi(y))\inr{\widetilde{X}_\xi, \beta} - \ell_\xi(y|X, \beta)
    \end{align*}where we used that 
    \begin{equation}\label{eq:loss_xi_a_R}
        \ell_\xi(y|X, \beta) = -\ln \Pr_\beta(Y = y |X,  Y \in \sO_\xi(y) ) = \ln\left[\sum_{R=-m_\xi(y)}^{M_\xi(y)}\exp(a_R+R \inr{\widetilde{X}_\xi, \beta})\right].
    \end{equation}This proves the last result of Lemma~\ref{lem:derivatives} regarding the probability distribution of $m_\xi(Y)$ given $Y\in\cO_\xi(y)$ and $X$.

 The gradient of $\beta\to \ell_\xi(y|X, \beta)$ can be derived from  \eqref{eq:loss_xi_a_R} and \eqref{eq:proba_distrib_m(Y)}:
\begin{align*}
    &\nabla_\beta \ell_\xi(y|X, \beta) = \frac{\sum_{r=-m_\xi(y)}^{M_\xi(y)} r \exp(a_r+r \inr{\widetilde{X}_\xi, \beta})}{\sum_{R=-m_\xi(y)}^{M_\xi(y)}\exp(a_R+R \inr{\widetilde{X}_\xi, \beta})} \widetilde{X}_\xi = \sum_{r=-m_\xi(y)}^{M_\xi(y)} r \bP_{\beta}[Y=y^r|X, Y \in\cO_\xi(y)] \widetilde{X}_\xi\\
    & = \sum_{m=0}^{m_\xi(y) + M_\xi(y)} (m-m_\xi(y)) \bP_{\beta}[m_\xi(Y)=m|X, Y \in\cO_\xi(y)] \widetilde{X}_\xi =  \left(  \Ex_\beta\left[ 
m_\xi(Y) |X,  Y \in \sO_\xi(y) \right] - m_\xi(y) \right) \widetilde{X}_\xi 
\end{align*}where we used the fact that if $Y\in\cO_\xi(y)$ then $Y=y^r$ iff $m_\xi(Y) = r+m_\xi(y)$. Finally, we derive the formula for the Hessian of  $\beta\to \ell_\xi(y|X, \beta)$:
\begin{align}\label{eq:computation_Hessian}
 &\nabla^2_\beta \ell_\xi(y|X, \beta) = \left[\frac{\sum_{r=-m_\xi(y)}^{M_\xi(y)} r^2 \exp(a_r+r \inr{\widetilde{X}_\xi, \beta})}{\sum_{R=-m_\xi(y)}^{M_\xi(y)}\exp(a_R+R \inr{\widetilde{X}_\xi, \beta})}  - \left(\frac{\sum_{r=-m_\xi(y)}^{M_\xi(y)} r \exp(a_r+r \inr{\widetilde{X}_\xi, \beta})}{\sum_{R=-m_\xi(y)}^{M_\xi(y)}\exp(a_R+R \inr{\widetilde{X}_\xi, \beta})}   \right)^2 \right]  \widetilde{X}_\xi \widetilde{X}_\xi^\top\\ 
 \nonumber & = \mathbb{V}_\beta\left[ 
m_\xi(Y) |X,  Y \in \sO_\xi(y) \right] \widetilde{X}_\xi\widetilde{X}_\xi^T
\end{align}where we used again that if $Y\in\cO_\xi(y)$ then $Y=y^r$ iff $m_\xi(Y) = r+m_\xi(y)$ and that $m_\xi(Y)$ has the same variance as $m_\xi(Y) + m_\xi(y)$.

To prove similar results for the gradient and the Hessian in terms of $M_\xi(Y)$ we also rely on Lemma~\ref{lem:m_M_orbits} and use the same arguments as above. 
\end{proof}

\subsection{Proof of Lemma~\ref{lem:suf_beta}} 
\label{sec:appendix_a_proof_of_lemma_lem:suf_beta}

\begin{proof}[Proof of Lemma~\ref{lem:suf_beta}:] 
     Denote by $B_2 = \{\theta\in\bR^d : \norm{\theta}_2\leq 1\}$ the unit ball with respect to (w.r.t.) the $\ell_2^d$-norm. Let $\eps>0$ be such that $\theta_0+3\eps B_2\subset \Theta$. We want to show that with probability approaching $1$ (w.p.a.1), $\hat \theta_n\in \theta_0+\eps B_2$. Let $\tilde \theta_n\in\argmin_{\theta\in \theta_0 + 2\eps B_2} \hat Q_n(\theta)$. Since $\theta_0 + 2\eps B_2$ is a compact set, it follows from Lemma~\ref{lem:unif_cv} (applied to $\Theta = \theta_0+3\eps \ove{B_2}{\circ}$ and $K = \theta_0+2\eps B_2$) that the following uniform convergence result holds
\begin{equation*}
    \sup_{\theta\in \theta_0 + 2\eps B_2}|\hat Q_n(\theta) - Q_n(\theta)| \overset{p}{\to}  0
\end{equation*} and so, by Lemma~\ref{lem:N_McF_compact} (applied to $\Theta = \theta_0+2\eps B_2$), that $\tilde \theta_n \overset{p}{\to}  \theta_0$. In particular, wpa1, $\tilde \theta_n\in \theta_0 + \eps B_2$.  Let us now place ourselves on the event $\tilde \theta_n\in \theta_0 + \eps B_2$ and let $\theta\notin \theta_0 + 2\eps B_2$. There exists $\theta_1\in \theta_0 + 2\eps S_2$ - where $S_2 = \{\theta\in\bR^d :\norm{\theta}_1= 1\}$ is the unit sphere of the $\ell_2^d$-norm - and $\lambda\geq 1$ such that $\theta = \tilde \theta_n + \lambda(\theta_1-\tilde \theta_n)$. By convexity of $\hat Q_n$, we have
    \begin{equation*}
        \hat Q_n(\theta) - \hat Q_n(\tilde \theta_n) \geq \lambda(\hat Q_n(\theta_1) - \hat Q_n(\tilde \theta_n)) \geq0
    \end{equation*}where the last inequality is due to the fact that $\theta_1\in \theta_0+2\eps S_2\subset \theta_0+2\eps B_2$ and $\tilde \theta_n$ minimizes $\hat Q_n$ over $\theta_0+2\eps B_2$. As a consequence, $\tilde \theta_n$ minimizes $\hat Q_n$ over $\Theta$ and by the uniqueness of $\hat \theta_n$, we have $\hat \theta_n = \tilde \theta_n$. This concludes the proof since $\tilde \theta_n$ is a consistent estimator of $\theta_0$. 
\end{proof}

The proof of Lemma~\ref{lem:suf_beta} provided above  is based on three ingredients: 1) convexity of $\hat Q_n$, 2) the uniform convergence in probability result over the compact set $\theta_0 + 2\eps B_2$ derived from  Lemma~\ref{lem:unif_cv}  and, 3) the consistency result over the compact model $\theta_0 + 2\eps B_2$ that follows from  Lemma~\ref{lem:N_McF_compact}. We start with the proof of the latter result which is an adaptation of Theorem~2.1 from \cite{newey1994large}.

\begin{lemma}\label{lem:N_McF_compact}
    Let $\Theta$ be a non empty and convex set in $\bR^d$. Let $(\hat Q_n)_n$ be a sequence of random functions and $(Q_n)_n$ be a sequence of (deterministic) convex functions all defined on $\Theta$. We assume that:
    \begin{itemize}
     \item[(a)] with probability approaching $1$, there exists $\hat \theta_n\in\Theta$ minimizing $\hat Q_n$ over $\Theta$,
         \item[(b)] there exists $\theta_0\in\Theta$ and $\eps_0>0$ such that $\theta_0+\eps_0 B_2\subset \Theta$ and for all $0<\eps\leq \eps_0$ there exists $\eta>0$ and $n_0$ such that for all $n\geq n_0$ and all $\theta\in\Theta$, if $\norm{\theta - \theta_0}_2 =  \eps$ then $Q_n(\theta) - Q_n(\theta_0)\geq \eta$,
         \item[(c)] $\sup_{\theta\in\Theta}|\hat Q_n(\theta) - Q_n(\theta)| \overset{p}{\to}  0$.
     \end{itemize} Then $\hat \theta_n \overset{p}{\to}  \theta_0$.
\end{lemma}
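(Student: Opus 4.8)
The plan is to prove consistency by using convexity to localize the problem to a single sphere around $\theta_0$, thereby avoiding both compactness of $\Theta$ and the existence of a single limiting criterion. The first step I would isolate is a purely deterministic convexity fact: if $f:\Theta\to\bR$ is convex, $\hat\theta\in\Theta$ minimizes $f$ over $\Theta$, and $f(\theta)>f(\theta_0)$ holds for every $\theta$ on the sphere $\{\norm{\theta-\theta_0}_2=\eps\}$ with $\eps\le\eps_0$ (so that this sphere lies in $\theta_0+\eps_0 B_2\subset\Theta$), then necessarily $\norm{\hat\theta-\theta_0}_2<\eps$. The argument is the standard ``peel back to the boundary'' reasoning: were the minimizer at distance $\ge\eps$, the point $\theta_1$ where the segment $[\theta_0,\hat\theta]$ meets the sphere would be a convex combination $(1-t)\theta_0+t\hat\theta$ with $t\in(0,1]$, so convexity together with $f(\hat\theta)\le f(\theta_0)$ (valid since $\theta_0\in\Theta$ and $\hat\theta$ is a minimizer) would give $f(\theta_1)\le(1-t)f(\theta_0)+tf(\hat\theta)\le f(\theta_0)$, contradicting $f(\theta_1)>f(\theta_0)$.

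With this fact available, the second step is to show that, for each fixed $\eps\in(0,\eps_0]$, the required sphere inequality for $\hat Q_n$ holds with probability approaching one. Fixing such an $\eps$, Assumption (b) supplies $\eta>0$ and $n_0$ with $Q_n(\theta)-Q_n(\theta_0)\ge\eta$ for all $n\ge n_0$ and all $\theta$ on the sphere, while Assumption (c) makes the event $A_n=\{\sup_{\theta\in\Theta}|\hat Q_n(\theta)-Q_n(\theta)|<\eta/3\}$ have probability tending to one. On $A_n$, for $n\ge n_0$, the three-term decomposition $\hat Q_n(\theta)-\hat Q_n(\theta_0)=[\hat Q_n(\theta)-Q_n(\theta)]+[Q_n(\theta)-Q_n(\theta_0)]+[Q_n(\theta_0)-\hat Q_n(\theta_0)]\ge-\eta/3+\eta-\eta/3=\eta/3>0$ holds for every $\theta$ on the sphere. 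Intersecting with the event from Assumption (a) that a minimizer $\hat\theta_n$ exists, the deterministic fact of the first step yields $\norm{\hat\theta_n-\theta_0}_2<\eps$, so $\bP[\norm{\hat\theta_n-\theta_0}_2\ge\eps]\to0$.

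The third step promotes this from $\eps\le\eps_0$ to arbitrary $\eps>0$ by monotonicity: for $\eps>\eps_0$ the event $\{\norm{\hat\theta_n-\theta_0}_2\ge\eps\}$ is contained in $\{\norm{\hat\theta_n-\theta_0}_2\ge\eps_0\}$, whose probability already vanishes. Combining the two regimes gives $\hat\theta_n\overset{p}{\to}\theta_0$, as claimed.

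As for the main difficulty, the mechanics are routine once convexity is exploited to reduce control of $\hat Q_n$ to its values on a sphere; the only genuine departure from Theorem~2.1 of \cite{newey1994large} is the absence of a single limiting criterion $Q$ — we have only the sequence $(Q_n)_n$. The point that must be handled with care is therefore that Assumption (b) is phrased \emph{uniformly in $n$}: a single $\eta$ and a single $n_0$ work on the whole sphere for all large $n$. This uniformity is precisely what allows one fixed slack $\eta/3$ to serve simultaneously for every $n\ge n_0$ in the second step, replacing the classical ``continuous $Q$ on a compact set attains a well-separated minimum'' device that is unavailable here.
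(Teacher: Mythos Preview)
Your argument has a gap: the ``peel back to the boundary'' step applies the convex inequality $f(\theta_1)\le(1-t)f(\theta_0)+t f(\hat\theta)$ with $f=\hat Q_n$, but the lemma does \emph{not} assume $\hat Q_n$ is convex --- only $Q_n$ is. Without convexity of $\hat Q_n$, knowing that $\hat Q_n(\theta)>\hat Q_n(\theta_0)$ everywhere on the $\eps$-sphere does not force the global minimizer $\hat\theta_n$ to lie inside the ball: $\hat Q_n$ could dip below $\hat Q_n(\theta_0)$ somewhere far outside.

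The paper's proof avoids this by routing the argument through $Q_n$ rather than $\hat Q_n$. Convexity of $Q_n$ together with (b) makes $\theta_0$ the unique global minimizer of $Q_n$ for $n\ge n_0$, and then the peel-back argument applied to $Q_n$ gives $\inf_{\theta\notin\theta_0+\eps B_2}Q_n(\theta)=\min_{\norm{\theta-\theta_0}_2=\eps}Q_n(\theta)\ge Q_n(\theta_0)+\eta$. On the event where the supremum in (c) is below $\eta/3$, the same three-term decomposition you used yields $Q_n(\hat\theta_n)\le\hat Q_n(\hat\theta_n)+\eta/3\le\hat Q_n(\theta_0)+\eta/3\le Q_n(\theta_0)+2\eta/3<Q_n(\theta_0)+\eta$, forcing $\hat\theta_n\in\theta_0+\eps B_2$. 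In short, your decomposition in step two and the monotonicity in step three are fine; you only need to apply the convex localization to $Q_n$ (evaluated at $\hat\theta_n$) rather than to $\hat Q_n$ on the sphere.
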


\begin{proof}[Proof of Lemma~\ref{lem:N_McF_compact}]
   Let $0<\eps\leq \eps_0$ and denote by $\bar{B}=\theta_0+\eps B_2$ the closed $\ell_2^d$-ball centered at $\theta_0$ with radius $\eps$. We want to show that with probability approaching $1$, $\hat \theta_n \in \bar{B}$. To that end, it is enough to show that for $n$ large enough we have $Q_n(\hat \theta_n)< \inf_{\theta\notin \bar B}Q_n(\theta)$. 

We first show that thanks to the convexity of the $Q_n$'s we have  $\inf_{\theta\notin \bar B}Q_n(\theta) = \min_{\theta:\norm{\theta - \theta_0}_2=\eps}Q_n(\theta)$. First, it follows from convexity and \textit{(b)} that $\theta_0$ is the unique global minimizer of $Q_n$ over $\Theta$ for all $n\geq n_0$. Second, if $\theta\notin \bar B$, there exists $\lambda\geq 1$ and $\theta_1\in \theta_0+\eps S_2$ (where $S_2$ is the unit $\ell_2^d$-sphere) such that $\theta = \theta_0 + \lambda(\theta_1-\theta_0)$ and, from the convexity of $Q_n$, we have 
\begin{equation*}
    Q_n(\theta) - Q_n(\theta_0)\geq \lambda (Q_n(\theta_1) - Q_n(\theta_0))\geq Q_n(\theta_1) - Q_n(\theta_0)
\end{equation*}where the last inequality follows because $\theta_0$ is a global minimizer of $Q_n$ and $\lambda\geq1$. As a consequence, $Q_n(\theta)\geq Q_n(\theta_1)$ and so, by continuity of $Q_n$, we obtain that $\inf_{\theta\notin \bar B}Q_n(\theta) = \min_{\theta:\norm{\theta - \theta_0}_2=\eps}Q_n(\theta)$. 

   It follows from the property \textit{(b)} of $(Q_n)_n$ that there exists $n_0$ such that $\eta>0$ where $$\eta := \min_{n\geq n_0}\min_{\theta:\norm{\theta - \theta_0}_2=\eps}Q_n(\theta) - Q_n(\theta_0).$$ As a consequence, with probability approaching $1$, the following holds from uniform convergence:
\begin{equation*}
    Q_n(\hat \theta_n)\leq \hat Q_n(\hat \theta_n) + \frac{\eta}{3} < \hat Q_n(\theta_0) + \frac{2\eta}{3}\leq  Q_n(\theta_0)+\eta \geq \min_{\theta:\norm{\theta - \theta_0}_2=\eps}Q_n(\theta) = \inf_{\theta\notin \bar B}Q_n(\theta).
\end{equation*}  
\end{proof}

The next result shows that uniform convergence in probability over a compact follows from pointwise convergence in probability thanks to the convexity assumption. 

\begin{lemma}\label{lem:unif_cv}
Let $\Theta$ be a non empty, open and convex set in $\bR^d$. Let $(\hat Q_n)_n$ be a sequence of convex random functions and $(Q_n)_n$ be a sequence of (deterministic) convex functions all defined on $\Theta$. We assume that: 
\begin{enumerate}
    \item for every  $\theta\in\Theta$ there exists $L>0$ such that $\sup_{n}Q_n(\theta)\leq L$,
    \item there exists $L_0\in\bR$ and $\theta_1\in\Theta$ such that $\inf_n Q_n(\theta_1)\geq L_0$.
    \item for all \( \theta \in \Theta \), as $n$ tends to infinity, \( \hat Q_n(\theta) - Q_n(\theta) \overset{p}{\to}  0 \).

\end{enumerate}
Then, for any compact set $K$ in $\Theta$,
$$\sup_{\theta\in K}|\hat Q_n(\theta) - Q_n(\theta)| \overset{p}{\to}  0.$$
    
\end{lemma}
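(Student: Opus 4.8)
The plan is to upgrade the pointwise convergence in item~3 to uniform convergence over $K$ by exploiting convexity through a finite net, after first showing that both sequences $(Q_n)_n$ and $(\hat Q_n)_n$ admit Lipschitz bounds that are uniform in $n$ (the latter only with probability approaching one). First I would fix a compact $K\subset\Theta$ and $\eps>0$, and enclose $K\cup\{\theta_1\}$ in a compact convex polytope $P$ with finitely many vertices $v_1,\dots,v_N$, chosen so that $K\cup\{\theta_1\}\subset\operatorname{int}(P)\subset P\subset\Theta$; this is possible because $\Theta$ is open and convex. Such a polytope lets convexity carry the argument: for any $\theta\in P$, convexity gives $Q_n(\theta)\le\max_j Q_n(v_j)$, and item~1 bounds each $Q_n(v_j)$ by $L$ uniformly in $n$, so $\sup_n\sup_P Q_n\le L=:M$.

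The key step is the matching lower bound, produced from the single lower bound at $\theta_1$ in item~2 together with convexity. For $\theta$ in a compact $K_0$ with $K\subset\operatorname{int}(K_0)\subset K_0\subset\operatorname{int}(P)$, the segment from $\theta$ through $\theta_1$ meets $\partial P$ at a point $b$, whence $\theta_1=s\theta+(1-s)b$ with $s=|\theta_1-b|/|\theta-b|$ bounded below by $\operatorname{dist}(\theta_1,\partial P)/\operatorname{diam}(P)>0$ uniformly in $\theta$. Convexity then yields $Q_n(\theta)\ge s^{-1}Q_n(\theta_1)-s^{-1}(1-s)\sup_P Q_n$, which is bounded below by a constant $m_0$ depending only on $L_0,M$ and $s_{\min}$, again uniformly in $n$. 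With the two-sided bounds $m_0\le Q_n\le M$ on $K_0$, the standard convex-analysis estimate (Theorem~10.4 in \cite{rockafellar-1970a}) furnishes a Lipschitz constant $\Lambda$ for $Q_n$ on $K$ that depends only on $M-m_0$ and $\operatorname{dist}(K,\partial K_0)$, hence is uniform in $n$.

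Next I would transfer these bounds to the random functions $\hat Q_n$. Since item~3 gives $\hat Q_n(v_j)-Q_n(v_j)\overset{p}{\to}0$ at the finitely many vertices and at $\theta_1$, the convexity arguments of the previous paragraph apply verbatim to $\hat Q_n$ with $M$ and $L_0$ perturbed by $o_p(1)$ terms (using $\hat Q_n\le\max_j\hat Q_n(v_j)\le M+o_p(1)$ on all of $P$ to control $\hat Q_n(b)$). This yields, with probability approaching one, two-sided bounds on $\hat Q_n$ over $K_0$ and hence a deterministic Lipschitz constant $\Lambda''$ valid on $K$ on an event of probability tending to one.

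Finally I would run the net argument. Triangulate $K_0$ with mesh $h$, and for $\theta\in K$ pick a simplex with vertices $w_0,\dots,w_d$ containing $\theta$. Convexity of $\hat Q_n$ combined with the Lipschitz bound for $Q_n$ gives $\hat Q_n(\theta)-Q_n(\theta)\le\max_k|\hat Q_n(w_k)-Q_n(w_k)|+\Lambda h$, while convexity of $Q_n$ combined with the Lipschitz bound for $\hat Q_n$ gives the reverse inequality with $\Lambda''$; together, on the high-probability event,
\[
\sup_{\theta\in K}\bigl|\hat Q_n(\theta)-Q_n(\theta)\bigr|\le\max_{k}\bigl|\hat Q_n(w_k)-Q_n(w_k)\bigr|+(\Lambda+\Lambda'')h,
\]
the maximum being over the finite vertex set of the triangulation. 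Choosing $h$ so that $(\Lambda+\Lambda'')h<\eps/2$ fixes a finite set of points, and item~3 then makes the remaining maximum smaller than $\eps/2$ with probability approaching one. The main obstacle is the second and third paragraphs: extracting a uniform (in $n$) Lipschitz bound from the asymmetric data — an upper bound available everywhere via convexity and item~1, but a lower bound only at the single point $\theta_1$ — which is precisely where the through-$\theta_1$ convexity relation with $s$ bounded away from $0$ is indispensable.
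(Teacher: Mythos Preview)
Your proof is correct and takes a genuinely different route from the paper. The paper reduces to almost-sure convergence via the subsequence principle: from any subsequence, Andersen's diagonalization extracts a further subsequence along which $\hat Q_{\psi_n}(\theta)-Q_{\psi_n}(\theta)\to0$ a.s.\ on a countable dense set, after which adapted versions of Rockafellar's Theorems~10.6 and~10.8 show the difference $\hat Q_{\psi_n}-Q_{\psi_n}$ is eventually equi-Lipschitz on $K$ and hence converges uniformly a.s. You instead work directly in probability: a finite polytope around $K\cup\{\theta_1\}$ lets you control convergence at finitely many points jointly, push the resulting two-sided bounds through convexity to produce Lipschitz constants for both $(Q_n)$ and, on a high-probability event, $(\hat Q_n)$, and finish with a net argument. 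Your extraction of the lower bound via $\theta_1=s\theta+(1-s)b$ with $s$ bounded away from zero is exactly the content of Rockafellar~10.6 made explicit. What you gain is avoiding the subsequence/diagonalization machinery entirely and keeping the whole argument inside convergence in probability; what the paper gains is modularity, since its auxiliary lemmas (their adaptations of Rockafellar~10.6 and~10.8) are reused in the asymptotic-normality proof where differences of convex sequences appear again. One minor point: you need the Lipschitz bounds on a set slightly larger than $K$ so that the triangulation vertices $w_k$ (which may lie just outside $K$) are covered; inserting one more intermediate compact between $K$ and $K_0$, or simply deriving the Lipschitz constant on all of $K_0$, handles this trivially.
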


\begin{proof}[Proof of Lemma~\ref{lem:unif_cv}] Let $K$ be a compact set in $\Theta$. To show the uniform convergence in probability over the compact set $K$, it is enough to show that for any increasing sequence $(\phi_n)_n$ of integers we can extract a sub-sequence  $(\psi_n)_n\subset (\phi_n)_n$  along which $\sup_{\theta\in K}|\hat Q_{\psi_n}(\theta) - Q_{\psi_n}(\theta)|$ tends to $0$ almost surely. 

Let $\Theta^\prime$ be a countable dense subset of $\Theta$. According to the diagonalization argument from \cite{Andersen1982} (recalled in Lemma~\ref{lem:Andersen82} below), there exists a sub-sequence $(\psi_n)_n$ of $(\phi_n)_n$ such that for all $\theta\in\Theta^\prime, |\hat Q_{\psi_n}(\theta) - Q_{\psi_n}(\theta)| \overset{a.s.}{\to}  0.$  

Next it follows from  Lemma~\ref{lem:extension_10_6} that almost surely for $n$ large enough, $\hat Q_{\psi_n} - Q_{\psi_n}$ is Lipchitz on the compact set $K$.  Then, we conclude with Lemma~\ref{lem:rockafellar_70} that   $\sup_{\theta\in K}|\hat Q_{\psi_n}(\theta) - Q_{\psi_n}(\theta)|$ tends to $0$ almost surely. This is true for any increasing sequence $(\phi_n)_n$ and so this shows the uniform over $\Theta$ convergence in probability. 
\end{proof}

The proof of Lemma~\ref{lem:suf_beta} requires to revisit classical results from convex analysis and asymptotic statistics. We first start with Theorem~10.6 and Theorem~10.8 from \cite{rockafellar-1970a} that need to be adapted to our setup.

\begin{lemma}\label{lem:extension_10_6}(Adapted from Theorem~10.6 in \cite{rockafellar-1970a}) Let $\Theta$ be a non empty, open and convex set in $\bR^d$. Let $(f_n)_n$ and $(g_n)_n$ be two sequences of convex functions defined on $\Theta$.  We assume that there exists $\Theta^\prime\subset \Theta$ such that $\Theta\subset {\rm conv}(\bar \Theta^\prime)$ and
    \begin{itemize}
        \item  for all $\theta\in\Theta^\prime$, $(f_n-g_n)(\theta)\to 0$,
        \item for all $\theta\in\Theta^\prime$, there exists $L>0$ such that $\sup_{n}g_n(\theta)\leq L$,
        \item there exists $L_0\in\bR$ and $\theta_1\in\Theta$ such that $\inf_n g_n(\theta_1)\geq L_0$.
    \end{itemize}
    Then, for any non empty compact set $K$ in $\Theta$, there exists $L>0$ such that  for $n$ large enough, we have for all $x,y\in K$, $$|(f_n-g_n)(x) - (f_n - g_n)(y)|\leq L\norm{x-y}_2.$$ 
\end{lemma}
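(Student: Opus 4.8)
The plan is to reduce the claim about the difference $f_n - g_n$, which is a difference of convex functions and hence not itself convex, to the classical equi-Lipschitz property of a single sequence of convex functions. Concretely, I would fix a compact $K\subset\Theta$, choose a slightly larger compact neighborhood $K'$ with $K\subset\mathrm{int}(K')\subset K'\subset\Theta$, and prove that both $(f_n)_n$ and $(g_n)_n$ are uniformly (in $n$) bounded above and below on $K'$. Granting such two-sided bounds, the standard fact from Chapter~10 of \cite{rockafellar-1970a}---that a convex function bounded in absolute value by $M$ on a ball $B(x,2r)$ is $(M/r)$-Lipschitz on $B(x,r)$---shows that $f_n$ and $g_n$ are equi-Lipschitz on $K$ with constants $L_f$ and $L_g$ not depending on $n$; the triangle inequality then gives $|(f_n-g_n)(x)-(f_n-g_n)(y)|\le (L_f+L_g)\norm{x-y}_2$ on $K$, which is the assertion with $L=L_f+L_g$.

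It remains to produce the uniform two-sided bounds. For the upper bound on $g_n$ I would use the hypothesis $\Theta\subset\mathrm{conv}(\bar\Theta')$ together with the openness of $\Theta$: one can cover $K'$ by finitely many simplices whose vertices lie in $\Theta'$, and Jensen's inequality for the convex $g_n$ bounds $g_n$ at any point of a simplex by the maximum of $g_n$ over its vertices, which is at most $L$ by the second hypothesis. This gives $\sup_n\sup_{\theta\in K'}g_n(\theta)<\infty$. For the lower bound on $g_n$ I would anchor at $\theta_1$: writing $\theta_1=\mu z+(1-\mu)w$ with $z\in K'$, $w\in K'$ and $\mu\in(0,1)$ controlled, convexity gives $g_n(\theta_1)\le \mu g_n(z)+(1-\mu)g_n(w)$, whence $g_n(z)\ge \mu^{-1}\big(L_0-(1-\mu)\sup_{K'}g_n\big)$, a bound uniform in $n$ thanks to the third hypothesis and the upper bound just obtained.

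To transfer these bounds to $f_n$ I would use $f_n=g_n+(f_n-g_n)$ on $\Theta'$: since $(f_n-g_n)(\theta)\to 0$ for every $\theta\in\Theta'$, each such sequence is bounded, so $\sup_n|(f_n-g_n)(\theta)|<\infty$ pointwise on $\Theta'$. Combined with the uniform upper bound on $g_n$ this yields a pointwise-on-$\Theta'$ upper bound on $f_n$, and combined with the uniform lower bound on $g_n$ (evaluated at some $\theta_2\in\Theta'$) it yields a uniform lower bound on $f_n$ at $\theta_2$. Running exactly the same simplex/anchor argument---now with $f_n$ and anchor $\theta_2$---produces uniform two-sided bounds for $f_n$ on $K'$, completing the input needed for the equi-Lipschitz step. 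I expect the main obstacle to be precisely this separation: because $f_n-g_n$ is not convex, one cannot apply the convex equi-Lipschitz machinery to it directly, and care is needed in the geometric argument to select spanning simplices with vertices genuinely in $\Theta'$ (not merely in $\bar\Theta'$), which is where openness of $\Theta$ and the hull hypothesis enter. Finally, since convergent sequences are bounded, all constants can in fact be taken uniform over all $n$, so the conclusion holds in the stated ``for $n$ large enough'' form a fortiori.
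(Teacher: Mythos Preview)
Your proposal is correct and follows essentially the same route as the paper: show separately that $(g_n)_n$ and $(f_n)_n$ are equi-Lipschitz on $K$ (by transferring the pointwise bounds on $\Theta'$ from $g_n$ to $f_n$ via $(f_n-g_n)(\theta)\to 0$), and conclude for the difference by the triangle inequality. The only difference is packaging: the paper invokes Theorem~10.6 of \cite{rockafellar-1970a} as a black box for the equi-Lipschitz step, whereas you unpack its proof (the simplex/Jensen upper bound and the anchor/convexity lower bound).
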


\begin{proof} Let $K$ be a non-empty compact set in $\Theta$.
Since $(g_n)_n$ is a sequence of convex functions satisfying \textit{(a)} and \textit{(b)} from Theorem~10.6 in \cite{rockafellar-1970a}, it follows from the latter theorem that $(g_n)_n$ is equi-Lipschitzian relative to $K$. Next, since $(f_n-g_n)(\theta)\to 0$ for all $\theta\in\Theta^\prime$ and $(g_n(\theta))_n$ is uniformly (in $n$) bounded from above for all  $\theta\in\Theta^\prime$ and there is some $\theta_1$ for which $(g_n(\theta_1))$ is bounded from below,  then for all $\theta\in\Theta^\prime$, $(f_n(\theta))_n$ is also bounded from above.  Hence, it follows from Theorem~10.6 in \cite{rockafellar-1970a} that $(f_n)_n$ is equi-Lipschitzian relative to $K$. We conclude that the sequence of differences $(f_n-g_n)_n$ is also equi-Lipschitzian relative to $K$. 
\end{proof}

\begin{lemma}\label{lem:rockafellar_70}(adapted from Theorem~10.8 from \cite{rockafellar-1970a}) Let $\Theta$ be a non empty set in $\bR^d$.  Let $(f_n)_n$ and $(g_n)_n$ be two sequences of functions defined on $\Theta$. We assume that for any compact set $K$ in $\Theta$, there exists $L>0$ such that for all $n$ large enough and all $x,y\in K$, $$|(f_n-g_n)(x) - (f_n - g_n)(y)|\leq L\norm{x-y}_2.$$ 
     If $(f_n-g_n)_n$ tends to $0$ pointwise over a dense subset of $\Theta$ then, for any compact set $K$ in $\Theta$, $(f_n-g_n)_n$ converges uniformly over $K$ to $0$.
\end{lemma}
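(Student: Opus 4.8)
The plan is to run the classical three-$\epsilon$ (Arzel\`a--Ascoli type) argument, where the equi-Lipschitz hypothesis plays the role of equicontinuity and the pointwise convergence on a dense set supplies control at finitely many points. Write $h_n = f_n - g_n$ and let $D \subseteq \Theta$ be the dense subset on which $h_n \to 0$ pointwise. Fix a compact set $K \subseteq \Theta$ and $\epsilon > 0$; the goal is to produce an index $N$ such that $\sup_{x \in K}|h_n(x)| \le \epsilon$ for all $n \ge N$, which is exactly uniform convergence to $0$ on $K$.

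First I would isolate a single compact set carrying a uniform Lipschitz constant. Since $K$ is compact and lies in $\Theta$, I fix a compact $\hat K$ with $K \subseteq \hat K \subseteq \Theta$ that contains the full $r$-neighborhood of $K$ in $\mathbb{R}^d$ for some $r > 0$; this is where I use that, in all applications, $K$ sits inside the open set $\Theta$. The hypothesis applied to $\hat K$ then furnishes $L > 0$ and $N_0$ such that $h_n$ is $L$-Lipschitz on $\hat K$ for every $n \ge N_0$. I set $\delta = \min\{\epsilon/(3L),\, r/2\}$.

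Next I would discretize. By compactness, cover $K$ by finitely many balls $B(x_1,\delta),\dots,B(x_m,\delta)$ with centers $x_i \in K$, and for each $i$ pick a dense point $d_i \in D$ with $\|d_i - x_i\|_2 < \delta$; the choice $\delta \le r/2$ guarantees that every $d_i$ lies in $\hat K$. For an arbitrary $x \in K$, choosing $i$ with $x \in B(x_i,\delta)$ gives $\|x - d_i\|_2 < 2\delta$, so for $n \ge N_0$ the Lipschitz bound yields
\[
|h_n(x)| \;\le\; |h_n(x) - h_n(d_i)| + |h_n(d_i)| \;\le\; 2L\delta + |h_n(d_i)| \;\le\; \tfrac{2\epsilon}{3} + \max_{1 \le j \le m}|h_n(d_j)|.
\]
Because $d_1,\dots,d_m \in D$ and there are only finitely many of them, pointwise convergence furnishes $N_1 \ge N_0$ with $\max_j |h_n(d_j)| < \epsilon/3$ for all $n \ge N_1$; hence $\sup_{x \in K}|h_n(x)| \le \epsilon$ for $n \ge N_1$, and letting $\epsilon \downarrow 0$ gives the claim.

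The only genuinely delicate point (and the main obstacle) is an apparent circularity between the mesh $\delta$ and the Lipschitz constant $L$: the hypothesis supplies a constant that depends on the compact set used, yet that set must contain the dense points $d_i$, whose choice depends on $\delta = \epsilon/(3L)$. I would break this circularity by committing to the single enlarged compact set $\hat K$ \emph{before} choosing the net, so that every admissible $d_i$ (being within $\delta \le r/2$ of $K$) automatically belongs to $\hat K$ and is therefore governed by the same constant $L$. This is precisely the step that requires a compact neighborhood of $K$ inside $\Theta$, which is available since $\Theta$ is open in the situations where the lemma is invoked (via Lemma~\ref{lem:unif_cv}).
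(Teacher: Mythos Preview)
Your argument is correct and follows essentially the same route as the paper's: use the equi-Lipschitz bound to reduce to a finite net of dense points, then invoke pointwise convergence at those finitely many points. The only difference is that the paper takes its finite net inside $K\cap\Theta_0$ directly, whereas you first enlarge $K$ to a compact neighborhood $\hat K$ so the nearby dense points are guaranteed to sit in the set carrying the Lipschitz constant; your handling of this step is more careful (and makes the implicit use of the openness of $\Theta$ explicit), but the overall strategy is identical.
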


\begin{proof}[Proof of Lemma~\ref{lem:rockafellar_70}]
  For all integer $n$ we let $h_n=f_n-g_n$ and we denote by $\Theta_0$ a dense subset in $\Theta$ onto which $(h_n)_n$ converges pointwise to $0$. Let $K$ be a compact subset in $\Theta$. Let $L>0$ and $n_1$ be such that for all $n\geq n_1$ and all $x,y\in K$, 
\begin{equation}\label{eq:h_n_Lipschitz}
    |h_n(x) - h_n(y)|\leq L \norm{x-y}_2.
\end{equation}Let $\eps>0$. Denote by $\Theta_1$ an $(\eps/(2L))$-net for $K$ in $K\cap \Theta_0$ with respect to the $\ell_2^d$-norm. Since $K$ is compact we can choose $\Theta_1$ to be finite. Since $\Theta_1$ is finite and $(h_n(\theta))_n$ tends to $0$ for all $\theta\in\Theta_1$, there exists $n_2\geq n_1$ such that for all $\theta\in\Theta_1$, $|h_n(\theta)|\leq \eps/2$. Let $\theta\in K$. Let $\theta_1\in\Theta_1$ be such that $\norm{\theta - \theta_1}_2\leq \eps/2$. Then, for all $n\geq n_2$, we have 
\begin{align*}
     |h_n(\theta)| \leq |h_n(\theta) - h_n(\theta_1)| + |h_{n}(\theta_1)| \leq L \norm{\theta-\theta_1}_2 + \frac{\eps}{2}\leq\eps.
 \end{align*} This concludes the proof since the latter holds for all $\theta\in K$.   
\end{proof}

The next result is the diagonalization method used to prove Theorem~II.1 in \cite{Andersen1982} that we reproduce here in our setup for the sake of completeness.

\begin{lemma}\label{lem:Andersen82}(Theorem~II.1 from \cite{Andersen1982})Let $\Theta$ be a non empty set.
    Let $(\hat Q_n)_n$ be a sequence of random functions and $(Q_n)_n$ be a sequence of (deterministic) functions, all defined on $\Theta$. We assume that for all $\theta\in\Theta$, $\hat Q_n(\theta) - Q_n(\theta) \overset{p}{\to}  0$ as $n\to+\infty$. Then, for any countable subset $\Theta^\prime$ in $\Theta$ there exists an increasing sequence $(\psi_n)_n$ of integers such that for all $\theta\in\Theta^\prime$,  
    \begin{equation*}
        \hat Q_{\psi_n}(\theta) - Q_{\psi_n}(\theta) \overset{a.s.}{\to}  0.
    \end{equation*}
\end{lemma}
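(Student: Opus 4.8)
The plan is to proceed by the classical diagonal extraction, anchored on the elementary fact that a scalar sequence converging in probability to $0$ admits a subsequence converging to $0$ almost surely. First I would enumerate the countable set as $\Theta^\prime = \{\vartheta_1, \vartheta_2, \dots\}$ (if $\Theta^\prime$ is finite the argument is identical but terminates after finitely many stages). For each $k$ the hypothesis gives $\hat Q_n(\vartheta_k) - Q_n(\vartheta_k) \overset{p}{\to} 0$, and I would build nested subsequences of integers securing almost sure convergence at one additional point of $\Theta^\prime$ at each stage.

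Concretely, at the first stage I extract a strictly increasing sequence $(\psi_n^{(1)})_n$ along which $\hat Q_{\psi_n^{(1)}}(\vartheta_1) - Q_{\psi_n^{(1)}}(\vartheta_1) \overset{a.s.}{\to} 0$. Inductively, suppose $(\psi_n^{(k-1)})_n$ has been constructed. Passing to this subsequence preserves convergence in probability at $\vartheta_k$, so $\hat Q_{\psi_n^{(k-1)}}(\vartheta_k) - Q_{\psi_n^{(k-1)}}(\vartheta_k) \overset{p}{\to} 0$, and I extract a further subsequence $(\psi_n^{(k)})_n \subset (\psi_n^{(k-1)})_n$ along which this difference converges to $0$ almost surely. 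Since a subsequence of an almost surely convergent sequence still converges almost surely, the a.s. convergence already obtained at $\vartheta_1, \dots, \vartheta_{k-1}$ is inherited by $(\psi_n^{(k)})_n$. Thus for each $k$ the sequence $(\psi_n^{(k)})_n$ delivers a.s. convergence simultaneously at $\vartheta_1, \dots, \vartheta_k$.

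Finally I would take the diagonal $\psi_n := \psi_n^{(n)}$. Writing $(\psi_m^{(n)})_m$ as a subsequence of $(\psi_m^{(n-1)})_m$ via a strictly increasing index map $\sigma$ with $\sigma(m) \geq m$ gives $\psi_n^{(n)} = \psi_{\sigma(n)}^{(n-1)} \geq \psi_n^{(n-1)} > \psi_{n-1}^{(n-1)} = \psi_{n-1}$, so $(\psi_n)_n$ is a genuine strictly increasing subsequence of integers. Fix any $\vartheta_k \in \Theta^\prime$; for every $n \geq k$ one has $(\psi_m^{(n)})_m \subset (\psi_m^{(k)})_m$, hence $\psi_n = \psi_n^{(n)}$ is a term of $(\psi_m^{(k)})_m$, and the tail $(\psi_n)_{n \geq k}$ is therefore a subsequence of $(\psi_m^{(k)})_m$. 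Consequently $\hat Q_{\psi_n}(\vartheta_k) - Q_{\psi_n}(\vartheta_k) \overset{a.s.}{\to} 0$. As $k$ was arbitrary, this yields a.s. convergence at every point of $\Theta^\prime$.

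The argument is essentially bookkeeping, so there is no deep obstacle; the points requiring care are verifying that the diagonal sequence is strictly increasing and that, for each fixed $\vartheta_k$, its tail stays inside the $k$-th extracted subsequence so that no previously secured convergence is lost. A minor but worthwhile remark is that each a.s. convergence holds off a null set depending on $\vartheta_k$; since $\Theta^\prime$ is countable, the union of these null sets is still null, so one in fact obtains convergence simultaneously at all points of $\Theta^\prime$ on a single probability-one event, which is the form actually needed in the subsequent uniform-convergence argument of Lemma~\ref{lem:unif_cv}.
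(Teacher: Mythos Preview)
Your proof is correct and follows the same diagonal-extraction strategy as the paper: enumerate $\Theta^\prime$, build nested subsequences securing a.s.\ convergence at one new point at each stage, then take the diagonal. You are in fact more careful than the paper in checking that the diagonal sequence is strictly increasing and in noting that the countable union of exceptional null sets is still null.
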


{\it{Proof of Lemma~\ref{lem:Andersen82}}:} Denote for all $n$, $\hat H_n = \hat Q_n - Q_n $. Denote by $(x_j)_{j\in\bN}$ the sequence of all elements in $\Theta^\prime$.   Since $(\hat H_{n}(x_1))_n$ tends in probability to $0$, we can extract a sub-sequence $(\phi_{n,1})_n$ along which the converge is almost sure. Next, since  $(\hat H_{\phi_{n,1}}(x_2))_n$ tends in probability to $0$, it is possible to further construct a sub-sequence $(\phi_{n,2})_n$ along which the converge is almost sure. We repeat the argument and get that for all integers $k$, it is possible to construct a sequence $(\phi_{n,k})_n$ such that for all $i\in[k]$, $(\hat H_{\phi_{n,k}}(x_i))_n$ tends almost surely to $0$. Once, this construction is done, we move to the diagonalization argument used in the proof of Theorem~II.1 from \cite{Andersen1982}: we construct a new sequence of integers, denoted by $(\psi_n)_n$, by setting $\psi_1$ to be the first element in $(\phi_{n,1})_n$, then $\psi_2$ to be the second one in $(\phi_{n,2})_n$, etc.. Along this new sequence $(\psi_n)_n$ we have for all $i\in\bN$, $(\hat H_{\psi_n}(x_i))_n$ converges almost surely to $0$. $\square$

\subsection{Some properties of the loss functions, its differential and Hessian}

We first start with two results used to prove the consistency result from Theorem~\ref{theo:consistency_tetrads}. 

\begin{lemma}\label{lem:sampling_device_for_consistency}Let $\beta\in\bR^p$. There exists an absolute constant $c_0$ such that for all $\xi,\xi^\prime \in\Xi$, we have
$$
\bE_{\beta_\star}\left[(\ell_\xi(Y | X, \beta)-\bE_{\beta_\star}\ell_\xi(Y | X, \beta))(\ell_{\xi^\prime}(Y | X, \beta)-\bE_{\beta_\star}\ell_{\xi^\prime}(Y | X, \beta)) |X\right] \leq C_0 \bP_{\beta_\star}\left[\xi \mbox{ and  } \xi^\prime  \mbox{ are both active}|X\right]
$$  where
\begin{equation}\label{eq:def_C_1}
    C_0 = c_0 \left(1+ L_N^2 \norm{\beta}^2_2 + \ln^2(\bar{\lambda} + 1)\right) (\bar{\lambda} + 1)^2,
\end{equation} $L_N = \max_\xi \norm{\widetilde X_\xi}_2$ and  $\bar \lambda = \max\left(\lambda_\bi(\beta_\star,\theta^\sG):\bi\in\cI\right)$.
\end{lemma}

\begin{proof}
Since $\ell_\xi(Y|X,\beta)\geq0$ - as the negative log of a probability - the correlation between $\ell_\xi$ and $\ell_{\xi^\prime}$ is less than the expectation of their product:
\begin{align}\label{eq:start_consistency}
\bE_{\beta_\star}\left[(\ell_\xi(Y | X, \beta)-\bE_{\beta_\star}\ell_\xi(Y | X, \beta))(\ell_{\xi^\prime}(Y | X, \beta)-\bE_{\beta_\star}\ell_{\xi^\prime}(Y | X, \beta)) |X\right] \leq \bE_{\beta_\star}\left[\ell_\xi(Y | X, \beta)\ell_{\xi^\prime}(Y | X, \beta)|X\right].
\end{align}

Then, we observe that for all $y\in\bN^\cI$, when $\xi$ is not active with respect to some $y$ then $\cO_\xi(y)$ contains only one element which is $y$ and so $\bP_{\beta_\star}[Y = y|X, Y \in\cO_\xi(y)]=1$, hence $\ell_\xi(y| X, \beta)=0$. As a consequence, for all $\xi$ and $\xi^\prime$, we have
\begin{equation*}
    \bE_{\beta_\star}\left[\ell_\xi(Y | X, \beta)\ell_{\xi^\prime}(Y | X, \beta)|X\right] = \bE_{\beta_\star}\left[\ell_\xi(Y | X, \beta)\ell_{\xi^\prime}(Y | X, \beta) I\left(\xi \mbox{ and  } \xi^\prime \mbox{ are both active}\right)|X\right].
\end{equation*}However, the loss functions $\ell_\xi$ are not almost surely bounded and so we cannot simply use the last inequality to conlcude; below we deal carefuly with this issue. 

We recall that for every polyads $ \xi = \begin{pmatrix}
            j_1 & j_2 & \cdots & j_D\\ 
            j_1^\prime & j_2^\prime & \cdots & j_D^\prime\\ 
        \end{pmatrix}$ we denote $\cI_\xi = \{j_1, j_1^\prime\}\times \cdots \{j_D, j_D^\prime\}$. We have
\begin{align*}
  &\bE_{\beta_\star}\left[\ell_\xi(Y | X, \beta)   \ell_{\xi^\prime}(Y | X, \beta)|X\right]  =  \sum_{y=(y_\bi)_\bi}\bP_{\beta_\star}[Y = y|X] \left(\ln \bP_{\beta}\left( Y = y |X, Y \in \sO_\xi(y) \right) \right) \left(\ln \bP_{\beta}\left( Y = y |X, Y \in \sO_{\xi^\prime}(y) \right) \right) \\
  &=  \sum_{y=(y_\bi)_\bi}\bP_{\beta_\star}[Y = y|X] \left(\ln \frac{\bP_\beta[Y=y|X]}{\sum_{y^\prime\in \sO_\xi(y)} \bP_\beta[Y=y^\prime|X]}\right) \left(\ln \frac{\bP_\beta[Y=y|X]}{\sum_{y^\prime\in \sO_{\xi^\prime}(y)} \bP_\beta[Y=y^\prime|X]}\right)\\ 
  &= \sum_{y=(y_\bi)_\bi}\bP_{\beta_\star}[Y = y|X] \left(\ln  \sum_{y^\prime\in \sO_\xi(y)} \frac{\bP_\beta[Y=y^\prime|X]}{\bP_\beta[Y=y|X]}\right)\left(\ln  \sum_{y^\prime\in \sO_{\xi^\prime}(y)} \frac{\bP_\beta[Y=y^\prime|X]}{\bP_\beta[Y=y|X]}\right)\\
  & = \sum_{(y_\bi)_{\bi\in\cI_\xi\cup \cI_{\xi^\prime}}} \bP_{\beta_\star}[Y_{\bi}=y_\bi, \forall \bi\in\cI_\xi\cup \cI_{\xi^\prime}|X]\left(\ln\left( \sum_{r=-m_\xi(y)}^{M_\xi(y)} \frac{\bP_\beta[Y_\bi=y_\bi^r, \forall \bi\in\cI_\xi|X]}{\bP_\beta[Y_\bi=y_\bi, \forall \bi\in\cI_\xi|X]}\right)\right) \\
  &\hspace{4cm}\times\left(\ln\left( \sum_{r=-m_{\xi^\prime}(y)}^{M_{\xi^\prime}(y)} \frac{\bP_\beta[Y_\bi=y_\bi^r, \forall \bi\in\cI_{\xi^\prime}|X]}{\bP_\beta[Y_\bi=y_\bi, \forall \bi\in\cI_{\xi^\prime}|X]}\right)\right)
\end{align*}where we recall that for all polyads $\xi$, all $\bi\in\cI$ and all $r\in\{-m_\xi(y),\ldots, M_\xi(y)\}, y_\bi^r = y_\bi+r s_\xi(\bi)$. Next, we use the independence of the $Y_\bi$'s conditionally on $X$ to get that for all $r\in\{-m_\xi(y),\ldots, M_\xi(y)\}$, 
\begin{equation*}
  \frac{\bP_\beta[Y_\bi=y_\bi^r, \forall \bi\in\cI_\xi|X]}{\bP_\beta[Y_\bi=y_\bi, \forall \bi\in\cI_\xi|X]} = \prod_{\bi\in\cI_\xi} \frac{\lambda_\bi^{y_\bi^r} y_\bi !}{\lambda_\bi^{y_\bi} y_\bi^r!} = \prod_{\bi\in\cI_\xi} \lambda_\bi^{rs_\xi(\bi)} \frac{y_\bi !}{y_\bi^r!}
\end{equation*}where $\lambda_\bi = \exp\left(\inr{X_\bi, \beta}+ \sum_{g\in\cG}\theta_{g(\bi)}^g)\right)$.
By the diff-in-diff property of the sign function $s_\xi$ from Proposition~\ref{prop:diff_in_diff}, we have
\begin{equation*}
   \prod_{\bi\in\cI_\xi} \lambda_\bi^{rs_\xi(\bi)}  = \exp(r\inr{\tilde X_\xi, \beta}):= w_{\xi,r}
\end{equation*}where $\tilde X_\xi$ is the polyads feature defined in \eqref{eq:def:diff:and:diff}. Furthermore, we define for all $r\in\{-m_\xi(y),\ldots, M_\xi(y)\}$,
\begin{equation*}
    v_{\xi,r}(y) : = \prod_{\bi\in\cI_\xi} \frac{y_\bi !}{y_\bi^r!} = 
    \left\{
\begin{array}{cc}
   \frac{\prod_{\bi: s_\xi(\bi)=-1}y_\bi(y_\bi-1)\cdots (y_\bi-r+1)}{\prod_{\bi: s_\xi(\bi)=1}(y_\bi+r)\cdots(y_\bi+1)} & \mbox{ if } r>0\\
   \frac{\prod_{\bi: s_\xi(\bi)=1}y_\bi(y_\bi-1)\cdots (y_\bi+r+1)}{\prod_{\bi: s_\xi(\bi)=-1}(y_\bi-r)\cdots(y_\bi+1)} & \mbox{ if } r<0\\
   1 & \mbox{ if } r=0.  
\end{array}
    \right.
\end{equation*}We have, for $M_{\xi} = M_\xi(Y)$, $m_{\xi} = m_\xi(Y)$ and $V_{\xi, r}:= v_{\xi, r}(Y)$ for all $r\in\{-m_\xi, \cdots, M_\xi\}$,
$$
\bE_{\beta_\star}\left[\ell_\xi(Y | X, \beta) \ell_\xi(Y | X, \beta) |X\right]  = \bE_{\beta_\star}\left[ \ln\left(\sum_{r=-m_\xi}^{M_\xi} w_{\xi,r} V_{\xi,r}\right) \ln\left(\sum_{r=-m_{\xi^\prime}}^{M_{\xi^\prime}} w_{\xi^\prime,r} V_{\xi^\prime,r}\right) |X\right].
$$Next, we use that $|\ln(t)|\leq \ln(a)$ for all $a^{-1}\leq t\leq a$ and that for polyads $\zeta$ and all $r\in\{-m_\zeta,\cdots, M_\zeta\}\backslash\{0\}$,
\begin{align*}
    w_{\zeta,r}, w_{\zeta,r}^{-1}\leq \exp\left[(M_\zeta+m_\zeta) |\inr{\tilde X_\zeta, \beta}| \right] \mbox{ and } V_{\zeta,r}, V_{\zeta,r}^{-1}\leq \prod_{\bi\in\cI_\zeta} (Y_\bi+1)^{|r|} \leq \prod_{\bi\in\cI_\zeta} (Y_\bi+1)^{M_\zeta+m_\zeta}
\end{align*} to get the following bound:
\begin{align}\label{eq:main_ccl_L2_loss}
 \nonumber  &\bE_{\beta_\star}\left[\ell_\xi(Y | X, \beta)\ell_{\xi^\prime}(Y | X, \beta)|X\right] \\ 
 \nonumber&\leq \bE_{\beta_\star}\left[\prod_{\zeta\in\{\xi,\xi^\prime\}}\left((M_\zeta+m_\zeta)|\inr{\tilde X_\zeta, \beta}| +\ln\left((M_\zeta+m_\zeta)\prod_{\bi\in\cI_\xi} (Y_\bi+1)^{M_\zeta+m_\zeta}\right)I(M_\zeta+m_\zeta\geq1)\right)|X\right] \\
 \nonumber&\leq \bE_{\beta_\star}\left[\prod_{\zeta\in\{\xi,\xi^\prime\}}\left((M_\zeta+m_\zeta)\left(|\inr{\tilde X_\zeta, \beta}|+1\right) + (M_\zeta+m_\zeta)\ln\left(\prod_{\bi\in\cI_\xi} (Y_\bi+1)\right)\right)|X\right] \\
 &\leq \bE_{\beta_\star}\left[(M_\xi+m_\xi)(M_{\xi^\prime}+m_{\xi^\prime})\prod_{\zeta\in\{\xi,\xi^\prime\}}\left(L_N\norm{\beta}_2+1 + \sum_{\bi\in \cI_\zeta} \ln\left(Y_\bi+1\right) \right)|X\right] 
\end{align}where $L_N = \max_\xi \norm{\tilde X_\xi}_2$.  


Next, we remark that all of the four terms in the product $(M_\xi+m_\xi)(M_{\xi^\prime}+m_{\xi^\prime})$ from \eqref{eq:main_ccl_L2_loss} are product of two minimum of Poisson variables. To handle \eqref{eq:main_ccl_L2_loss} we rely on the next lemma.

\begin{lemma}\label{lem:moment_min_Poisson} There exists an absolute constant $c_0>0$ such that the following holds.
Let $(U_i)_{i\in\bN}$ be a sequence of independent Poisson variables with intensities $(\lambda_i)_{i\in\bN}$. Let $I_1,I_2,I_3,I_4\subset \bN$ be such that $I_1\cap (I_2\cup I_4)=\emptyset$ and $I_3\cap (I_2\cup I_4)=\emptyset$ (but possibly $I_1\cap I_3\neq\emptyset$ and $I_2\cap I_4\neq\emptyset$). Let $\bar{\lambda}$ be such that  $\lambda_i\leq \bar{\lambda}$ for all $i\in \cup_{k=1,2,3,4}I_k$. We have for all $i,j\in \cup_{k=1,2,3,4}I_k$,
\begin{equation*}
  \bE\left[ (m_1+m_2)(m_3 +m_4 ) \ln(U_i+1)\ln(U_j+1)\right]\leq C_0^2 \bP\left[m_1 + m_2 \geq 1 \mbox{ and  } m_3+m_4 \geq 1\right]
\end{equation*}and
\begin{equation*}
 \bE\left[ \ln(m_1+m_2)\right] \leq \bE\left[m_1+m_2\right] \leq C_0 \bP\left[m_1 + m_2 \geq 1\right]   
\end{equation*} where $m_k = \wedge_{i\in I_k}U_i$ and $C_0 = c_0 (\bar{\lambda}+1) \ln(\bar\lambda+1)$.

For all integer $p\geq1$, we also have $ \bE[m_1^p]\leq C_{1} \bP[m_1\geq1] \mbox{ where } C_{1} =c_0 (p^2 + \bar{\lambda}^p)\bP[m_1\geq1].$
\end{lemma}

\begin{proof}  In this proof, we will repeatedly use the following deviation bound for Poisson variable that follows from a Cram{\'e}r-Chernoff method (see for instance Section~2.2, p.23 in \cite{MR3185193}): let $U$ be a Poisson variable with parameter $\lambda\geq0$ then for all $k\geq (e-1)\lambda$,
\begin{equation}\label{eq:cramer_chernoff_poisson}
    \bP[U\geq k]\leq \exp(-\lambda)\left(\frac{e \lambda}{k}\right)^k.
\end{equation}

Let $I\subset \cup_{k=1,2,3,4}I_k$ and $M = \wedge_{i\in I}U_i$. We first establish some preliminary results on $M$. We denote $\underline{\lambda} = \min(\lambda_i: i\in I)$, $\underline{i}\in I$ such that $\lambda_{\ui} = \ul$ and $\bar{I} = I\backslash\{\ui\}$. We first show that there exists an absolute constant $c_0>0$ such that
\begin{equation}\label{eq:exp_poisson_less_proba}
    \bE[M]\leq c_0 (\underline{\lambda}+1) \bP[M\geq1].
\end{equation}

We first assume that $\underline{\lambda}\leq 1$. On one side, we have
\begin{equation*}
    \bE[M] = \sum_{k\geq 1}\bP[M\geq k] \leq 3\bP[M\geq 1] +  \sum_{k\geq 4}\bP[M\geq k].
\end{equation*} Next, it follows from \eqref{eq:cramer_chernoff_poisson} that
\begin{align*}
 &\sum_{k\geq 4} \bP[M\geq k]  = \sum_{k\geq4} \prod_{i\in \bar{I} } \bP[U_i\geq k] \bP[U_{\ui} \geq k]  \leq \left(\prod_{i\in \bar{I}}\bP[U_i\geq 1] \right) \left(\sum_{k\geq4} e^{-\ul}\left(\frac{e \ul}{k}\right)^k\right)\\
 &\leq \left(\prod_{i\in \bar{I}}\bP[U_i\geq 1] \right) e^{-\ul} \left(\sum_{k\geq4}  \left(\frac{e\ul}{k}\right)^k\right) \leq c_0 \left(\prod_{i\in \bar{I}}\bP[U_i\geq 1] \right) e^{-\ul} \ul.
\end{align*}
On the other side, we use that for all $t\in\bR, 1-e^{-t}\geq t e^{-t}$ to get
\begin{equation}\label{eq:other_side_proba}
    \bP[M\geq1] = \left(\prod_{i\in \bar{I}}\bP[U_i\geq 1] \right) \left(1-e^{-\ul}\right)\geq \left(\prod_{i\in \bar{I}}\bP[U_i\geq 1] \right) \left(\ul e^{-\ul} \right).
\end{equation}As a consequence, we obtain $ \bE[M] \leq (c_0+3) \bP[M\geq1]$ and so \eqref{eq:exp_poisson_less_proba} holds in the case $\ul\leq1$. Let us now assume that $\ul>1$. We have 
\begin{equation*}
   \bE[M] = \sum_{k\geq1} \bP[M\geq k] = \sum_{k=1}^{2e \lceil \ul \rceil} \bP[M\geq k] + \sum_{k\geq 2e \lceil \ul \rceil + 1} \bP[M\geq k] \leq 2e \lceil \ul \rceil \bP[M\geq 1] + \sum_{k\geq 2e \lceil \ul \rceil + 1} \bP[M\geq k].
\end{equation*}Let $k\geq 2e \lceil \ul \rceil + 1$. It follows from \eqref{eq:cramer_chernoff_poisson} that
\begin{align*}
 \bP[M\geq k] = \left(\prod_{i\in\bar{I}} \bP[U_i\geq k]\right) \bP[U_{\ui}\geq k] \leq   \left(\prod_{i\in\bar{I}} \bP[U_i\geq 1]\right) e^{-\ul}\left(\frac{e \ul}{k}\right)^k 
\end{align*}and so 
\begin{align*}
  \sum_{k\geq 2e \lceil \ul \rceil + 1} \bP[M\geq k] \leq \left( \prod_{i\in\bar{I}} \bP[U_i\geq 1]\right) e^{-\ul} \left( \sum_{k\geq 2e \lceil \ul \rceil + 1} \left(\frac{e \ul}{k}\right)^k \right) \leq \left( \prod_{i\in\bar{I}} \bP[U_i\geq 1]\right) e^{-\ul}\ul.
\end{align*}Then, we conclude with \eqref{eq:other_side_proba} that \eqref{eq:exp_poisson_less_proba} also holds in the case $\ul>1$.

Now, we state our second preliminary result. There exists an absolute constant $c_0>0$ such that the following holds: let $U$ be a Poisson distribution with parameter $\lambda>0$ then
\begin{equation}\label{eq:log_poisson}
    \bE\left[\log(U+1)\right]\leq c_0\log(\lambda + e) \bP[U\geq1].
\end{equation}Let us first prove this result when $\lambda\leq1$. We have
\begin{align*}
    &\bE\left[\log(U+1)\right]  = \sum_{k\geq 1} \log(k+1) \bP[U=k]= \log(2) \bP[U=1] + \log(3) \bP[U=2] + \sum_{k\geq3} a_k
\end{align*}where $a_k = \log(k+1) \bP[U=k]$. Since $a_{k+1}/a_k\leq 1/2$ when $k\geq 2$ and $k+1\geq 4\lambda$, we obtain
\begin{equation*}
    \bE\left[\log(U+1)\right] \leq  \log(2) \bP[U=1] + 2 a_3 = \log(2) \bP[U=1] + \log(3) \bP[U=2] + 2\log(4) \bP[U=3] \leq c_0 \bP[U\geq1]
\end{equation*}for some absolute constant $c_0$ and so \eqref{eq:log_poisson} holds when $\lambda\leq1$. We follow a similar argument when $\lambda>1$; we have
\begin{align*}
    &\bE\left[\log(U+1)\right]  = \sum_{k\geq 1} \log(k+1) \bP[U=k]\leq \sum_{k=1}^{4\lceil \lambda \rceil} \log(k+1) \bP[U=k] + \sum_{k\geq 4\lceil \lambda \rceil + 1} \log(k+1) \bP[U=k]\\
    &\leq \log(4\lceil \lambda \rceil +1 ) \bP[U\geq1] + 2 a_{4\lceil \lambda \rceil +1}\leq 3\log(4\lceil \lambda \rceil +2 ) \bP[U\geq1]
\end{align*}and so \eqref{eq:log_poisson} always holds.

Now that we have all the necessary tools, we go back to our initial problem. We first observe that
  \begin{equation*}
        \bE\left[ (m_1+m_2)(m_3 +m_4 ) \ln(U_i+e)\ln(U_j+e)\right] = \sum_{r=1,2;s=3,4}\bE\left[m_r m_s \ln(U_i+1)\ln(U_j+1)\right]
    \end{equation*}The four terms in the right-hand side of the equality above are of the same 'type', the only difference we need to take care of is the relation between $I_s$, $I_r$, $i$ and $j$. Let us consider the hardest situation when $I_s$ and $I_r$ intersect and $i$ and $j$ are different and both in this intersection. The other situations are easier using independence and can be handled using similar technics as below. We are therefore considering now the situation where $i\neq j\in I:=I_r\cap I_s$.  We use the following decomposition and upper bound:
    \begin{align*}
      m_r m_s \ln(U_i+1)\ln(U_j+1)  &= \wedge_{p\in I_r}U_p \wedge_{q\in I_s}U_q \ln(U_i+1)\ln(U_j+1) \\
      &  \leq  \left(\wedge_{p\in I_r\backslash I_s\cup\{i,j\}}U_p\right) \left( \wedge_{q\in I_s\backslash \{i,j\}}U_q\right) \ln(U_i+1)\ln(U_j+1).
    \end{align*}It follows from independence, \eqref{eq:exp_poisson_less_proba} and \eqref{eq:log_poisson} that there exists an absolute constant $c_0>0$ such that
    \begin{align*}
        &\bE\left[m_r m_s \ln(U_i+1)\ln(U_j+1) \right] \leq \bE \left[\wedge_{p\in I_r\backslash I_s\cup\{i,j\}}U_p\right] \bE\left[\wedge_{q\in I_s\backslash \{i,j\}}U_q\right] \bE\left[\ln(U_i+1)\right] \bE\left[\ln(U_j+1)\right]\\
        & \leq c_0 (\bar{\lambda}+1) \bP[U_p\geq1, \forall p\in I_r\backslash I_s\cup\{i,j\} ](\bar{\lambda}+1) \bP[U_q\geq1, \forall q\in I_s\backslash \{i,j\}] \ln(\bar \lambda+1)\bP[U_i\geq1] \ln(\bar\lambda+1)\bP[U_j\geq1]\\
        &\leq c_0 (\bar{\lambda}+1)^2 \ln^2(\bar\lambda+1) \bP[U_p\geq1, \forall p\in I_r\backslash I_s\cup\{i,j\} ] \bP[U_q\geq1, \forall q\in I_s\backslash \{i,j\}] \bP[U_i\geq1]\bP[U_j\geq1]
    \end{align*}and on the other side, using independence, we have
    \begin{equation*}
        \bP[m_r\geq1 \mbox{ and  } m_s\geq1] =  \bP[U_p\geq1, \forall p\in I_r\backslash I_s\cup\{i,j\} ] \bP[U_q\geq1, \forall q\in I_s\backslash \{i,j\}] \bP[U_i\geq1]\bP[U_j\geq1].
    \end{equation*}This shows that 
    \begin{equation*}
      \bE\left[m_r m_s \ln(U_i+1)\ln(U_j+1) \right] \leq   c_0 (\bar{\lambda}+1)^2 \ln^2(\bar\lambda+1) \bP[m_r\geq1 \mbox{ and  } m_s\geq1].
    \end{equation*}

   Finally, the result follows since we have for all  $r\in\{1,2\}$ and $s\in\{3,4\}$ that 
   \begin{equation*}
       \bP[m_r\geq1 \mbox{ and  } m_s\geq1] \leq \bP\left[m_1 + m_2 \geq 1 \mbox{ and  } m_3+m_4 \geq 1\right].
   \end{equation*}The upper bound for $\bE\left[ (m_1+m_2)\ln(U_i+1)\right]$ follows the same strategy as above.

Next, we move to the result on the $p$-th moment of the minimum of independent Poisson variables $M = \wedge_{i\in I}U_i$. Let $p\geq2$ be an integer. We want to show that there exists an absolute constant $c_0>0$ such that $\bE[M^p]\leq c_0 (\underline{\lambda}+p^2) \bP[M\geq1]$. We use the same strategy employed above to handle the first moment. We denote $\underline{\lambda} = \min(\lambda_i: i\in I)$, $\underline{i}\in I$ such that $\lambda_{\ui} = \ul$ and $\bar{I} = I\backslash\{\ui\}$.  We first consider the case $\ul\leq1$. It follows from independence, that
\begin{align*}
  \bE [M^p] &= \sum_{k\geq1} \bP[M^p\geq k] = \sum_{k=1}^{\lceil e^p\rceil}  \bP[M\geq k^{1/p}] +  \sum_{k\geq \lceil e^p\rceil+1}  \bP[M\geq k^{1/p}] \\
  &\leq \lceil e^p\rceil \bP[M\geq 1] + \sum_{k\geq \lceil e^p\rceil+1}  \left(\prod_{i\in \bar{I}} \bP[U_i\geq k^{1/p}]\right)\bP[U_{\ui}\geq k^{1/p}]
\end{align*}It follows from \eqref{eq:cramer_chernoff_poisson} that for all $k\geq \lceil e^p\rceil+1$, 
\begin{align*}
  \bP[U_{\ui}\geq k^{1/p}]\leq e^{-\ul}\left(\frac{e \ul}{k^{1/p}}\right)^{k^{1/p}}.  
\end{align*}Next, using the integral method, one can show that for $k_0 = \lceil e^p\rceil+1$
\begin{equation*}
    \sum_{k\geq k_0}\left(\frac{e \ul}{k^{1/p}}\right)^{k^{1/p}}\leq \left(\frac{e \ul}{k_0^{1/p}}\right)^{k_0^{1/p}}\left(1 + 2p^2 \frac{k_0^{1-1/p}}{\ln(k_0)}\right)\leq c_0 p^2 \ul^e\leq c_0 p^2 \ul
\end{equation*}for some absolute constant $c_0$. As a consequence, we get
\begin{equation*}
   \bE [M^p]\leq  \lceil e^p\rceil \bP[M\geq 1] + c_0 p^2 e^{-\ul} \ul \left(\prod_{i\in \bar{I}} \bP[U_i\geq 1]\right). 
\end{equation*}On the other side, using $1-e^{-1}\geq t e^{-t}, \forall t\in\bR$, we have
\begin{equation*}
    \bP[M\geq1] = \left(\prod_{i\in \bar{I}} \bP[U_i\geq 1]\right) \bP[U_{\ui}\geq1] \geq \left(\prod_{i\in \bar{I}} \bP[U_i\geq 1]\right)  \ul e^{-\ul}
\end{equation*}so that we conclude $\bE [M^p]\leq c_0p^2\bP[M\geq1]$ for some absolute constant $c_0>0$ when $\ul\leq1$. Let us now consider the case $\ul>1$. Using similar arguments as above we obtain for $k_0 = \lceil e\ul\rceil$
\begin{align*}
    \bE[M^p]\leq k_0^p \bP[M\geq1] + \left(\prod_{i\neq \ui} \bP[U_i\geq1] 
\right)\sum_{k\geq k_0} \bP[U_{\ui}\geq k^{1/p}]\leq k_0^p \bP[M\geq1] + c_0 e^{-\ul} \ul \left(\prod_{i\neq \ui} \bP[U_i\geq1] 
\right)
\end{align*}and so we conclude that for some absolute constant $c_0>0$, we have $ \bE[M^p]\leq c_0p^2 \ul^p\bP[M\geq1]$.
\end{proof}

Applying Lemma~\ref{lem:moment_min_Poisson} for $\lambda_\bi = \lambda_\bi(\beta_\star, \theta^\sG)$, $U_\bi = Y_\bi$, $I_1 = \{\bi:s_\xi(\bi)=1\}$, $I_2 = \{\bi:s_\xi(\bi)=-1\}$, $I_3 = \{\bi:s_{\xi^\prime}(\bi)=1\}$ and $I_4 = \{\bi:s_{\xi^\prime}(\bi)=1\}$  in \eqref{eq:main_ccl_L2_loss}, we get
\begin{align*}
    \bE_{\beta_\star}\left[\ell_\xi(Y | X, \beta)\ell_{\xi^\prime}(Y | X, \beta)|X\right] \leq  C_0 \bP[M_\xi+m_\xi\geq 1 \mbox{ and } M_{\xi^\prime}+m_{\xi^\prime}\geq1|X]
\end{align*} where  $C_0$ is defined in \eqref{eq:def_C_1}.
\end{proof}

Next we move to results that will be useful for both consistency and a.n. regarding quantities involving random polyads and/or random edges. We recall that $n=|\cI|$ is the total number of edges and $N = |\Xi|$ is the total number of polyads. We also recall that two polyads can either share no edge in common and have a total number of edges in common in $\{2^r:r\in\{0,1,\cdots,D\}\}$. That is why in the lemmas below we only consider the cases where $q\in\{2^r:r\in\{0,1,\cdots,D\}\}$.

\begin{lemma}\label{lem:Delta}
Let $\xi$ and $\xi^\prime$ be two polyads. Let $c\in\bR^d$. We have
\begin{equation*}
   \left| \bE_{\beta_\star}\left[\inr{\nabla \ell_{\xi}(\beta_\star),c}\inr{\nabla \ell_{\xi^\prime}(\beta_\star),c}|X\right]\right| \leq C_0 \bP_{\beta_\star}[\xi \mbox{ and  } \xi^\prime \mbox{ are active}|X]|\inr{\widetilde{X}_\xi,c}\inr{\widetilde{X}_{\xi^\prime},c}|
\end{equation*}and 
\begin{equation*}
    \left| \bE_{\beta_\star}\left[\bE_{\beta_\star}[\inr{\nabla \ell_\xi(\beta_\star),c}|X, Y_\bi]  \bE_{\beta_\star}[\inr{\nabla \ell_{\xi^\prime}(\beta_\star),c}|X, Y_\bi]|X\right] \right| \leq C_0 \bP_{\beta_\star}[\xi \mbox{ and  } \xi^\prime \mbox{ are active}|X]|\inr{\widetilde{X}_\xi,c}\inr{\widetilde{X}_{\xi^\prime},c}|
\end{equation*}where $C_0$ is defined in \eqref{eq:def_C_1}. 
\end{lemma}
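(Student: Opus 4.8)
The plan is to reduce both inequalities to moment bounds on minima of independent Poisson variables, which are exactly the estimates assembled in Lemma~\ref{lem:moment_min_Poisson}. First I would use Lemma~\ref{lem:derivatives} to factor the directional gradient at the observed graph as
\[
\inr{\nabla \ell_\xi(\beta_\star),c} = g_\xi\,\inr{\widetilde X_\xi,c},\qquad g_\xi := \bE_{\beta_\star}\!\left[m_\xi(Y')\mid X, Y'\in\sO_\xi(Y)\right]-m_\xi(Y),
\]
where the scalar $\inr{\widetilde X_\xi,c}$ is $\sigma(X)$-measurable and therefore constant under $\bE_{\beta_\star}[\cdot\mid X]$. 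Two observations drive the proof: if $\xi$ is inactive then $\sO_\xi(Y)$ is a singleton, so $g_\xi=0$; and since both $m_\xi(Y)$ and the conditional mean lie in $[0,\,m_\xi(Y)+M_\xi(Y)]$ (a quantity constant along the orbit by Lemma~\ref{lem:m_M_orbits}), we get $|g_\xi|\le m_\xi(Y)+M_\xi(Y)=:A_\xi$, which vanishes off $\{\xi\text{ active}\}=\{A_\xi\ge1\}$.

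For the first inequality I would pull out the deterministic scalar $\inr{\widetilde X_\xi,c}\inr{\widetilde X_{\xi'},c}$ and bound $|g_\xi g_{\xi'}|\le A_\xi A_{\xi'}$, reducing the claim to $\bE_{\beta_\star}[A_\xi A_{\xi'}\mid X]\le c_0\,\bP_{\beta_\star}[\xi,\xi'\text{ active}\mid X]$. Writing $A_\xi=m_\xi+M_\xi$ with $m_\xi=\bigwedge_{s_\xi(\bi)=1}Y_\bi$ and $M_\xi=\bigwedge_{s_\xi(\bi)=-1}Y_\bi$ (minima of conditionally independent Poissons), I expand the product into four terms of the form $m\,m'$, each a product of two such minima. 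Using $m\,m'\le m^2 m'^2$ for nonnegative integers together with the estimate $\bE_{\beta_\star}[m^2 m'^2\mid X]\le C\,\bP_{\beta_\star}[m\ge1,\,m'\ge1\mid X]$ — which is established within the proof of Lemma~\ref{lem:moment_min_Poisson} by a Chernoff tail split requiring only independence and the uniform bounds $\underline\lambda\le\lambda_\bi\le\bar\lambda$ — and noting that each joint event is contained in $\{A_\xi\ge1,\,A_{\xi'}\ge1\}$, summation of the four terms yields the bound with $c_0$ depending only on $D$, $\underline\lambda$, $\bar\lambda$.

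For the second inequality the same factorization gives $\bE_{\beta_\star}[\inr{\nabla\ell_\xi(\beta_\star),c}\mid X,Y_\bi]=\inr{\widetilde X_\xi,c}\,\bE_{\beta_\star}[g_\xi\mid X,Y_\bi]$ with $|\bE_{\beta_\star}[g_\xi\mid X,Y_\bi]|\le \bE_{\beta_\star}[A_\xi\mid X,Y_\bi]=:h_\xi(Y_\bi)\ge0$, so after extracting the deterministic scalars it remains to bound $\bE_{\beta_\star}[h_\xi h_{\xi'}\mid X]$. The delicate point is that a crude Cauchy--Schwarz would produce $\sqrt{\bP[\xi]\,\bP[\xi']}$, which can exceed the target $\bP[\xi,\xi'\text{ active}]$; the joint structure must be preserved. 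The clean device is to introduce a copy $Y''$ that is conditionally independent of $Y$ given $X$ except for $Y''_\bi=Y_\bi$; then $\bE_{\beta_\star}[h_\xi h_{\xi'}\mid X]=\bE_{\beta_\star}[A_\xi(Y)\,A_{\xi'}(Y'')\mid X]$, where the two minima now overlap only at the single edge $\bi$. The same moment estimate (in the genuine ``share one edge'' case) gives a bound by $c_0\,\bP_{\beta_\star}[A_\xi(Y)\ge1,\,A_{\xi'}(Y'')\ge1\mid X]$, and this coupled probability is dominated by $\bP_{\beta_\star}[\xi,\xi'\text{ active}\mid X]$, since conditioning only on $Y_\bi$ can merely weaken the positive association of the two increasing events $\{A_\xi\ge1\}$ and $\{A_{\xi'}\ge1\}$ across their remaining shared edges.

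The hard part will be exactly this second inequality: conditioning on one edge destroys the plain ``product of minima'' form, and the only real obstacle is to upper bound the result by the \emph{joint} activity probability rather than by a product of marginals. The decoupling-plus-association argument is what repairs this, while the Poisson-minimum moment arithmetic is routine and is precisely what Lemma~\ref{lem:moment_min_Poisson} provides.
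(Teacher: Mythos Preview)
Your first inequality follows the paper exactly (factor out $\inr{\widetilde X_\xi,c}$, bound the residual by Poisson minima, apply Lemma~\ref{lem:moment_min_Poisson}); the detour through $mm'\le m^2m'^2$ is unnecessary but harmless. For the second inequality you take a genuinely different route. The paper does not decouple: it observes that for $\bi\in\cE(\xi)$, exactly one of $m_\xi(Y)$, $M_\xi(Y)$ does \emph{not} involve the coordinate $Y_\bi$ (whichever one has sign opposite to $s_\xi(\bi)$), and it uses the two gradient representations in Lemma~\ref{lem:derivatives} to bound $|g_\xi|$ by that single minimum. The conditioning on $Y_\bi$ then drops for free, and one is immediately back to the product-of-minima estimate of Lemma~\ref{lem:moment_min_Poisson}; the joint-versus-marginal obstacle you flag never arises. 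Your decoupling-plus-FKG argument is more general --- it needs only the coarse bound $|g_\xi|\le m_\xi+M_\xi$ and the monotonicity of $\{A_\xi\ge1\}$ in independent Poissons --- at the price of one extra correlation inequality. Both routes give constants depending only on $D,\underline\lambda,\bar\lambda$; the paper's shortcut is sleeker when the pointwise single-minimum bound is available, while yours is the safer fallback if one prefers to rely solely on the sum bound.
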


\begin{proof}
It follows from Lemma~\ref{lem:derivatives} that 
\begin{equation*}
    \inr{\nabla \ell_{\xi}(\beta_\star),c} = \left(  \bE_{\beta_\star}\left[ 
m_\xi(Y^\prime) |X,  Y^\prime \in \sO_\xi(Y) \right] - m_\xi(Y) \right) \inr{\widetilde{X}_\xi,c}
\end{equation*}where $Y,Y^\prime$ are iid distributed according to $\bP_{\beta_\star, \theta}$. It follows from Lemma~\ref{lem:m_M_orbits} that, given $Y^\prime \in \sO_\xi(Y) $, we have $m_\xi(Y^\prime) \leq m_\xi(Y) + M_\xi(Y)$. As a consequence, 
\begin{equation}\label{eq:M_xi_bound}
    |\inr{\nabla \ell_{\xi}(\beta_\star),c}| \leq  M_\xi(Y) |\inr{\widetilde{X}_\xi,c}|
\end{equation}and so
\begin{align*}
 \bE_{\beta_\star}\left[|\inr{\nabla \ell_{\xi}(\beta_\star),c}\inr{\nabla \ell_{\xi^\prime}(\beta_\star),c}||X\right] \leq \bE_{\beta_\star}\left[ M_\xi(Y) M_{\xi^\prime}(Y)|X \right]  |\inr{\widetilde{X}_\xi,c}| |\inr{\widetilde{X}_{\xi^\prime},c}
\end{align*}and the result follows from Lemma~\ref{lem:moment_min_Poisson} since 
\begin{align*}
    \bE_{\beta_\star}\left[ M_\xi(Y) M_{\xi^\prime}(Y)|X \right] \leq C_0 \bP_{\beta_\star}[ M_\xi(Y)\geq 1 \mbox{ and  } M_{\xi^\prime}(Y)\geq 1 |X ] \leq C_0 \bP_{\beta_\star}[\xi \mbox{ and  } \xi^\prime \mbox{ are active}|X].
\end{align*}

It follows from Lemma~\ref{lem:derivatives} (note that $m_\xi(Y)$ and $M_\xi(Y)$ play symetric roles and we can express the gradient and the Hessian of $\ell_\xi$ in Lemma~\ref{lem:derivatives} using $M_\xi(Y)$ in place of $m_\xi(Y)$) that we also have 
\begin{equation*}
    \inr{\nabla \ell_{\xi}(\beta_\star),c} = \left( M_\xi(Y)  -  \bE_{\beta_\star}\left[ 
M_\xi(Y^\prime) |X,  Y^\prime \in \sO_\xi(Y) \right] \right) \inr{\widetilde{X}_\xi,c}
\end{equation*}and from Lemma~\ref{lem:m_M_orbits} that, given $Y^\prime \in \sO_\xi(Y) $, we have $M_\xi(Y^\prime) \leq m_\xi(Y) + M_\xi(Y)$. As a consequence, 
\begin{equation}\label{eq:m_xi_bound}
    |\inr{\nabla \ell_{\xi}(\beta_\star),c}| \leq  m_\xi(Y) |\inr{\widetilde{X}_\xi,c}|.
\end{equation}Let $\bi\in \cE(\xi)\cap \cE(\xi^\prime)$. We have $s_\xi(\bi), s_{\xi^\prime}(\bi)\in\{-1, 1\}$ and so $Y_\bi$ is independent of $m_\xi(Y)$ or $M_\xi(Y)$ and of $m_\xi(Y^\prime)$ or $M_\xi(Y^\prime)$. For all cases, one can choose one of the upper bounds in \eqref{eq:M_xi_bound} or \eqref{eq:m_xi_bound} to get an upper bound independent of $Y_\bi$ for both $\xi$ and $\xi^\prime$. For instance, if $Y_\bi$ is independent of $m_\xi(Y)$ and of $M_{\xi^\prime}(Y)$ (the three other cases can be handled using similar arguments) then we get
\begin{align*}
     &\bE_{\beta_\star}\left[\bE_{\beta_\star}[\inr{\nabla \ell_\xi(\beta_\star),c}|X, Y_\bi]  \bE_{\beta_\star}[\inr{\nabla \ell_{\xi^\prime}(\beta_\star),c}|X, Y_\bi]|X\right] \leq \bE_{\beta_\star}\left[ m_\xi(Y) M_{\xi^\prime}(Y)|X \right]  |\inr{\widetilde{X}_\xi,c} \inr{\widetilde{X}_{\xi^\prime},c}|\\
     &\leq C_0 \bP_{\beta_\star}[ m_\xi(Y)\geq 1 \mbox{ and  } M_{\xi^\prime}(Y)\geq 1 |X ] |\inr{\widetilde{X}_\xi,c}\inr{\widetilde{X}_{\xi^\prime},c}| \leq C_0 \bP_{\beta_\star}[\xi \mbox{ and  } \xi^\prime \mbox{ are active}|X]|\inr{\widetilde{X}_\xi,c}\inr{\widetilde{X}_{\xi^\prime},c}|    
 \end{align*} where the last but one inequality follows from Lemma~\ref{lem:moment_min_Poisson}.
\end{proof}



\subsection{A result on the expectation of the empirical $L_3/L_2$ ratio and other probability results}\label{sec:result_on_ratio_L3_L2}
In this section, we provide the tools we used in \eqref{eq:UB_R_3_Y_3} to show that as $n\to \infty$
\begin{equation}\label{eq:aim_end_CLT}
    \frac{1}{n}\sum_{\bi} \bE\left[|R_\bi|^3 |X\right] = o(\sqrt{n}).
\end{equation}Since we can write the mean above as the expectation of the ratio to the power three of the third empirical moment over the second one of independent variables, we provide upper bound for such a quantity in this section.

\begin{proposition}\label{prop:ratio_L3_L2_V0}
Let $((X_{i,n})_{i\in[n]})_n$ be a triangular array of real-valued random variables. We assume that for all $n\in\bN^*$, $(X_{i,n})_{i\in[n]}$ are independent. For all $n\in\bN^*$, we consider for $k=2,3$,
\begin{equation*}
    \hat S_{k,n} = \frac{1}{n}\sum_{i=1}^n |X_{i,n}|^k,   S_{k,n} = \bE \hat S_{k,n} \mbox{ and } V_{k,n} := \frac{1}{n}\sum_{i=1}^n \bV\left(|X_{i,n}|^k\right).
\end{equation*}We assume that 
\begin{equation}\label{eq:weak_assum_for_L3_L2}
    \frac{S_{3,n}+ u_{n}\sqrt{V_{3,n}}}{\left(S_{2,n} - u_n \sqrt{V_{2,n}}\right)^{3/2}} = o\left(\sqrt{n}\right)
\end{equation}for some $u_n\sim a_n/\sqrt{n}$ and $a_n\to +\infty$ when $n\to\infty$. Under this assumption, as $n$ tends to $\infty$, we have
\begin{equation*}
    \bE\left[\frac{\hat S_{3,n}}{\left(\hat  S_{2, n}\right)^{3/2}}\right] = o\left(\sqrt{n}\right).
\end{equation*}   
\end{proposition}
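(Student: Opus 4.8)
The plan is to split the target expectation over a high-probability event on which the ratio is controlled by assumption \eqref{eq:weak_assum_for_L3_L2}, and a rare event on which I fall back on a \emph{deterministic} a priori bound. The cornerstone is the elementary observation that, by monotonicity of the $\ell^p$-norms on $\bR^n$ (namely $\norm{x}_{\ell^3}\leq \norm{x}_{\ell^2}$), one has $\sum_i |X_{i,n}|^3 \leq \big(\sum_i |X_{i,n}|^2\big)^{3/2}$ pointwise, hence, whenever $\hat S_{2,n}>0$,
\[
\frac{\hat S_{3,n}}{(\hat S_{2,n})^{3/2}} = \frac{n^{-1}\sum_i|X_{i,n}|^3}{\big(n^{-1}\sum_i|X_{i,n}|^2\big)^{3/2}} \leq \sqrt{n},
\]
the ratio being set to $0$ on the null event $\{\hat S_{2,n}=0\}$ (where also $\hat S_{3,n}=0$). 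This bound holds surely, requiring no moment hypothesis, and is exactly what tames the rare-event contribution.

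First I would record the second-order information. Since the $X_{i,n}$ are independent within each row, $\bV(\hat S_{k,n}) = n^{-2}\sum_i \bV(|X_{i,n}|^k) = V_{k,n}/n$ for $k=2,3$. With $u_n\sim a_n/\sqrt{n}$, the deviation $u_n\sqrt{V_{k,n}}$ equals asymptotically $a_n$ standard deviations of $\hat S_{k,n}$, so Chebyshev's inequality yields
\[
\bP\!\left(\hat S_{3,n} > S_{3,n}+u_n\sqrt{V_{3,n}}\right) \leq \frac{\bV(\hat S_{3,n})}{u_n^2 V_{3,n}} = \frac{1}{n u_n^2}, \qquad \bP\!\left(\hat S_{2,n} < S_{2,n}-u_n\sqrt{V_{2,n}}\right) \leq \frac{1}{n u_n^2},
\]
and $n u_n^2 \sim a_n^2\to\infty$, so both probabilities vanish. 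I would then define the good event $\mathcal{G}_n$ as the intersection of the two complementary events, so that $\bP(\mathcal{G}_n^c)\leq 2/(n u_n^2)$.

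To conclude, on $\mathcal{G}_n$ one has $\hat S_{3,n}\leq S_{3,n}+u_n\sqrt{V_{3,n}}$ and $\hat S_{2,n}\geq S_{2,n}-u_n\sqrt{V_{2,n}}>0$ (positivity being implicit in the well-posedness of \eqref{eq:weak_assum_for_L3_L2}), so the ratio is bounded there by $R_n := (S_{3,n}+u_n\sqrt{V_{3,n}})/(S_{2,n}-u_n\sqrt{V_{2,n}})^{3/2} = o(\sqrt n)$. Using the deterministic bound $\sqrt n$ on $\mathcal{G}_n^c$,
\[
\bE\!\left[\frac{\hat S_{3,n}}{(\hat S_{2,n})^{3/2}}\right] \leq R_n\,\bP(\mathcal{G}_n) + \sqrt{n}\,\bP(\mathcal{G}_n^c) \leq R_n + \sqrt{n}\cdot\frac{2}{n u_n^2} \sim R_n + \frac{2\sqrt{n}}{a_n^2},
\]
and both terms are $o(\sqrt n)$ since $R_n=o(\sqrt n)$ and $a_n\to\infty$. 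The only delicate point is that the ratio can be arbitrarily large precisely when $\hat S_{2,n}$ is atypically small; the main obstacle is therefore to prevent this from inflating the expectation, and it is neutralised not by sharper tail estimates but by the surely-valid $\ell^p$-norm bound $\sqrt n$, which lets the crude estimate $\bP(\mathcal{G}_n^c)=O(a_n^{-2})$ suffice.
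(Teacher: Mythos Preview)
Your proof is correct and follows essentially the same route as the paper: split the expectation over a high-probability event (controlled via Chebyshev on $\hat S_{2,n}$ and $\hat S_{3,n}$) where assumption~\eqref{eq:weak_assum_for_L3_L2} applies directly, and on the complement use the deterministic $\ell^p$-monotonicity bound $\hat S_{3,n}\leq \sqrt{n}\,(\hat S_{2,n})^{3/2}$ together with $\bP(\mathcal{G}_n^c)=o(1)$. The only cosmetic difference is that you take one-sided deviation events (which is all that is needed), whereas the paper uses the two-sided event $\Omega_0=\{|\hat S_k-S_k|\leq u_n\sqrt{V_k},\ k=2,3\}$.
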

\begin{proof}Let $n\in\bN^*$ and denot $\hat S_{k} = \hat S_{k,n}$, $S_{k} = S_{k,n}$ and $V_k=V_{k,n}$.
We consider the following decomposition:
    \begin{align*}
 \bE\left[\frac{\hat S_{3}}{\left(\hat  S_{2}\right)^{3/2}}\right]  = \bE\left[\frac{\hat S_{3}}{\left(\hat  S_{2}\right)^{3/2}}\left(I_{\Omega_0} + I_{\Omega_0^c}\right)\right]        
    \end{align*}according to the event
    \begin{equation*}
        \Omega_0:=\left\{|\hat S_2 - S_2|\leq u_n \sqrt{V_{2}} \mbox{ and  } |\hat S_3 - S_3|\leq u_n \sqrt{V_{3}}\right\}.
    \end{equation*}It follows from Markov's inequality that $\bP[\Omega_0^c]= o(1)$ and since we have $\hat S_3\leq \sqrt{n} \left(\hat S_{2}\right)^{3/2}$ a.s., we get 
    \begin{equation*}
       \bE\left[\frac{\hat S_{3}}{\left(\hat  S_{2}\right)^{3/2}}I_{\Omega_0^c}\right]=o(\sqrt{n}).
    \end{equation*}On the complementary event, we use Assumption~\ref{eq:weak_assum_for_L3_L2} to get 
        \begin{equation*}
       \bE\left[\frac{\hat S_{3}}{\left(\hat  S_{2}\right)^{3/2}}I_{\Omega_0}\right] \leq \bE\left[\frac{S_3 + |S_3-\hat S_{3}| }{\left(S_2 - |\hat  S_{2}-S_2|\right)^{3/2}}I_{\Omega_0}\right]\leq \frac{S_{3}+ u_{n}\sqrt{V_{3}}}{\left(S_{2} - u_n \sqrt{V_{2}}\right)^{3/2}} = o\left(\sqrt{n}\right).
    \end{equation*}
\end{proof}

Assumption~\ref{eq:weak_assum_for_L3_L2} is a pretty weak assumption. For instance, when the $X_{i,n}$'s have all the same order $3$ and $2$ moments $\mu_3$ and $\mu_2$ as well as the same variance terms $v_3$ and $v_2$ then Assumption~\ref{eq:weak_assum_for_L3_L2} is equivalent to 
\begin{equation}\label{eq:weak_assum_for_L3_L2_2}
    \frac{\mu_3+ a_{n}\sqrt{v_3/n}}{\left(\mu_2 - a_n \sqrt{v_2/n}\right)^{3/2}} = o\left(\sqrt{n}\right)
\end{equation}where $(a_n)_n$ is any sequence such that $a_n\to\infty$, for instance $a_n = \log n$. In that case, \eqref{eq:weak_assum_for_L3_L2_2} is trivially satisfied.

 \begin{lemma}[Example~11.9 from \cite{MR1652247}]
 \label{lem:simple_vaart}
     Let $X, Y, Z$ be random variables. We have $\bE[\bE[X|Y,Z]|Y] = \bE[X|Y]$. 
 \end{lemma}

\begin{lemma}
    \label{lem:simple_cond}
    Let $X, Y, Z$ be random variables such that $Y$ and $Z$ are independent conditionally on $X$. Then for all measurable functions $g$ and $f$, we have 
\begin{itemize}
    \item  $\bE[f(X, Y) g(X, Z)|X] = \bE[f(X, Y)|X] \bE[g(X, Z)|X]$, in other words, conditionally on $X$, $(X, Y)$ and $(X, Z)$ are independent,
    \item $\bE[f(X, Y)|X, Z] = \bE[f(X, Y)|X]$.
\end{itemize}
\end{lemma}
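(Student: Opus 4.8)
The plan is to establish the first bullet by a functional monotone class argument, upgrading the conditional independence hypothesis — which only asserts factorisation for functions of $Y$ alone and of $Z$ alone — to functions that also depend on $X$, and then to read off the second bullet from the first using the defining property of conditional expectation. Throughout I would first treat bounded $f,g$; the passage to the integrable case (and in particular to the indicator and finite-moment cases actually used in the body) then follows by truncation and conditional dominated convergence. Recall that conditional independence of $Y$ and $Z$ given $X$ means $\bE[\phi(Y)\psi(Z)|X]=\bE[\phi(Y)|X]\,\bE[\psi(Z)|X]$ for all bounded measurable $\phi,\psi$.

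First I would prove the factorisation for product functions $f(x,y)=a(x)\phi(y)$ and $g(x,z)=b(x)\psi(z)$: since $a(X)b(X)$ is $\sigma(X)$-measurable it pulls out of $\bE[\cdot|X]$, so $\bE[f(X,Y)g(X,Z)|X]=a(X)b(X)\bE[\phi(Y)\psi(Z)|X]$, which by the hypothesis equals $a(X)\bE[\phi(Y)|X]\,b(X)\bE[\psi(Z)|X]=\bE[f(X,Y)|X]\,\bE[g(X,Z)|X]$. By linearity this extends to finite sums of products. To reach arbitrary bounded measurable $f$, I would fix $g$ of product form and let $\mathcal{H}$ be the set of bounded measurable $f(x,y)$ for which the factorisation holds: $\mathcal{H}$ is a vector space, contains the multiplicative class $\{a(x)\phi(y)\}$ generating the product $\sigma$-algebra on the $(x,y)$-space, and is stable under bounded monotone limits by conditional monotone convergence, so the functional monotone class theorem gives $\mathcal{H}=$ all bounded $\sigma(X,Y)$-measurable functions. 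Running the identical argument in the $z$-variable with $f$ now an arbitrary bounded measurable function extends it to all bounded $g$, which proves the first bullet.

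For the second bullet I would check that $\bE[f(X,Y)|X]$ — being $\sigma(X)$- and hence $\sigma(X,Z)$-measurable — is a version of $\bE[f(X,Y)|X,Z]$. Since the functions $k(X,Z)$ form a generating class for $\sigma(X,Z)$, it suffices to verify $\bE[f(X,Y)k(X,Z)]=\bE[\bE[f(X,Y)|X]k(X,Z)]$ for all bounded measurable $k$. Conditioning on $X$ and applying the first bullet, the left side equals $\bE\big[\bE[f(X,Y)|X]\,\bE[k(X,Z)|X]\big]$, while pulling the $\sigma(X)$-measurable factor $\bE[f(X,Y)|X]$ out of $\bE[\cdot|X]$ shows the right side equals the same quantity; the two therefore agree. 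The main obstacle is purely measure-theoretic bookkeeping: the hypothesis factorises functions of $Y$ and $Z$ separately, whereas the statement involves functions of the pairs $(X,Y)$ and $(X,Z)$, so the only genuine work is the monotone class upgrade. Once $X$ is frozen by the conditioning it behaves as a constant, and no deeper idea is required.
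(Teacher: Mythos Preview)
Your proof is correct. For the first bullet you and the paper follow the same route: verify the identity on tensor/product functions and extend by a monotone class argument (you spell this out more carefully than the paper does). For the second bullet the approaches differ. The paper argues at the level of regular conditional distributions, writing $\bP^{(X,Y)\mid(X,Z)}=\delta_X\otimes\bP^{Y\mid(X,Z)}$ and then checking on rectangles that $\bP^{Y\mid(X,Z)}=\bP^{Y\mid X}$. You instead deduce the second bullet directly from the first by testing against bounded $k(X,Z)$ and verifying the defining property of $\bE[\,\cdot\,|X,Z]$. Your route is cleaner: it avoids invoking regular conditional distributions (hence any implicit standing assumption on the underlying spaces) and makes the logical dependence of the second statement on the first transparent. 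The paper's route has the small advantage of being a one-line computation once the distributional formalism is set up.
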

\begin{proof}For the first item, we check the result for all tensor functions $g=g_1\otimes g_2$ and $f = f_1\otimes f_2$ for which the result is trivial. For the second item, it is enough to show that $\bP^{(X, Y)|(X, Z)} = \bP^{(X,Y)|X}$. we have $\bP^{(X, Y)|(X, Z)} = \bP^{X|(X, Z)}\otimes \bP^{Y|(X, Z)}$, $\bP^{X|(X, Z)} = \delta_X$  and for all measurable sets $A, B, C$, by conditional independence, 
\begin{equation*}
    \bP[Y\in A | X\in B, Z\in C] = \frac{\bP[Y\in A, Z\in C| X \in B]}{\bP[X\in B, Z\in C]} = \bP[Y\in A|X\in B]
\end{equation*}so that $\bP^{Y|(X, Z)} = \bP^{Y|X}$. Hence, $\bP^{(X, Y)|(X, Z)} = \delta_{X}\otimes \bP^{Y|X} = \bP^{(X, Y)|X}$.
\end{proof}



\end{document}